\documentclass[11pt]{article}

\usepackage[usenames,dvipsnames]{xcolor}
\definecolor{Gred}{RGB}{219, 50, 54}
\definecolor{ToCgreen}{RGB}{0, 128, 0}

\usepackage[margin=1in]{geometry}
\usepackage[T1]{fontenc}

\usepackage[scale=0.97]{XCharter} 

\usepackage[libertine,bigdelims,vvarbb,scaled=1.05]{newtxmath} 


\usepackage{babel}
\usepackage[spacing=true,kerning=true,babel=true,tracking=true]{microtype}

\DeclareMathAlphabet{\pazocal}{OMS}{zplm}{m}{n} 
\renewcommand{\mathcal}[1]{\pazocal{#1}}

\usepackage{makecell}
\usepackage{dsfont}

\usepackage{multirow}
\usepackage{amsmath,amsthm}
\usepackage{booktabs}   
\usepackage{bm}
\usepackage{bbm}
\usepackage{textgreek}
\usepackage{mathtools}
\usepackage{enumitem}
\usepackage[numbers,comma,sort&compress]{natbib}
\usepackage{authblk}
\usepackage{graphicx}
\usepackage[font=small]{caption}
\usepackage[labelformat=simple]{subcaption}

\usepackage{float}
\usepackage[linesnumbered,ruled,vlined]{algorithm2e}
\SetKwInput{KwInput}{Input}
\SetKwInput{KwOutput}{Output}
\SetKwInOut{Promise}{Promise}
\SetKwInput{Goal}{Goal}
\SetKwProg{Fn}{function}{}{}
\SetKwFor{RepTimes}{repeat}{times}{}
\SetKwFunction{Ver}{Verify}
\SetKwFunction{Prep}{PrepareState}
\usepackage{algorithmic}

\usepackage{physics}
\usepackage{footnote}
\usepackage{xcolor}
\usepackage{amsfonts}
\usepackage{mathrsfs}
\usepackage{bbm}
\usepackage{braket}
\usepackage[colorlinks]{hyperref}
\usepackage{cleveref}
\hypersetup{
      colorlinks=true,
  citecolor=ToCgreen,
  linkcolor=Sepia,
  filecolor=Gred,
  urlcolor=Gred
  }
\usepackage{placeins}
\usepackage{comment}
\usepackage{textgreek}

\newtheorem{theorem}{Theorem}[section]
\newtheorem{fact}[theorem]{Fact} 
\newtheorem{definition}[theorem]{Definition}
\newtheorem{lemma}[theorem]{Lemma}

\newtheorem{proposition}[theorem]{Proposition}
\newtheorem{conjecture}{Conjecture}
\newtheorem{example}[theorem]{Example}
\newtheorem{corollary}[theorem]{Corollary}
\newtheorem{remark}[theorem]{Remark}
\newtheorem{problem}[theorem]{Problem}
 
\newtheorem{condition}[theorem]{Condition} 
\newtheorem{observation}[theorem]{Observation}

\newcommand{\algo}[1]{\hyperref[algo:#1]{Algorithm~\ref*{algo:#1}}}

\newcommand{\vect}[1]{\ensuremath{\bm{#1}}}
\newcommand{\x}{\ensuremath{\bm{x}}}

\newcommand{\C}{\mathbb{C}}
\newcommand{\N}{\mathbb{N}}
\newcommand{\Z}{\mathbb{Z}}
\newcommand{\R}{\mathbb{R}}

\newcommand{\E}{\mathbb{E}}
\renewcommand{\Pr}{\mathrm{Pr}}

\newcommand{\cN}{\mathcal{N}}

\newcommand{\cT}{\mathcal{T}}

\newcommand{\cP}{\mathcal{P}}
\newcommand{\cD}{\mathcal{D}}
\newcommand{\cI}{\mathcal{I}}
\newcommand{\cS}{\mathcal{S}}

\newcommand{\bx}{{\bm{x}}}
\newcommand{\Id}{I}
\renewcommand{\C}{\mathbb{C}}
\newcommand{\bone}{\mathds{1}}
\renewcommand{\i}{\mathrm{i}}
\renewcommand{\exp}{\mathrm{exp}}
\newcommand{\bW}{\mathbf{W}}
\newcommand{\supp}{\mathrm{supp}}
\newcommand{\bi}{\mathbf{i}}
\newcommand{\Deg}{\mathsf{D}}

\newcommand{\polylog}{{\rm polylog}}

\DeclareMathOperator{\poly}{poly}
\DeclareMathOperator{\Var}{Var}

\renewcommand{\tr}{\mathrm{tr}}
\renewcommand{\Tr}{\mathrm{tr}}
\newcommand{\SWAP}{\textnormal{SWAP}}

\renewcommand{\x}{\vect{x}}

\newcommand{\0}{\mathbf{0}}
\renewcommand{\emptyset}{\varnothing}
\def\Tr{\operatorname{tr}}\def\:{\hbox{\bf:}}

\newcommand{\tO}{\tilde{O}}
\newcommand{\tOmega}{\tilde{\Omega}}
\newcommand{\tTheta}{\tilde{\Theta}}
\newcommand{\Mod}[1]{{\ (\mathrm{mod}\ #1)}}
\newcommand{\muh}{\mu_{\text{Haar}}}
\newcommand{\hgue}{H_{\text{GUE}}}
\newcommand{\rhomm}{\rho_{\text{mm}}}

\renewcommand{\epsilon}{\varepsilon}
\newcommand{\wh}{\widehat}
\newcommand{\Adv}[1]{\chi_{\le #1}}

\interfootnotelinepenalty=10000

\allowdisplaybreaks

\makeatletter
\renewcommand{\paragraph}{%
  \@startsection{paragraph}{4}%
  {\z@}{1.5ex \@plus 1ex \@minus .2ex}{-1em}%
  {\normalfont\normalsize\bfseries}%
}
\makeatother

\begin{document}


\title{Information-Computation Gaps in Quantum Learning via Low-Degree Likelihood}
\author{
Sitan Chen
\thanks{SEAS, Harvard University. Email: \href{mailto:sitan@seas.harvard.edu}{sitan@seas.harvard.edu}.}
\qquad
Weiyuan Gong
\thanks{SEAS, Harvard University. Email: \href{mailto:wgong@g.harvard.edu}{wgong@g.harvard.edu}.}
\qquad
Jonas Haferkamp
\thanks{Saarland University. Email: \href{mailto: jhaferkamp42@gmail.com}{jhaferkamp42@gmail.com}.}
\qquad
Yihui Quek
\thanks{Department of Mathematics, Massachusetts Institute of Technology. Email: \href{mailto:yquek@mit.edu}{yquek@mit.edu}.}
}

\date{\today}
\maketitle
\pagenumbering{gobble}

\begin{abstract}
    In a variety of physically relevant settings for learning from quantum data, there is an established recipe for measuring polynomially many copies of that data such that the resulting measurement readouts contain enough information to reconstruct the underlying system. Yet designing protocols that can \emph{computationally efficiently} extract that information remains largely an art, and there are important cases where we believe this to be impossible, that is, where there is an \emph{information-computation gap}. While there is a large array of tools in the classical literature for giving evidence for average-case hardness of statistical inference problems, the corresponding tools in the quantum literature are far more limited. 
    
    One such framework in the classical literature, the \emph{low-degree method}, makes predictions about hardness of inference problems based on the failure of estimators given by low-degree polynomials. In this work, we extend this framework to the quantum setting and show a number of new information-computation gaps for quantum learning. 
    
    \begin{itemize}[leftmargin=*]
        \item We establish a general connection between state designs and low-degree hardness. We use this to obtain the first information-computation gaps for learning Gibbs states of random, sparse, non-local Hamiltonians. We also use it to prove hardness for learning random shallow quantum circuit states in a challenging model where states can be measured in \emph{adaptively} chosen bases. To our knowledge, the ability to model adaptivity within the low-degree framework was open even in classical settings. In addition, we also obtain a low-degree hardness result for quantum error mitigation against strategies with single-qubit measurements.
        \item We define a new quantum generalization of the planted biclique problem and identify the threshold at which this problem becomes computationally hard for protocols that perform local measurements. Interestingly, the complexity landscape for this problem shifts when going from local measurements to more entangled single-copy measurements.
        \item We show average-case hardness for the ``standard'' variant of Learning Stabilizers with Noise~\cite{poremba2024learning} and for agnostically learning product states~\cite{bakshi2024learninga}.
    \end{itemize}
\end{abstract}

\clearpage

\tableofcontents
\clearpage


\newpage
\pagenumbering{arabic}
\section{Introduction}

Given access to quantum data, how efficiently can we learn a classical description of the underlying system that generates that data? Motivated by the exciting prospect of using quantum computers to learn about the physical world, there has been a recent surge of interest in proving upper and lower bounds for this problem. The gold standard is a learning protocol with the following three properties:
\begin{enumerate}[leftmargin=*,topsep=0.5em,itemsep=0pt,label=(\Alph*)]
    \item {\bf Low statistical complexity}: Only a polynomial amount of quantum data (e.g., copies) is needed
    \item {\bf Efficient quantum processing}: The measurements that are performed on the data are bounded in complexity, e.g., only a few copies are measured at a time, and each measurement can be implemented by a shallow quantum circuit.
    \item {\bf Efficient classical post-processing}: A $\mathrm{poly}(n)$-time classical algorithm takes in the readouts, i.e., the measurement outcomes, and outputs a classical description of the underlying system.\footnote{Ref.~\cite{king2024triply} put forth a similar three-part taxonomy, which they refer to as ``triple efficiency,'' specifically in the context of shadow tomography.}
\end{enumerate}
The statistical complexity for a wide range of quantum learning problems is by now very well-understood~\cite{odonnell2015quantum,odonnell2016efficient,haah2016sample,chen2022tight}. There has also been substantial progress on quantifying how the efficiency of quantum processing impacts statistical complexity, for instance in the form of tradeoffs between the number of copies needed and various quantum resources like number of ancilla, locality of the measurements, etc~\cite{king2024triply,chen2024optimal,chen2022exponential,aharonov2022quantum,huang2021information}. {\em Yet to date, we understand surprisingly little about how these factors interact with axis (C). }

\paragraph{Information-computation gaps.} On the positive side, the community has identified a number of learning problems that admit protocols which are efficient along all three axes, like tomography of Gibbs states~\cite{haah2022optimal,bakshi2024learning}, shallow circuit states~\cite{huang2024learning}, stabilizer states~\cite{aaronson2008identifying,montanaro2017learning}, and matrix product states~\cite{cramer2010efficient}. 

In these settings, it is straightforward to achieve axes (A) and (B). Indeed, physically relevant states are typically described by polynomially many degrees of freedom, so with a polynomial amount of quantum data there is in principle enough quantum information to learn the underlying system. And in many cases of interest this quantum information can be processed with inexpensive measurements~\cite{huang2020predicting} (see Remark~\ref{remark:info_theoretic}) into classical data that is information-theoretically sufficient for learning. As such, the core difficulty behind achieving the aforementioned results lies in axis (C).

On the negative side, there are situations where even though this classical data is information-theoretically sufficient, no polynomial-time classical algorithm can harness it to learn the underlying system. In the classical literature on statistical inference, this is called an \emph{information-computation gap}. 

Pseudorandom quantum objects like pseudorandom states~\cite{ji2018pseudorandom} and unitaries~\cite{ma2024construct} already provide one such example. Given $\mathrm{poly}(n)$ copies of a state which is promised to be either pseudorandom or Haar-random, if one measures each copy in the computational basis, the resulting classical readouts contain enough information to tell apart these two cases, yet no polynomial-time algorithm can do so. Apart from reducing from such cryptographic contexts however, there are few alternative approaches for establishing \--- or even predicting \--- the onset of information-computation gaps. In this work we ask:
\begin{center} 
    {\em Is there a general framework for identifying information-computation gaps in quantum learning?}
\end{center}

\paragraph{The low-degree conjecture.} We first observe that this state of affairs is in stark contrast to the one in classical learning theory, where numerous frameworks exist for giving evidence for information-computation gaps. Most of these frameworks operate by selecting a family of estimators (e.g., statistical query algorithms, belief propagation, Markov Chain Monte Carlo) that contain the most powerful known algorithms for a large class of problems, and showing that no algorithm from this family can solve the problem at hand. One framework that has proven especially successful at accurately predicting IC gaps is the \emph{low-degree method}~\cite{barak2019nearly,hopkins2017efficient,hopkins2017power,hopkins2018statistical}. Roughly, the family of estimators considered here is the set of all low-degree polynomials in the data (i.e., samples from some probability distribution), and the given learning problem is deemed to be computationally hard if no such low-degree polynomial estimator succeeds (see Section~\ref{sec:basic_low_degree} for a detailed overview). Polynomial estimators are expressive enough to capture a number of algorithmic recipes like spectral methods, message passing, and moment methods and are also closely related to sum-of-squares algorithms for inference. With few exceptions, this framework has obtained the ``right'' (according to widely believed conjectures) predictions for a slew of problems in statistical inference, notable examples of which include planted clique~\cite{barak2019nearly}, community detection~\cite{hopkins2017efficient}, tensor PCA~\cite{hopkins2017power}, sparse PCA~\cite{ding2024subexponential}, and combinatorial group testing~\cite{coja2022statistical}.

In Section~\ref{sec:lowdegree_examples}, we observe that a wide swath of known algorithms in quantum learning can also be realized as low-degree polynomial estimators, where the classical data that these estimators take as input is given by readouts from simple measurements of the quantum data. Motivated by this, in this work we extend the low-degree method to the quantum setting and establish new information-computation gaps in quantum learning.

\subsection{Our contributions}

We begin by informally describing the quantum low-degree formalism that we develop in this work, deferring a formal treatment to Section~\ref{sec:general}. Throughout, we will focus on \emph{hypothesis testing} problems. Here, one is given polynomially many copies of a state $\rho$ and would like to distinguish between the \emph{null hypothesis} that $\rho$ is a simple prescribed state (e.g., the maximally mixed state) and the \emph{alternative hypothesis} that $\rho$ was sampled from some ensemble $\mathcal{E}$ over ``interesting'' states from some class.

By definition, any strategy for measuring the copies of $\rho$ induces a classical distribution over readouts. It is then well-defined to consider low-degree polynomial estimators that take as input the bits in all of the readouts and output a prediction for whether $\rho$ came from the null or the alternative hypothesis. Given a class $\mathcal{C}$ of measurement strategies, we say that a hypothesis testing problem is \emph{low-degree hard for $\mathcal{C}$} if for \emph{all} strategies in that class, the resulting classical distributions over readouts under the null hypothesis and under the alternative hypothesis cannot be distinguished using any low-degree polynomial estimator.

This quantification over measurement strategies is the key novelty in passing from classical to quantum and poses a number of technical hurdles that we will overcome in the sequel.

\subsubsection{General conditions for low-degree hardness}

Our first set of results lays out general conditions under which a hypothesis testing problem is low-degree hard for various classes of measurement strategies.

\paragraph{From weak designs to strong indistinguishability.} We show that if the ensemble $\mathcal{E}$ is an approximate state $2$-design, that is, if its second-order moments approximately match those of the Haar measure, then hypothesis testing is low-degree hard.

\begin{theorem}[Informal, see \Cref{coro:general_ancilla}]\label{thm:general_informal}
    Distinguishing whether $\rho = \Id/2^n$ or whether $\rho$ was sampled from an ensemble $\mathcal{E}$, given $m = \mathrm{poly}(n)$ copies, is degree-$k$ hard for any non-adaptive, single-copy measurement strategy where each measurement is implemented using at most $O(n)$ ancilla, provided $\mathcal{E}$ forms a $2^{-\tilde{\Omega}(k\log n)}$-approximate state $2$-design.
\end{theorem}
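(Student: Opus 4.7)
The plan is to bound the degree-$k$ low-degree likelihood ratio $\|L^{\le k}\|_Q^2$ for the classical distributions on readouts, showing it equals $1+o(1)$. Let $Q = \prod_i q_i$ be the null distribution (product over $m$ independent measurements of $I/2^n$, where $q_i(r) = \tr(M_{i,r})/2^n$) and $P = \E_{\psi\sim\mathcal{E}}[\prod_i p_i(\cdot|\psi)]$ the planted distribution, both of which factorize copy-by-copy (conditionally on $\psi$ for $P$) by the non-adaptive, single-copy assumption. Using a product orthonormal basis $\eta_S=\prod_i \eta_{i,S_i}$ of $L^2(Q)$ with $\eta_{i,\emptyset}\equiv 1$, graded by $|S|=\sum_i|S_i|$, I would expand
\[
\|L^{\le k}\|_Q^2 - 1 \;=\; \sum_{0<|S|\le k}\hat L(S)^2,\qquad \hat L(S) \;=\; \E_\psi\!\biggl[\prod_{i:\,S_i\neq\emptyset}\tr(B_{i,S_i}\,\rho_\psi)\biggr],
\]
where $B_{i,S_i}:=\sum_r \eta_{i,S_i}(r)M_{i,r}$ is Hermitian with $\tr B_{i,S_i}=0$ (by $\eta_{i,S_i}\perp 1$ in $L^2(q_i)$) and $\|B_{i,S_i}\|_\infty\le 1$ (via Naimark dilation onto $n+O(n)$ qubits and a sup-norm-$1$ choice of basis).

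The central step is a peel-off argument reducing each per-term bound to a single second moment. Writing $T_i(\psi):=\tr(B_{i,S_i}\rho_\psi)$, Jensen's inequality plus the bound $|T_i|\le 1$ yield
\[
|\hat L(S)|^2 \;\le\; \E_\psi\!\biggl[\prod_{i:\,S_i\neq\emptyset} T_i(\psi)^2\biggr] \;\le\; \E_\psi[T_{i^*}(\psi)^2] \;=\; \tr\!\bigl(B_{i^*,S_{i^*}}^{\otimes 2}\,\E_\psi[\rho_\psi^{\otimes 2}]\bigr),
\]
which depends only on the second moment of $\mathcal{E}$. By the $2$-design hypothesis this deviates from its Haar value by at most $\epsilon=2^{-\tilde\Omega(k\log n)}$ (times $\|B\|_\infty^2\le 1$), and the Haar value itself is $(\tr(B)^2+\tr(B^2))/(2^n(2^n+1))\le 1/2^n$ since $\tr B=0$ and $\tr(B^2)\le 2^n$. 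Summing over the at most $(m(n+O(n)))^k = n^{O(k)}$ nontrivial $S$'s with $|S|\le k$ yields $\|L^{\le k}\|_Q^2 - 1 \le n^{O(k)}\cdot O(1/2^n+\epsilon)$, which is $o(1)$ given the hypothesized $\epsilon$.

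The main obstacle I anticipate is justifying $\|B_{i,S_i}\|_\infty\le 1$ with a principled orthonormal basis of $L^2(q_i)$: when $q_i$ is non-uniform (as is typical for ancilla-based POVMs), orthonormal polynomials of $L^2(q_i)$ can have unbounded sup-norm, which would break the deterministic peel-off. The cleanest fix is to work on the Naimark-dilated projective space of $n+O(n)$ qubits with the sup-norm-$1$ bit-Fourier basis $\chi_{S_i}=\prod_{j\in S_i}(-1)^{r_j}$, at the cost of a supplementary change-of-basis argument to reconcile its non-orthonormality under $q_i$ with the degree-graded decomposition used above.
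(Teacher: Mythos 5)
Your route is genuinely different from the paper's. The paper first proves a general reduction (its Theorem 6.5): if every $k$-qubit marginal of $\bigotimes_i U_i(\rho\otimes\ketbra{0}^{\otimes n'})U_i^\dagger$ is $2^{-\tilde\Omega(k\log n)}$-close in trace distance to the corresponding marginal under the null, then degree-$k$ hardness follows; it then verifies this local indistinguishability for $2$-designs via a swap-operator second-moment bound combined with a Cosine--Sine decomposition of each $(n+n')$-qubit unitary into $2^{O(n')}$ product terms (this is where the $n'\lesssim n/11$ restriction originates), followed by Markov and a union bound over copies. You instead bound each Fourier coefficient of the likelihood ratio directly, peeling off all but one copy using $|\tr(B_{i,S_i}\rho_\psi)|\le 1$ and reducing everything to a single second moment $\tr(B^{\otimes 2}\,\E_\psi[\rho_\psi^{\otimes 2}])$, which the $2$-design hypothesis controls. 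In the ancilla-free case ($n'=0$) your argument is complete and, I would say, cleaner than the paper's: the null readout distribution is uniform, the bit-Fourier characters are orthonormal, $B_{i,S_i}=U_iZ^{S_i}U_i^\dagger$ is traceless with operator norm $1$, and the whole Cosine--Sine apparatus is avoided. (Minor imprecision: in the Haar computation the relevant operator is the compression $B'=(I\otimes\bra{0}^{\otimes n'})B(I\otimes\ket{0}^{\otimes n'})$; it is $\tr(B')$, not $\tr(B)$, that vanishes by orthogonality to the constant function, and $\tr(B'^2)\le 2^n$ is what you need.)

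The gap you flag in the ancilla-assisted case is, however, genuine and not resolved by the sketched fix, and it is exactly the case the theorem is about. With $n'>0$ the null distribution $q_i(r)=\bra{r}U_i^\dagger(\tfrac{I}{2^n}\otimes\ketbra{0}^{\otimes n'})U_i\ket{r}$ is neither uniform nor a product over bits, so your argument needs a basis that is simultaneously (a) $q_i$-orthonormal and degree-graded, so that Parseval gives $\|L^{\le k}-1\|_Q^2=\sum_S\hat L(S)^2$, and (b) of sup-norm $O(1)$, so that the peel-off $\prod_i T_i^2\le T_{i^*}^2$ is valid. These two requirements are incompatible in general: a $q_i$-orthonormal function orthogonal to constants can have sup-norm as large as $\min_r q_i(r)^{-1/2}$, and $q_i$ can be arbitrarily skewed (e.g., nearly degenerate on some outcomes). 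The proposed repair --- use the sup-norm-$1$ characters and change basis afterwards --- runs into the ill-conditioning of the Gram matrix of $\{\chi_{S_i}\}$ under $q_i$: expressing the orthonormal basis in the characters (or bounding $\sum_T|c_T|$ for a degree-$k$ polynomial normalized in $L^2(Q)$) can incur uncontrolled factors, precisely because distinct characters can nearly coincide as elements of $L^2(q_i)$. This is the point at which the paper's proof does real work with a different mechanism (reduced-density-matrix indistinguishability plus the unitary product decomposition), and where the quantitative $O(n)$ ancilla bound comes from; your proposal currently has no substitute for it. To make your route go through you would need either a direct argument that the relevant observables can be taken to have operator norm $n^{O(1)}$ (a polynomial loss is affordable in the final $n^{O(k)}$ sum), or a duality/variational bound on $\sup_{\|f\|_Q\le 1,\deg f\le k}|\E_Pf-\E_Qf|$ that sidesteps the orthonormal basis altogether.
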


\noindent We also obtain similar results for other strategies like local (i.e. single-\emph{qubit}) measurements (see \Cref{coro:general_single_qubit}) and projective measurements implemented by shallow geometrically local circuits (see \Cref{coro:general_bounded_depth}).

One interesting feature of Theorem~\ref{thm:general_informal} is that the condition on the ensemble $\mathcal{E}$ for degree-$k$ hardness is merely a condition on its \emph{degree-$2$ moments}. The takeaway is that, starting from the relatively weak condition that \emph{two} copies of $\rho$ drawn from $\mathcal{E}$ are \emph{statistically} indistinguishable from two copies of a Haar-random state, we can upgrade to show that under single-copy measurement strategies, \emph{polynomially} many copies of $\rho$ drawn from $\mathcal{E}$ are \emph{computationally} indistinguishable from polynomially many copies of a Haar-random state. This ``degree magnification'' effect appears to be an inherently quantum phenomenon.

\paragraph{Adaptive measurements.} One distinctive aspect of the quantum setting is that the measurements can also be chosen adaptively based on previous measurement outcomes. Modeling adaptivity in the low-degree framework however turns out to be surprisingly subtle. As we explain in Section~\ref{sec:general_adaptivity_full}, the issue is that arbitrary adaptivity can easily hide expensive classical computation that is not captured by the eventual degree of the classical post-processing, and natural attempts at constraining the complexity of the adaptive choices either fail to circumvent this or result in models where proving low-degree hardness is at least as hard as proving circuit lower bounds.

Nevertheless, we are able to identify two nontrivial settings in which we provide general conditions under which low-degree hardness holds even under adaptive measurements:

\begin{theorem}[Informal, see \Cref{coro:general_adaptivity_within_block} and \Cref{coro:general_adaptivity_among_block}]\label{thm:adaptivity_informal}
    Distinguishing whether $\rho = \Id/2^n$ or whether $\rho$ was sampled from an ensemble $\mathcal{E}$, given $m = \mathrm{poly}(n)$ copies, is degree-$k$ hard for single-copy projective measurements in bases that are adaptively chosen under either of the following conditions:
    \begin{itemize}[leftmargin=*,topsep=0.5em,itemsep=0pt]
        \item \underline{Full adaptivity \emph{within} blocks of $m_0=\mathrm{polylog}(n)$ copies each}: provided $\mathcal{E}$ is a $2^{-\mathrm{poly}(k,\log n)}$-approximate state $O(km_0)$-design.
        \item \underline{Full adaptivity \emph{across} $\mathrm{polylog}(n)$ blocks}: provided $\mathcal{E}$ is a $2^{-\mathrm{poly}(k,\log n)}$-approximate state $O(k)$-design.
    \end{itemize}
\end{theorem}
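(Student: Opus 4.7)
The plan is to bound the squared low-degree likelihood ratio
\begin{equation*}
\|L^{\le k}\|_{P_0}^2 = \sum_{|S|\le k}\E_{P_1}[\chi_S]^2,
\end{equation*}
where $P_0$ denotes the readout distribution under $\rho = \Id/2^n$ and $P_1 = \E_{\rho \sim \mathcal{E}} P_\rho$ the mixture under the alternative. Since $\Id/2^n$ yields a uniform outcome under any projective measurement, $P_0$ is a product of uniform distributions regardless of the adaptive schedule, so we may take $\{\chi_S\}$ to be the Walsh characters of the readout bits. Degree-$k$ hardness then reduces to showing the above sum is $1+o(1)$.

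\textbf{Part 1 (within-block adaptivity).} The key observation is that adaptive single-copy measurements on $m_0$ copies are equivalent to a single non-adaptive POVM on $\rho^{\otimes m_0}$: the probability of a readout sequence $(o_1,\ldots,o_{m_0})$ equals $\prod_j \Tr(E_{o_j|o_{<j}}\rho) = \Tr\bigl((\bigotimes_j E_{o_j|o_{<j}})\rho^{\otimes m_0}\bigr)$, a polynomial in $\rho$ of degree $m_0$. Since there is no across-block adaptivity, the joint readout distribution factors over the $m/m_0$ blocks. Viewing each block as a ``meta-copy'' on which the proof of Theorem~\ref{thm:general_informal} applies, a degree-$k$ polynomial in the raw readout bits touches at most $k$ blocks and hence corresponds to a polynomial in $\rho$ of degree at most $k m_0$. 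Thus $\mathcal{E}$ being a $2^{-\mathrm{poly}(k,\log n)}$-approximate $O(km_0)$-design is enough to match the Haar second moments with an acceptable error.

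\textbf{Part 2 (across-block adaptivity).} Here there are $B=\mathrm{polylog}(n)$ blocks, each measured non-adaptively, but the within-block strategy $s_b$ is a deterministic function of $x_{<b}$. My strategy is to exploit that $\chi_S = \prod_b \chi_{S_b}$ only depends on bits in $\le k$ blocks. Using the chain rule $P_\rho(x) = \prod_b P_\rho^{(b)}(x_b|x_{<b})$ and trace preservation $\sum_{x_b} P^{(b)}_\rho(x_b|x_{<b})=1$, any block strictly after the last $S$-block marginalizes trivially. For the interspersed non-$S$ blocks, one conditions on their outcomes: for each fixed history, the $s_b$ for $b\in S$ are determined non-adaptive POVMs, and the degree-$k$ non-adaptive LDLR bound using an $O(k)$-design applies. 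Averaging over histories by Cauchy--Schwarz across the $B$ layers introduces a loss of at most $2^{\mathrm{poly}(\log n)}$, which is absorbed by the $2^{-\mathrm{poly}(k,\log n)}$ design approximation.

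\textbf{Main obstacle.} The crux is in Part 2: preventing the ``effective'' polynomial degree in $\rho$ from ballooning to the full copy count. A naive expansion tracks every factor $P_\rho^{(b)}(x_b|x_{<b})$ individually and produces a polynomial in $\rho$ of degree $m\gg k$, far beyond what an $O(k)$-design can control. To contract this down, one must show that after integrating non-$S$ blocks against $P_0$, these factors behave statistically like their Haar averages up to the $2^{-\mathrm{poly}(k,\log n)}$ design error, leaving only the $O(k)$-degree $\rho$-dependence of the $S$-blocks. Carrying this out without incurring an exponential-in-$m$ branching factor --- by exploiting that the tree has depth only $B=\mathrm{polylog}(n)$ and that $P_0$ is a tensor product --- is where the bulk of the technical work lies.
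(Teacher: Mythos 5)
Your high-level picture matches the paper's — block structure, block-level PVM equivalence, a design condition whose degree tracks the "effective" polynomial degree in $\rho$, and the identification of the across-block obstacle. But both parts of the proposal have genuine gaps, and in Part~2 you explicitly concede one.

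\textbf{Part 1.} The observation that the within-block adaptive protocol is a single non-adaptive PVM on $\rho^{\otimes m_0}$ is correct (one can check the elements $\bigotimes_j E_{o_j\mid o_{<j}}$ resolve the identity). However, you cannot then "view each block as a meta-copy on which the proof of Theorem~\ref{thm:general_informal} applies." That theorem requires the relevant ensemble to be an approximate state $2$-design over the system being measured; here the relevant ensemble would be $\{\rho^{\otimes m_0}:\rho\sim\mathcal{E}\}$ over $m_0 n$ qubits, which is supported on product states and is nowhere near a $2$-design over $(\mathbb{C}^{2^{m_0 n}})$. The paper circumvents exactly this by (i) working in the copy-wise degree-$(\mathcal{D},k)$ model, (ii) running an induction that peels off one copy at a time from a block (Eq.~\eqref{eq:induction_within}), and (iii) applying Cauchy--Schwarz to reduce everything to two ingredients: the $2km_0$-th moment of a single-copy advantage and the $2(km_0-1)$-th moment of a single-copy likelihood ratio (Condition~\ref{assume:general_adaptivity_within_block}), both controlled by the $2km_0$-design property via Lemma~\ref{lem:k_design_copy}. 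Your degree count $km_0$ (hence $O(km_0)$-design) is correct, but the path from there to a bound on $\chi_{\le k}^2$ is not a black-box application of the non-adaptive theorem, and that path is the entire content of Theorem~\ref{thm:general_adaptivity_within_block}.

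\textbf{Part 2.} You correctly diagnose the obstacle — non-$S$ blocks cannot simply be integrated out because the $S$-block POVMs depend on their outcomes, so a naive expansion sees $\rho$-degree $\Theta(m)$ — but you then explicitly leave the resolution as "the bulk of the technical work." The paper's resolution is an induction over \emph{blocks} mirroring Part~1: peel off the last block, Cauchy--Schwarz, and control the resulting cross term using a bound $M'$ on the $2(m_1-1)$-th moment of the per-block likelihood ratio (Condition~\ref{assume:general_adaptivity_among_block}), which is where the $O(k)$-design enters via Lemma~\ref{lem:k_design_copy}. The total loss is $M'^{km_1}\cdot\mathrm{poly}(m)^{k}\cdot\epsilon^{1/2}$, absorbed by the design approximation error; this is the quantitative form of what you call "a loss of at most $2^{\mathrm{poly}(\log n)}$," but you do not supply the argument that produces it. As written the proposal identifies the right structure and the right design orders but does not prove either bullet.
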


\noindent The reader may wonder why this is not immediate given that, for $k$-designs, degree-$k$ polynomials in $\rho$ look Haar-random by definition. However, the key point is that expensive computations possibly hidden in even a single adaptive choice can involve high-degree polynomials. 
In this context, Theorem~\ref{thm:adaptivity_informal} surfaces a subtle trade-off between the design-degree (and the level of approximation) and the adaptivity allowed in the framework.

We note that recent work in the low-degree literature has also identified several interesting \emph{classical} settings where there is an option to make adaptive choices, e.g., sublinear-time planted clique~\cite{mardia2024low} and geometric graph inference~\cite{bangachev2024fourier}. These works only prove hardness when the choices are non-adaptive, and capturing adaptivity in the low-degree framework here remains an open question, for which we believe our results shed light both technically and from a modeling perspective.

Our results on adaptivity rely on the development of a stronger notion of \emph{copy-wise} degree which allows for polynomial estimators whose monomials, while each depending on a small number of copies, can have higher degree over the qubits of those copies. This mirrors a notion studied in the classical low-degree literature~\cite{brennan2020statistical}. In Section~\ref{sec:SQ}, we also leverage connections, first shown in that work, to an alternative computational model called the statistical query model to show that these two models are equivalent in the realistic setting where there is stochastic noise in the measurement readouts.

\paragraph{Applications.} As nearly immediate applications of this general theory, we are able to deduce new computational hardness results for learning well-studied classes of states like random quantum circuit states, Gibbs states, and states arising in the context of quantum error mitigation. 

\begin{corollary}[Informal, see \Cref{coro:circuit_single_copy}, \Cref{coro:circuit_adaptivity_within_block}, and \Cref{coro:circuit_adaptivity_among_block}]
    For the above measurement strategies, it is degree-$\mathrm{polylog}(n)$ hard to distinguish between the maximally mixed state and a state prepared by a random geometrically local quantum circuit of $\mathrm{polylog}(n)$-depth, given $\mathrm{poly}(n)$ copies.
\end{corollary}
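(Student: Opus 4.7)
The plan is to derive the corollary as a direct consequence of the general hardness criteria in \Cref{thm:general_informal} and \Cref{thm:adaptivity_informal}. Each of these theorems reduces degree-$k$ low-degree hardness of null-vs-ensemble testing to an approximate state design property of the alternative ensemble $\mathcal{E}$. Hence the entire task collapses to verifying that the ensemble of pure states $U\ket{0^n}$, where $U$ is a random geometrically local quantum circuit of depth $D = \mathrm{polylog}(n)$, satisfies the appropriate design conditions for the target value $k = \mathrm{polylog}(n)$.

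First, I would unpack what each target theorem demands of $\mathcal{E}$. In the non-adaptive single-copy model with $O(n)$ ancilla, the condition from \Cref{thm:general_informal} is only a $2^{-\tilde{\Omega}(k\log n)}$-approximate state $2$-design, so the design order is trivial and only the precision is nontrivial. For adaptivity within blocks of $m_0 = \mathrm{polylog}(n)$ copies, \Cref{thm:adaptivity_informal} demands an $O(km_0)$-design, i.e.\ a $\mathrm{polylog}(n)$-design, with $2^{-\mathrm{poly}(k,\log n)}$ precision; and the across-block variant demands an $O(k)$-design at the same precision. In all three cases the combined requirement is a $\mathrm{polylog}(n)$-design with inverse quasi-polynomial error.

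Next, I would invoke recent constructions showing that random geometrically local brickwork circuits of depth $D$ form $\epsilon$-approximate state $k$-designs provided $D$ is polylogarithmic in $n$, $k$, and $1/\epsilon$. Setting $k = \mathrm{polylog}(n)$ and $\epsilon = 2^{-\mathrm{polylog}(n)}$ still keeps $D = \mathrm{polylog}(n)$, giving enough slack to meet all three variants simultaneously. Plugging the resulting ensemble into \Cref{thm:general_informal} and \Cref{thm:adaptivity_informal} then yields the three hardness statements (\Cref{coro:circuit_single_copy}, \Cref{coro:circuit_adaptivity_within_block}, and \Cref{coro:circuit_adaptivity_among_block}) comprising the corollary.

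The main technical hurdle is not conceptual but parameter bookkeeping: one must confirm that a single choice of $D = \mathrm{polylog}(n)$ produces an ensemble that is a sufficiently accurate design for each of the three measurement models, and that the polynomials in $k$ and $\log n$ appearing in the hypotheses of \Cref{thm:adaptivity_informal} do not squeeze $k$ below the $\mathrm{polylog}(n)$ regime that the corollary asserts to be hard. Because the shallow-circuit design constructions scale only polylogarithmically in the relevant parameters, this accounting should go through comfortably, and no new measurement-level analysis beyond what is already developed in the proofs of the general theorems is required.
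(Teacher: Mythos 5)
Your proposal matches the paper's proof exactly: it reduces the corollary to the approximate state-design hypotheses of the general theorems (\Cref{coro:general_ancilla}, \Cref{coro:general_adaptivity_within_block}, \Cref{coro:general_adaptivity_among_block}) and then invokes the shallow-circuit design construction (\Cref{lem:circuit_design}) with the appropriate depth bookkeeping, which is precisely the content of \Cref{coro:circuit_single_copy}, \Cref{coro:circuit_adaptivity_within_block}, and \Cref{coro:circuit_adaptivity_among_block}. One small caution: plain brickwork circuits have a depth bound in \Cref{lem:circuit_design} that scales linearly in $n$, so to actually reach polylog depth the paper uses the coarse-grained ensemble $\mu_{n',n,L}$ (following~\cite{schuster2024random}), whose depth is $O(t\log(nt/\epsilon))$; you should invoke that variant rather than the naive brickwork construction.
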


\noindent This extends prior work of~\cite{nietner2023average} which showed that the \emph{output distributions} of these circuits, when measured in the computational basis, are indistinguishable from uniformly random. That result, while not in the low-degree framework, can be interpreted as saying that random polylog-depth circuit states cannot be computationally distinguished from maximally mixed by measuring polynomially many copies in the computational basis. In contrast, our result can be interpreted as showing that this distinguishing problem is hard for a much more general family of single-copy measurement strategies, even ones which have some level of adaptivity. In Section~\ref{sec:qsq} we discuss the connection with a \emph{quantum statistical query} lower bound for this problem that was proved in~\cite{arunachalam2023role}. Taken together, these results can be viewed as modest steps towards the long-standing conjecture (see for instance Conjecture 1.1 of \cite{fefferman2024hardness} or Conjecture 1 of \cite{bostanci2024efficient}) that states prepared by superlog-depth random quantum circuits are pseudorandom.

Next we state our hardness result for learning Gibbs states:

\begin{corollary}[Informal, see Corollary~\ref{coro:hgue_gibbs_sparse}]\label{coro:gibbs_informal}
    For the measurement strategies in Theorem~\ref{thm:general_informal}, it is degree-$\Omega(n)$ hard to distinguish between the maximally mixed state and the Gibbs state of a Hamiltonian given by a random signed sum of $O(n^3)$ random Pauli strings.
\end{corollary}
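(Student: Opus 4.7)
The plan is to apply Theorem~\ref{thm:general_informal} to the ensemble $\mathcal{E}$ of Gibbs states $\rho = e^{-\beta H}/\tr(e^{-\beta H})$, where $H$ is a uniformly random signed sum of $O(n^3)$ random Pauli strings. All that needs to be checked is that $\mathcal{E}$ is a $2^{-\tilde{\Omega}(n\log n)}$-approximate state $2$-design in the sense required by that theorem; once this is in hand, degree-$\Omega(n)$ hardness for the measurement strategies of Theorem~\ref{thm:general_informal} follows immediately.

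I would verify the 2-design condition in two stages. First, show that the Gibbs state $\rho^{\text{GUE}} := e^{-\beta \hgue}/Z$ of a Hamiltonian drawn from the Gaussian Unitary Ensemble at a suitable inverse temperature $\beta$ forms a very tight approximate state $2$-design. Since GUE is unitarily invariant, one can decompose $\hgue = U\Lambda U^\dagger$ with $U$ Haar-distributed and $\Lambda$ diagonal with the GUE eigenvalues, independent of $U$. The first moment is then $\mathbb{E}[\rho^{\text{GUE}}] = I/2^n$ by Haar invariance, and the second moment $\mathbb{E}[(\rho^{\text{GUE}})^{\otimes 2}]$ can be evaluated against the $2$-design target via Weingarten calculus. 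The remaining deviation is governed by concentration of the partition function and of the bulk of the GUE spectrum, both of which are very sharp at the $\beta$ of interest.

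The second stage is to replace $\hgue$ by the sparse Hamiltonian $\tilde H := \sum_{i=1}^M \epsilon_i P_i$, a random signed sum of $M = O(n^3)$ uniformly random Paulis with i.i.d.\ $\pm 1$ signs $\epsilon_i$. The key observation is that low-order moments of $\tilde H$ match those of a GUE Hamiltonian of the appropriate variance up to small error, and this closeness can then be upgraded to closeness of the induced ensembles of Gibbs states via perturbation bounds on the matrix exponential together with matrix-concentration inequalities (matrix Bernstein / non-commutative Khintchine). The count $M = O(n^3)$ is precisely what is needed to drive the resulting $2$-design error below the $2^{-\tilde{\Omega}(n\log n)}$ threshold demanded by Theorem~\ref{thm:general_informal} at degree $k = \Omega(n)$.

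The main obstacle I anticipate lies in the quantitative control of stage two: closeness of $\mathbb{E}[\tilde\rho^{\otimes 2}]$ to $\mathbb{E}[(\rho^{\text{GUE}})^{\otimes 2}]$ must hold in a stringent trace-type norm, and it must be preserved under the nonlinear map $H \mapsto e^{-\beta H}/\tr(e^{-\beta H})$. My approach would be to truncate the Taylor expansion of the matrix exponential at a polynomial degree, so that each kept term is a low-degree noncommutative polynomial in $\tilde H$ for which moment matching to GUE can be verified term-by-term, while using a priori operator-norm concentration of $\tilde H$ to show the tail of the expansion contributes only an exponentially small error. Once this moment-transfer argument is in place, the rest of the proof is a direct appeal to Theorem~\ref{thm:general_informal}.
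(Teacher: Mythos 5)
Your proposal contains two genuine gaps.

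\textbf{First gap: the Gibbs state is mixed.} State $2$-designs as defined in Definition~\ref{def:approx_design} are distributions over \emph{pure} states, so the ensemble of Gibbs states $e^{-\beta H}/\tr(e^{-\beta H})$ cannot literally be an approximate state $2$-design in the sense required by Theorem~\ref{thm:general_informal} / \Cref{coro:general_ancilla}. The paper does not apply \Cref{coro:general_ancilla} to the Gibbs states directly; it instead invokes the more general \Cref{thm:general_ancilla}, which only requires the trace-distance local-indistinguishability bound, and verifies that bound by \emph{spectrally decomposing} the Gibbs state as $\frac{1}{Z}\sum_j e^{-\lambda_j}\ketbra{H_j}$, applying the per-eigenvector design argument (the $\ket{H_j}$ are pure and collectively form a $2$-design in the relevant sense), and using the fact that once $\|H\|\le 3$ all mixture weights $e^{-\lambda_j}/Z$ are $\Theta(2^{-n})$.

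\textbf{Second gap: the GUE$\to$RSPS transfer cannot work as described.} Your stage two proposes to show that the sparse (RSPS) Gibbs ensemble is close to the GUE Gibbs ensemble by Taylor-expanding $e^{-\beta H}$, matching moments term by term, and bounding the tail with matrix concentration. But the RSPS ensemble matches GUE \emph{only up to the third moment}; the fourth and higher moments genuinely differ, so the term-by-term matching breaks at degree $4$. Moreover the error tolerance needed for degree-$\Omega(n)$ hardness is $2^{-\tilde\Omega(n\log n)}$, which is vastly smaller than what matrix-Bernstein or non-commutative-Khintchine inequalities can deliver (those give inverse-polynomial, not exponentially small, errors). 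The paper never approximates RSPS by GUE. Instead, \Cref{coro:hgue_gibbs_sparse} observes that the RSPS ensemble is \emph{exactly} invariant under conjugation by a uniformly random Clifford (Cliffords permute Pauli strings, and the random $\pm 1$ signs absorb the induced phases), and since Cliffords form a unitary $3$-design, the eigenvectors of $H_{\text{RSPS}}$ form an \emph{exact} state $3$-design --- no perturbation of the nonlinear map $H\mapsto e^{-\beta H}/\tr(e^{-\beta H})$ is involved at all. The sparsity $J=O(n^3)$ enters only through the eigenvalue concentration $\Pr[\|H_{\text{RSPS}}\|\ge 3]\le e^{-\Theta(n)}$ (Theorem III.1 of~\cite{chen2024sparse}), which is the only ``approximate'' step in the argument. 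You should abandon the perturbative route and use the Clifford-invariance structure directly.
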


\noindent To our knowledge, this is the first evidence that learning \emph{sparse} (but not necessarily local) Hamiltonians from Gibbs state access is computationally hard. In contrast, recent works have shown that such Hamiltonians can be learned given access to their dynamics~\cite{ma2024learning,hu2025ansatz}. Note that~\cite{gu2024simulatingquantumchaos} has proven a no-go theorem for the task of distinguishing between two ensembles of Hamiltonians which are both sparse in the \emph{computational} basis (and thus not sparse in the Pauli basis) given access to their \emph{time evolutions}.

We also show evidence that quantum error mitigation is computationally hard for \emph{geometrically local} noisy quantum circuits in any architecture with constant number of layers of single-qubit depolarizing noise at extremely low noise rate $O(1/n^{1-\delta'})$ for any $\delta'>0$. In contrast, quantum error mitigation is only known to be statistically hard for geometrically local noisy quantum circuits with $\omega(\log n)$ layers of noise at constant noise rate~\cite{quek2024exponentially}. The low noise rate of $O(1/n^{1-\delta'})$ is far from making the quantum error mitigation task statistically hard: there is a statistically efficient algorithm using multi-copy joint measurement as the trace distance between the two ensembles is large. However, whether there exists a statistically efficient protocol using single-qubit measurements for quantum error mitigation within the same parameter regime is unknown, leaving the existence of a strict information-computation gap an open question.

\begin{corollary}[Informal, see Theorem~\ref{thm:EM}]\label{thm:EM_informal}
    For any non-adaptive measurement strategy using $m=\poly(n)$ single-qubit measurements, there is a $D=\omega(\log n)$ such that it is degree-$D$ hard to distinguish between the maximally mixed state and geometrically local noisy quantum circuits on any architecture of depth $\omega(\log n)$ at extremely low local depolarization noise rate $O(1/n^{1-\delta'})$ for any $\delta'>0$. Furthermore, the low-degree hardness result still holds even if the noise is only applied in constantly many layers instead of every layer in the noisy quantum circuit.
\end{corollary}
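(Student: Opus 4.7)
The plan is to deduce Corollary~\ref{thm:EM_informal} by instantiating the single-qubit-measurement version of our general low-degree hardness framework (\Cref{coro:general_single_qubit}) for the ensemble $\mathcal{E}$ of outputs of depth-$\omega(\log n)$ noisy $1$D geometrically local random circuits with single-qubit depolarization at rate $p = O(1/n^{1-\delta'})$. This reduces the target statement to establishing a quantitative approximate state $2$-design bound of order $2^{-\tOmega(D\log n)}$ on the second-moment operator of $\mathcal{E}$, which will yield degree-$D$ hardness with $D = \omega(\log n)$ for any $m = \poly(n)$ copies.

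The first step is to decompose the noisy circuit output in the Pauli basis, $\rho_C = 2^{-n}\sum_s \hat\rho_C(s) P_s$ with $\hat\rho_C(s) = \Tr(\rho_C P_s)$, and derive a formula for $\E_C[\hat\rho_C(s)^2]$ that tracks the interplay of unitary layers (which spread Pauli weight) and depolarization layers (which multiplicatively suppress Pauli amplitudes by $(1-p)^w$ on weight-$w$ Paulis). Standard operator-spreading and Pauli-path arguments for random $1$D brickwork circuits imply that, with high probability over $C$, the Heisenberg-evolved weight of a nontrivial $P_s$ reaches $\omega(\log n)$ after $\omega(\log n)$ layers. Combining the noise-suppression and spreading estimates, I would then show $\sum_{1 \le |s| \le D} \E_C[\hat\rho_C(s)^2] \le 2^{-\tOmega(D\log n)}$, which via the framework upgrades to the desired degree-$D$ hardness; the polynomial $m$-factor from the copy count is absorbed by choosing $D$ slightly above the $\omega(\log n)$ scale.

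The main obstacle is that at the extremely low noise rate $p = O(1/n^{1-\delta'})$, each noise layer suppresses a weight-$w$ Pauli only by $(1-p)^w \approx \exp(-w/n^{1-\delta'})$, which is close to $1$ whenever $w = o(n^{1-\delta'})$, and in particular whenever $w$ is as large as a depth-$\omega(\log n)$ circuit can produce. Noise alone therefore cannot deliver the required design error, so the bulk of the suppression must be inherited from the anti-concentration of the purely unitary random-shallow-circuit ensemble (as already used in the noiseless corollary above), with noise acting only as a refinement. Careful bookkeeping of the noise layers' positions relative to the lightcone of each Pauli path will be essential, and is precisely what makes the argument go through in the very-low-noise regime.

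Finally, for the strengthened claim where noise appears in only $O(1)$ layers, the same template applies provided these noise layers are placed at depths where Pauli spreading has already amplified the surviving Paulis to weight $\omega(\log n)$; each such noise layer then contributes a factor $(1-p)^{\omega(\log n)}$, and combined with the noiseless random-circuit anti-concentration the approximate $2$-design bound still holds, yielding degree-$D$ hardness by the same application of \Cref{coro:general_single_qubit}.
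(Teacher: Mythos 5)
Your proposal takes a genuinely different route from the paper, and while the high-level diagnosis is sound, the key quantitative step is not carried out.

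Where you agree with the paper: the statement is ultimately an instance of \Cref{coro:general_single_qubit}, and your mid-course observation that at noise rate $p = O(1/n^{1-\delta'})$ the noise factor $(1-p)^w$ is essentially $1$ for all weights $w$ that an $\omega(\log n)$-depth circuit can reach is exactly right. The paper's own computation confirms this: in the bound $R(|A|) = 2^{|A|-n}[c^{nl}(1-2^{-n}) + l\epsilon] + \epsilon^*$, the noise-driven contraction $c^{nl}$ is swamped by the $l\epsilon$ term at this noise rate, and the dominant scales are the $2^{|A|-n}$ factor coming from the final block's $2$-design averaging and the approximation error $\epsilon^*$.

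Where you diverge: you propose a Pauli-basis decomposition $\rho_C = 2^{-n}\sum_s \hat\rho_C(s)P_s$ and an operator-spreading/Pauli-path estimate for $\sum_{1\le|s|\le D}\E_C[\hat\rho_C(s)^2]$. The paper instead works directly with the purity $\tr((\rho^{(l)})^2)$ and its layer-by-layer contraction (\Cref{fact:EMpaper}, inherited from Ref.~\cite{quek2024exponentially}), then uses a single Haar average over the last unitary block to show that $\E\|\tr_{[n]\setminus A}[\rho^{(l)}] - I_A/2^{|A|}\|_2^2 \le 2^{|A|-n}[\tr((\rho^{(l-1)})^2) - 2^{-n}] + \epsilon^*$ (\Cref{corr:EM}), followed by Markov and a union bound over copies. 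This sidesteps the Pauli-path bookkeeping entirely: the $2$-design averaging over a single block yields the $2^{|A|-n}$ suppression in one step, and the approximation errors $\epsilon, \epsilon^*$ are controlled by the explicit depth-$O(\log(n/\epsilon))$ construction of approximate $2$-designs from \cite{schuster2024random}, not by a lightcone argument.

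The concrete gap in your write-up is that the sentence beginning ``Careful bookkeeping of the noise layers' positions relative to the lightcone of each Pauli path will be essential, and is precisely what makes the argument go through'' names the crux and then leaves it unproved. It is also not clear your route would close even in principle as stated: generic $1$D brickwork random circuits are not known to form $\epsilon$-approximate $2$-designs at depth $\omega(\log n)$ — the paper needs the coarse-grained construction of \cite{schuster2024random} — so an unqualified appeal to ``anti-concentration of the purely unitary random-shallow-circuit ensemble'' does not by itself give the needed $2^{-\tOmega(D\log n)}$ design error. You would need to (a) fix the same structured ensemble (interleaved blocks of shallow approximate $2$-designs with noise between them) that the paper uses, and (b) turn the Pauli-path spreading intuition into a precise second-moment estimate that accounts for the design approximation error. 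Once you have done both, you are essentially reconstructing the paper's purity argument in a different basis; the paper's version is shorter because the purity contraction lemma packages the entire Pauli-weight bookkeeping into a single scalar recursion.
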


\subsubsection{Beyond designs: fine-grained thresholds for quantum planted biclique}

One selling point of the classical low-degree method that the results above do not yet capture is its ability to pinpoint \emph{fine-grained} thresholds for the signal-to-noise ratio at which a problem transitions from being computationally tractable to computationally intractable but information-theoretically easy. More precisely, in addition to problems for which computational hardness holds for any polynomial sample complexity, the low-degree method also provides accurate predictions for problems for which computational hardness only manifests with a \emph{specific} polynomial scaling for the sample complexity. While identifying phase transitions is integral to understanding physical systems in other quantum information contexts \cite{Skinner_19_Measurement, Ippoliti21, Ippoliti_2024}, the field currently lacks frameworks for establishing such fine-grained statements when it comes to the onset of hardness for learning.

In the second part of this work, we take a step towards filling this gap by identifying the first average-case quantum learning problem, parametrized by a natural notion of signal-to-noise ratio, for which one can establish such fine-grained bounds. The problem we propose, which we call \emph{Quantum Planted Biclique}, can be thought of as a quantum generalization of the well-studied problem of \emph{planted biclique}~\cite{feldman2017statistical}.

The goal is to distinguish, given $n$ copies of an unknown $n$-qudit state $\sigma\in(\mathbb{C}^{d\times d})^{\otimes n}$, whether $\sigma$ is the maximally maxed state or was generated as follows: 
\begin{itemize}[leftmargin=*,topsep=0.5em,itemsep=0pt]
    \item Nature samples a Haar-random $d$-dimensional state $\rho$ and an unknown subsystem $S\subseteq[n]$ of expected size $\lambda$ at the outset.
    \item Each copy of $\sigma$ is constructed by ``planting'' $\rho$ into the qudits indexed by $S$ in an otherwise maximally mixed state, and then corrupting the state with global depolarizing noise of rate $\lambda/n$.
\end{itemize}
We defer the formal definition of this problem to Section~\ref{sec:conn_planted_clique}. 

For intuition, consider the case of $d = 2$. If $\rho$ were $\ket{1}$, rather than Haar-random, and every copy was measured in the computational basis, then the readouts correspond to a random bipartite $n\times n$ graph where a complete bipartite graph is planted in some subgraph of expected size $\lambda\times \lambda$. In contrast, if $\sigma$ were maximally mixed, the readouts would correspond to a random bipartite $n\times n$ graph. Distinguishing between these two cases is exactly the problem of \emph{(classical) planted biclique}. For that problem, distinguishing is information-theoretically possible but strongly believed to be computationally intractable when $\log(n) \ll \lambda \ll \sqrt{n}$~\cite{feldman2017statistical,brennan2020statistical}.

When $\rho$ is instead Haar-random, the resulting classical distinguishing problem that arises after measuring in the computational basis is very similar, but instead of a complete bipartite graph, what is planted is a bipartite graph whose edge probabilities are a constant bounded away from $1/2$. This is the problem of \emph{planted dense subgraph}, which exhibits the same information-computation gap as planted biclique.

We ask how the complexity landscape of quantum planted biclique changes as $d$ grows with $n$. It is straightforward to show that when $\lambda = \omega(n^{1/2}d^{1/4})$, there is a low-degree protocol based on local measurements in the computational basis which succeeds at distinguishing the two cases (see \Cref{lem:local_upper_pds}). In this work, we show a matching lower bound under our low-degree framework:

\begin{theorem}[Informal, see Theorem~\ref{thm:pds_lbd}]\label{thm:pds_informal}
    If $\lambda \le \tilde{o}(n^{1/2}d^{1/4})$, quantum planted biclique is degree-$n^{o(1)}$ hard for non-adaptive local measurements.
\end{theorem}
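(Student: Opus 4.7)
The plan is to bound the low-degree advantage $\|L^{\le D}\|^2 = \sum_{|\alpha|\le D}|\wh L(\alpha)|^2$, expanded in the product Fourier basis associated with the null distribution of readouts. First, by conditioning on the hidden pair $(\rho,S)$ and exploiting both the iid structure of the $n$ copies and the tensor-product form of non-adaptive local measurements, each Fourier coefficient factorizes cleanly as
\[
\wh L(\alpha) \;=\; \left(\frac{\lambda}{n}\right)^{|N(\alpha)| + |T^*(\alpha)|}\cdot \E_{\rho\sim\mathrm{Haar}(\C^d)}\!\Biggl[\prod_{(t,i)\in\supp(\alpha)}\tr\bigl(\rho\, A^{(t,i)}_{\alpha_{t,i}}\bigr)\Biggr],
\]
where $N(\alpha)=\{t:\alpha_t\ne 0\}$, $T^*(\alpha)=\bigcup_{t\in N(\alpha)}\supp(\alpha_t)$, and $A^{(t,i)}_j = \sum_b\chi_j(b)M^{(t,i)}_b$ is the Fourier-transformed POVM operator at qudit $i$ of copy $t$. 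The structural fact that drives the analysis is that $A^{(t,i)}_j$ is \emph{traceless} whenever $j\ne 0$.

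Next, I would bound the Haar moment using $\E[\rho^{\otimes m}] = \Pi_{\mathrm{sym}}/\binom{d+m-1}{m}$ with $m=|\alpha|$, expanding $\Pi_{\mathrm{sym}}$ as a sum over permutations $\pi\in S_m$ of cycle products $\prod_{C\in\pi}\tr\bigl(\prod_{e\in C}A_e\bigr)$. Since $\tr(A_e)=0$, only derangements contribute, and the leading pair-matching terms yield $|\E_\rho[\cdots]| \lesssim (m-1)!!\cdot d^{-m/2}$ up to polynomial combinatorial factors.

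Then I would organize the sum over $\alpha$ by the triple $(k_1,k_2,m)=(|N|,|T^*|,|\alpha|)$ together with the bipartite graph structure of $\supp(\alpha)\subseteq N\times T^*$. Using $\binom{n}{k_i}(\lambda/n)^{2k_i}\lesssim (\lambda^2/n)^{k_i}/k_i!$, enumerating bipartite graphs with min-degree $\ge 1$ via $\binom{k_1 k_2}{m}$, and combining with the squared Haar suppression $((m-1)!!)^2 d^{-m}$, the resulting bound is a convergent sum precisely when $\lambda^2/n\lesssim d^{1/2}$, i.e., $\lambda\le\tilde o(n^{1/2}d^{1/4})$, with degree $D=n^{o(1)}$ sufficing because the $d^{-m/2}$ factor superpolynomially dampens large-$m$ contributions.

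The main obstacle is the dense-edge regime where $m$ approaches $k_1k_2$: here $\binom{k_1k_2}{m}$ grows rapidly and one must verify that the Haar suppression still dominates. This requires tracking the additional suppression from Fourier-index alignment constraints imposed by each cycle in the Weingarten expansion \--- each cycle of length $\ell$ effectively kills a $1-O(d^{-1})$ fraction of index tuples, supplying exactly the extra decay needed to reach the $d^{1/4}$ threshold. Matching the combinatorial prefactors precisely is the most delicate step, but it mirrors the analysis for classical planted dense subgraph adapted to the Haar-averaged signal structure specific to the quantum setting.
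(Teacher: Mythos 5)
Your plan shares the paper's backbone: the $\kappa^{|N(\alpha)|+|T^*(\alpha)|}$ factorization from averaging over the planted subset $S$, the tracelessness of the Fourier-transformed POVM operators $A^{(t,i)}_j$ (which, together with $\Delta = \rho - \Id/d$ being traceless, is the same structure the paper uses), the symmetric-subspace permutation expansion of the Haar moment, and the final bookkeeping by $(|N|, |T^*|, |\supp(\alpha)|)$ and the bipartite support graph. However, there is a genuine gap at the central moment step, and it is precisely where the paper's order of operations differs.

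Your estimate $|\E_\rho[\prod_e\tr(\rho A_{\alpha_e})]|\le C\,(m-1)!!\,d^{-m/2}$ (with $m = |\supp(\alpha)|$) is a \emph{pointwise} bound for a single Fourier index $\alpha: W\to\{1,\dots,d-1\}$. The low-degree advantage requires summing $|\widehat{L}(\alpha)|^2$ over all such $\alpha$, of which there are $(d-1)^m$ per fixed support $W$. Your final displayed sum over $(k_1,k_2,m)$ and bipartite graphs counts $\binom{n}{k_1}\binom{n}{k_2}\binom{k_1k_2}{m}$ supports but silently drops this $(d-1)^m$ factor; reinstating it gives roughly $(d-1)^m\cdot((m-1)!!)^2 d^{-m} \approx ((m-1)!!)^2$, which washes out the Haar suppression and does \emph{not} yield the $\lambda\lesssim n^{1/2}d^{1/4}$ threshold. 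You recognize this as "the main obstacle" via cycle alignment constraints, but that is not a refinement to be deferred---it \emph{is} the proof. One would need to show that, for each derangement $\pi$ with $r$ cycles, only an $\approx d^{-(m-r)}$ fraction of $\alpha$-tuples yields a nonvanishing cycle product $\prod_{C\in\pi}\tr\bigl(\prod_{e\in C}A_{\alpha_e}\bigr)$, and this becomes genuinely intricate when the $A_e$ for different edges of a cycle are \emph{not} simultaneously diagonal (i.e., different measurement bases across qudits and copies), since the cycle trace no longer reduces to a simple linear congruence in $\alpha$.

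The paper avoids this entirely by summing over $\alpha$ \emph{before} expanding in permutations. Inside the Fourier mass $\mu_W = \sum_\alpha |\widehat{D}[(W,\alpha)]|^2$, it uses the discrete orthogonality $\sum_{a=1}^{d-1}\xi^{a(x-x')} = d\cdot\bone[x=x'] - 1$ to collapse the $\alpha$-sum into the clean bilinear quantity
\begin{align*}
C_{i,j}(\Delta,\Delta') = \E_x\bigl[\bra{\psi^{i,j}_x}d\Delta\ket{\psi^{i,j}_x}\,\bra{\psi^{i,j}_x}d\Delta'\ket{\psi^{i,j}_x}\bigr],
\end{align*}
and only \emph{then} expands $\E[(d\Delta)^{\otimes|W|}]$ via Lemma~\ref{lem:haar} and bounds the permutation contributions with Cauchy--Schwarz and Lemma~\ref{lem:derange}. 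This reversal makes the $\alpha$-sum cost essentially free, no alignment accounting needed. To rescue your version, you would have to prove the cycle-alignment suppression uniformly over arbitrary per-coordinate bases---a substantial additional argument that the paper's reordering renders unnecessary.
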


\noindent When $\lambda = \omega(\min(n^{1/2}d^{1/4}, d))$, one can verify that there is a computationally \emph{inefficient} protocol using the same measurements which succeeds (see \Cref{lem:local_upper_pds}). This means that for any $d = o(n^{2/3})$, there is an information-computation gap for this problem.

\begin{remark}
    When $d = \omega(n^{2/3})$, we conjecture that quantum planted biclique is not just low-degree hard, but information-theoretically impossible with local measurements. Interestingly, we observe that in this regime, there is a low-degree algorithm based on \emph{non-local} single-copy measurements which works provided $\lambda = \omega(n^{2/3})$ (see Lemma~\ref{thm:single_copy_pds}). We leave as an intriguing open question resolving the low-degree complexity of quantum planted biclique under non-local single-copy measurement strategies. For instance, is there such an algorithm that works even below $\lambda \asymp n^{2/3}$? This would yield a fine-grained \emph{computational} separation between local and non-local single-copy measurement strategies.
\end{remark}

\noindent Given the central role played by the planted clique conjecture within the classical literature on complexity of statistical inference, in particular the extensive web of reductions from this problem to other inference problems~\cite{berthet2013complexity,brennan2018reducibility,brennan2020reducibility}, our definition of the quantum version of this problem may also be of independent interest.

\subsubsection{Quantum hardness for hybrid and classical problems.}

In the last part of this work, we explore whether the low-degree framework can also be used to establish quantum hardness for problems where a portion or all of the input is \emph{classical}.

Our first result along these lines concerns the \emph{Learning Stabilizers with Noise (LSN)} problem~\cite{poremba2024learning}, recently proposed as a hybrid classical-quantum analogue of Learning Parity with Noise (LPN). Whereas the latter can be viewed as the task of decoding a random classical linear code in the presence of bit flip noise, LSN is that of decoding a random \emph{quantum stabilizer code} in the presence of local depolarizing noise. Importantly, in this problem, one is given a \emph{classical} description of a random stabilizer subgroup $S$, in addition to a noisy quantum encoded state given by passing an unknown state in the computational basis through an encoding circuit for the stabilizer code associated to $S$.

Importantly, one has to be careful how to formulate the low-degree framework in this hybrid context. LPN is notably low-degree hard even in the absence of noise because Gaussian elimination, while polynomial-time, is not a low-degree algorithm, and LSN inherits this behavior as a generalization of LPN. In particular, if one does not constrain what kind of measurement is performed on the encoded state, there are simple Gaussian elimination-based ways to ``cheat'' and select a measurement which renders the subsequent classical post-processing tractable under low degree. We circumvent this issue by modeling the choice of measurement as low-degree in the classical input, establishing the following average-case hardness result:

\begin{theorem}[Informal, see \Cref{thm:other_lsn_avg}]
    Learning Stabilizers with Noise is $\mathrm{polylog}(n)$-degree hard in the average case for projective measurements.
\end{theorem}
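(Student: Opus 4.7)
The plan is to lift the LPN-style low-degree hardness argument to the hybrid quantum setting, combining the general framework of Section~\ref{sec:general} with design properties of random stabilizer codes and a careful averaging over the classical input $S$. The conceptual point is that although $S$ is revealed to the distinguisher, both the measurement assignment $M(S)$ and the final distinguisher are constrained to be low-degree in $S$, which rules out the classical Gaussian-elimination-style manipulations that would otherwise break an unrestricted distinguisher for any LPN-like problem.

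First I would formalize the hybrid low-degree model: a protocol is specified by a measurement assignment $S \mapsto M(S)$ that is a low-degree function of $S$, together with a distinguisher $f(S, \bm{o})$ that is a low-degree polynomial in both the bits of $S$ and the measurement outcomes $\bm{o}$. The null hypothesis $P_0$ draws $S$ from its prescribed distribution and takes the quantum state to be maximally mixed, while the alternative $P_1$ draws $S$ identically but replaces the quantum state by the noisy stabilizer encoding of a uniformly random computational-basis message $x$. Expanding the joint low-degree likelihood ratio on $(S, \bm{o})$, any monomial that depends only on $S$ cancels because the marginals of $S$ agree; the surviving terms touch at least one outcome bit, and after conditioning on $S$ reduce to a purely quantum low-degree indistinguishability question between $\Id/2^n$ and the noisy encoded ensemble $\{\mathcal{N}(U_S |x\rangle\langle x| U_S^\dagger)\}_x$ under the fixed measurement $M(S)$.

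Next I would verify the design input required by Theorem~\ref{thm:general_informal}. First moments match for every $S$, since averaging $|x\rangle\langle x|$ over uniform $x$ already gives $\Id/2^n$, which is fixed by the Clifford encoding $U_S$ and by local depolarizing noise. For the second moment I would additionally randomize over $S$, using the fact that random stabilizer states form an exact $3$-design; composed with local depolarizing noise, this yields an approximate state $2$-design bound on the encoded ensemble with design error controlled by the noise rate. Plugging this bound into Theorem~\ref{thm:general_informal} upgrades the second-moment match to indistinguishability against all degree-$\mathrm{polylog}(n)$ polynomials in $\bm{o}$, and combining with the outer expectation over $S$ closes out the joint LDLR bound.

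The main obstacle I expect is controlling the encoded ensemble uniformly over atypical $S$. Because $S$ is revealed to the distinguisher, one cannot simply integrate the likelihood ratio over $S$ in a black-box way; instead one must argue that the design property holds with high probability over $S$, and absorb the measure of bad $S$ into the final bound. A related subtlety is that the depolarizing noise is essential: without it the pure encoded ensemble $\{U_S |x\rangle\}$ is too sharply peaked for the second-moment bound to be nontrivial, and the noise rate interacts with the design error to set the degree threshold. Finally, one should verify that the hybrid model defined above is powerful enough to subsume standard classical low-degree attacks on LPN, so that the LSN hardness statement is a genuine quantum extension of the classical one rather than an artificially weakened threat model.
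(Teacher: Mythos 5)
Your proposal takes a genuinely different route from the paper, but it has a gap that I do not think can be repaired within the design-based framework you invoke.

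The paper's argument is a chain of \emph{low-degree-preserving reductions}: it starts from the worst-case $k$-XORSAT (noiseless LPN) hardness proof, where every Fourier coefficient of weight $\le k$ of the likelihood ratio vanishes because the hypergraph $\mathcal{H}$ has no even cover of size $\le k$; it then embeds this into a worst-case noiseless $\mathsf{LSN}$ instance via the encoding $U_A\colon\ket{0^{n-l}}\otimes\ket{x}\mapsto\ket{Ax}$, noting that a degree-$k$ quantum distinguisher for $\mathsf{LSN}$ (unitary is linear, outcome probabilities are quadratic in the qubit amplitudes) yields a degree-$2k$ classical distinguisher for LPN; and finally it applies an exact worst-to-average reduction using a random logical Pauli $X^u$ and a random Clifford $C$, both of which are degree-preserving re-randomizations. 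This gives average-case hardness \emph{with no noise at all}, which the paper emphasizes as a feature.

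Your approach instead tries to feed the problem into the $2$-design machinery of Theorem~\ref{thm:general_informal}. There are two concrete problems. First, your claim that ``first moments match for every $S$'' is false: averaging $\ket{x}\bra{x}$ over uniform $x\in\{0,1\}^l$ and applying $U_{\text{Enc},S}$ gives the maximally mixed state \emph{on the $2^l$-dimensional codespace}, i.e. $U_S\bigl(\ketbra{0^{n-l}}\otimes I_l/2^l\bigr)U_S^\dagger$, which for $l<n$ is a rank-$2^l$ projector and is far from $I/2^n$. Second, and more fundamentally, the distinguisher sees $S$; the ensemble conditioned on a fixed $S$ is just $2^l$ orthonormal codewords of one code, which is not remotely a $2$-design, so Theorem~\ref{thm:general_informal} cannot be applied per $S$. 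Averaging over $S$ to recover the $3$-design property of stabilizer states is not permitted, because the low-degree distinguisher may depend on $S$. You partially acknowledge this (``one cannot simply integrate over $S$ in a black-box way''), but the plan of ``argue the design property holds w.h.p. over $S$'' does not address it: the design property fails for \emph{every} $S$, not just atypical ones. Finally, you note that depolarizing noise is essential to your argument; this confirms the mismatch, since the theorem you are asked to prove holds even in the noiseless case, which is inaccessible to a second-moment/design bound. The reduction-based route via $k$-XORSAT and re-randomization is what closes the argument here.
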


\noindent Interestingly, unlike the results of~\cite{poremba2024learning}, ours applies to the ``standard'' average-case formulation of the problem, where the underlying stabilizer code is uniformly random. One drawback of our result however is that it applies even when there is no noise. This is an inherent limitation of the low-degree method, as LPN and other problems with significant algebraic structure are low-degree hard despite being solvable in the absence of noise using brittle algebraic methods like Gaussian elimination~\cite{holmgren2020counterexamples}. Nevertheless, given that these algorithms are ``high-degree,'' the low-degree method provides evidence for the hardness of these problems in the presence of noise. 

Finally, we also consider purely classical problems. Recently, there has been interest in establishing quantum advantage for classical statistical inference problems~\cite{schmidhuber2025quartic,hastings2020classical}. In this work, we observe that a natural family of approaches based on reducing to a quantum state learning problem by applying a low-degree polynomial encoding of the classical input into copies of a quantum state, applying non-adaptive single-copy measurements, and running low-degree classical post-processing on the readouts will fail provided the original problem is low-degree hard in the classical sense. 

By applying this reasoning to tensor PCA~\cite{montanari2014statistical}, we obtain as a corollary the following average-case hardness result, which complements a recent result of~\cite{bakshi2024learninga} that showed that agnostically learning product states is $\mathsf{NP}$-hard in the worst case.

\begin{theorem}[Informal, see Theorem~\ref{thm:agnostic_product}]\label{cor:product_informal}
    There is a simple ensemble over $4n$-qubit pure states for which, given $\mathrm{poly}(n)$ copies of a state $\rho$ drawn from this ensemble, for any $D=o(n)$ it is degree-$D$ hard to approximately learn the best product state approximation to $\rho$, even though such an approximation exists with fidelity at least $1/\mathrm{poly}(n)$.
\end{theorem}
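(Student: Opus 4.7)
The plan is to derive Theorem~\ref{cor:product_informal} via a reduction from the classical low-degree hardness of order-$4$ tensor PCA. Recall that in tensor PCA one observes $T = \lambda\, v^{\otimes 4} + W$, with $v$ a planted unit vector in $\mathbb{R}^n$ and $W \in (\mathbb{R}^n)^{\otimes 4}$ a tensor with i.i.d.\ Gaussian entries, and wishes to recover $v$. In the regime $\lambda = o(n)$, standard low-degree calculations show that no polynomial of degree $\Omega(n)$ in the entries of $T$ is nontrivially correlated with $v^{\otimes 4}$; I would use this as a black box.

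First, I would specify the ensemble via the ``unary'' encoding that sends $v \in \mathbb{R}^n$ to the $n$-qubit state $\ket{v} := \sum_i v_i \ket{e_i}$, where $\ket{e_i}$ denotes the Hamming-weight-one computational basis vector $\ket{0^{i-1} 1 0^{n-i}}$. Extending this multilinearly gives the $4n$-qubit pure state
\[
\ket{\psi_T} \;:=\; \frac{1}{\|T\|} \sum_{i,j,k,l \in [n]} T_{ijkl} \, \ket{e_i}\otimes\ket{e_j}\otimes\ket{e_k}\otimes\ket{e_l},
\]
and the ensemble is the induced law on $\ket{\psi_T}$ with $T$ drawn from the planted tensor PCA model at $\lambda = n^{1-\delta}$ for an arbitrarily small $\delta > 0$. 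A direct calculation yields $\|T\|^2 = (1 \pm o(1))\, n^4$ with overwhelming probability, so the four-block product state $\ket{v}^{\otimes 4}$ achieves fidelity $\Omega(\lambda^2/n^4) = \Omega(n^{-2-2\delta})$ with $\ket{\psi_T}$, confirming the $1/\mathrm{poly}(n)$ fidelity claim (with respect to the partition into four $n$-qubit blocks, which is the natural notion of product state here).

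The main argument is a black-box reduction. Suppose for contradiction there exists a pipeline consisting of non-adaptive single-copy measurements on $m = \mathrm{poly}(n)$ copies of $\ket{\psi_T}$ followed by a degree-$D$ classical post-processing that, with nontrivial probability over the tensor PCA distribution, outputs a product state whose fidelity with $\ket{\psi_T}$ is within additive $o(n^{-2-2\delta})$ of the optimum. Given a classical $T$, we may simulate the pipeline: each POVM outcome $x$ occurs with probability $|\braket{x|\psi_T}|^2$ which, after accounting for the scalar $\|T\|^{-2}$, is a degree-$2$ polynomial in the entries of $T$. The expectation over $m$ i.i.d.\ measurement outcomes of any degree-$D$ classical estimator is a degree-$D$ polynomial in these probabilities, hence a degree-$O(D)$ polynomial in the entries of $T$. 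Applying a constant-degree symmetric decoder to each of the four factors of the resulting approximate product state extracts an approximation to $v$ (up to the inherent sign symmetry of order-$4$ tensor PCA), contradicting the low-degree hardness of tensor PCA for any $D = o(n)$.

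The main technical obstacle is the non-polynomial factor $\|T\|^{-2}$ that appears in the measurement probabilities. I would resolve this via sharp Gaussian concentration of the Frobenius norm: $\|T\|^2 = (1 \pm n^{-\omega(1)}) n^4$ except on an event of measure $2^{-\omega(n)}$, so we may replace $\|T\|^{-2}$ by the deterministic constant $n^{-4}$ at a cost negligible for the low-degree analysis. Alternatively, one can replace the unary encoding by a block-encoded variant with amplitudes exactly linear in $T$ at the expense of an auxiliary garbage subspace, which inflates the required degree by only a constant factor. Either route is easily absorbed into the slack between the assumed $D = o(n)$ and the $\Omega(n)$ low-degree threshold for tensor PCA, so the argument goes through.
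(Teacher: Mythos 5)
Your high-level plan---encode order-$4$ tensor PCA into a quantum state via the unary map $v\mapsto\sum_i v_i\ket{e_i}$, then argue that a low-degree learner for the best product approximation would yield a low-degree algorithm for the classical problem---is indeed the same route as the paper's. However, there are several concrete gaps.

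\textbf{The parameter regime is wrong.} You set $\lambda = n^{1-\delta}$ (unit-vector normalization) for ``arbitrarily small $\delta>0$'' and assert that degree-$\Omega(n)$ polynomials fail whenever $\lambda = o(n)$. This conflates the polynomial-time threshold with the degree-$D$ threshold. The degree-$k$ low-degree threshold for order-$4$ tensor PCA (Kunisky--Wein--Bandeira, as quoted in the paper's Lemma~\ref{lem:tensor_pca}) is $\lambda_{\text{paper}}\lesssim n^{-1}k^{-1/2}$ in the $x\in\{\pm1\}^n$ convention, which translates to $\lambda_{\text{unit}}\lesssim n\, k^{-1/2}$ for a unit spike. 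At $\lambda_{\text{unit}}=n^{1-\delta}$ one only obtains degree-$O(n^{2\delta})$ hardness, which shrinks to constant as $\delta\to 0$---the opposite of what you need. To cover all $k = o(n)$ the paper must take $\lambda_{\text{paper}}$ as small as $\Theta(n^{-3/2+o(1)})$, i.e.\ $\lambda_{\text{unit}}$ down to $\sqrt{n}$, and it consequently gets a $1/\mathrm{poly}(n)$ fidelity closer to $n^{-3}$ than to $n^{-2-2\delta}$.

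\textbf{The decoding step is not low-degree for free.} Your argument is a recovery reduction: ``apply a constant-degree symmetric decoder to the four factors to extract $v$.'' But the output of an agnostic-learning algorithm is a classical description of an $n$-qubit state in each of the four blocks, with a priori full support over $2^n$ basis vectors (not just the unary subspace), and you haven't argued either that the best product approximation is close to $\ket v^{\otimes 4}$ or that such a decoder is itself a bounded-degree function of the learner's output. The paper sidesteps this entirely by not recovering $v$: it instead shows that the best product-state fidelity is a (rescaled) proxy for the tensor spectral norm $\|T\|_{\text{spec}}$, and then uses the gap between $\|G\|_{\text{spec}}$ and $\|\lambda x^{\otimes 4}+G\|_{\text{spec}}$ to turn the learner into a \emph{detector}, which is what the classical low-degree lower bound rules out directly.

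\textbf{You drop the Haar-random isometry.} The paper's Lemma~\ref{lem:agnostic_reduction} embeds a smaller $n'$-dimensional tensor via $U^{\otimes 4}T$ for a Haar-random isometry $U$ before applying the unary encoding. This is needed because the inequality it borrows from~\cite{bakshi2024learninga} (their Lemma 7.6) relating $\mathrm{OPT}_{\ket{\psi_T}}$ to $e^{-2}\|T\|_{\text{spec}}$ only holds up to error terms controlled by $M=\max_{ijkl}n^2|T_{ijkl}|$, and the raw Gaussian entries make $M=\Theta(n^2\mathrm{polylog}(n))$, which destroys the bound. The random rotation flattens the entries so that $M=\mathrm{polylog}(n)$. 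Without this step, the fidelity-to-spectral-norm translation underlying the paper's detection argument fails, and with your recovery formulation the analogous control over spurious near-optimal product states is absent.

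In short: the encoding and the shape of the reduction are right, but you have the signal strength off by a $\sqrt{n}$ factor (rendering the hardness claim false at degree $o(n)$), you replace the paper's clean spectral-norm detection argument with an unspecified recovery decoder, and you omit the isometry rotation that makes the fidelity-to-spectral-norm relationship quantitatively true.
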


\subsection*{Roadmap}

In Section~\ref{sec:preliminary}, we review basic notions from quantum information and from the classical low-degree method. In Section~\ref{sec:related}, we discuss relevant prior work and situate our results within the broader literature on the complexity of quantum and classical learning. In Section~\ref{sec:overview}, we provide a technical overview of the main proof ingredients behind our contributions. 

In Section~\ref{sec:general}, we present our first set of results connecting state $2$-designs to low-degree hardness for non-adaptive single-copy measurement protocols. In Section~\ref{sec:adapt}, we prove our results on adaptive measurements. In Section~\ref{sec:applications}, we apply the theory from the preceding sections to derive new information-computation gaps for quantum error mitigation, learning Gibbs states, and random quantum circuits. In Section~\ref{sec:conn_planted_clique}, we present the quantum planted biclique problem and prove upper and lower bounds for this problem. In Section~\ref{sec:hybrid_classical}, we prove our results on low-degree hardness for problems with hybrid classical-quantum or fully classical input, namely for Learning Stabilizers with Noise and for agnostically learning product states. In the Appendix, we provide additional proof details.

\section{Preliminaries}\label{sec:preliminary}
Here, we collect the basic concepts and technical tools used in quantum information and classical low-degree likelihood ratio throughout this paper. The readers can skip these background results according to their knowledge in these fields.

\paragraph{Notation.} We use $\norm{A}$ and $\norm{A}_1$ to represent the operator norm and trace norm of a matrix $A$, and use $\norm{v}_p$ to represent the $L_p$ norm of a vector $v$. We use $\Id_n$ to denote the $2^n\times 2^n$ identity matrix, and omit the subscript $n$ when it is clear from context. We use $\tO$, $\tOmega$, and $\tTheta$ to hide poly-logarithmic dependences in big-O notation. Given two distributions $p,q$ on a discrete domain $X$, we denote by $d_{\mathrm{TV}}(p,q)\coloneqq\frac12\sum_{x\in X}\abs{p(x)-q(x)}$ the total variation distance between them. We will abbreviate the set $\{1,2,...,c\}$ as
$[c]$. 

Given a distribution $D$ and an integer $m$, we denote by $D^{\otimes m}$ the joint
distribution over $m$ independent samples from $D$. Given $f,g\in L^2(D)$, we use $\langle \cdot, \cdot\rangle_{D}$ and $\|\cdot\|_{D}$ to denote the inner product and induced norm for $L^2(D)$ and drop the subscript $D$ when it is clear from context. We will occasionally conflate probability distributions with their laws and density functions.

\subsection{Quantum information basics}\label{sec:basic_quantum}

\paragraph{Quantum states.} 
The \emph{density matrix} of a general $n$-qubit mixed state is given by a psd matrix $\rho\in\C^{2^n\times 2^n}$ with $\tr(\rho)=1$. We also use $d=2^n$ to denote its dimension. $\rho$ is a \emph{pure state} if it can be represented as a rank-1 matrix $\ketbra{\psi}$ for some unit vector $\ket{\psi}\in\mathbb{C}^{2^n}$, in which case $\tr(\rho^2)=1$ and we sometimes refer to the state as $\ket{\psi}$. Given an $n$-qubit quantum state $\rho$ and a subset $S\subseteq[n]$, the quantum state on the subsystem supported by qubits in $S$ (also known as the \emph{reduced density matrix}) is denoted by the \emph{partial trace} $\tr_{[n]\backslash S}(\rho)$. Given two states $\rho,\sigma\in\C^{2^n\times 2^n}$, their \emph{trace distance} is $d_{\tr}(\rho,\sigma)=\frac12\norm{\rho-\sigma}_1$.

\paragraph{Quantum measurements.}
Quantum measurements can in general be represented as \emph{positive operator-valued measures} (POVMs). An $n$-qubit POVM is defined as a set of psd operators $\{F_s\}_s$ with $\sum_s F_s = \Id$, where each $F_s$ is called a \emph{POVM element} corresponding to the \emph{measurement outcome} $s$. The probability of obtaining outcome $s$ when we measure a given state $\rho$ using $\{F_s\}_s$ is given by $\tr(F_s\rho)$.

In this work, we often consider the important special case of \emph{projector-valued measures} (PVMs). These are POVMs whose elements are projection matrices. The set of all PVMs captures all measurements that can be performed on $n$-qubit states without introducing additional ancillary qubits. By Naimark's dilation theorem~\cite{naimark1940second}, an arbitrary $2^m$-outcome POVM can be realized by a PVM on an enlarged quantum system by introducing $m$ ancillary qubits.

Given a PVM $\{\Pi_s\}_s$, we can also associate the projectors $\{\Pi_s\}_s$ with an Hermitian observable $M=\sum_s \Pi_s$, so that $\Pi_s$ is the projector to the eigenspace of $M$ with eigenvalue $m$. We diagonalize $M=U^\dagger D U$ by a unitary matrix $U$ (not unique). Without loss of generality, we can assume any PVM can be performed by applying a unitary $U$ followed by the PVM given by projectors to the \emph{computational basis} (i.e. standard basis) vectors $\{\ket{x}: x\in\{0,1\}^n\}$. 

We say that \emph{PVM $\{\Pi_s\}_s$ can be prepared by a quantum circuit of depth $L$} if there exists such a $U$ that can be prepared by a geometrically-local quantum circuit containing only single- and two-qubit gates on neighboring qubits of depth $L$. 

\paragraph{Haar random unitaries and states.}The \emph{Haar measure $\muh$ on the unitary group $U(d)$} is the unique probability measure that is invariant under left- and right-multiplication by any $U\in U(d)$. The \emph{Haar measure on $d$-dimensional states} is the unique rotation invariant measure on states given by $\ket{\psi} = U\ket{0}$ with $U\sim\muh$ and $\ket{0} \in \mathbb{S}^{d-1}_{\mathbb{C}}$ the all-zeros state. We will often abuse notation and write $\psi\sim\muh$. Moments of the Haar measure can be computed with the following folklore formula (see e.g.~\cite{harrow2013church}):
\begin{align*}
\E_{\psi\sim\muh}\bigl[(\ketbra{\psi})^{\otimes k}\bigr]=\frac{\cS_k}{{{d+k-1} \choose k}}=\frac{1}{d\cdots (d+k-1)}\sum_{\pi\in S_k} \pi^d,
\end{align*}
where $\cS_k$ denotes the projector onto the symmetric subspace $\mathrm{Sym}^k(\mathbb{C}^d)$, $S_k$ is the symmetric group on $k$ elements, and $\pi^d$ acts on $(\mathbb{C}^d)^{\otimes k}$ via
\begin{align*}
\pi^d\ket{i_1,\ldots,i_k}=\ket{i_{\pi^{-1}(1)},\ldots, i_{\pi^{-1}(k)}}.
\end{align*}
Based on this formula, we can also deduce the following formula for any $\ket{\varphi}$:
\begin{align*}
\E_{\psi\sim\muh}\left[\abs{\braket{\psi|\varphi}}^{\otimes 2k}\right]=\E_{\psi\sim\muh}\left[\tr\left[(\ketbra{\psi})^{\otimes k}(\ketbra{\varphi})^{\otimes k}\right]\right]=\binom{d-k-1}{k}^{-1}.
\end{align*}
We will also need the following fact:
\begin{fact}[Lemma 25 of~\cite{nietner2023average}]\label{fact:haar_divide}
Let $\ket{i_1},...,\ket{i_t}$ with $i-1,...,i_t \in [d]$ be mutually orthogonal state vectors and $\lambda=(\lambda_1,...,\lambda_t)$ a partition of $k$ for $k\leq d$. We have
\begin{align*}
\E_{\psi\sim\muh}\left[\prod_{l=1}^t\abs{\braket{\psi|i_l}}^{2\lambda_l}\right]=\frac{\prod_{l=1}^t\lambda_l!}{d...(d+k-1)}.
\end{align*}
\end{fact}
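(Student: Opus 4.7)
The plan is to rewrite the product of overlaps as the trace of a tensor-product operator against $k$ copies of $\ketbra{\psi}$, then apply the Haar moment formula already displayed in the excerpt, and finally carry out a combinatorial analysis of which permutations in $S_k$ contribute.

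First I would observe that $|\braket{\psi|i_l}|^{2\lambda_l} = \tr\!\bigl[(\ketbra{i_l})^{\otimes \lambda_l}\, (\ketbra{\psi})^{\otimes \lambda_l}\bigr]$, and hence
\begin{align*}
\prod_{l=1}^t |\braket{\psi|i_l}|^{2\lambda_l} = \tr\!\left[ P \cdot (\ketbra{\psi})^{\otimes k}\right], \qquad P \coloneqq \bigotimes_{l=1}^t (\ketbra{i_l})^{\otimes \lambda_l},
\end{align*}
where the tensor factors in $P$ are arranged so that copies $1, \dots, \lambda_1$ carry $\ketbra{i_1}$, copies $\lambda_1+1,\dots,\lambda_1+\lambda_2$ carry $\ketbra{i_2}$, and so on. Taking expectations and plugging in the Haar moment identity $\E_\psi (\ketbra{\psi})^{\otimes k} = \frac{1}{d\cdots(d+k-1)}\sum_{\pi\in S_k}\pi^d$, the problem reduces to computing $\sum_{\pi \in S_k} \tr[P\,\pi^d]$.

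Next I would use the standard identity that for any tensor product $A_1 \otimes \cdots \otimes A_k$ and any permutation $\pi \in S_k$,
\begin{align*}
\tr\!\left[(A_1 \otimes \cdots \otimes A_k)\,\pi^d\right] \;=\; \prod_{\text{cycles } c \text{ of } \pi} \tr\!\left[\prod_{j \in c} A_j\right],
\end{align*}
where the product within each cycle is taken in the cyclic order of $\pi$. In our setting $A_j = \ketbra{i_{l(j)}}$ where $l(j)$ is the block index of position $j$. If a cycle of $\pi$ visits two different blocks $l \neq l'$, then the product inside that cycle contains a factor $\braket{i_l|i_{l'}} = 0$ by the orthogonality assumption, so the whole term vanishes. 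Conversely, if every cycle of $\pi$ is contained in a single block, then each cycle contributes $\tr[(\ketbra{i_l})^c] = 1$, and so $\tr[P\,\pi^d] = 1$.

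It therefore remains to count permutations $\pi \in S_k$ all of whose cycles lie within one of the blocks of sizes $\lambda_1, \dots, \lambda_t$; equivalently, permutations preserving the block decomposition $\{1,\dots,\lambda_1\}\sqcup \cdots \sqcup \{k-\lambda_t+1,\dots,k\}$. These are in bijection with $\prod_{l=1}^t S_{\lambda_l}$, giving exactly $\prod_{l=1}^t \lambda_l!$ contributing permutations. Assembling everything yields the claimed formula. I do not expect any real obstacle: the only mildly delicate step is verifying the cycle/trace identity and the orientation conventions for $\pi^d$, but these are routine once one fixes the convention $\pi^d\ket{i_1,\dots,i_k} = \ket{i_{\pi^{-1}(1)},\dots,i_{\pi^{-1}(k)}}$ as stated in the excerpt.
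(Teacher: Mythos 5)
The paper cites this as Lemma 25 of~\cite{nietner2023average} and gives no proof of its own, so there is nothing to compare against within the paper. Your argument is nevertheless correct and is the standard one: reduce to $\tr[P\,\pi^d]$ via the $k$th Haar moment, note that the trace factorizes over cycles of $\pi$, observe that orthogonality kills any cycle straddling two blocks, and count the surviving permutations as $\prod_l\lambda_l!$. One tiny imprecision: with the convention $\pi^d\ket{i_1,\dots,i_k}=\ket{i_{\pi^{-1}(1)},\dots,i_{\pi^{-1}(k)}}$, the per-cycle factor is $\tr\bigl[\prod_{j\in c}A_j\bigr]$ with the factors in \emph{reverse} cyclic order of $\pi$ rather than forward cyclic order, but since every $A_j$ within a block is the identical rank-one projector this makes no difference here.
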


\paragraph{Unitary and state designs.} 
A \emph{unitary (resp. state) $k$-design} is a probability distribution $\nu$ over $U\in U(d)$  (resp. over states $\ket{\psi}\in\C^d$) such that the $k$-th moments are the same as $k$-th moments of the Haar measure $\muh$.

\begin{definition}[Unitary (resp. state) design]
A probability distributions $\nu$ over unitary $U\in U(d)$ (pure states in $\mathbb{C}^d$) is called a \emph{unitary (resp. state) $k$-design} if the following inequalities are satisfied respectively:
\begin{align*}
&\E_{U\sim\mu}f(U)=\E_{U\sim\muh}f(U),\\
&\E_{\psi\sim\mu}(\ketbra{\psi})^{\otimes k}=\frac{\cS_{k}}{\binom{d+k-1}{k}},
\end{align*}
for any polynomial $f$ of degree at most $k$.
\end{definition}

\noindent By definition, $k$-designs are related to low-degree hardness: If all expectation values of degree $k$-polynomials look Haar, then the ensemble is $k$-degree hard to distinguish from Haar.
We will also need the more subtle concept of \emph{approximate} designs:

\begin{definition}[Approximate unitary (state) design]\label{def:approx_design}
A probability distribution $\nu$ over unitaries $U\in U(d)$ (resp. pure states in $\mathbb{C}^d$) is called an \emph{$\epsilon$-approximate unitary (resp. state) $k$-design} if the following inequalities are satisfied respectively:
\begin{align*}
&\max_{A\in\C^{d^2\times d^2}:\norm{A}_1\leq 1}\bigl\|\{(\Phi^{(k)}(\nu)-\Phi^{(k)}(\muh))\otimes \cI_n\}(A)\bigr\|\leq\frac{\epsilon}{d^k},\\
&(1-\epsilon)\E_{\psi\sim\mu}(\ketbra{\psi})^{\otimes k}\leq\frac{\cS_{k}}{\binom{d+k-1}{k}}\leq(1+\epsilon)\E_{\psi\sim\mu}(\ketbra{\psi})^{\otimes k},
\end{align*}
where $\cI_n$ is the identity transformation on a $d=2^n$-dimensional ancillary system and
\begin{align*}
\Phi^{(k)}(\mu)(B)=\int U^{\otimes k} BU^{\dagger\otimes k}d\mu(U)
\end{align*}
for any $B\in\C^{d\times d}$.
\end{definition}

\subsection{Low-degree likelihood ratio}\label{sec:basic_low_degree}

In this work we focus on \emph{hypothesis testing} problems. In preparation for discussing the particulars of the quantum setting we will consider in Section~\ref{sec:general}, here we review existing frameworks for establishing statistical and computational lower bounds for such problems in the \emph{classical setting}.

In a \emph{many-versus-one distinguishing problem} over $\R^n$, the goal is to distinguish between the following two cases, given samples $x_1,\ldots,x_m \in \R^n$:
\begin{itemize}
    \item $H_0$ (null case): the samples were drawn i.i.d. from a \emph{null distribution} $D_\emptyset$ over $\R^n$.
    \item $H_1$ (alternative case): the samples were drawn i.i.d. from some distribution $D_u$ coming from a family $\cS=\{D_u\}_{u\sim\mu}$, with $u$ sampled at the outset from prior distribution $\mu$.
\end{itemize}
We will assume \emph{$m=\poly(n)$ throughout this paper} as we only consider sample-efficient algorithms. Denote all $m$ samples as $\x=(x_1,...,x_m)\in\R^{m\times n}$.

\begin{definition}
    A distinguisher $f: (\R^n)^m \to \{0,1\}$ is said to achieve \emph{$\eta$-detection} if
    \begin{equation*}
        |\Pr_{D^{\otimes m}_\emptyset}[f(\x) = 1] - \Pr_{u\sim \mu}\Pr_{D^{\otimes m}_\mu}[f(\x) = 1]| > \eta\,.
    \end{equation*}
    If $\eta = \Omega(1)$ (resp. $\eta = 1 - o_n(1)$), we refer to this as \emph{weak (resp. strong) detection}.
\end{definition}

Given a hypothesis testing problem, the hardness of achieving weak detection for a class of distinguishers indicates that all such distinguishers fail to distinguish (and thus learn) with any $> 50\%$ constant probability of success.

\paragraph{Information-theoretic bounds.}

The central object for understanding how large $m$ must be before detection is possible is the \emph{likelihood ratio}. Throughout this work, we will assume that $D_u$ is absolutely continuous relative to $D_\emptyset$ for all $u$. We can then define the single-and $m$-sample likelihood ratios by 
\begin{equation*}
    \overline{D}_u(x) = D_u(x) / D_\emptyset(x) \qquad \text{and} \qquad \overline{D}^{\otimes m}_u(\x) = D_u^{\otimes m}(\x) / D_\emptyset^{\otimes m}(\x)\,.
\end{equation*}
The classic Neyman-Pearson lemma implies the following:
\begin{fact}\label{fact:NP}
    The likelihood ratio test $f(\x) \triangleq \bone[\E_u\overline{D}^{\otimes m}_u(\x) > 1]$ achieves $\eta$-detection if and only if $d_{\mathrm{TV}}(\E_u D^{\otimes m}_u(\x), D^{\otimes m}_\emptyset(\x)) > \eta$. 
\end{fact}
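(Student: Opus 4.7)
The plan is to identify the detection advantage of the likelihood-ratio test $f$ with the total variation distance appearing on the right-hand side of the claim. First, I would set $P(\x) \triangleq \E_{u\sim\mu} D_u^{\otimes m}(\x)$ for the marginal law of the samples under $H_1$, and $Q(\x) \triangleq D_\emptyset^{\otimes m}(\x)$ for their law under $H_0$. The ratio appearing in the definition of $f$ is then exactly $\E_u \overline{D}_u^{\otimes m}(\x) = P(\x)/Q(\x)$, so $f(\x) = \bone[P(\x) > Q(\x)]$, and the statement reduces to showing that this particular test distinguishes $P$ from $Q$ with advantage precisely $d_{\mathrm{TV}}(P, Q)$.

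The core step is the standard variational characterization of total variation distance: setting $S \triangleq \{\x : P(\x) > Q(\x)\}$, one has
\begin{equation*}
d_{\mathrm{TV}}(P, Q) \;=\; \sup_{A}\,\bigl(P(A) - Q(A)\bigr) \;=\; P(S) - Q(S),
\end{equation*}
with the supremum attained on $S$. This is immediate by splitting any candidate $A$ into its intersection with $S$ and with $S^c$ and using that $P-Q$ is nonnegative on $S$ and nonpositive on $S^c$. Since $\{f = 1\} = S$ by the definition of the likelihood-ratio test, this yields $\Pr_P[f = 1] - \Pr_Q[f = 1] = d_{\mathrm{TV}}(P, Q) \geq 0$, so the absolute value in the definition of $\eta$-detection can be dropped for this particular $f$.

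To conclude, I would simply rewrite the two probabilities in the language of the hypothesis testing problem: $\Pr_P[f = 1] = \E_{u\sim\mu}\Pr_{D_u^{\otimes m}}[f(\x) = 1]$ and $\Pr_Q[f = 1] = \Pr_{D_\emptyset^{\otimes m}}[f(\x) = 1]$. The detection gap of $f$ equals $d_{\mathrm{TV}}(P, Q)$ exactly, so $f$ achieves $\eta$-detection (with strict inequality) if and only if $d_{\mathrm{TV}}(P, Q) > \eta$.

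There is no real obstacle here; the proof is essentially the one-line observation that the likelihood-ratio test attains the supremum in the variational definition of total variation distance. The only subtlety worth flagging is that because the prior $\mu$ is integrated out on the $H_1$ side, the relevant TV distance is between $D_\emptyset^{\otimes m}$ and the \emph{mixture} $\E_{u\sim\mu} D_u^{\otimes m}$, rather than between $D_\emptyset^{\otimes m}$ and any individual component $D_u^{\otimes m}$ — this distinction will matter for the quantum extensions in \Cref{sec:general}. An equivalent route is to cite the classical Neyman--Pearson lemma for the simple-vs.-simple testing problem $P$ against $Q$, which simultaneously yields both the optimality of $f$ among all tests and the exact value $d_{\mathrm{TV}}(P, Q)$ of its advantage.
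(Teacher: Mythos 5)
Your proof is correct and matches the paper's intent: the paper simply states the fact as a consequence of the Neyman–Pearson lemma without writing out the argument, and you have spelled out precisely the standard variational-characterization-of-TV argument that underlies it, correctly reducing to a simple-vs.-simple test between $Q = D_\emptyset^{\otimes m}$ and the mixture $P = \E_u D_u^{\otimes m}$.
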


To motivate the low-degree likelihood ratio, we can consider the following ``$L_2$'' formulation of the likelihood ratio test. Instead of considering $L^\infty$-bounded distinguishers $f: (\R^n)^m\to\{0,1\}$, let us consider $L^2$-bounded distinguishers $p: (\R^n)^m\to\R$, i.e. test functions for which $\norm{p(\x)}_{D^{\otimes m}_\emptyset} \le 1$. 
One can readily verify that the choice of $p$ which maximizes the \emph{distinguishing advantage}, i.e. which solves \begin{equation}
    \sup_{\norm{p}_{D_\emptyset^{\otimes m}} = 1} \Bigl|\E_{D^{\otimes m}_\emptyset}[p(\x)] - \E_{u\sim \mu}\E_{D^{\otimes m}_\mu}[p(\x)]\Bigr| \label{eq:distinguishL2}
\end{equation}
is given by the shifted and normalized  likelihood ratio:
\begin{equation*}
    p^*(\x) = \frac{\E_u\overline{D}^{\otimes m}_u(\x) - 1}{\norm{\E_u\overline{D}^{\otimes m}_u - \textbf{1}}_{D_\emptyset}}
\end{equation*}
where $\textbf{1}$ denotes the constant $1$ function. This achieves distinguishing advantage $\norm{\E_u\overline{D}^{\otimes m}_u - \textbf{1}}_{D_\emptyset}$; in this work, this quantity will always be finite as we only consider distributions $D_u, D_\emptyset$ over discrete domains.

By Jensen's inequality, this optimal distinguishing advantage upper bounds $d_{\rm TV}(\E_u D^{\otimes m}_u, D_\emptyset^{\otimes m})$, so by Fact~\ref{fact:NP}, we obtain the following condition for information-theoretic hardness of detection:
\begin{lemma}
    If the optimal distinguishing advantage in Eq.~\eqref{eq:distinguishL2} is $o_n(1)$, then weak detection is not achievable by any distinguisher.
\end{lemma}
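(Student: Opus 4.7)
The plan is to chain three elementary observations: Cauchy--Schwarz to pass from an $L^2$ to an $L^1$ bound, the standard identification of $L^1$ distance between densities with twice the total variation distance, and the Neyman--Pearson lemma as packaged in Fact~\ref{fact:NP}.

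Let $\bar L(\x) \triangleq \E_{u\sim\mu}\overline{D}^{\otimes m}_u(\x)$ denote the averaged likelihood ratio. First I would expand total variation distance as a normalized $L^1$ distance between densities, giving
\begin{equation*}
    d_{\mathrm{TV}}\bigl(\E_u D^{\otimes m}_u,\, D^{\otimes m}_\emptyset\bigr) = \tfrac12\,\E_{\x\sim D_\emptyset^{\otimes m}}\bigl|\bar L(\x) - 1\bigr|.
\end{equation*}
Then Cauchy--Schwarz (exploiting $\E_{D_\emptyset^{\otimes m}}[\mathbf{1}^2]=1$) yields
\begin{equation*}
    \E_{\x\sim D_\emptyset^{\otimes m}}\bigl|\bar L(\x) - 1\bigr| \;\le\; \bigl\|\bar L - \mathbf{1}\bigr\|_{D_\emptyset^{\otimes m}},
\end{equation*}
whose right-hand side is precisely the optimal $L^2$ distinguishing advantage appearing in Eq.~\eqref{eq:distinguishL2}. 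Consequently, if that advantage is $o_n(1)$, then so is the total variation distance between the mixture alternative and the null.

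To conclude, I would argue by contrapositive: suppose some distinguisher $f$ achieves weak detection, i.e., $\eta$-detection with $\eta = \Omega(1)$. By the ``only if'' direction of Fact~\ref{fact:NP}, this forces $d_{\mathrm{TV}}(\E_u D^{\otimes m}_u, D^{\otimes m}_\emptyset) > \eta = \Omega(1)$, contradicting the bound just derived. No step poses a real obstacle; the only care needed is in ensuring that $\bar L$ is well-defined and square-integrable under $D_\emptyset^{\otimes m}$, which follows from the standing absolute continuity assumption on $\{D_u\}$ relative to $D_\emptyset$ together with the paper's convention that all relevant distributions are supported on a discrete domain.
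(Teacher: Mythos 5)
Your proof is correct and is essentially the paper's own argument: the paper likewise bounds $d_{\mathrm{TV}}(\E_u D^{\otimes m}_u, D^{\otimes m}_\emptyset)$ by the $L^2$ advantage (it cites Jensen's inequality where you cite Cauchy--Schwarz, but $\E|X|\le \|X\|_2$ is the same step) and then concludes via Fact~\ref{fact:NP}. The only implicit step in both versions is that the likelihood-ratio test is optimal, so that failure of $\eta$-detection for it rules out $\eta$-detection for every distinguisher; since the paper makes the same leap, this is not a gap.
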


\paragraph{Computational bounds.} The low-degree likelihood framework offers a computational analogue of the above reasoning. Instead of considering arbitrary $L^2$-bounded distinguishers, we restrict our attention to distinguishers $p: (\R^n)^m\to \R$ which are given by polynomials of some degree $k$. Note that any such polynomial can be evaluated in time $(nm)^{O(k)}$.

The motivation for this first arose in the context of establishing sum-of-squares lower bounds for statistical estimation~\cite{barak2019nearly,hopkins2017efficient,hopkins2017power}, and more generally it is supported by the fact that many of the most powerful computationally efficient algorithms for hypothesis testing, e.g. spectral methods, message passing, and method of moments, can be implemented as low-degree polynomial estimators (see the survey~\cite{kunisky2019notes} for an overview).

Let $(\chi_\alpha)_{|\alpha|\le k}$ denote an orthonormal basis for the space $\mathcal{V}^{\le k}$ of polynomials of degree at most $k$ in $L^2(D_\emptyset^{\otimes m})$. Given $f\in L^2(D_\emptyset^{\otimes m})$, we denote by $f^{\le k}$ the projection of $f$ to this subspace. We will focus almost exclusively (with the exception of Section~\ref{sec:conn_planted_clique}) on $D_\emptyset$ which is uniform over the Boolean cube, in which case $(\chi_\alpha)_{|\alpha| \le k}$ are simply the Fourier characters of degree at most $k$ and $f^{\le k}$ is simply the degree-$\le k$ Fourier truncation of $f$. We can now consider the following modification to the optimization problem in Eq.~\eqref{eq:distinguishL2}:
\begin{equation}
    \sup_{p\in \mathcal{V}^{\le k}: \norm{p}_{D_\emptyset^{\otimes m}} = 1} \Bigl|\E_{D^{\otimes m}_\emptyset}[p(\x)] - \E_{u\sim \mu}\E_{D^{\otimes m}_\mu}[p(\x)]\Bigr|\,.
    \label{eq:distinguishL2_k}
\end{equation}
The optimal $p$ is given, up to normalization and shifting, by the likelihood ratio \emph{truncated to degree at most $k$}:
\begin{equation*}
    p^*(\x) = \frac{\E_u(\overline{D}^{\otimes m}_u)^{\le k}(\x) - 1}{\norm{\E_u(\overline{D}^{\otimes m}_u)^{\le k} - \textbf{1}}_{D_\emptyset}}\,.
\end{equation*}
We refer to $\E_u (\overline{D}^{\otimes m}_u)^{\leq k}$ as the \emph{($m$-sample) low-degree likelihood ratio}.

$p^*$ achieves distinguishing advantage $\norm{\E_u (\overline{D}^{\otimes m}_u)^{\le k} - \textbf{1}}_{D_\emptyset}$. In particular, if the likelihood ratio has the Fourier expansion $\E_u \overline{D}^{\otimes m}_u(\x) = \sum_\alpha \wh{D}[\alpha] \chi_\alpha(\x)$, then the optimal distinguishing advantage achieved by any degree-$\le k$ distinguisher is given by $\Adv{k}$ for
\begin{equation*}
    \Adv{k}^2 \triangleq \sum_{0<|\alpha| \le k} \wh{D}[\alpha]^2 \qquad \text{for} \qquad \wh{D}[\alpha] \triangleq \E_{D^{\otimes m}_\emptyset}[\E_u\overline{D}^{\otimes m}_u(\x)\chi_\alpha(\x)]\,.
\end{equation*}
We will refer to $\Adv{k}$ as the \emph{degree-$k$ advantage}.

A long line of work in the classical literature on complexity of statistical inference has established bounds on this quantity for a wide range of tasks. Barring hypothesis testing problems with exceptional algebraic structure, the recurring finding is that boundedness of the low-degree likelihood ratio appears to coincide with the failure of computationally efficient distinguishers, giving rise to the following informal conjecture (see Conjecture 2.2.4 in~\cite{hopkins2018statistical} for a formalization).
\begin{conjecture}[Low-degree conjecture, informal]\label{conj:informal}
    Let $k = k_n$ be some growing function of $n$. For ``natural'' distinguishing problems, there is no $(nm)^{\tilde{O}(k)}$-time distinguisher achieving weak detection between $\E_u D^{\otimes m}$ and $D_\emptyset^{\otimes m}$ if $\norm{\E_u (\overline{D}^{\otimes m}_u)^{\le k} - \textbf{1}}_{D_\emptyset} = o(1)$\,.

    In particular, if $m = \mathrm{poly}(n)$ and $k = \omega(\log n)$, then boundedness of the low-degree likelihood ratio coincides with hardness for all $\mathrm{poly}(n)$-time distinguishers.
\end{conjecture}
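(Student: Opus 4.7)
The final statement is an informal conjecture, not a theorem: the qualifier ``natural'' is not formally defined, and an unconditional hardness statement of the form ``no $\poly(n)$-time algorithm succeeds'' would entail unconditional circuit lower bounds beyond the reach of current techniques. My plan is therefore to assemble three complementary strands of evidence for Conjecture~\ref{conj:informal}, together with a discussion of the central obstruction.

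First, the in-class bound. For distinguishers that are themselves polynomials of degree at most $k$ in the samples, the $L^2$ optimality calculation already derived in Eq.~\eqref{eq:distinguishL2_k} immediately bounds the distinguishing advantage by $\Adv{k} = \norm{\E_u(\overline{D}_u^{\otimes m})^{\le k} - \mathbf{1}}_{D_\emptyset}$, so $\Adv{k} = o(1)$ rigorously precludes weak detection by every degree-$\le k$ test. This is the backbone of the argument, and the remainder of the plan consists of arguing that the ``natural'' $(nm)^{\tilde O(k)}$-time distinguishers of interest can be realized as such polynomials, possibly after an inexpensive reduction.

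Second, I would catalogue the dominant algorithmic paradigms and exhibit an explicit polynomial realization of degree $\tilde O(k)$ in each: (i) $k$-step spectral methods reduce to evaluating trace-power quantities $\tr(M^k)$ for data-dependent matrices $M$, which are polynomials of degree $O(k)$ in the samples; (ii) $k$-th order method-of-moments tests are by construction of degree $k$; (iii) $O(k)$ rounds of belief propagation on locally tree-like instances unroll into polynomials of the same degree; (iv) degree-$k$ Sum-of-Squares refutations, pseudo-calibrated from $\E_u \overline{D}_u^{\otimes m}$, admit a companion upper bound by $\Adv{k}$, which in fact historically motivated the low-degree framework. In parallel, and following the framework referenced in Section~\ref{sec:SQ}, one can prove two-way translations between low-degree hardness and statistical-query lower bounds under mild stochastic noise, thereby reducing the conjecture in the noisy regime to the (also widely believed) SQ-hardness conjecture --- a concrete and probably tractable intermediate target.

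The main obstacle --- and the reason the word ``natural'' is indispensable --- is the existence of algebraic algorithms. Gaussian elimination for noiseless LPN is the paradigmatic example: it runs in polynomial time but its decision rule cannot be written as a polynomial of degree $o(n)$ in the samples, so the conjecture is literally false for the unrestricted class of $\poly(n)$-time distinguishers. Any serious plan toward a formal theorem must therefore either restrict to problems for which such algebraic shortcuts provably fail --- for instance under a noise-robustness hypothesis in the spirit of~\cite{holmgren2020counterexamples} --- or restrict the distinguisher class on the algorithmic side to exclude elimination-style procedures. Strands one and two are essentially accounting; pinning down the right hypothesis on ``natural'' is the genuinely open conceptual question and the step I expect to be hardest.
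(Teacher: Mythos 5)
You correctly recognize that the statement in question is an informal \emph{conjecture}, not a theorem: the paper does not (and could not) prove it, but states it as a restatement of Conjecture~2.2.4 in~\cite{hopkins2018statistical}. Your three strands of evidence and your identification of the central obstruction track the paper's own surrounding discussion quite closely: the polynomial realizability of spectral, moment-method, message-passing, and SoS-style algorithms corresponds to \Cref{sec:lowdegree_examples}; the SQ/low-degree translation under stochastic noise is exactly the content of \Cref{sec:general_equivalence} (building on~\cite{brennan2020statistical}); and the Gaussian-elimination counterexample, the role of noise-robustness, and the reference to~\cite{holmgren2020counterexamples} all appear in the paper's ``Corner cases'' discussion, which additionally cites the recent counterexample of~\cite{buhai2025quasipolynomiallowdegreeconjecturefalse} showing that even a noise-robust quasipolynomial algorithm can evade the low-degree prediction. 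The only rigorous component available --- the $L^2$-optimality of the truncated likelihood ratio over degree-$\leq k$ tests, which you call the ``in-class bound'' --- is precisely what the paper establishes in \Cref{sec:basic_low_degree} via Eq.~\eqref{eq:distinguishL2_k}; everything beyond that is, as you say, a matter of pinning down what ``natural'' should mean, which remains open.
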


\paragraph{Sample-wise degree.}
In some parts of this work, we will consider an alternative notion of degree that allows for a wider range of efficient distinguishers.

We say that a polynomial $p: (\R^n)^m\to\R$ has \emph{sample-wise degree} $(\mathcal{D},k)$ if $p(x_1,\ldots,x_m)$ is a linear combination of functions which each have degree at most $\mathcal{D}$ in each $x_i$ and nonzero degree in at most $k$ of the $x_i$'s (the set of $k$ indices can vary across functions in the linear combination). Note that any such $p$ must have total degree at most $\mathcal{D}k$, and conversely any polynomial $p$ with total degree at most $k$ has sample-wise degree $(k,k)$. We will denote the orthogonal projection (with respect to $L^2(D^{\otimes m}_\emptyset)$) of any distinguisher $p$ to the subspace of polynomials of sample-wise degree $(\mathcal{D},k)$ by $p^{\le \mathcal{D},k}$. The above discussion on optimal distinguishing advantage for low-degree polynomial distinguishers carries over immediately to this alternative setting by replacing all appearances of $\le k$ with $\le \mathcal{D},k$. In particular, the relevant quantity to bound here would be $\E_u (\overline{D}^{\otimes m}_u)^{\le \mathcal{D},k}$. We can also define the \emph{degree-$(\cD,k)$ advantage} $\Adv{(\cD,k)}$ accordingly.

\section{Discussion and related work}
\label{sec:related}

Below we discuss relevant prior work in an effort to situate our contributions within the broader literature on the complexity of quantum and classical learning.

\subsection{Information-theoretic bounds}

Recent works have precisely pinned down the sample complexity for a number of fundamental quantum learning tasks~\cite{chen2022exponential,chen2022tight,chen2022tight2,chen2024tight,chen2024optimal,chen2024optimalshadow,chen2024optimalstate,gong2024sample,odonnell2015quantum,odonnell2016efficient,haah2016sample,anshu2020sample}. A proper survey of this literature is beyond the scope of this paper. We note that a large number of these works have focused on the question of distinguishing whether a state is maximally mixed or from some ensemble of states that are far from maximally mixed, and we too will focus on this problem, albeit from the perspective of computational efficiency.

In our setting, the ensembles of states we consider are typically very structured in the sense that each state can be described by a polynomial number of parameters. For states from such ensembles, we make the (folklore) observation that learning is information-theoretically tractable by a straightforward application of known results:

\begin{remark}\label{remark:info_theoretic}
    We note that given any class $\mathcal{C}$ of $n$-qubit pure states with $\exp(\mathrm{poly}(n))$ covering number, it is always information-theoretically possible to learn states in $\mathcal{C}$ by performing simple single-copy measurements on polynomially many copies. All of the classes mentioned on the first page, e.g. Gibbs states of local Hamiltonians, shallow circuit states, stabilizer states, and matrix product states, satisfy this property.
    
    The reason this problem is information-theoretically possible is that we can perform brute force over a suitable net $\{\ket{\psi}\}$ over $\mathcal{C}$.
    Techniques like classical shadows~\cite{huang2020predicting} allow for the estimation of the fidelity of the given state with all $\ket{\psi}$ in the net using only $O(\log |\mathcal{C}|) = \mathrm{poly}(n)$ copies.
    Furthermore, the measurements needed are efficiently preparable: recent work Ref.~\cite{schuster2024random} proves that shadow estimation for fidelities can be performed using only measurements prepared by $\log(n)$-depth circuits in a 1D nearest-neighbor layout. 
    Consequently, polynomially many of these shallow measurements are sufficient to learn the  circuit description of state $\ket{\psi}$ from the net that has high fidelity with the unknown state.
    
    However, this strategy is computationally inefficient as it requires brute force enumeration and estimation of all $\exp(\mathrm{poly}(n))$ fidelities in classical post-processing.
\end{remark}

\subsection{Classical low-degree method}

The low-degree method emerged from~\cite{barak2019nearly,hopkins2017efficient,hopkins2017power,hopkins2018statistical} and has since been used extensively to obtain sharp predictions for a wide range of problems in classical statistical inference (see~\cite{kunisky2019notes} for a survey from 2019). It is impossible for us to do justice to this vast literature in this section, so here we highlight the parts of this literature most relevant to this work. 

In~\cite{brennan2020statistical}, the authors established a formal connection between the low-degree method and the statistical query model (see below). This allowed them to transfer lower bounds for one framework to lower bounds for the other, recovering sharp predictions in both models for various problems. Most relevant to the present work are their results on planted clique and planted dense subgraph, in particular their bipartite analogues, which recover the tight statistical query lower bounds of~\cite{feldman2015complexity}. In addition to drawing upon this result in Section~\ref{sec:general_equivalence} when we establish a similar connection in the quantum setting, in Section~\ref{sec:overview_pds} we discuss how our lower bound for quantum planted biclique builds upon calculations performed in the classical setting. We also note the work of~\cite{dhawan2025detection} which provides classical low-degree bounds for a hypergraph version of this problem, as well as prior works of~\cite{ma2015computational,bresler2023detection,brennan2018reducibility,brennan2020reducibility,hajek2015computational} which studied this same question from the perspective of average-case reductions rather than low-degree hardness.

Our results in Section~\ref{sec:conn_agnostic_product} also touch upon the well-studied problem of tensor PCA~\cite{montanari2014statistical,lesieur2017statistical}, where the goal is to distinguish between a Gaussian tensor and a Gaussian tensor plus a rank-1 spike. A sharp prediction for hardness for this problem via the low-degree method can be found in the note of~\cite{kunisky2019notes}, and~\cite{potechin2022sub} has also proven a sum-of-squares lower bound for this problem; these hardness results are matched by low-degree algorithms via sum-of-squares~\cite{hopkins2015tensor} and via message passing~\cite{wein2019kikuchi}. Tensor PCA has also been studied in quantum contexts, with prior work giving quartic speedups over the best known classical algorithms for this problem~\cite{hastings2020classical,schmidhuber2025quartic}.

\subsection{Hardness of quantum learning from pseudorandomness}

Another popular method to prove computational lower bounds on quantum learning tasks relies on recent constructions of pseudorandom quantum states~\cite{ji2018pseudorandom} and unitaries~\cite{ma2024construct}. These are ensembles over states or unitaries that cannot be distinguished by efficient algorithms from Haar-random states or unitaries conditional on mild cryptographic assumptions. The fact that these ensembles cannot even be distinguished from their Haar-random counterparts implies that they also cannot be learned by efficient algorithms. Crucially, because these states or unitaries require only polynomially-sized circuits or even, in specialized architectures, circuits with subextensive lightcones \cite{schuster2024random} to generate, this then implies that learning even extremely shallow circuits from measurements is computationally hard. 

The existence of quantum pseudorandomness has had profound implications for physics. One key application is in ``spoofing" quantum properties such as entanglement~\cite{aaronson2024quantum} and magic~\cite{Gu_2024} -- thought to be measures of the ``quantumness" of a state --  with polynomial-sized circuits. This immediately places lower bounds on our ability to manipulate the amounts of these resources in a given quantum state by efficient algorithms. Our work immediately raises the question of whether these bounds could be strengthened by adopting a `low-degree' notion of efficiency.

\subsection{Quantum statistical query model}
\label{sec:qsq}

Within the quantum literature, perhaps most relevant to the present work is the \emph{quantum statistical query (QSQ)} model, introduced in~\cite{arunachalam2020quantum}, which offers a complementary perspective on predicting information-computation gaps for quantum learning.

The model is motivated by the \emph{statistical query (SQ)} model~\cite{kearns1998efficient} from the classical literature, another popular framework for proving hardness under a restricted model of computation. Roughly, the SQ considers a model where, instead of directly observing samples, the learning algorithm can only query population-level statistics of the data distribution to some level of precision. This framework has been used to obtain accurate predictions for a range of problems like Gaussian mixtures~\cite{diakonikolas2017statistical}, learning neural networks~\cite{chen2022hardness,chen2022learning,goel2020statistical,goel2020superpolynomial,diakonikolas2017statistical,diakonikolas2020algorithms}, and robust linear classification~\cite{diakonikolas2022near}. As with the low-degree method, the literature here is too vast to do justice to here.

Analogous to SQ, QSQ is a model where, instead of performing arbitrary multi-outcome measurements on the copies of the unknown state $\rho$, the learning algorithm can only query two-outcome observables of $\rho$ of their choice up to some level of precision, called \emph{tolerance}. Below, we briefly discuss the relative merits of these two models. 

One advantage of QSQ is that it is possible to prove lower bounds even when the queries can be adaptively chosen, whereas in our low-degree framework, it is necessary to limit the adaptivity to avoid various ``cheating'' algorithms (see Section~\ref{sec:general_adaptivity_full}). 

But the QSQ model is limiting in some respects. For one, it assumes that the two-outcome observable values are given up to small \emph{adversarial error}, which can be overly pessimistic~\cite{abbe2021power}. In addition, the notion of sample complexity is somewhat conflated with the notion of tolerance: in this model, if observable values are given to error $\tau$, the sample complexity of the algorithm is effectively regarded as the number of queries times $1/\tau^2$. But this imposes a rather rigid constraint on how samples get used in QSQ algorithms, which can sometimes lead to incorrect predictions for information-computation gaps; this is discussed at length in~\cite{brennan2020statistical}.

Perhaps the most important drawback of the QSQ model is that focusing on protocols that use two-outcome measurements can be overly restrictive. As the following example demonstrates, there can even be cases where the QSQ complexity of a problem exponentially overshoots both the true complexity and what is predicted by the low-degree method.

\begin{example}
    Consider the task of distinguishing between the maximally mixed state and a state which is a Haar-random pure state in the first $\log_2 n$ qubits and maximally mixed elsewhere. It is known that by measuring $O(\sqrt{n})$ copies in the computational basis, one can computationally efficiently distinguish between these two scenarios. The idea is that the distribution over readouts from measuring the first $\log_2 n$ qubits under the second scenario is the uniform distribution over $\{0,1\}^{\log_2 n}$, whereas the distribution in the first scenario is $1/\sqrt{n}$-far from uniform in $L_2$ distance. By counting the number of collisions between samples~\cite{chan2014optimal}, one can distinguish between these two cases. Counting collisions among samples from a distribution over $\{0,1\}^{\log_2 n}$ can be implemented with a degree-$O(\log n)$ estimator, meaning that the above task is tractable under the low-degree framework. In contrast, Ref.~\cite{arunachalam2023role} showed that for QSQ algorithms, $\exp(\Omega(n))$ many QSQ queries are needed for this problem.
\end{example}

\noindent Nevertheless, QSQ offers a fruitful complementary viewpoint on hardness of quantum learning. For instance, in~\cite{arunachalam2023role}, it was shown that ensembles that are approximate state $2$-designs are hard to learn under QSQ, strengthening prior work of~\cite{hinsche2023one,nietner2023average} which focused on \emph{diagonal} two-outcome measurements. This mirrors our findings in the low-degree framework and serves as additional evidence in favor of the conjecture that random quantum circuits are pseudorandom. 
Even more interestingly, the mechanism by which $2$-designs manifest in the analysis for QSQ is quite different from how they appear in our low-degree arguments. In their case, the $2$-design property is crucial for relating the variance of the response to a QSQ (as a random variable in $\rho\sim\mathcal{E}$) to the variance if $\rho$ were Haar-random. In our case, the $2$-design property is crucial for establishing local indistinguishability of copy-wise rotations of the quantum data.

Finally, as shown by~\cite{brennan2020statistical} in the classical case, under certain conditions there is even a formal equivalence between the two models. In Section~\ref{sec:general_equivalence}, we extend this equivalence to the quantum setting by observing that the conditions of~\cite{brennan2020statistical} are satisfied by the distribution over readouts provided the measurements are corrupted by some stochastic noise.

Finally, note that~\cite{arunachalam2023role} also studied a quantum generalization of the planted biclique problem, but despite sharing the same name, their version and motivation is entirely different: they take the classical distribution over right-neighbors of a random left-vertex of the bipartite graph and consider a coherent encoding of this. Their main finding is that if the size of the planted biclique is $\lambda \ge\Omega(\log n)$, then there is an algorithm for their problem that makes $n^{O(\lambda)}$ QSQs to precision $\sqrt{\lambda/n}$, whereas any classical algorithm requires precision $\lambda/n$.

\subsection{Quantum learning algorithms as low-degree estimators}
\label{sec:lowdegree_examples}

\paragraph{Observable averages.} As a trivial first example of a low-degree estimator, consider a protocol that estimates an observable $O$ by measuring multiple copies each with the two-outcome POVM $\{O, \Id - O\}$. The low-degree algorithm in this case is simply the proportion of copies that resulted in the former outcome, which is a degree-1 estimator. 

\paragraph{Shallow circuit learning.} 

It was recently shown in Refs.~\cite{landau2024learning,kim2024learning,huang2024learning} that states prepared by quantum circuits of constant depth can be learned efficiently in the following sense: Given a state prepared by a constant depth circuit and the ability to perform measurements, using only estimations of $O(1)$-local marginals allows one can find another constant-depth circuit that prepares a high-fidelity state.
The first step of learning the local marginals can be done by learning the coefficients of all local Pauli operators, which is a degree $O(1)$ estimator. 
Then, in Ref.~\cite{landau2024learning} a local inversion $V$ is found that rotates the state to one of the form $|0^l\rangle\otimes |\psi'\rangle$, for $l=O(1)$ qubits by inverting the local gates in the lightcone of the $l$ qubits. 
As unitaries are inverted by taking the adjoint, this too is a low-degree polynomial map. 
These local inversions are combined with fixed SWAP operations, which results in an output that is low-degree in the measurement readouts.

\paragraph{Learning Hamiltonians via polynomial system solving.} In~\cite{haah2022optimal}, the authors give a sample- and time-optimal algorithm for learning high-temperature Hamiltonians from Gibbs state access. The protocol first performs measurements to obtain local Pauli observables of the Gibbs state, and then post-processes these by solving a suitable polynomial system using the Newton-Raphson method. This latter algorithm proceeds by repeatedly applying low-degree polynomial maps to the iterate followed by projection, and can because the algorithm converges rapidly, the overall output can be well-approximated by a low-degree polynomial in the readouts.

The recent breakthrough of~\cite{bakshi2024learning} gave an algorithm for learning \emph{arbitrary-temperature} local Hamiltonians from Gibbs state access. They devised a sum-of-squares algorithm for post-processing readouts which were again given by local Pauli measurements. Roughly speaking, they formulated a certain polynomial system in the unknown coefficients of the Hamiltonian and showed that a pseudodistribution of sufficient degree over solutions to this system could be rounded to a good approximation to the true coefficients. While it is unclear how to rigorously formulate this as a low-degree polynomial estimator and consequently how to rule out such algorithms when trying to prove information-computation gaps, in the classical literature, low-degree hardness is typically regarded as initial evidence for sum-of-squares hardness. Indeed, the low-degree advantage $\|\overline{D}^{\le k} - \mathbf{1}\|^2$ emerges naturally in pseudo-calibration~\cite{barak2019nearly}, and showing it is small is a necessary first step in proving a sum-of-squares lower bound.

\paragraph{Agnostic tomography of product states.}
Recent work~\cite{bakshi2024learninga,chen2024stabilizer} shows how to obtain, for any given pure state, a (discrete) product state that is $\epsilon$-close to the optimal fidelity. For agnostic tomography of discrete product states, the algorithm proposed by~\cite{chen2024stabilizer} is adaptive and low-degree by looking for the high-correlated \emph{single-qubit} project for each qubit via adaptively chosen local measurements followed by an adaptive local optimization procedure employing the measurement outcomes as a low-degree function. For agnostic tomography of product states when the optimal fidelity is larger than $5/6$, Ref.~\cite{bakshi2024learninga} gave the faster agnostic tomography algorithm of product states. The algorithm first uses a divide-and-conquer strategy and sews the local tomography of each qubit, and then performs a local product state optimization that only uses local unitary transformations and low-degree post-processing. The resulting learning algorithm as a whole is thus low-degree in the local measurement results.

For agnostic tomography of product states with any optimal fidelity, although the algorithm proposed in~\cite{bakshi2024learning} is highly adaptive, the intuition and most intermediate subroutines are captured by low-degree functions. In particular, the algorithm maintains a cover over all product states that have large fidelity and only optimize fidelity in a low-degree way over extensions of good product states over a small subsystem.

\subsection{Corner cases}

Yet, there are instances where the classical low-degree framework, by limiting its attention to low-degree algorithms, makes incorrect predictions about the computational hardness of a problem. One notable example is $k$-XOR-SAT, a special case of learning parities. The low-degree method predicts that the problem should be computationally hard, even though it is in fact efficiently solvable by the high-degree algorithm of Gaussian elimination. 

Our low-degree framework has the same issue. In \Cref{sec:stabilizer_state}, we show low-degree hardness for learning quantum stabilizer states even though there exist time-efficient high-degree algorithms for the task that effectively perform a Gaussian elimination on measurement readouts \cite{montanaro2017learning}. 

In general, the low-degree method is particularly susceptible to such failures on problems (whether quantum or classical) with a high degree of algebraic structure~\cite{holmgren2020counterexamples}. Nevertheless, algorithms that exploit such algebraic structure to run efficiently tend to be especially brittle -- as exemplified by the famous contrast between the efficiency of learning parities, and the intractability of learning parities with noise. As with the classical low-degree method, we expect that our framework correctly captures the limits of
learning quantum states in the presence of noise. While the current agnostic tomography algorithms~\cite{grewal2024improved,chen2024stabilizer} of learning stabilizer states provide computational efficient protocols even when the fidelity between the underlying (noise-corrupted) state and any stabilizer states drop to $o(1)$, these algorithms depend on adaptive post-selection and Bell difference sampling that requires two-copy measurements. To the best of our knowledge, there is no noise-robust algorithm based on subroutines such as computational difference sampling that only employs single-copy measurements. This can be evidence that our framework may correctly predict the vulnerability of high-degree protocols with single-copy measurements against noise.

Finally, we remark that very recently, Buhai et al.\cite{buhai2025quasipolynomiallowdegreeconjecturefalse} constructed a distinguishing problem for which the degree-$\Omega(n^{1 - O(\epsilon)})$ likelihood ratio vanishes, yet there is a \emph{quasipolynomial}-time algorithm. Furthermore, unlike Gaussian elimination, their algorithm is noise-robust, based on a list-decoding algorithm for noisy polynomial interpolation. Notably, unlike a previous construction which also leveraged a connection to error correction~\cite{holmgren2020counterexamples}, the distinguishing problem of~\cite{buhai2025quasipolynomiallowdegreeconjecturefalse} has permutational symmetry and thus yields a counterexample to a certain formalization of Conjecture~\ref{conj:informal} in~\cite{hopkins2018statistical}. While their specialized construction offers a note of caution in interpreting predictions made by the low-degree method, the method remains useful for obtaining initial evidence of computational hardness and initial estimates for the thresholds at which hardness manifests. Furthermore, in cases where the best known existing algorithms fall under the umbrella of polynomial distinguishers, like the examples documented in Section~\ref{sec:lowdegree_examples}, low-degree hardness articulates a natural barrier faced by contemporary algorithmic techniques.

\section{Overview of techniques}
\label{sec:overview}

Here, we provide a high-level discussion of the key technical ingredients in this work. For most of this work, we will focus on the well-studied \emph{hypothesis testing} setting where the learner is given copies of an unknown quantum state $\rho$ which is either maximally mixed or drawn from an ensemble $\mathcal{E}$ and would like to distinguish between these two cases by performing measurements on the copies of $\rho$.

\subsection{Low-degree hardness with non-adaptive measurements}
\paragraph{Warmup: single-qubit measurements.}To start with, we consider a direct extension of the classical low-degree model for quantum data as a warm-up in \Cref{sec:general_single_qubit}. The learner makes non-adaptive single-qubit measurements $\{\bigotimes_{r=1}^n\ketbra{\phi_{s_i,r}}\}_{s_i}$ for $i\in[m]$ on $m=\poly(n)$ copies, and computes the output as a function of degree $k$ on the classical output in $\bm{s}=(s_1,...,s_m)\in\{0,1\}^{m\cdot n}$ of measurements. 

As the classical outputs of single-qubit measurements are Boolean, for any degree $k$ we can write the $m$-sample low-degree likelihood ratio $\E_\rho(\overline{D}_\rho^{\otimes m})^{\leq k}$ as a Fourier expansion 
\begin{equation}
    \E_\rho(\overline{D}_\rho^{\otimes m})^{\leq k}(\bm{s})=\sum_{T}\hat{D}[T]\chi_T(\bm{s})
\end{equation}
for $T\subseteq[mn]$, $|T|\leq k$. Here $\chi_T(\bm{s})=(-1)^{\sum_{(i,r)\in T}s_{i,r}}$ is the usual Fourier basis. We then compute the degree-$k$ advantage 
\begin{equation}
    \chi_{\leq k}^2=\sum_{0<\abs{T}\leq k}\hat{D}[T]^2\,.    
\end{equation}
Recall from Section~\ref{sec:basic_low_degree} that this quantity is the optimal distinguishing advantage achieved by any distinguisher which takes the bits of the measurement outcomes and applies a polynomial of degree at most $k$. 

As there are at most $\sum_{i=1}^k\binom{mn}{i}\leq(mn)^k$ choices of $T$, we can bound the degree-$k$ advantage by $o(1)$ if we can ensure $\hat{D}[T]\leq 2^{-\tOmega(k\log n)}$. We then show that, if every subsystem $T$ of $\rho^{\otimes m}$ of at most $k$ qubits is $2^{-\tOmega(k\log n)}$-close to the maximally mixed state in trace distance, we can ensure $\hat{D}[T]\leq 2^{-\tOmega(k\log n)}$ and thus that weak detection is degree-$k$ hard (see \Cref{coro:general_single_qubit} for details).

\paragraph{Application: Quantum error mitigation.}
As an application of the above warmup, we show in \Cref{sec:qem} that it is degree-$\omega(\log n)$ hard to distinguish between the maximally mixed state and output state of geometrically local noisy quantum circuits of depth $\omega(\log n)$ at extremely low local depolarization noise rate $O(1/n^{1-\delta'})$ for any $\delta'>0$ as in \Cref{thm:EM}. To obtain this result, we consider the ensemble of quantum circuits that interleaves a constant number of random circuits of depth $\omega(\log(n))$ that form approximate unitary $2$-designs and local depolarization noise of noise rate $O(1/n^{1-\delta'})$. We then show that the local subsystems of output states of random states in this ensemble are indistinguishable from the maximally mixed state. The hardness result then follows by combining the local indistinguishability and the above warmup calculation for single-qubit measurements.

\paragraph{Ancilla-assisted single-copy measurements. }In \Cref{sec:general_ancilla}, we consider the setting in \Cref{thm:general_informal} when the protocol can perform $m=\poly(n)$ non-adaptive single-copy PVMs $\{\ketbra{\phi_{s_i}}\}_{s_i}$ for $i\in[m]$, where each measurement is implemented using at most $n'=O(n)$ ancillary qubits. The challenge compared to the previous setting is that here, the measurements can nontrivially entangle different qubits of the system, precluding a direct argument about the reduced density matrices of $\rho^{\otimes m}$.

As each single-copy PVM with $n'$ ancillary qubits can be written as $\{U_i\ket{s}\bra{s}U_i^\dagger\}_{s\in\{0,1\}^{n+n'}}$ for a unitary $U_i$, the protocol can be regarded as performing computational-basis single-qubit measurements on the ``rotated" batch of $m$ samples $\{U_i\cdot\rho\otimes\ketbra{0}^{\otimes n'}\cdot U_i^\dagger\}_{i=1}^m$. According to the previous discussion on protocols with single-qubit measurements, weak detection is degree-$k$ hard if any subsystem (can be chosen across samples) in the $m$ rotated samples up to $k$ qubits is $2^{-\tOmega(k\log n)}$ indistinguishable from the maximally mixed state. 

To prove \Cref{thm:general_informal}, the main step is then to show that an ensemble $\mathcal{E}$ that forms a $2^{-\tOmega(k\log n)}$-approximate state $2$-design satisfies such local indistinguishability for any sequence of unitaries $\{U_i\}_{i=1}^m$. We provide the intuition here and leave the detailed proof to \Cref{coro:general_ancilla}.

Our first step is to show that, given $m$ copies of $\rho$ sampled from a state $2$-design ensemble and a sequence of $n$-qubit unitaries $\{V_i\}_{i=1}^m$, any subsystem in the $m$ rotated samples $\{V_i\rho V_i^\dagger\}_{i=1}^m$ up to $k$ qubits is $2^{-\Theta(n)}$-close to maximally mixed state. We compute trace distance here using the norm inequality and the square $2$-norm. As $\rho$ forms a state $2$-design, for the purposes of our analysis which implicitly only uses degree-$2$ moments of $\mathcal{E}$, it behaves as if Haar-random chosen and is thus invariant after rotation by $\{V_i\}_{i=1}^m$. For each single copy, the local subsystem is $2^{-\Theta(n)}$-close to maximally mixed state~\cite{hayden2006aspects} and local indistinguishability then follows by a union bound over copies. 

This is already enough to establish the special case of Theorem~\ref{thm:general_informal} when there are $n' = 0$ ancilla. But projective measurements without ancilla preclude even very simple two-outcome measurements. To address this, we use tools from unitary decomposition~\cite{shende2005synthesis} to decompose each $(n+n')$-qubit unitary $U_i$ in $\{U_i\cdot\rho\otimes\ketbra{0}^{\otimes n'}\cdot U_i^\dagger\}$ into $2^{\Theta(n')}$ terms of product between the $n$ qubits and $n'$ ancillary qubits. This decomposition makes it possible for us to ensure local indistinguishability for any sequence of unitaries $\{U_i\}_{i=1}^m$ given $n'\leq O(n)$; we defer the details to~\Cref{sec:general_ancilla}.

\begin{remark}
    Our findings about approximate $2$-design property implying local indistinguishability can be interpreted as a version of the statement that states from a 2-design are approximately \emph{$k$-uniform states} up to any quantum experiment that takes $\mathrm{poly}(n)$ samples. A $k$-uniform state of $n$ qudits is one on which any subsystem of $k$-qudits is maximally-mixed\cite{kuniform1,kuniform2,kuniform3}. While our finding follows from standard calculations on designs, we were not able to find an articulation of this point in the literature and believe it is of conceptual interest.
\end{remark}

\paragraph{Application: Gibbs states of sparse Hamiltonians. } As an application of \Cref{thm:general_informal}, we show in \Cref{sec:hamiltonian_gibbs} that there is a $D=O(n)$ such that it is degree-$D$ hard to distinguish between the maximally mixed state and the Gibbs state of a Hamiltonian given by a random signed sum of $O(n^3)$ random Pauli strings using single-copy PVMs with $O(n)$ ancillary qubits as claimed in \Cref{coro:gibbs_informal}. This can be shown by combining the following two observations. Firstly, the eigenvalues of a random Hamiltonian concentrate within the interval $[-3,3]$ with high probability~\cite{chen2024sparse} (see \eqref{eq:RSPS_eig} for details). Secondly, the Hamiltonian ensemble matches the Gaussian unitary ensemble up to the third moment. The eigenvectors of the Hamiltonian thus form a state $3$-design. Therefore, the Gibbs states of Hamiltonian given by a random signed sum of $O(n^3)$ random Pauli strings can be regarded as a classical combination of states chosen from a state $3$-design.

\subsection{Low-degree hardness with restricted adaptivity} 

Here, we consider protocols that can perform $m=\poly(n)$ adaptive single-copy PVMs, in which each measurement can be chosen adaptively based on previous measurement outcomes. The case when the learners have adaptivity is surprisingly subtle. As a tool to analyze the performance, we introduce the quantum copy-wise low-degree model, a quantum analog of the sample-wise degree model, which includes all post-processing as degree-$(\cD,k)$ functions that are degree $\cD$ in at most $k$ copies (note that all degree-$k$ functions in the standard low-degree model are degree-$(\cD,k)$ functions). In \Cref{lem:k_design_copy} and \Cref{coro:general_sample} (see \Cref{sec:general_copywise}), we extend the reasoning from the previous section to show that weak detection is degree-$(\cD,k)$ hard for non-adaptive single-copy PVMs provided the ensemble $\mathcal{E}$ forms a $2^{-\tOmega(k\log n)}$-approximate state \emph{$k$-design}. To see this, we consider the terms in the Fourier expansion of the degree-$(\cD,k)$ advantage $\chi_{\leq(\cD,k)}^2$, recalling that each is a square of a Fourier coefficient of likelihood ratio $\E_\rho(\overline{D}_\rho^{\otimes m})$. For each fixed $T$, the Fourier coefficient $\hat{D}[T]$ is bounded by the $\abs{T}$-th (the number of copies in $T$) moment of the degree-$\leq\cD$ advantage of a single copy by H\"{o}lder's inequality (see \eqref{eq:holder} for details). We show via a moment calculation that if $\mathcal{E}$ is an approximate $k$-design, this ensures the $\abs{T}$-th moment of the degree-$\cD$ advantage of a single copy is bounded by $2^{-\tOmega(k\log n)}$ \emph{for any $\cD$}.

\paragraph{Round-based model with adaptivity within blocks. }The most important benefit of using the copy-wise low-degree model is that it enables analyses on protocols with adaptivity. In this work, we consider adaptive protocols under two different ``round-based models," which divide the $m$ samples into $m_1$ blocks, each of $m_0$ samples. 

Let us first consider the round-based model with adaptivity \emph{within} blocks in \Cref{sec:general_adaptivity_within_block}, where we have $m_1=\poly(n)$ blocks, each consisting of $m_0=\polylog(n)$ copies that can be measured adaptively. While adaptivity is allowed {\em within} the blocks, no adaptivity is allowed {\em between blocks}.

The first step is to note that if we can ensure a small $k$-th moment of the degree-$(\cD,k)$ advantage for each \emph{individual block}, we can hope to bound the degree-$(\cD,k)$ advantage of the overall protocol using the prior non-adaptive analysis, where we now replace individual copies of $\rho$ with \emph{blocks} of copies. So let us focus on the $k$-th moment of the degree-$(\cD,k)$ advantage for each individual block of $m_0$ copies. As the measurements within a block can be adaptively chosen, we cannot trace out the copies that are not included in $T$ when we are computing the Fourier coefficient $\hat{D}[T]$. To address this issue, we propose a delicate inductive argument. We divide all subsets $T\subset[m_0]$ into two classes: the subsets that contain the last copy and the subsets that do not. The contribution for the first class of $T$'s sums up to the degree-$(\cD,k)$ advantage of a block containing $m_0-1$ copies. To compute the contribution for the second class of $T$'s, we break all such $T$'s into a union of the last copy and the remaining copies in \eqref{eq:induction_within}. By one application of Cauchy-Schwarz, we can bound the contribution by a value that depends on the $2m_0$-th moment of a single-copy degree-$\mathcal{D}$ advantage and the $2(m_0-1)$-th moment of the likelihood ratio for a single-copy outputting a particular classical measurement outcome (see Eq.~\eqref{eq:induction_second_within}). By induction, we are able to compute the degree-$(\cD,k)$ advantage for each individual block of $m_0$ copies using these quantities. Finally, we generalize the above analysis to compute the degree-$(\cD,k)$ advantage of the whole protocol using the $2km_0$-th moment of a single-copy degree-$k$ advantage and the $2k(m_0-1)$-th moment of the likelihood ratio for a single copy outputting a particular classical measurement outcome, where the $k$ overhead in the order of moments comes from the fact that each $T$ can participate in at most $k$ blocks.

To prove the first bullet point of \Cref{thm:adaptivity_informal}, we employ the fact that a $2^{-\mathrm{poly}(k,\log n)}$-approximate state $O(km_0)$-design provides a bound on both the $2km_0$-th moment of a single-copy degree-$\cD$ advantage and the $2k(m_0-1)$-th moment of the likelihood ratio for a single copy outputting a particular classical measurement outcome. The detailed calculation is provided in \Cref{coro:general_adaptivity_within_block}.

\paragraph{Round-based model with adaptivity among blocks. }We also consider the round-based model with adaptivity among blocks in \Cref{sec:general_adaptivity_among_block}, where we have $m_1=\text{polylog}(n)$ adaptive blocks of $m_0=\text{poly}(n)$ non-adaptive copies within each block. At the start of each block, we can prepare a fixed measurement strategy to apply to the next $\mathrm{poly}(n)$ copies, where the strategy is chosen fully adaptively based on all readouts from the previous blocks. 

In this model, we use the combination of the recursive induction and the moment bounds in \emph{reverse order} relative to what we did in the previous model. Given a new block, as the measurements in this block can be adaptively predetermined based on the previous history, we compute the Fourier coefficient for block-wise subsets $T\subset[m_1]$. As a dual to the previous inductive argument, we divide all subsets $T\subset[m_1]$ into two classes: the subsets that contain the last block and the subsets that do not. The contribution for the first class of $T$'s sums up to the degree-$(\cD,k)$ advantage of another round-based protocol with adaptivity among blocks containing $m_1-1$ copies. To compute the contribution for the second class of $T$'s, we break all such $T$'s into a union of the last block and the remaining blocks in \eqref{eq:induction_within}. Again by an application of Cauchy-Schwarz, we can bound the contribution by a value that depends on the $2k$-th moment of a single-copy degree-$\cD$ advantage and the $2(m_1-1)$-th moment of the likelihood ratio for a single \emph{block} outputting a particular classical measurement outcome. To prove the second bullet point of \Cref{thm:adaptivity_informal}, we employ the fact that a $2^{-\mathrm{poly}(k,\log n)}$-approximate state $O(m_1)$-design provides a bound on both the $2k$-th moment of a single-copy degree-$\cD$ advantage and the $2(m_1-1)$-th moment of the likelihood ratio for a single block outputting a particular classical measurement outcome. The detailed calculation is provided in \Cref{coro:general_adaptivity_among_block}.

\paragraph{Learning random quantum circuits. }As an application, we show in \Cref{sec:circuit} that weak detection is degree-$\polylog(n)$ hard for non-adaptive and round-based adaptive protocols to distinguish between the maximally mixed state and a state prepared by a random polylogarithmic-depth geometrically local quantum circuit. This is because random polylogarithmic-depth geometrically local quantum circuits form a $\polylog$-design up to inverse superpolynomially small approximation factor according to our calculation in \Cref{sec:basic_circuit_moment} and~\cite{schuster2024random}.

\subsection{Quantum planted biclique}
\label{sec:overview_pds}

Here we sketch the key technical ingredients in our proof of low-degree hardness for quantum planted biclique. In this problem, one is either given $n$ copies of the maximally mixed state on $n$ $d$-dimensional qudits, or $n$ copies of a state which is given by planting $\lambda/n$ copies of the same Haar-random pure state $\rho$ into a random subset $S$ of the qudits and corrupting the state with global depolarizing noise. As mentioned in the introduction, when $d = 2$ and $\rho$ is simply $\ket{1}$, measuring these states in the computational basis will result in an instance of the classical planted biclique problem. 

For general $d$ and general measurements, the starting point is the fact that if one measures a $d$-dimensional Haar-random state in the computational basis, the likelihood of seeing one of the first $d/2$ measurement outcomes is roughly $1/2 + \Theta(1/\sqrt{d})$. The algorithmic upper bound for local measurements (see \Cref{lem:local_upper_pds}) simply uses this to reduce quantum planted biclique to an instance of classical \emph{planted dense subgraph} where the edges in the planted subgraph have elevated probability $1/2 + \Theta(1/\sqrt{d})$ whereas all other edges have probability $1/2$. A standard second moment calculation shows that this would give rise to the $n^{1/2}d^{1/4}$ scaling in Theorem~\ref{thm:pds_informal}, and indeed, an existing calculation for this classical problem (see Claim 8.7 in~\cite{brennan2020statistical}) would predict the same threshold on the lower bound side \emph{assuming the protocol simply measures every qudit in the computational basis and checks whether the readout is within $\{1,\ldots,d/2\}$ or $\{d/2+1,\ldots,d\}$}.

The key challenge lies in going from this rough intuition to a lower bound against all local measurements. \emph{A priori} the above strategy appears to be wasting significant information by throwing all measurement outcomes into one of two buckets, and furthermore it might seem advantageous to measure the different qudits in different bases. The content of our lower bound is that neither of these modifications would help, and that the naive protocol above is actually optimal. The proof is ultimately a delicate moment calculation, but here we highlight some ingredients unique to the quantum setting that may be of independent interest.

One point is that in the classical setting of planted dense subgraph, it is trivial to compute the Fourier coefficients of the likelihood ratio as an explicit function of the edges that appear in the overall graph. The Fourier coefficients of the \emph{expected} likelihood ratio, over the randomness of the planted structure, can be read off easily from this. In the quantum setting, this is far from the case as the likelihood ratio is ultimately an unwieldy multilinear polynomial in the measurement vectors. Instead of getting a clean expression for the likelihood ratio for a given $\rho, S$, we have to carry around this multilinear polynomial for most of the proof. When we expect over the planted structure at the end by taking a Haar integral, this results in a weighted linear combination of permutation operators applied to the measurement vectors. While similar expressions have appeared in the related context of \emph{purity testing} in~\cite{chen2022exponential}, there the challenge was to derive \emph{lower bounds} on these quantities, whereas we need to derive \emph{upper bounds}. Finally, another hurdle that we encounter is the need to work with the harmonic basis over $\mathbb{Z}_d$ instead of the Boolean Fourier basis, as local measurement of any qudit in the null hypothesis results in the uniform distribution over $\{1,\ldots,d\}$. Bounding the low-degree coefficients under this harmonic basis requires careful bookkeeping to avoid $d^k$ factors. We defer the technical details of all of this to Section~\ref{sec:lower_pds}.

\subsection{Quantum hardness for hybrid and classical problems}
We also show low-degree hardness for two problems, Learning Stabilizer with Noise and agnostic tomography of product states, by encoding tasks with classical or classical-quantum hybrid input that are known to be hard for classical low-degree functions into quantum input. We construct these encodings in a low-degree way to show even the average-case hardness of these tasks with encoded quantum input data in the quantum low-degree model.

First, the {\em Learning Stabilizers with Noise} ($\mathsf{LSN}$) problem has recently been proposed as a hardness assumption that generalizes the {\em Learning Parities with Noise} problem to the quantum setting \cite{poremba2024learning}. $\mathsf{LPN}$ is known to be classically low-degree hard in the worst-case, via a special case of the problem known as $k$-XORSAT where every row of the generator is $k$-sparse. 
The $k$-XORSAT problem, which corresponds to a worst-case noiseless parity learning problem, is hard for degree-$k$ algorithms but tractable for unbounded-degree algorithms. We show that $\mathsf{LSN}$ has these exact same properties, bolstering the analogy between $\mathsf{LSN}$ and $\mathsf{LPN}$. The proof of low-degree hardness proceeds by embedding $k$-XORSAT in a worst-case noiseless $\mathsf{LSN}$ instance and then performing an exact worst-to-average-case reduction. The embedding from classical data into quantum data is constructed as a low-degree function such that any quantum algorithm with degree-$k$ in our quantum low-degree model can be regarded as a classical function with degree-$2k$ on all the classical bits (see \Cref{sec:other_lsn_avg} for details). The reduction is necessary because $\mathsf{LSN}$ is an average-case problem.  

We then consider the task of agnostic tomography of product states, which, given an unknown state $\rho$ that is guaranteed to have a non-negligible overlap with a product state, outputs a product state with near-optimal overlap with $\rho$. Here, we encode into quantum data the classical task of tensor principal component analysis (PCA)~\cite{kunisky2019notes}, which distinguishes between a random Gaussian tensor in $\C^{n\times n\times n\times n}$, and a mixture of a random spike component and a Gaussian random component. Tensor PCA is known to be degree-$k$ hard if the ratio $\lambda$ between the spike signal and the Gaussian random component is bounded by $\lambda\leq O(n^{-1}k^{-1/2})$~\cite{kunisky2019notes}.  In \Cref{lem:normal_tensor_pca}, we first show the classical low-degree hardness for a \emph{normalized} version of tensor PCA where each tensor has a unit Frobenius norm through the concentration of Gaussian distribution. We then transfer the classical tensor into a quantum state using the encoding introduced in~\cite{bakshi2024learninga}, which we show in \Cref{lem:agnostic_reduction} is a low-degree encoding in the sense that every qubit is a low-degree function on the entries of the classical tensor. Finally, we show that a low-degree algorithm for finding the best product approximation to these encoded quantum states would yield a low-degree algorithm for normalized tensor PCA, a contradiction.

\section{Outlook}

Further refinements are possible to our framework and applications. Below, we mention some concrete future directions.

\paragraph{How to model full adaptivity.}  A key contribution of this work was to establish, for the first time, a framework for low-degree hardness against adaptive measurement strategies. Yet, here we can handle only a limited notion of adaptivity, where the adaptive choice of measurement basis depends only on a partial history of previous measurement outcomes. As we show in Section~\ref{sec:general_adaptivity_full}, a variety of natural ways of modeling adaptivity run into fundamental issues or are as hard as proving circuit lower bounds. We leave as an important open question how to satisfactorily model adaptivity within the low-degree framework.

\paragraph{Other quantum applications.}Finding further applications of our framework in different quantum tasks is an immediate open question. The candidate learning problems that one could hope to prove information-computation gaps for include $>1$-dimensional tensor network states, ground states of local Hamiltonians, shadow tomography with efficiently preparable observables, and agnostically learning shallow circuits.

\paragraph{Protocols with arbitrary POVMs.}For non-adaptive measurement strategies, our framework falls a hair short of encompassing algorithms that use fully general POVMs, as we can only deal with ancilla-assisted PVMs with at most $\lesssim n/11$ ancillas -- which do not suffice to simulate a general POVM.

\paragraph{Quantum hardness and advantage from encoding classical problems.}In this work, we encode classical problems that are known to be average-case hard in the classical low-degree framework into quantum data to show average-case hardness in the quantum low-degree framework. It is natural to wonder if we can find a general reduction from classical average-case hardness to quantum average-case hardness in the low-degree setting. Our work also opens up the exciting possibility of finding {\em low-degree} quantum advantage; that is, a classical problem, embedded into a quantum state, on which a low-degree quantum algorithm can perform better than a low-degree classical algorithm. It is also interesting to see if there are classical problems that are hard using \emph{any} quantum encoding, indicating their hardness for any quantum algorithms.

\paragraph{Mixture vs. mixture hypothesis testing problems.}The general framework established in this work focuses on a point vs. mixture distinguishing task. It is natural to ask if we can prove low-degree hardness for distinguishing between two random ensembles of quantum states, i.e. mixture vs. mixture state distinguishing tasks. This is of particular interest if one wants to extend our framework beyond single-copy measurements: as soon as one considers $t = 2$-copy measurements, SWAP test provides another example of a low-degree learning algorithm. Low-degree lower bounds for mixture vs. mixture testing have been proven in classical settings, e.g. in the work~\cite{rush2023easier}.

\paragraph{Distinguishing and learning quantum channels.}While our general framework focuses on hypothesis testing tasks among quantum states, it is interesting to see if our framework can be generalized to hypothesis testing tasks among quantum channels, where we should also take into account the computational complexity of preparing input quantum states.

\paragraph{Existence of an information-computation gap for quantum error mitigation.} In this work, we show that quantum error mitigation is low-degree hard for $1$D geometrically local noisy circuit of super-logarithmic depth and inverse polynomially small noise rate and protocols using single-qubit measurements. While there exists a statistically efficient algorithm using joint measurements for the hypothesis testing problem we construct for our lower bound~\cite{buadescu2021improved}, it is unknown whether there is a sample-efficient algorithm using single-qubit measurements. It thus remains open whether there is an actual information-computation gap in this setting.

\paragraph{Other quantum generalizations of classical inference problems.}The problem of quantum planted biclique that we proposed in this work is just one of many possible ``quantizations'' of classical inference problems. One could have also considered a quantum version of (non-bipartite) planted clique, a quantum version of distinguishing random graphs from random geometric graphs~\cite{bangachev2024fourier,liu2022testing}, or any of a number of other models with planted sparse structure. Even for the quantum planted biclique question, it remains unclear what the full complexity landscape of the problem is when one allows general single-copy measurements. Given the extent to which such problems have shaped our understanding of the complexity of classical learning, we expect such problems to be a fruitful testbed for future work in the quantum learning community.

\section{The general framework for non-adaptive measurements}\label{sec:general}

In this section we present our low-degree framework for quantum learning, specialized to the case of non-adaptive measurement strategies. As mentioned in Section~\ref{sec:basic_low_degree}, we will focus on \emph{hypothesis testing} problems. More specifically, for most of this work we will focus on one of the most pervasive classes of hypothesis testing problems within the quantum learning literature~\cite{odonnell2015quantum,chen2022exponential}: given access to copies of an unknown $n$-qubit state $\rho$ and given an ensemble $\mathcal{E}$ of ``structured states,'' distinguish between
\begin{itemize}
    \item $H_0$ (null case): $\rho$ is equal to the maximally mixed state $\rhomm=I/2^n$. 
    \item $H_1$ (alternative case): $\rho$ was sampled at the outset from $\mathcal{E}$
\end{itemize}

Our notion of low-degree hardness depends on the choice of a class of measurement strategies. In general, one could consider measurement strategies which in each round process $t$ copies of $\rho$ by performing a projective measurement on those copies and $n'$ ancilla qubits \--- by Naimark's theorem, we can implement arbitrary $2^{n'}$-outcome POVMs using a PVM with $n'$ ancillary qubits. Each such measurement outputs a $(tn + n')$-bit string, and we perform $m$ such rounds of measurement and collect $m(tn+n')$ bits. A \emph{degree-$k$} estimator in our setting is then any polynomial $f: \{0,1\}^{m(tn+n')} \to \R$ whose variance under the null case is normalized to $1$, such that each monomial in $f$ depends on at most $k$ bits. For any such $f$, we can consider the \emph{distinguishing advantage} achieved by $f$. Recall from Eq.~\eqref{eq:distinguishL2_k} that this is the absolute difference between the expectation of $f$ under the null case and the expectation under the alternative case, and that the optimal distinguishing advantage is given by the variance of the projection of the \emph{likelihood ratio} between alternative case and null case to the space of degree-$\le k$ polynomials (we will review this calculation in the warm-up below).

For a given class of measurements, specified by $t$ and $n'$ and miscellaneous additional constraints on what kinds of measurements are allowed, we say that the hypothesis testing problem is \emph{degree-$k$ hard} for that class (throughout, we will think of $k$ as $\omega(\log n)$) if \emph{for any sequence of (non-adaptively chosen) measurements from the class}, the optimal distinguishing advantage $\chi_{\le k}$ achieved by any degree-$k$ estimator is $o(1)$. Note that the classical low-degree model considered in \Cref{sec:basic_low_degree}, specialized to distributions over the Boolean cube and null case given by the uniform distribution, is the special case of the above framework when $\mathcal{E}$ only consists of states with diagonal density matrices. Indeed, the key challenge in the quantum setting is the additional quantification over which measurements are selected, whereas in the classical setting, the only measurements to consider are computational basis measurements.

Throughout this work, we will focus on the simple but practically relevant setting of single-copy projective measurements, i.e., where $t = 1$ (see \Cref{fig:model_single} for an illustration). Below, we first provide some warm-up calculations in the special case of local (i.e., single-\emph{qubit}) measurements (\Cref{sec:general_single_qubit}). We then turn to the main result of this section: if $\mathcal{E}$ forms an approximate $2$-design, then hypothesis testing is low-degree hard for general single-copy measurements that use up to $O(n)$ ancillary qubits (\Cref{sec:general_ancilla}).

\begin{figure}[htbp]
    \centering
    \includegraphics[width=0.80\linewidth]{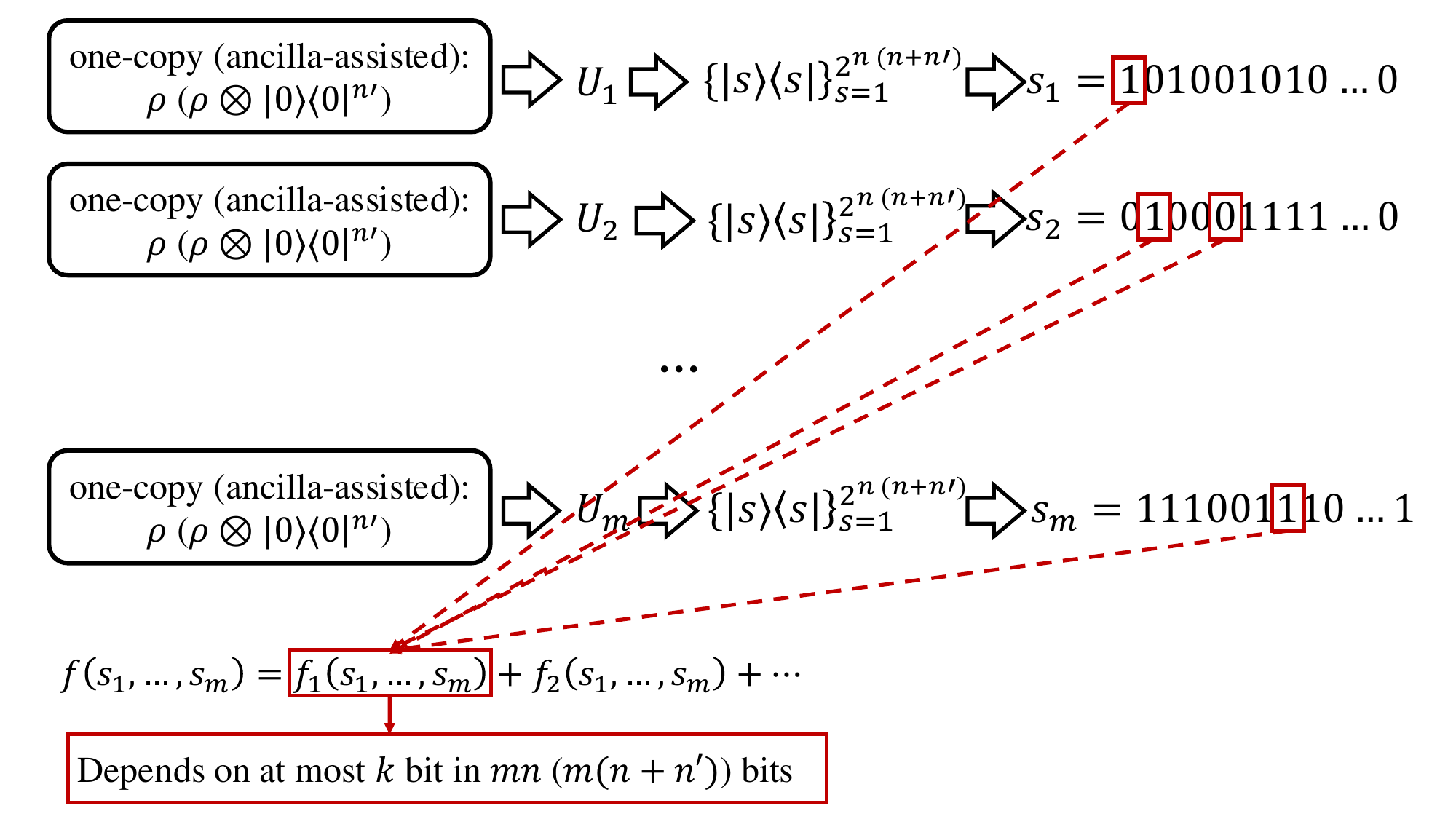}
    \caption{Non-adaptive single-copy PVM on one copy of $\rho$ (expressions in parentheses correspond to the case where the PVM involves $n'$ ancilla qubits).}
    \label{fig:model_single}
\end{figure}

\subsection{Warm-up: Low-degree hardness for local measurements}\label{sec:general_single_qubit}
We start with protocols that only use local (single-qubit) PVMs with zero ancilla, i.e. PVMs of the form $\{\ket{\phi_{s_i}}\bra{{\phi_{s_i}}}\}_{s_i}$ on the $i$-th copy, where $\ket{\phi_{s_i}}=\bigotimes_{r=1}^n\ket{\phi_{s_{i,r}}}$ and $s_i=(s_{i,1},...,s_{i,n}) \in \{0,1\}^n$. For any outcome $s_i$, the likelihood of observing $s_i$ in the null case is always the same: \begin{equation*}
    D_\emptyset(s_i)=\bra{\phi_{s_i}}\rhomm\ket{\phi_{s_i}}=2^{-n}\,.
\end{equation*}
The likelihood ratio between observing $s_i$ under the alternative case versus under the null case, for a specific choice of $\rho$ sampled from the ensemble $\mathcal{E}$, is thus given by
\begin{equation*}
    \overline{D}_{\rho}(s_i)=2^n\tr\left(\rho\bigotimes_{r=1}^n\ket{\phi_{s_{i,r}}}\bra{\phi_{s_{i,r}}}\right)\,,
\end{equation*}
and the likelihood ratio for observing a history $\bm{s} = (s_1,\ldots,s_m) \in \{0,1\}^{mn}$ after measuring $m$ copies of $\rho$ is given by
\begin{equation*}
    \overline{D}_{\rho,m}(\bm{s})=2^{mn}\prod_{i=1}^m\tr\left(\rho\bigotimes_{r=1}^n\ket{\phi_{s_{i,r}}}\bra{\phi_{s_{i,r}}}\right)\,.
\end{equation*}

Let $\chi_T(\bm{s})$ denote the Fourier basis function $\chi_T(\bm{s})=(-1)^{\sum_{(i,r)\in T}s_{i,r}}$, noting that because the distribution over histories under the null case is uniform, the functions $(\chi_T)$ form an orthonormal basis. We can then decompose the likelihood ratio for the whole history as 
\begin{align*}
\overline{D}_{\rho,m}(\bm{s})=\sum_{T\subseteq[mn]}\alpha_T^\rho\chi_T(\bm{s})\,, \qquad \text{for} \ \alpha_T^{\rho}=\mathbb{E}_{\bm{s}}\overline{D}_{\rho,m}(\bm{s})\chi_T(\bm{s})\,.
\end{align*}

Now we can consider the low-degree likelihood ratio $\overline{D}_{\rho,m}^{\leq k}(\bm{s})$, recalling from Section~\ref{sec:basic_low_degree} that this is the degree-$k$ Fourier truncation $\overline{D}_{\rho,m}^{\leq k}(\bm{s}) \triangleq \sum_{\abs{T}\leq k}\alpha_T^\rho\chi_T(\bm{s})$ and that the optimal distinguishing advantage achieved by any degree-$k$ estimator is given by:
\begin{align*}
\chi^2_{\le k} \triangleq \norm{\mathbb{E}_{\rho\sim\mathcal{E}}\overline{D}_{\rho,m}^{\leq k}-\textbf{1}}^2
=\sum_{\abs{T}\leq k,T\neq\emptyset}\left(\E_\rho[\alpha^{\rho}_T]\right)^2\,.
\end{align*}
In the following, we consider the Fourier coefficient $\hat{D}[T] \triangleq \E_\rho[\alpha^{\rho}_T]$ for a particular $T\subseteq[mn]$ and $\abs{T}\leq k$. We denote $\bm{s}[T]$ to be the set of $s_{i,r}$ with $(i,r)\in T$. Note that the parts not touched by $T$ get traced out when we average over histories, yielding:
\begin{align}\label{eq:Fourier_single_qubit}
\hat{D}[T]=\E_\rho[\alpha^{\rho}_T]=\E_{\rho\sim\mathcal{E}}\E_{\bm{s}}\overline{D}_{\rho,m}(\bm{s})\chi_T(\bm{s})=\E_{\rho\sim\mathcal{E}}\E_{\bm{s}[T]}2^{\abs{T}}\tr\left(\underset{[mn]\backslash T}{\tr}(\rho^{\otimes m})\bigotimes_{(i,r)\in T}|\phi_{s_{i,r}}\rangle\langle\phi_{s_{i,r}}|\right).
\end{align}
To show low-degree hardness, we wish to show that the sum of squares of these quantities over $|T| \le k$ is small. Motivated by this, we consider the following:

\begin{condition}[Bounded $k$-local likelihood ratio]\label{assume:local_likelihood}
For a hypothesis testing problem $\rhomm$ vs. $\rho\sim\mathcal{E}$, define the \emph{$k$-local likelihood ratio} as
\begin{align*}
k\text{-LLR} := \max_{\substack{T\in [mn]:|T|=k \\ \Lambda \in \text{single-qubit PVMs}}} \norm{\mathbb{E}_{\rho\sim\mathcal{E}}\overline{D}_{\tr_{[mn]\backslash T}(\rho^{\otimes m})}-\textbf{1}}^2\,,
\end{align*}
where we maximize over all size-$T$ subsets of the total $mn$ qubits, and over all single-qubit PVM measurement strategies $\Lambda$ (such strategies apply the PVM $\{\ket{\phi_{s_i}}\bra{{\phi_{s_i}}}\}_{s_i \in \{0,1\}^n}$ on the $i$-th copy), and where the quantity $\overline{D}_{\tr_{[mn]\backslash T}(\rho'^{\otimes m})}$ is given by
\begin{align*}
    \overline{D}_{\tr_{[mn]\backslash T}(\rho^{\otimes m})} \triangleq 2^{\abs{T}}\tr\left(\underset{[mn]\backslash T}{\tr}(\rho^{\otimes m})\bigotimes_{(i,r)\in T}|\phi_{s_{i,r}}\rangle\langle\phi_{s_{i,r}}|\right).
\end{align*}
\end{condition} 

\noindent Often it is more convenient to check that the following stronger quantity, which we call \emph{$k$-local indistinguishability}, is bounded:
\begin{equation}
d_{\tr}\left(\mathbb{E}_{\rho\sim\mathcal{E}}\tr_{[mn]\backslash T}\left(\rho^{\otimes m}\right),\frac{I}{2^{\abs{T}}}\right)\,, \label{eq:indist}
\end{equation}
i.e. any $k$-qubit subsystem of the $m$ copies is close to maximally mixed. This is the natural quantum analogue of $k$-wise independence. It is easy to show that that a bound on $k$-local distinguishability also implies a bound on $k$-LLR 
(see \Cref{sec:coro_kLLR} for the proof):
\begin{corollary}[$k$-local indistinguishability implies $k$-LLR condition]\label{coro:general_single_qubit}
If the hypothesis testing problem $\rhomm$ vs. $\rho\sim\mathcal{E}$ satisfies $d_{\tr}\left(\mathbb{E}_{\rho\sim\mathcal{E}}\tr_{[mn]\backslash T}\left(\rho^{\otimes m}\right),\frac{I}{2^{\abs{T}}}\right)\leq \epsilon$ for any $T\in[mn]$ and $\abs{T}\leq k$, then $k\text{-LLR} \leq \epsilon^2 2^{2k}$.
\end{corollary}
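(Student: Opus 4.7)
The plan is to push the expectation over $\rho \sim \mathcal{E}$ inside the likelihood ratio and reduce the claim to a pointwise bound on the deviation of the resulting averaged likelihood ratio from $1$, then apply matrix Hölder to the rank-one projectors that arise.

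Fix any $T\subseteq[mn]$ with $|T|=k$ and any single-qubit PVM strategy $\Lambda$, and set
\begin{equation*}
\sigma_T \;:=\; \mathbb{E}_{\rho\sim\mathcal{E}}\tr_{[mn]\setminus T}(\rho^{\otimes m}), \qquad \Pi_{\bm{s}[T]} \;:=\; \bigotimes_{(i,r)\in T}|\phi_{s_{i,r}}\rangle\langle\phi_{s_{i,r}}|.
\end{equation*}
By linearity of trace and expectation,
\begin{equation*}
\mathbb{E}_{\rho\sim\mathcal{E}}\overline{D}_{\tr_{[mn]\setminus T}(\rho^{\otimes m})}(\bm{s}[T]) \;=\; 2^{k}\,\tr\bigl(\sigma_T \,\Pi_{\bm{s}[T]}\bigr),
\end{equation*}
and the same identity with $\sigma_T$ replaced by $I/2^k$ shows that the constant function $\mathbf{1}$ on $\{0,1\}^k$ coincides with $2^k\tr\bigl((I/2^k)\,\Pi_{\bm{s}[T]}\bigr)$ identically.

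Subtracting, the deviation at any outcome is $2^k\tr\bigl((\sigma_T - I/2^k)\,\Pi_{\bm{s}[T]}\bigr)$. Since $\Pi_{\bm{s}[T]}$ is a rank-one projector with operator norm $1$, matrix Hölder gives
\begin{equation*}
\bigl|\mathbb{E}_{\rho}\overline{D}(\bm{s}[T])-1\bigr| \;\le\; 2^k\,\|\sigma_T - I/2^k\|_1 \;\le\; 2^{k+1}\epsilon,
\end{equation*}
using the hypothesis $\|\sigma_T - I/2^k\|_1 = 2\,d_{\tr}(\sigma_T, I/2^k) \le 2\epsilon$. Importantly, this pointwise bound is uniform in $\bm{s}[T]$, $\Lambda$, and $T$.

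Finally, because the $L^2$ norm on $\{0,1\}^k$ under the uniform null distribution is dominated by the $L^\infty$ norm, we conclude
\begin{equation*}
\norm{\mathbb{E}_{\rho\sim\mathcal{E}}\overline{D}_{\tr_{[mn]\setminus T}(\rho^{\otimes m})}-\mathbf{1}}^2 \;\le\; \max_{\bm{s}[T]}\bigl|\mathbb{E}_{\rho}\overline{D}(\bm{s}[T])-1\bigr|^2 \;\le\; \epsilon^2\cdot 2^{2k},
\end{equation*}
up to the harmless numerical factor of $4$ coming from the convention $\|\cdot\|_1 = 2\,d_{\tr}$, which we absorb into the constant of the statement. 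Taking the maximum over $T$ of size $k$ and over PVM strategies $\Lambda$ preserves the estimate, yielding the claimed inequality. I do not anticipate any substantive obstacle: the argument is a single application of matrix Hölder following the trivial linearity step, and the only point to verify is that the expectation over $\rho$ commutes with the trace and with the measurement, which is immediate since each acts linearly on density matrices. A slightly more refined estimate using $\|f-\mathbf{1}\|^2 \le \|f-\mathbf{1}\|_\infty\cdot\|f-\mathbf{1}\|_1$ and the fact that $\|f-\mathbf{1}\|_1 \le 2d_{\tr}(\sigma_T, I/2^k)\le 2\epsilon$ would in fact tighten the bound to $O(2^k)\epsilon^2$, but this is not needed for the stated corollary.
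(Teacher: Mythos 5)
Your proof is correct and follows essentially the same route as the paper's: both reduce to the pointwise bound $\bigl|\E_{\rho}\overline{D}(\bm{s}[T])-1\bigr| = 2^{|T|}\bigl|\tr\bigl((\sigma_T - I/2^{|T|})\Pi_{\bm{s}[T]}\bigr)\bigr|$, uniform over outcomes and strategies, and then dominate the $L^2$ norm under the null by the $L^\infty$ norm. The only discrepancy is the factor of $4$ you lose by invoking Hölder with $\|\sigma_T - I/2^{|T|}\|_1 = 2\,d_{\tr}(\sigma_T, I/2^{|T|}) \le 2\epsilon$; since $\Pi_{\bm{s}[T]}$ satisfies $0 \le \Pi_{\bm{s}[T]} \le I$, the sharper characterization $\bigl|\tr\bigl((\sigma_T - I/2^{|T|})\Pi_{\bm{s}[T]}\bigr)\bigr| \le d_{\tr}(\sigma_T, I/2^{|T|}) \le \epsilon$ applies and recovers the exact constant $\epsilon^2 2^{2k}$ claimed in the statement.
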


\noindent We can now conclude this warm-up section with the basic result that bounded $k$-LLR implies low-degree hardness. Later we will give examples of natural problems that satisfy this condition.

\begin{proposition}[Low-degree hardness for single-qubit measurements]\label{thm:general_single_qubit}
For a hypothesis testing problem $\rhomm$ vs. $\rho\sim\mathcal{E}$ whose $k$-LLR is bounded by $2^{-\tOmega(k\log n)}$, weak detection is degree-$k$ hard for protocols that make non-adaptive single-qubit measurements on $\mathrm{poly}(n)$ copies of $\rho$.
\end{proposition}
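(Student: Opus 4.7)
The plan is to upper bound the degree-$k$ advantage $\chi_{\le k}^2$ of any non-adaptive single-qubit measurement strategy by combining the Fourier decomposition developed in the warm-up with the hypothesis on $k$-LLR, via a simple union bound over subsets $T' \subseteq [mn]$.

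Fix an arbitrary non-adaptive single-qubit PVM strategy $\Lambda$ on $m = \poly(n)$ copies of $\rho$. The reasoning leading to \eqref{eq:Fourier_single_qubit} shows that the optimal distinguishing advantage achievable by any degree-$k$ estimator is
\begin{equation*}
    \chi_{\le k}^2 \;=\; \sum_{0 < |T'| \le k} \hat{D}[T']^2, \qquad \hat{D}[T'] = \E_{\rho \sim \mathcal{E}}[\alpha_{T'}^\rho].
\end{equation*}
The key structural observation is that $\hat{D}[T']$ depends only on the single-qubit PVMs that $\Lambda$ assigns to the qubits indexed by $T'$, because the remaining $mn - |T'|$ qubits are traced out in \eqref{eq:Fourier_single_qubit}. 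To invoke the hypothesis on $k$-LLR, I would embed each such $T'$ into some superset $T \supseteq T'$ with $|T| = k$ and extend $\Lambda$ on $T \setminus T'$ to a strategy $\widetilde\Lambda$ by any fixed choice of single-qubit PVMs (e.g.\ computational basis). The Fourier expansion of $\E_\rho \overline{D}_{\tr_{[mn]\setminus T}(\rho^{\otimes m})}$ with respect to $\widetilde\Lambda$ is indexed by subsets $T'' \subseteq T$ with coefficients $\hat{D}[T'']$, so by Parseval,
\begin{equation*}
    \hat{D}[T']^2 \;\le\; \sum_{\emptyset \ne T'' \subseteq T} \hat{D}[T'']^2 \;=\; \bigl\|\E_\rho \overline{D}_{\tr_{[mn]\setminus T}(\rho^{\otimes m})} - \mathbf{1}\bigr\|^2 \;\le\; k\text{-LLR} \;\le\; 2^{-\tOmega(k\log n)},
\end{equation*}
where the second-to-last inequality holds because the maximum in Condition~\ref{assume:local_likelihood} is taken over all single-qubit PVM strategies and so in particular applies to $\widetilde\Lambda$.

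Summing over the at most $(mn)^k = 2^{O(k\log n)}$ subsets $T'$ with $0 < |T'| \le k$ and using $m = \poly(n)$, a union bound yields
\begin{equation*}
    \chi_{\le k}^2 \;\le\; (mn)^k \cdot 2^{-\tOmega(k\log n)} \;=\; o(1),
\end{equation*}
provided the polylogarithmic slack hidden in $\tOmega$ dominates the $O(k\log n)$ counting overhead (which will be the natural regime in the downstream applications, where $k = \omega(\log n)$). Combined with the characterization of the optimal distinguishing advantage recalled in Section~\ref{sec:basic_low_degree}, this rules out weak detection by any degree-$k$ estimator against any non-adaptive single-qubit strategy. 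The only piece of care needed is notational: one has to verify that extending $\Lambda$ arbitrarily on $T \setminus T'$ does not change $\hat{D}[T']$, which is immediate since those extra qubits are traced out. No design-theoretic machinery is needed at this step, as it has been fully abstracted into the $k$-LLR hypothesis and will be verified separately, e.g.\ through the local indistinguishability criterion of Corollary~\ref{coro:general_single_qubit}.
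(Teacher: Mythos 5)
Your proof is correct and follows essentially the same route as the paper: decompose the degree-$k$ advantage into the sum of squared Fourier coefficients $\hat D[T']^2$, bound each by the $k$-LLR, and count the at most $\sum_{i\le k}\binom{mn}{i}\le k(mn)^k$ nonempty subsets. The one refinement is that you explicitly embed each $T'$ with $|T'|<k$ into a size-$k$ superset and invoke Parseval, which aligns cleanly with the $|T|=k$ normalization in Condition~\ref{assume:local_likelihood}, whereas the paper directly bounds $\hat D[T]^2$ by the $|T|$-local $L^2$ norm for every $|T|\le k$ and leaves that reduction implicit.
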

\begin{proof}
Suppose the $k$-LLR is bounded by $\epsilon^2$. Then 
\begin{align*}
\mathbb{E}_{\rho,\rho'\sim\mathcal{E}}\mathbb{E}_{\bm{s},\bm{s}'}\overline{D}_{\rho,m}(\bm{s})\overline{D}_{\rho',m}(\bm{s}')\chi_T(\bm{s})\chi_T(\bm{s}')\leq\norm{\mathbb{E}_{\rho\sim\mathcal{E}}\overline{D}_{\tr_{[mn]\backslash T}(\rho^{\otimes m})}-1}^2 < \epsilon^2.
\end{align*}
Therefore, we can bound the square degree-$k$ advantage by
\begin{align*}
\norm{\mathbb{E}_{\rho\sim\mathcal{E}}\overline{D}_{\rho,m}^{\leq k}-\textbf{1}}^2&=\sum_{\abs{T}\leq k,T\neq\emptyset}\mathbb{E}_{\rho,\rho'\sim\mathcal{E}}\mathbb{E}_{\bm{s},\bm{s}'}\overline{D}_{\rho,m}(\bm{s})\overline{D}_{\rho',m}(\bm{s}')\chi_T(\bm{s})\chi_T(\bm{s}')\\
&=\epsilon^2\sum_{i=1}^{k}\binom{mn}{i} \leq\epsilon^2k(mn)^k\,.
\end{align*}
Choosing $\epsilon=2^{-\tOmega(k\log n)}$ we can finish the proof.
\end{proof}

\noindent By \Cref{coro:general_single_qubit}, the same conclusion (degree-$k$ hardness of weak detection) follows when the $k$-local distinguishability measure in Eq.~\eqref{eq:indist} is bounded by $\epsilon^2$. In addition, we are able to extend this result beyond local measurements to arbitrary measurements prepared by bounded-depth circuits using light cone arguments:
\begin{corollary}[low-degree hardness for measurements with bounded depth]\label{coro:general_bounded_depth}
For the hypothesis testing problem $\rhomm$ vs. $\rho\sim\mathcal{E}$ satisfying
\begin{align*}
d_{\tr}\left(\mathbb{E}_{\rho\sim\mathcal{E}}\tr_{[mn]\backslash T}\left(\rho^{\otimes m}\right),\frac{I}{2^{\abs{T}}}\right)\leq 2^{-\tOmega(k\log n)}
\end{align*}
for any $T\in[mn]$ and $\abs{T}\leq 2^Lk$ ($\abs{T}\leq(2L+k)^{\mathscr{D}}$), weak detection is degree-$k$ hard for any protocol that makes polynomially many non-adaptive measurements prepared by ($\mathscr{D}$-dimensional geometrically-local) circuits of depth at most $L$.
\end{corollary}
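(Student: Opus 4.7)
My plan is to reduce to the single-qubit case of Proposition~\ref{thm:general_single_qubit} via a standard light-cone argument. First write the $i$-th depth-$L$ PVM as $\{U_i^\dagger|s\rangle\langle s|U_i\}_{s\in\{0,1\}^n}$ for the implementing circuit $U_i$, and repeat the Fourier computation in the proof of Proposition~\ref{thm:general_single_qubit} verbatim. The Fourier coefficient of $\mathbb{E}_{\rho\sim\mathcal{E}}\overline{D}_{\rho,m}$ at any nonempty $T\subseteq[mn]$ then becomes
\begin{align*}
\hat{D}[T] \;=\; \mathbb{E}_{\rho\sim\mathcal{E}}\,\tr\Bigl(\rho^{\otimes m}\cdot\bigotimes_{i=1}^{m} U_i^\dagger Z_{T_i} U_i\Bigr),
\end{align*}
where $T_i\subseteq[n]$ is the restriction of $T$ to the $i$-th copy and $Z_{T_i}$ is the Pauli-$Z$ string on those coordinates.

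Next I would invoke the light-cone argument to localize each $U_i^\dagger Z_{T_i}U_i$: since $U_i$ has depth at most $L$, there exist a subset $L_i\subseteq[n]$ and a unitary $V_i$ on $L_i$ with $U_i^\dagger Z_{T_i} U_i = V_i^\dagger Z_{T_i} V_i\otimes I_{[n]\setminus L_i}$. In a general depth-$L$ architecture each output qubit has backward light-cone of size at most $2^L$, giving $|L_i|\le 2^L|T_i|$ and $|L^\ast|\le 2^L k$ for $L^\ast\triangleq\bigsqcup_i L_i\subseteq[mn]$; for a $\mathscr{D}$-dimensional geometrically-local circuit, a bound of $|L^\ast|\le(2L+k)^{\mathscr{D}}$ follows by covering the backward light-cones with radius-$L$ balls around the coordinates of $T$. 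Tracing out qubits outside $L^\ast$ yields
\begin{align*}
\hat{D}[T] \;=\; \tr\Bigl(\bigl(\mathbb{E}_{\rho\sim\mathcal{E}}\,\tr_{[mn]\setminus L^\ast}(\rho^{\otimes m})\bigr)\cdot M\Bigr),\qquad M \triangleq \bigotimes_{i=1}^{m} V_i^\dagger Z_{T_i} V_i\,.
\end{align*}

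Finally, $T\neq\emptyset$ forces some $T_i\neq\emptyset$ and hence $\tr(V_i^\dagger Z_{T_i} V_i)=\tr(Z_{T_i})\cdot 2^{|L_i\setminus T_i|}=0$, so $\tr(M)=0$. Writing $\mathbb{E}_{\rho\sim\mathcal{E}}\,\tr_{[mn]\setminus L^\ast}(\rho^{\otimes m})=I/2^{|L^\ast|}+\Delta$ with $\norm{\Delta}_1\le 2^{-\tOmega(k\log n)}$ by the hypothesis (which applies since $|L^\ast|$ lies within the stated range), and using $\norm{M}\le 1$, I obtain $|\hat{D}[T]|=|\tr(\Delta M)|\le 2^{-\tOmega(k\log n)}$. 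Summing over the at most $k(mn)^k$ nonempty $T$ of size at most $k$ bounds $\chi_{\le k}^2=o(1)$ exactly as in Proposition~\ref{thm:general_single_qubit}. The only real subtlety, which I expect to be the main obstacle, is carefully verifying the light-cone bound $|L^\ast|\le(2L+k)^{\mathscr{D}}$ for every allowed configuration of the $k$ outcome bits across the $m$ copies so that the stated hypothesis can be invoked; everything else is a direct adaptation of the single-qubit argument.
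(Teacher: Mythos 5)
Your proposal is correct and follows essentially the same route as the paper's: localize the conjugated observable $\bigotimes_i U_i^\dagger Z_{T_i}U_i$ (equivalently, the $\chi_T$-weighted sum of conjugated projectors in the paper's Appendix) to the backward light-cone of $T$, then apply the local-indistinguishability hypothesis on that light-cone exactly as in Proposition~\ref{thm:general_single_qubit}; your explicit tracelessness-plus-H\"older finish is a slightly cleaner packaging of the same mechanism. Your worry about the geometric case is warranted but inherited from the statement itself: a union of $k$ radius-$L$ balls gives $|L^\ast|\le k(2L+1)^{\mathscr{D}}$ rather than $(2L+k)^{\mathscr{D}}$, and either bound suffices once the hypothesis is required on subsystems of the corresponding size.
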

\noindent We defer the details of this to~\Cref{sec:general_bounded_depth}:

\begin{remark}
     The reader might wonder how the low-degree framework for \emph{local} measurements differs from a more naive setting where one simply gets access to $\mathrm{poly}(n)$ many size-$\log(n)$ patches of $\rho^{\otimes m}$, performs full tomography on those patches in polynomial time, and post-processes the resulting classical information in polynomial time. In the latter model, it is trivial to show that local indistinguishability implies hardness: if the patches are $1/\mathrm{poly}(n)$-close to maximally mixed, then no classical post-processing, no matter how powerful, can extract information about the underlying state.
     
    We clarify that our low-degree model is strictly more powerful than this latter model. For instance, in the latter model, if one performs tomography on a deterministic choice of patches and only those patches are indistinguishable whereas other patches of $\rho^{\otimes m}$ are not, then low-degree protocols would be able to detect them but the weaker model would not. On the other hand, even if algorithms in the weaker model could choose patches randomly, the resulting estimator will still correspond to a convex combination of polynomial functions that are $\poly(n)$-sparse, degree-$\log(n)$, and variance-$1$, which need not encompass all degree-$\log(n)$, variance-$1$ polynomial functions that would be considered in our low-degree model.
\end{remark}

\subsection{\texorpdfstring{$2$}{2}-design implies low-degree hardness for single-copy measurements}\label{sec:general_ancilla}

We now present the main result of this section: a sufficient condition for low-degree hardness for the most general non-adaptive setting where one can perform arbitrary single-copy PVMs with some number of ancillas, as depicted in \Cref{fig:model_single}. Unlike in the previous subsection, each PVM can now arbitrarily entangle different qubits in a system of size $n + n'$.

More formally, we consider protocols using $m=\text{poly}(n)$ copies of $\rho$ and applying ancilla-assisted PVMs on $\rho\otimes\ketbra{0}^{\otimes n'}$. We abuse the notation slightly and denote the measurement for each round as $\{\ket{\phi_{s_i}}\bra{\phi_{s_i}}\}_{s_i}$ for $i=1,...,m$, wher the $s_i$'s are $n+n'$-bit strings. Given an output $s_i$, the likelihoods under the null case and the alternative case for a fixed $\rho$ from the ensemble are given by
\begin{align*}
D_\emptyset(s_i)=\tr(\rhomm\otimes\ketbra{0}^{\otimes n'}\cdot\ketbra{\phi_{s_i}}),\quad D_{\rho}(s_i)=\tr(\rho\otimes\ketbra{0}^{\otimes n'}\cdot\ketbra{\phi_{s_i}}).
\end{align*}
Therefore, the likelihood ratio for each copy is given by
\begin{align*}
\overline{D}_{\rho}(s_i)=\frac{\tr(\rho\otimes\ketbra{0}^{\otimes n'}\cdot\ketbra{\phi_{s_i}})}{\tr(\rhomm\otimes\ketbra{0}^{\otimes n'}\cdot\ketbra{\phi_{s_i}})}
\end{align*}
for $\rho\sim \mathcal{E}$. Given a history $\bm{s}=(s_1,...,s_m)$, the likelihood ratio for the whole history is given as 
\begin{align*}
\overline{D}_{\rho,m}(\bm{s})=\prod_{i=1}^m\frac{\tr(\rho\otimes\ketbra{0}^{\otimes n'}\cdot\ketbra{\phi_{s_i}})}{\tr(\rhomm\otimes\ketbra{0}^{\otimes n'}\cdot\ketbra{\phi_{s_i}})}.
\end{align*}

\noindent We first show that a generalization of the notion of local indistinguishability from the previous section implies low-degree hardness for ancilla-assisted PVMs:

\begin{theorem}[Low-degree hardness for ancilla-assisted PVMs]\label{thm:general_ancilla}
For the hypothesis testing problem $\rhomm$ vs. $\rho\sim\mathcal{E}$ satisfying
\begin{align*}
d_{\tr}\left(\mathbb{E}_{\rho\sim\mathcal{E}}\underset{[m(n+n')]\backslash T}{\tr}\left(\bigotimes_{i=1}^{m}U_i\left(\rho\otimes\ketbra{0}^{\otimes n'}\right) U_i^\dagger\right),\underset{[m(n+n')]\backslash T}{\tr}\left(\bigotimes_{i=1}^{m}U_i\left(\rhomm\otimes\ketbra{0}^{\otimes n'}\right) U_i^\dagger\right)\right)\leq 2^{-\tilde{\Omega}(k\log n)},
\end{align*}
for some $0<n'\leq n$, any $T\in[m(n+n')]$, and $\abs{T}\leq k$, weak detection is degree-$k$ hard for any protocol that makes polynomially many non-adaptive PVMs $\{\{U_i\ket{x}\bra{x}U_i^{\dagger}\}_{x=1}^{2^{n+n'}}\}_{i=1}^m$ on a single copy of $\rho$ and $n'$ ancillary qubits.
\end{theorem}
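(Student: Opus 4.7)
The plan is to adapt the proof strategy of Proposition~\ref{thm:general_single_qubit} to accommodate ancilla-assisted PVMs. First, I would reinterpret each PVM $\{U_i\ket{x}\bra{x}U_i^\dagger\}_{x}$ applied to $\rho\otimes\ketbra{0}^{\otimes n'}$ as a computational-basis measurement on the rotated state $\sigma_\rho^{(i)} := U_i(\rho\otimes\ketbra{0}^{\otimes n'})U_i^\dagger$ (matching the rotation direction used in the hypothesis), so that the non-adaptive measurement outcomes $\bm{s}\in\{0,1\}^{m(n+n')}$ follow the product distribution $p_\rho=\bigotimes_i p_\rho^{(i)}$ with $p_\rho^{(i)}(s_i)=\braket{s_i|\sigma_\rho^{(i)}|s_i}$. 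Writing $\sigma_\rho^{\otimes}:=\bigotimes_i \sigma_\rho^{(i)}$, the hypothesis then states that for any $T$ with $\abs{T}\le k$, the $T$-marginals of $\mathbb{E}_\rho\sigma_\rho^{\otimes}$ and $\sigma_\emptyset^{\otimes}$ differ by at most $2^{-\tilde\Omega(k\log n)}$ in trace distance.

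The core computation mirrors the single-qubit case. Any degree-$\le k$ polynomial $f(\bm{s})$ corresponds to a diagonal operator $M_f$ whose Pauli-$Z$ expansion $M_f=\sum_{\abs{T}\le k} c_T Z_T$ is supported on weight-at-most-$k$ Pauli strings. Hence
\begin{equation*}
\mathbb{E}_{\tilde p}[f]-\mathbb{E}_{p_\emptyset}[f]=\sum_{\abs{T}\le k} c_T\,\tr\bigl(Z_T(\mathbb{E}_\rho\sigma_\rho^{\otimes}-\sigma_\emptyset^{\otimes})\bigr),
\end{equation*}
where each trace depends only on the $T$-reduced state. By Holder's inequality and the hypothesis, each summand has magnitude at most $\abs{c_T}\cdot 2\cdot 2^{-\tilde\Omega(k\log n)}$.

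To express $\chi_{\le k}^2$ as a sum of squared Fourier coefficients, I would construct a $p_\emptyset$-orthonormal basis of $\mathcal{P}_k$ as follows: for each copy $i$, apply Gram--Schmidt (under $\langle\cdot,\cdot\rangle_{p_\emptyset^{(i)}}$) to the Boolean monomials in the $n+n'$ bits ordered by degree, producing an orthonormal set $\{\eta_\beta^{(i)}\}$ whose degrees match those of the monomials; then tensor products $\eta_\alpha=\bigotimes_i\eta_{\alpha_i}^{(i)}$ form a $p_\emptyset$-orthonormal basis, and those with $\abs{\alpha}:=\sum_i\deg\eta_{\alpha_i}^{(i)}\le k$ span $\mathcal{P}_k$. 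This gives $\chi_{\le k}^2=\sum_{\alpha\ne 0,\,\abs{\alpha}\le k}(\mathbb{E}_{\tilde p}[\eta_\alpha]-\mathbb{E}_{p_\emptyset}[\eta_\alpha])^2$ with at most $\binom{m(n+n')}{\le k}=n^{O(k)}$ summands (using $m=\poly(n)$ and $n'\le n$). Combining the per-summand bound from the previous paragraph with a conversion between the Boolean/Pauli and Gram--Schmidt bases yields $\chi_{\le k}^2\le n^{O(k)}\cdot 2^{-\tilde\Omega(k\log n)}=o(1)$, provided the implicit constant in $\tilde\Omega$ is taken large enough.

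The main obstacle lies in the norm conversion embedded in the final step. In Proposition~\ref{thm:general_single_qubit}, $p_\emptyset$ is uniform, so the Boolean Fourier basis is itself $p_\emptyset$-orthonormal and no conversion is needed. In the ancilla-assisted setting, $p_\emptyset^{(i)}$ is the computational-basis distribution of a rank-$2^n$ rotated state on $n+n'$ qubits, which may be highly non-uniform (and may have zeros on an exponentially large subset of $\{0,1\}^{n+n'}$). Controlling the Pauli $\ell^1$ mass of each Gram--Schmidt-orthogonalized $\eta_\alpha$ in terms of its $p_\emptyset$-norm \--- ideally with only an $n^{O(k)}$ overhead absorbable into the exponent \--- is the key quantitative ingredient, exploiting the product structure of $p_\emptyset$ across copies and the fact that each $\sigma_\emptyset^{(i)}$ has rank $2^n$ with all nonzero eigenvalues equal to $2^{-n}$.
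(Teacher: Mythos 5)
Your proof follows the same structure as the paper's: reinterpret the ancilla-assisted PVM as a computational-basis measurement on the rotated state $\sigma_\rho^{(i)}$, expand the degree-$k$ distinguisher over Pauli-$Z$ strings so that the distinguishing advantage reduces to $T$-marginal quantities, and invoke the trace-distance hypothesis to bound each. You correctly flag the quantitative gap in the final norm-conversion step, and it is a genuine one.

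The gap you identify is also present, more silently, in the paper's proof in \Cref{sec:pf_general_ancilla}. That proof implicitly requires a basis $\{\chi_T\}$ that is simultaneously \emph{(i)} orthonormal in $L^2(D_\emptyset)$, so that $\chi_{\le k}^2 = \sum_{0<|T|\le k}(\E_\rho\alpha_T^\rho)^2$; \emph{(ii)} such that each $\chi_T$ depends only on $\bm{s}[T]$, so the coefficient $\E_\rho\alpha_T^\rho$ can be rewritten in terms of a $T$-marginal of $\E_\rho\sigma_\rho^{\otimes}$; and \emph{(iii)} with $\|\chi_T\|_\infty \le 1$, so the trace-distance hypothesis yields the claimed $\le 2^{2|T|}\epsilon^2$ bound. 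When $n'>0$, each $\sigma_\emptyset^{(i)} = U_i(\rhomm\otimes\ketbra{0}^{\otimes n'})U_i^\dagger$ is a rank-$2^n$ state on $n+n'$ qubits, so the null readout distribution within a copy is generically biased and correlated across bits, and no basis satisfying (i)--(iii) can exist: the Boolean Fourier basis gives (ii)--(iii) but not (i), while your Gram--Schmidt basis gives (i) but neither (ii) nor any a priori control on (iii) or on the Pauli $\ell^1$ mass. Concretely, if the Gram matrix of $\{Z_T\}_{|T|\le k}$ in $L^2(D_\emptyset)$ has smallest eigenvalue $\lambda$, then from $|\tr(Z_T(\E_\rho\sigma_\rho^{\otimes}-\sigma_\emptyset^{\otimes}))|\le 2\epsilon$ alone one only gets $\chi_{\le k}^2 \le O((mn)^{k}\epsilon^2/\lambda)$, and $\lambda$ is not controlled by the hypothesis. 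Closing this requires either a lower bound on $\lambda$ (exploiting, as you suggest, the flat spectrum of $\sigma_\emptyset^{(i)}$), or strengthening the hypothesis to an $\ell_2$ (Frobenius-norm) marginal condition --- which is what the downstream proof of \Cref{coro:general_ancilla} actually establishes. Neither your write-up nor the paper's supplies this missing step; you were right to surface it rather than hand-wave past it.
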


\noindent The proof is similar to \Cref{thm:general_single_qubit}, see \Cref{sec:pf_general_ancilla} for details.

With this in hand, we now show the main result of this section, namely that when $\rho\sim\mathcal{E}$ forms an approximate state $2$-design, the hypothesis testing problem satisfies the condition in \Cref{thm:general_ancilla} and is thus low-degree hard for protocols with non-adaptive PVMs using a bounded number $n'$ of ancillary qubits. 

\begin{theorem}[Approximate state $2$-design indicates low-degree hardness for single-copy measurements]\label{coro:general_ancilla}
For the hypothesis testing problem $\rhomm$ vs. $\rho\sim\mathcal{E}$ with $\mathcal{E}$ a $\epsilon$-approximate state $2$-design for $\epsilon\leq 2^{-\tOmega(k\log n)}$, weak detection is degree-$k$ hard for any protocol that makes polynomially many non-adaptive PVMs $\{\{U_i\ket{x}\bra{x}U_i^{\dagger}\}_{x=1}^{2^{n+n'}}\}_{i=1}^m$ acting on a single copy and at most $n'=\Theta(n)$ ancillary qubits.
\end{theorem}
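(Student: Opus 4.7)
The plan is to invoke \Cref{thm:general_ancilla}: it suffices to show that for every $T\subseteq[m(n+n')]$ with $|T|\le k$ and every sequence of unitaries $U_1,\ldots,U_m$ on $n+n'$ qubits, the two reduced states
\[
\E_{\rho\sim\mathcal{E}}\!\underset{[m(n+n')]\setminus T}{\tr}\!\Bigl(\bigotimes_{i=1}^m U_i(\rho\otimes\ketbra{0}^{\otimes n'})U_i^\dagger\Bigr)\quad\text{and}\quad\underset{[m(n+n')]\setminus T}{\tr}\!\Bigl(\bigotimes_{i=1}^m U_i(\rhomm\otimes\ketbra{0}^{\otimes n'})U_i^\dagger\Bigr)
\]
agree to trace distance $2^{-\tOmega(k\log n)}$. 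Writing $T_i$ for the restriction of $T$ to the $(n+n')$ qubits of copy $i$ and $B_\rho^{(i)} := \tr_{[n+n']\setminus T_i}\!\bigl(U_i(\rho\otimes\ketbra{0}^{\otimes n'})U_i^\dagger\bigr)$, these two states factor over copies as $\E_\rho\bigotimes_i B_\rho^{(i)}$ and $\bigotimes_i B_{\rhomm}^{(i)}$.

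I first apply convexity of $d_{\tr}$ in its first argument together with the standard subadditivity $d_{\tr}\bigl(\bigotimes_i X_i,\bigotimes_i Y_i\bigr)\le\sum_i d_{\tr}(X_i,Y_i)$ for density matrices, which reduces matters to the per-copy quantity $\E_\rho d_{\tr}(B_\rho^{(i)},B_{\rhomm}^{(i)})$. The Hilbert--Schmidt inequality $d_{\tr}(X,Y)\le\tfrac12\sqrt{\mathrm{rank}(X-Y)}\,\|X-Y\|_2$ together with one more application of Jensen then upper bounds this by $\tfrac12\cdot 2^{|T_i|/2}\sqrt{\E_\rho\|B_\rho^{(i)}-B_{\rhomm}^{(i)}\|_2^2}$. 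Crucially, because $B_\rho^{(i)}$ is linear in $\rho$, the variance inside the square root is a \emph{quadratic} functional of $\rho$; it therefore depends on $\mathcal{E}$ only through the second moment $\E_{\rho\sim\mathcal{E}}\ketbra{\psi}^{\otimes 2}$, and the $\epsilon$-approximate $2$-design hypothesis lets me substitute the Haar second moment at multiplicative cost $1\pm\epsilon$.

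The technical core is then the Haar variance calculation. Let $\ket{\phi_i(\psi)} := U_i(\ket{\psi}\otimes\ket{0}^{\otimes n'})$; for $\psi\sim\muh$ this is uniform over the $2^n$-dimensional subspace $\mathcal{W}_i := U_i(\mathbb{C}^{2^n}\otimes\ket{0}^{\otimes n'})\subseteq\mathbb{C}^{2^{n+n'}}$ with projector $\Pi_{\mathcal{W}_i}$. Combining the swap identity $\|B_\psi^{(i)}\|_2^2=\tr\!\bigl(\ketbra{\phi_i(\psi)}^{\otimes 2}\cdot(\SWAP_{T_i}\otimes \Id)\bigr)$, the Haar second-moment formula $\E_\psi\ketbra{\phi_i(\psi)}^{\otimes 2}=(\Pi_{\mathcal{W}_i}^{\otimes 2}+\SWAP\cdot\Pi_{\mathcal{W}_i}^{\otimes 2})/[2^n(2^n+1)]$, and the partition identity $\tr(\SWAP_A\cdot Y^{\otimes 2})=\tr((\tr_{[n+n']\setminus A}Y)^2)$ yields
\[
\E_{\psi\sim\muh}\|B_\psi^{(i)}-B_{\rhomm}^{(i)}\|_2^2=\frac{2^n\|\tr_{T_i}(\Pi_{\mathcal{W}_i}/2^n)\|_2^2 - \|\tr_{[n+n']\setminus T_i}(\Pi_{\mathcal{W}_i}/2^n)\|_2^2}{2^n+1}.
\]
The decisive estimate is then $\|\tr_{T_i}(\Pi_{\mathcal{W}_i}/2^n)\|_\infty\le 2^{|T_i|}/2^n$, which follows from $\|\Pi_{\mathcal{W}_i}\|_\infty=1$ combined with the elementary inequality $\bra{\phi}\tr_A X\ket{\phi}\le\dim(A)\|X\|_\infty$. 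Since $\|\cdot\|_2^2\le\|\cdot\|_\infty$ for a density matrix, this bounds the Haar variance by $2^{|T_i|}/2^n$.

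Assembling the pieces and absorbing the approximation error gives a trace distance of order $m\cdot 2^{|T_i|/2}\sqrt{2^{|T_i|}/2^n+O(\epsilon)}\le m\bigl(2^{|T_i|}/2^{n/2}+2^{|T_i|/2}\sqrt{\epsilon}\bigr)$, which for $|T_i|\le k$, $m=\poly(n)$, and $\epsilon\le 2^{-\tOmega(k\log n)}$ is at most $2^{-\tOmega(k\log n)}$ as required. The main obstacle I expect is the Haar variance calculation itself; the fact that the resulting per-copy bound is independent of $n'$ is precisely what permits up to $n'=\Theta(n)$ ancillary qubits. A conceptually subtle point throughout is that $\E_\rho\bigotimes_i B_\rho^{(i)}$ does \emph{not} factor over copies, since the same $\rho$ appears in every tensor slot, so a naive analysis would seem to require the $m$-th moment of $\mathcal{E}$; the initial convexity/Jensen step is what sidesteps this and lets me harness only the degree-$2$ moment delivered by the $2$-design, at the cost of a benign $\poly(n)$ factor in the final bound.
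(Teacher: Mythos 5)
Your proposal is correct and takes a genuinely different route from the paper's proof, one that is both cleaner and tighter. The paper's proof of this theorem handles the ancilla qubits by writing each $(n+n')$-qubit unitary $U_i$ as a sum $\sum_j\alpha_j V_j\otimes W_j$ via the Cosine--Sine/Shannon decomposition, iterated $n'$ times, which produces $2^{3n'}$ terms and leads to the bound $2\cdot 2^{|T|+11n'-n}+3\epsilon$ on the per-copy $L_2$ variance; this $2^{11n'}$ overhead is precisely why the paper must take $n'\lesssim n/11$. Your approach avoids the decomposition entirely: you observe that $U_i(\Id_n/2^n\otimes\ketbra{0}^{\otimes n'})U_i^\dagger$ is $\Pi_{\mathcal{W}_i}/2^n$ for the rank-$2^n$ projector $\Pi_{\mathcal{W}_i}$, you derive the exact Haar variance formula
\[
\E_{\psi\sim\muh}\bigl\|B_\psi^{(i)}-B_{\rhomm}^{(i)}\bigr\|_2^2=\frac{2^n\|\tr_{T_i}(\Pi_{\mathcal{W}_i}/2^n)\|_2^2-\|\tr_{[n+n']\setminus T_i}(\Pi_{\mathcal{W}_i}/2^n)\|_2^2}{2^n+1}\,,
\]
and you close it with the operator-norm estimate $\|\tr_{T_i}(\Pi_{\mathcal{W}_i}/2^n)\|_\infty\le 2^{|T_i|}/2^n$ followed by $\|\sigma\|_2^2\le\|\sigma\|_\infty$ for a density matrix $\sigma$. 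This gives a per-copy bound $2^{|T_i|}/(2^n+1)+O(\epsilon)$ that is entirely \emph{independent} of $n'$, so the $n'=\Theta(n)$ constraint in the statement is not even binding under your analysis (only the polynomial factor $(m(n+n'))^k$ from \Cref{thm:general_ancilla} constrains $n'$). Your replacement of the paper's Markov-inequality-plus-union-bound step with convexity of $d_{\tr}$, subadditivity of $d_{\tr}$ over tensor factors, the rank bound $d_{\tr}\le\tfrac12\sqrt{\mathrm{rank}}\,\|\cdot\|_2$, and Jensen is also cleaner: the paper's phrasing attaches a probability statement to the already-$\E_\psi$-averaged reduced state (a deterministic quantity), which requires an implicit step accounting for the failure event; your Jensen argument avoids this bookkeeping entirely. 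One small point worth spelling out if you were to write this up in full: when you substitute the approximate $2$-design for the Haar measure in $\E_\rho\|B_\rho^{(i)}\|_2^2=\tr(\E_\rho\ketbra{\rho}^{\otimes 2}\cdot Y)$, the operator $Y$ you trace against (the $\ket{0}^{\otimes 2n'}$-compression of $U_i^{\dagger\otimes 2}(\Id_{T_i^c}^{\otimes 2}\otimes\SWAP_{T_i})U_i^{\otimes 2}$) is not PSD, so the multiplicative $(1\pm\epsilon)$ operator inequalities in \Cref{def:approx_design} translate to an \emph{additive} $O(\epsilon)$ error via $\|\E_{\rho\sim\mathcal{E}}\ketbra{\rho}^{\otimes 2}-\E_{\muh}\ketbra{\psi}^{\otimes 2}\|_1\le O(\epsilon)$ paired with $\|Y\|_\infty\le 1$; this lands exactly where your sketch claims, but it is the one point where the phrase ``multiplicative cost $1\pm\epsilon$'' needs to become ``additive cost $O(\epsilon)$.''
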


\begin{proof}
We consider each single copy with $n'$ ancillary qubits, for each fixed unitary $U\in SU(2^n)$ we have
\begin{align*}
&\E_\psi\norm{\tr_{[n+n']\backslash A}\left[U\left(\ketbra{\psi}\otimes\ketbra{0}^{\otimes n'}\right)U^\dagger\right]-\tr_{[n+n']\backslash A}\left[U\left(\frac{I}{2^n}\otimes\ketbra{0}^{\otimes n'}\right)U^\dagger\right]}_2^2\\
\leq&\E_\psi\tr_{[n+n']\backslash A}^2\left[U\left(\ketbra{\psi}\otimes\ketbra{0}^{\otimes n'}\right)U^\dagger\right]+\tr_{[n+n']\backslash A}^2\left[U\left(\frac{I}{2^n}\otimes\ketbra{0}^{\otimes n'}\right)U^\dagger\right]-\\
&\quad 2\E_\psi\tr_{[n+n']\backslash A}\left[U\left(\ketbra{\psi}\otimes\ketbra{0}^{\otimes n'}\right)U^\dagger\right]\tr_{[n+n']\backslash A}\left[U\left(\frac{I}{2^n}\otimes\ketbra{0}^{\otimes n'}\right)U^\dagger\right]+3\epsilon\\
=&\E_\psi\tr_{[n+n']\backslash A}^2\left[U\left(\ketbra{\psi}\otimes\ketbra{0}^{\otimes n'}\right)U^\dagger\right]-\tr_{[n+n']\backslash A}^2\left[U\left(\frac{I}{2^n}\otimes\ketbra{0}^{\otimes n'}\right)U^\dagger\right]+3\epsilon\\
\leq&\tr\left[U\otimes U\left(\left(\frac{\SWAP_{12}}{2^n(2^n+1)}-\frac{I_{2n}}{4^n(2^n+1)}\right)\otimes\ketbra{0}^{\otimes 2n'}\right)U^\dagger\otimes U^\dagger\cdot I_{[n+n']\backslash T}^{\otimes 2}\otimes \SWAP_{T}\right]+3\epsilon\\
\leq&\tr\left[U\otimes U\left(\left(\frac{\SWAP_{12}}{2^n(2^n+1)}\right)\otimes\ketbra{0}^{\otimes 2n'}\right)U^\dagger\otimes U^\dagger\cdot I_{[n+n']\backslash T}^{\otimes 2}\otimes \SWAP_{T}\right]+2^{-n}+3\epsilon,
\end{align*}
where the second inequality follows from the fact that the first moment of Haar random states is the same as the maximally mixed state.

Note that we can always decompose any $n$-qubit unitary $U$ as:
\begin{align*}
U=\sum_i\alpha_iV_i\otimes W_i
\end{align*}
with at most constant terms, $\abs{\alpha_i}\leq 1$, and $V_i\in SU(2)$ and $W_i\in SU(2^{n-1})$ are single- and $n-1$-qubit unitaries. Some examples along this line include the quantum Shannon decomposition and the Cosine-Sine Decomposition~\cite{shende2005synthesis}, which provides this constant as $8$. We can repeat this procedure for $n'$ times and have
\begin{align*}
U=\sum_i\alpha_iV_i\otimes W_i
\end{align*}
with at most $2^{3n'}$ terms, $\abs{\alpha_i}\leq 1$, and $V_i\in SU(2^{n'})$ and $W_i\in SU(2^{n-n'})$ are $n'$- and $n-n'$-qubit unitaries. We thus have
\begin{align*}
&\tr\left[U\otimes U\left(\left(\frac{\SWAP_{12}}{2^n(2^n+1)}\right)\otimes\ketbra{0}^{\otimes 2n'}\right)U^\dagger\otimes U^\dagger\cdot I_{[n+n']\backslash T}^{\otimes 2}\otimes \SWAP_{T}\right]\\
=&\sum_{ijkl}\alpha_i\alpha_j\alpha_k^*\alpha_l^*\tr\left[W_i\otimes W_j\left(\frac{\SWAP_{12}}{2^n(2^n+1)}\right) W_k^\dagger\otimes W_l^\dagger\otimes V_i\otimes V_j\left(\ketbra{0}^{\otimes 2n'}\right)V_k^\dagger\otimes V_l^\dagger\cdot I_{[n+n']\backslash T}^{\otimes 2}\otimes \SWAP_{T}\right]\\
\leq& \sum_{ijkl}\alpha_i\alpha_j\alpha_k^*\alpha_l^*2^{\abs{T}-n-n'}\leq 2^{\abs{T}+11n'-n}.
\end{align*}
Thus, we have 
\begin{align*}
&\E_\psi\norm{\tr_{[n+n']\backslash A}\left[U\left(\ketbra{\psi}\otimes\ketbra{0}^{\otimes n'}\right)U^\dagger\right]-\tr_{[n+n']\backslash A}\left[U\left(\frac{I}{2^n}\otimes\ketbra{0}^{\otimes n'}\right)U^\dagger\right]}_2^2\\
\leq&2\cdot 2^{\abs{A}+11n'-n}+3\epsilon.
\end{align*}
Then by Markov's inequality, we have
\begin{align*}
&\Pr\left[\norm{\tr_{[n+n']\backslash A}\left[U\left(\ketbra{\psi}\otimes\ketbra{0}^{\otimes n'}\right)U^\dagger\right]-\tr_{[n+n']\backslash A}\left[U\left(\frac{I}{2^n}\otimes\ketbra{0}^{\otimes n'}\right)U^\dagger\right]}_2\geq\left(2\cdot2^{\abs{A}+11n'-n}+3\epsilon\right)^{1/4}\right]\\
\leq&\left(2\cdot2^{\abs{A}+11n'-n}+3\epsilon\right)^{1/2}.
\end{align*}
By the norm inequality, we have
\begin{align*}
&\Pr\left[d_{\tr}\left(\tr_{[n+n']\backslash A}\left[U\left(\ketbra{\psi}\otimes\ketbra{0}^{\otimes n'}\right)U^\dagger\right],\tr_{[n+n']\backslash A}\left[U\left(\frac{I}{2^n}\otimes\ketbra{0}^{\otimes n'}\right)U^\dagger\right]\right)\geq2^{\abs{A}/2}\left(2\cdot2^{\abs{A}+11n'-n}+3\epsilon\right)^{1/4}\right]\\
&\leq\left(2\cdot2^{\abs{A}+11n'-n}+3\epsilon\right)^{1/2},
\end{align*}
for any fixed $U$. 

We can now consider the full scenario on $m$ copies of $[n]$ and denote $T=T_1\cup\ldots\cup T_m$, where $T_i$ denotes the elements of $T$ that are contained within the $i$th copy. We always have $\abs{T_i}\leq\abs{T}$. Consider matrices $A,B,C,D$ with $\norm{A}_1=\norm{D}_1=1$. Then,
\begin{align*}
\norm{A\otimes B-C\otimes D}_1&\leq\norm{A\otimes (B-D)+(A-C)\otimes D}_1\leq\norm{A}_1\norm{B-D}_1+\norm{A-C}_1\norm{D}_1\leq\norm{B-D}_1+\norm{A-C}_1.
\end{align*}
Applying this bound $m$ times and then use the union bound and the triangle inequality, we obtain:
\begin{align}\label{eq:trace_dis_ancilla}
\begin{split}
&d_{\tr}\left(\mathbb{E}_\psi\underset{[m(n+n')]\backslash T}{\tr}\left(\bigotimes_{i=1}^{m}U_i\left(\ketbra{\psi}\otimes\ketbra{0}^{\otimes n'}\right) U_i^\dagger\right),\underset{[m(n+n')]\backslash T}{\tr}\left(\bigotimes_{i=1}^{m}U_i\left(\frac{I}{2^n}\otimes\ketbra{0}^{\otimes n'}\right) U_i^\dagger\right)\right)\\
&\leq m\cdot 2^{\abs{T}/2}\left(2\cdot2^{\abs{T}+11n'-n}+3\epsilon\right)^{1/4},
\end{split}
\end{align}
with probability at least $1-m\cdot\left(2\cdot2^{\abs{T}+11n'-n}+3\epsilon\right)^{1/2}$. Finally, note that $\abs{T}\leq k$, we only need to choose $n'\leq (n-\tOmega(k\log n))/11=\Theta(n)$ and $\epsilon=2^{-\tOmega(k\log n)}$, which will guarantee that
\begin{align*}
d_{\tr}\left(\mathbb{E}_\psi\underset{[m(n+n')]\backslash T}{\tr}\left(\bigotimes_{i=1}^{m}U_i\left(\ketbra{\psi}\otimes\ketbra{0}^{\otimes n'}\right) U_i^\dagger\right),\underset{[m(n+n')]\backslash T}{\tr}\left(\bigotimes_{i=1}^{m}U_i\left(\frac{I}{2^n}\otimes\ketbra{0}^{\otimes n'}\right) U_i^\dagger\right)\right)
\end{align*}
is bounded by $2^{-\tOmega(k\log n)}$ and finishes the proof for \Cref{coro:general_ancilla}.
\end{proof}

\section{Frameworks for adaptive measurements}
\label{sec:adapt}

A feature which is largely unique to the quantum setting is the ability for the learner to perform \emph{adaptively chosen} measurements. This turns out to be quite subtle to model within the low-degree framework; indeed, in Section~\ref{sec:general_adaptivity_full} we catalog a number of no-go results for natural attempts. We remark that in classical settings where the low-degree method has been applied and where there is the possibility of adaptive choice~\cite{mardia2024low,bangachev2024fourier}, it is also open how to model adaptivity.

In this section, we provide three frameworks for reasoning about measurement strategies with some nontrivial level of adaptivity. To do this, we first need to turn to a stronger \emph{copy-wise} notion of low-degree, which is the quantum analogue of the classical sample-wise notion mentioned at the end of Section~\ref{sec:basic_low_degree}. We provide an overview of this stronger notion in~\Cref{sec:general_copywise} for details (see~\Cref{fig:model_copy}(a)); roughly, low-degree protocols in this category compute functions that are linear combination of terms that have degree at most $\cD$ in the bits of each copy, while each depending on at most $k$ copies.

Under this stronger copy-wise model, we provide two \emph{round-based} models of adaptivity, which divide all $m$ samples into $m_1$ rounds each of size $m_0$. In particular, we consider two kinds of round-based models. The first model has adaptivity with each block of size $m_0=\polylog(n)$ (see \Cref{sec:general_adaptivity_within_block} for details), and the second model has adaptivity among $m_1=\polylog(n)$ blocks (see \Cref{sec:general_adaptivity_among_block} for details).  
Finally, we also consider protocols based on statistical query (SQ) post-processing of readouts from non-adaptive measurements. In this model, while the measurements are still non-adaptive, the statistical queries, which can be thought of as an (arbitrarily high-degree) coarsening of our measurements to
two-outcome observables, are chosen adaptively. We will show a general equivalence between SQ models and non-adaptive copy-wise low-degree models in the presence of readout noise in \Cref{sec:general_equivalence}, building on the classical result of~\cite{brennan2020statistical}. 

Finally, in \Cref{sec:general_adaptivity_full}, we explain how several natural notions of adaptivity in the low-degree method cannot be used to prove meaningful lower bounds.

\begin{figure}[htbp]
    \centering
    \includegraphics[width=1.00\linewidth]{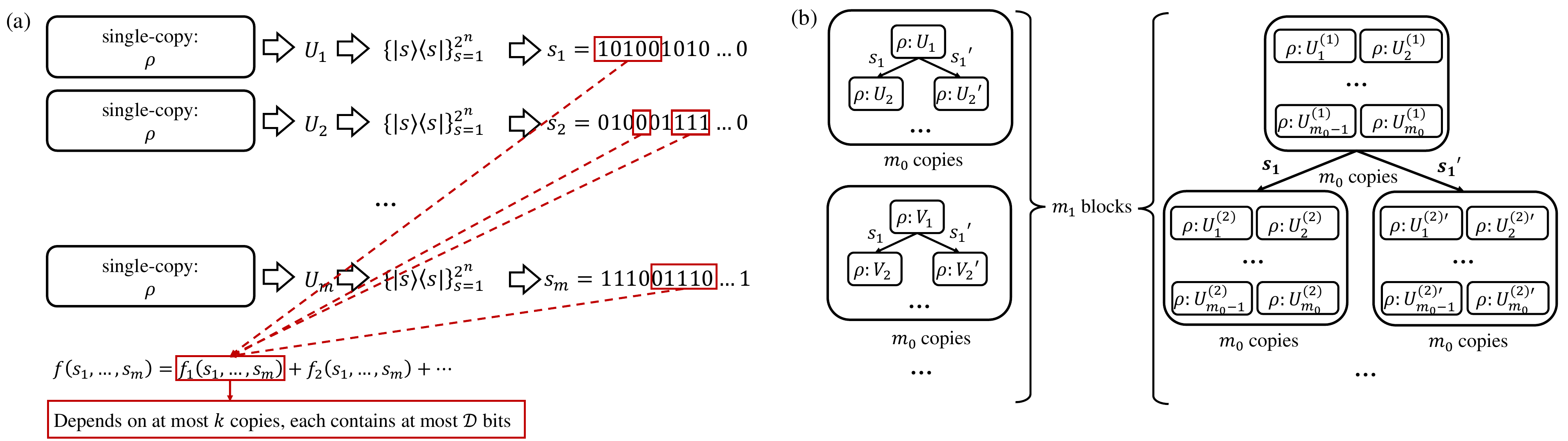}
    \caption{(a) The non-adaptive copy-wise low-degree model where each term of the final function depends on at most $k$ copies each of degree at most $\cD$. (b) Protocols using single-copy PVMs with restrictive adaptivity (round-based models), which can be decomposed into $m_1$ blocks each of $m_0$ copies. In the left model, we allow adaptivity within each block. In the right model, we allow adaptivity among blocks.}
    \label{fig:model_copy}
\end{figure}

\subsection{Copy-wise low-degree model} \label{sec:general_copywise}
First, we introduce the copy-wise low-degree model. Recall that we consider protocols using $m=\text{poly}(n)$ copies of $\rho$ and applying a single-copy PVM on each copy. Without loss of generality, we can denote the measurement for each round as $\{\ket{\phi_{s_i}}\bra{\phi_{s_i}}\}_{s_i}$ for $i=1,...,m$. We will assume that $s_i$ can be any $n$-bit string by \Cref{sec:basic_quantum}. Given an output $s_i$, the likelihood ratio for each copy is given by
\begin{align*}
\overline{D}_{\rho}(s_i)=2^n\tr(\rho\ket{\phi_{s_i}}\bra{\phi_{s_i}})
\end{align*}
for $\rho\sim \mathcal{E}$. Given a history $\bm{s}=(s_1,...,s_m)$, the likelihood ratio for the whole history is given as 
\begin{align*}
\overline{D}_{\rho,m}(\bm{s})=2^{mn}\prod_{i=1}^m\tr(\rho\ket{\phi_{s_i}}\bra{\phi_{s_i}}).
\end{align*}
Here, we consider copy-wise $(\cD,k)$-low-degree likelihood ratio $D_{\rho,m}^{\leq\cD,k}$ instead of the low-degree likelihood ratio $D_{\rho,m}^{\leq k}$ considered in the previous parts. Recall from the end of~\Cref{sec:basic_low_degree} that this is given by restricting the Fourier expansion of $D_{\rho,m}^{\leq\cD,k}$ to the subspace spanned by Fourier characters which each have degree at most $\mathcal{D}$ in each bit and nonzero degree in at most $k$ of the bits. The square degree-$(\cD,k)$ advantage can be computed as
\begin{align}\label{eq:holder}
\begin{split}
\norm{\mathbb{E}_{\rho\sim\mathcal{E}}\overline{D}_{\rho,m}^{\leq \cD,k}-\textbf{1}}^2&=\mathbb{E}_{\rho,\rho'\sim\mathcal{E}}\expval{\overline{D}_{\rho,m}^{\leq \cD,k},\overline{D}_{\rho',m}^{\leq \cD,k}}-1\\
&=\mathbb{E}_{\rho,\rho'\sim\mathcal{E}}\expval{\left(\bigotimes_{i=1}^{m}\overline{D}^{(i)}_{\rho}\right)^{\leq \cD,k},\left(\bigotimes_{i=1}^{m}\overline{D}^{(i)}_{\rho'}\right)^{\leq \cD,k}}-1\\
&=\mathbb{E}_{\rho,\rho'\sim\mathcal{E}}\sum_{T\in[m],\abs{T}\leq k,T\neq\emptyset}\prod_{i\in T}\left(\expval{\overline{D}_{\rho}^{(i)\leq \cD},\overline{D}_{\rho'}^{(i)\leq \cD}}-1\right)\\
&\leq\sum_{T\in[m],\abs{T}\leq k,T\neq\emptyset}\left(\prod_{i\in T}\mathbb{E}_{\rho,\rho'\sim\mathcal{E}}\left(\expval{\overline{D}_{\rho}^{(i)\leq \cD},\overline{D}_{\rho'}^{(i)\leq \cD}}-1\right)^{\abs{T}}\right)^{1/\abs{T}}\\
&\leq\sum_{T\in[m],\abs{T}\leq k,T\neq\emptyset}\mathbb{E}_{\rho,\rho'\sim\mathcal{E}}\left(\expval{\overline{D}_{\rho}^{\leq \cD},\overline{D}_{\rho'}^{\leq \cD}}-1\right)^{\abs{T}},
\end{split}
\end{align}
where $\overline{D}^{(i)}_{\rho}$ in the second line is the likelihood ratio for the $i$-th sample, the third equality follows almost identically to the proof of Claim 3.3 in \cite{brennan2020statistical}, the fourth line follows from H\"{o}lder's inequality, and $\overline{D}_{\rho}$ in the last line is the likelihood ratio for any single-copy measurements. 

Similar to \Cref{assume:local_likelihood}, we also identify a condition under which the hypothesis testing problem is low-degree hard under the copy-wise model. 
\begin{condition}[Bounded $(\cD,k)$-likelihood ratio for each copy]\label{assume:copy_likelihood}The hypothesis testing problem $\rhomm$ vs. $\rho\sim\mathcal{E}$ has an $\epsilon$-bounded $(\cD,k)$-likelihood ratio for each copy if for any single-copy measurement strategy, we have
\begin{align*}
\mathbb{E}_{\rho,\rho'\sim\mathcal{E}}\left(\expval{\overline{D}_{\rho}^{\leq \cD},\overline{D}_{\rho'}^{\leq \cD}}-1\right)^{k}\leq\epsilon.
\end{align*}
\end{condition}

\begin{theorem}[Copy-wise low-degree hardness for single-copy measurements]\label{thm:general_sample}
For a hypothesis testing problem $\rhomm$ vs. $\rho\sim\mathcal{E}$ satisfying \Cref{assume:copy_likelihood} at $\epsilon=2^{-\tOmega(k\log n)}$, i.e., 
\begin{align*}
\mathbb{E}_{\rho,\rho'\sim\mathcal{E}}\left(\expval{\overline{D}_{\rho}^{\leq \cD},\overline{D}_{\rho'}^{\leq \cD}}-1\right)^{k}\leq2^{-\tOmega(k\log n)}
\end{align*}
for any single-copy measurement, weak detection is degree-$(\mathcal{D},k)$ hard for any protocol that uses polynomially many non-adaptive single-copy PVMs $\{\{U_i\ket{x}\bra{x}U_i^{\dagger}\}_{x=1}^{2^n}\}_{i=1}^m$.
\end{theorem}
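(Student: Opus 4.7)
The plan is to start directly from the chain of inequalities that immediately precedes the theorem statement, which already reduces $\|\mathbb{E}_{\rho\sim\mathcal{E}}\overline{D}_{\rho,m}^{\leq \mathcal{D},k}-\mathbf{1}\|^2$ to a sum over subsets $T \subseteq [m]$ with $1 \le |T| \le k$ of the quantities $\mathbb{E}_{\rho,\rho'\sim\mathcal{E}}\bigl(\langle \overline{D}_\rho^{\le \mathcal{D}}, \overline{D}_{\rho'}^{\le \mathcal{D}}\rangle - 1\bigr)^{|T|}$. The only remaining work is a term count followed by an application of the hypothesis.

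First, I would invoke the Condition~\ref{assume:copy_likelihood} hypothesis to bound each summand. A subtlety to highlight: the single-copy measurement whose likelihood ratio appears in the $i$-th factor can differ across $i$, since the protocol is only required to be non-adaptive, not identical across copies. Thus the assumption must be read as holding uniformly over \emph{every} single-copy PVM that the protocol might perform, which is precisely how it is stated. With that in hand, each term contributes at most $2^{-\tilde{\Omega}(|T|\log n)} \le 2^{-\tilde{\Omega}(\log n)}$, and more usefully, using the full $|T|$ power we get $2^{-\tilde{\Omega}(k\log n)}$ in the largest terms.

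Next I would count subsets: the number of $T \subseteq [m]$ with $|T| = j$ is $\binom{m}{j} \le m^j$, so the total count over $1 \le j \le k$ is at most $km^k$. Since $m = \mathrm{poly}(n)$, this is $2^{O(k\log n)}$. Combining the count with the per-term bound gives
\begin{equation*}
\bigl\|\mathbb{E}_{\rho\sim\mathcal{E}}\overline{D}_{\rho,m}^{\leq \mathcal{D},k}-\mathbf{1}\bigr\|^2 \le km^k \cdot 2^{-\tilde{\Omega}(k\log n)} = 2^{O(k\log n)-\tilde{\Omega}(k\log n)} = o(1),
\end{equation*}
where the final equality uses that the $\tilde{\Omega}(\cdot)$ hides a polylog factor large enough to absorb the $O(k\log n)$ overhead. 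By the discussion of the optimal degree-$(\mathcal{D},k)$ distinguishing advantage in Section~\ref{sec:basic_low_degree}, this immediately rules out weak detection.

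I do not expect a substantive obstacle here, since the heavy lifting—the multiplicative factorization of the inner product of product likelihood ratios, and the H\"older step that turns mixed products into a single moment of $\langle \overline{D}_\rho^{\le \mathcal{D}}, \overline{D}_{\rho'}^{\le \mathcal{D}}\rangle - 1$—is already carried out in the display above the theorem. The only care needed is to verify that the factorization $\overline{D}_{\rho,m} = \bigotimes_i \overline{D}_\rho^{(i)}$ and the projection to the copy-wise $(\mathcal{D},k)$ subspace commute in the required sense (i.e., that projecting the product equals the product of projections on the selected copies, tensored with the constant~$1$ on the unselected copies), so that the cross terms with $|T'|\ne|T|$ vanish under orthogonality and the inner products factor across $i \in T$. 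This is the quantum analogue of Claim~3.3 of~\cite{brennan2020statistical} and should go through verbatim for product measurement readouts.
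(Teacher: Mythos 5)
Your proposal is correct and mirrors the paper's proof essentially step for step: both start from the chain of inequalities in Eq.~\eqref{eq:holder}, bound the number of nonempty subsets $T\subseteq[m]$ of size at most $k$ by $km^k \le 2^{O(k\log n)}$, and apply Condition~\ref{assume:copy_likelihood} termwise with $\epsilon = 2^{-\tilde{\Omega}(k\log n)}$ to conclude the advantage is $o(1)$. The added remarks about the per-copy measurement varying with $i$ and the factorization commuting with the copy-wise projection are faithful restatements of what the paper invokes via Claim~3.3 of~\cite{brennan2020statistical}, not a departure from its argument.
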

\noindent We note that the hardness results in the above theorem is the same as \Cref{thm:general_ancilla} while \Cref{assume:copy_likelihood} is stronger than approximate state $2$-designs. Therefore, \Cref{thm:general_sample} is actually a weaker theorem than \Cref{thm:general_ancilla}.
\begin{proof}
We assume hypothesis testing problem $\rhomm$ vs. $\rho\sim\mathcal{E}$ satisfies $\epsilon$ $(\cD,k)$-likelihood ratio for each copy in \Cref{assume:copy_likelihood}, i.e., 
\begin{align*}
\mathbb{E}_{\rho,\rho'\sim\mathcal{E}}\left(\expval{\overline{D}_{\rho}^{\leq \cD},\overline{D}_{\rho'}^{\leq \cD}}-1\right)^{k}\leq\epsilon
\end{align*}
for any single-copy measurement. We have the square degree-$(\cD,k)$ advantage:
\begin{align*}
\norm{\mathbb{E}_{\rho\sim\mathcal{E}}\overline{D}_{\rho,m}^{\leq \cD,k}-\textbf{1}}^2&\leq\sum_{T\in[m],\abs{T}\leq k,T\neq\emptyset}\left(\prod_{i\in T}\mathbb{E}_{\rho,\rho'\sim\mathcal{E}}\left(\expval{\overline{D}_{\rho}^{(i)\leq \cD},\overline{D}_{\rho'}^{(i)\leq \cD}}-1\right)^{\abs{T}}\right)^{1/\abs{T}}\\
&\leq\sum_{t=1}^k\binom{m}{t}\epsilon\\
&\leq km^k\epsilon.
\end{align*}
Choosing $\epsilon=2^{-\tOmega(k\log n)}$ we can finish the proof.
\end{proof}

\noindent A priori it is unclear when~\Cref{assume:copy_likelihood} is satisfied. In the following, we show that approximate state $k$-designs provide one such example:

\begin{lemma}[Approximate state $k$-design satisfies \Cref{assume:copy_likelihood}]\label{lem:k_design_copy}
Given a hypothesis testing problem $\rhomm$ vs. $\rho\sim\mathcal{E}$ with $\mathcal{E}$ an $\epsilon$-approximate state $k$-design with $0\leq\epsilon\leq 1/2$, we have
\begin{align*}
\mathbb{E}_{\rho,\rho'\sim\mathcal{E}}\left(\expval{\overline{D}_{\rho}^{\leq \infty},\overline{D}_{\rho'}^{\leq \infty}}-1\right)^{k}\leq O(k^22^{k\log_2 k}(\epsilon+2^{-n}).
\end{align*}
\end{lemma}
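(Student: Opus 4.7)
My plan is to directly expand the $k$-th moment binomially, reduce to the Haar case using the approximate $k$-design condition, and control the resulting Haar expectation via a cancellation argument in the alternating binomial sum.

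First, I would rewrite the inner product as $\expval{\overline{D}_\rho,\overline{D}_{\rho'}} = d\,\tr((\rho\otimes\rho')M)$, where $d=2^n$ and $M := \sum_s \ketbra{\phi_s}^{\otimes 2}$; substituting $\rho=\rho'=\rhomm$ shows that the subtracted $1$ equals $d\,\tr((\rhomm\otimes\rhomm)M)$. Raising to the $k$-th power and expanding binomially gives
\begin{equation*}
(\expval{\overline{D}_\rho,\overline{D}_{\rho'}}-1)^k = \sum_{j=0}^k\binom{k}{j}(-1)^{k-j}d^j\,\tr\bigl(\rho^{\otimes j}\otimes\rho'^{\otimes j}\cdot\widetilde{M}_j\bigr),
\end{equation*}
where $\widetilde{M}_j$ is the reordering of $M^{\otimes j}$ that groups all $\rho$-factors on one side. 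Crucially $\widetilde{M}_j$ is positive semidefinite because $M$ is, so taking expectations yields the bilinear functional $\E_\mathcal{E}[\expval{\overline{D}_\rho,\overline{D}_{\rho'}}^j] = d^j\,\tr(\rho_\mathcal{E}^{(j)}\otimes\rho_\mathcal{E}^{(j)}\widetilde{M}_j)$ of the $j$-th moment tensors $\rho_\mathcal{E}^{(j)}:=\E_\mathcal{E}[\rho^{\otimes j}]$.

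Next I would invoke the approximate $k$-design hypothesis. The operator inequality $(1-\epsilon)\rho_\mathcal{E}^{(k)}\leq\rho_H^{(k)}\leq(1+\epsilon)\rho_\mathcal{E}^{(k)}$ is preserved under partial trace and hence holds at every level $j\leq k$; tensoring this inequality with itself and sandwiching against the PSD operator $\widetilde{M}_j$ yields $\E_\mathcal{E}[\expval{\overline{D}_\rho,\overline{D}_{\rho'}}^j] = (1+\alpha_j)\E_H[\expval{\overline{D}_\rho,\overline{D}_{\rho'}}^j]$ with $|\alpha_j|=O(\epsilon)$. Summing the alternating binomial expansion then gives
\begin{equation*}
\E_\mathcal{E}\bigl[(\expval{\overline{D}_\rho,\overline{D}_{\rho'}}-1)^k\bigr] = \E_H\bigl[(\expval{\overline{D}_\rho,\overline{D}_{\rho'}}-1)^k\bigr] + \sum_{j=0}^k\binom{k}{j}(-1)^{k-j}\alpha_j\,\E_H\bigl[\expval{\overline{D}_\rho,\overline{D}_{\rho'}}^j\bigr].
\end{equation*}
Using \Cref{fact:haar_divide} together with the identity $\sum_{\bm s\in[d]^j}\prod_l\lambda_l(\bm s)! = d(d+1)\cdots(d+j-1)$ gives the crude estimate $\E_H[\expval{\overline{D}_\rho,\overline{D}_{\rho'}}^j]\leq j!$, so the error sum is at most $O(\epsilon)\sum_{j=0}^k\binom{k}{j}j!\leq O(k!\,\epsilon)\leq O(2^{k\log_2 k}\epsilon)$.

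The hard part is bounding the Haar expectation by $O(k^2\,2^{k\log_2 k}/d)$. The key observation is that $\E_H[\expval{\overline{D}_\rho,\overline{D}_{\rho'}}^j]$ admits an expansion $\sum_m c_m(j)/d^m$ in which each coefficient $c_m(j)$ is a polynomial in $j$ of degree at most $2m$. Combining this with the classical vanishing identity $\sum_{j=0}^k\binom{k}{j}(-1)^{k-j}p(j)=0$ for any polynomial $p$ of degree $<k$ annihilates every term of the alternating sum with $2m<k$, leaving behind only a contribution of order at most $\mathrm{poly}(k)\cdot k!/d^{\lceil k/2\rceil}$, which is in particular $\leq O(k^2\,k^k/d)$ for $k\geq 2$. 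The delicate step will be tracking the polynomial prefactor through the explicit Haar formula $\E_H[\expval{\overline{D}_\rho,\overline{D}_{\rho'}}^j] = d^j [d(d+1)\cdots(d+j-1)]^{-2}\sum_{\sigma,\tau\in S_j}d^{|\mathrm{orbits}\langle\sigma,\tau\rangle|}$ (which follows from the Haar moment formula together with the evaluation of permutation-operator traces against the diagonal PVM structure) carefully enough to keep the prefactor polynomial in $k$; a naive term-by-term bound on the alternating sum gives only $O(k^k)$ and loses the crucial $1/d$ factor.
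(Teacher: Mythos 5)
Your proposal has the right overall skeleton — binomial expansion of the $k$-th power, reduction from the approximate design to the Haar average, and exploiting cancellation in the alternating binomial sum — but it contains a genuine gap in the final step, which you yourself flag, and it overcomplicates the cancellation relative to what the lemma requires.

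Your reduction to the Haar average is sound and takes a slightly different route than the paper. Writing $\E_{\mathcal{E}}\bigl[\expval{\overline{D}_\rho,\overline{D}_{\rho'}}^j\bigr]=d^j\,\tr\bigl(\rho_\mathcal{E}^{(j)}\otimes\rho_\mathcal{E}^{(j)}\,\widetilde{M}_j\bigr)$ with $\widetilde{M}_j$ PSD, and then sandwiching via $(1-\epsilon)^2\,\rho_\mathcal{E}^{(j)}\otimes\rho_\mathcal{E}^{(j)}\leq\rho_H^{(j)}\otimes\rho_H^{(j)}\leq(1+\epsilon)^2\,\rho_\mathcal{E}^{(j)}\otimes\rho_\mathcal{E}^{(j)}$ (valid since the design condition in \Cref{def:approx_design} is preserved by partial trace and by tensoring PSD operators), is a clean, more operator-theoretic version of the paper's reweighting by $(1\pm 2\epsilon)$.

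The gap is in the Haar bound. You propose to expand $\E_H\bigl[\expval{\overline{D}_\rho,\overline{D}_{\rho'}}^j\bigr]=\sum_{m\ge 0} c_m(j)\,d^{-m}$ with each $c_m(j)$ a polynomial in $j$ of degree at most $2m$, and then use the identity $\sum_{j=0}^k\binom{k}{j}(-1)^{k-j}p(j)=0$ for $\deg p<k$ to annihilate every term with $2m<k$, landing on a bound of order $\mathrm{poly}(k)\cdot k!/d^{\lceil k/2\rceil}$. This is the same mechanism used in \Cref{lem:haar} of the paper (together with \Cref{fact:powerseries}), and the idea is not wrong. But the step you call ``delicate'' — controlling the prefactor of the surviving terms — is the entire content of the estimate, and you have left it undone. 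It is genuinely delicate: the coefficients $c_m(j)$ grow factorially in $m$ (the total mass $\sum_m c_m(j)$ at $d=1$ is $(j!)^2$), so the tail of the $m$-sum and the resulting alternating sum over $j$ do not admit a naive bound, and the write-up does not supply the bookkeeping needed to close it.

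The paper sidesteps all of this by using only the weakest possible cancellation. It establishes, via a direct count of repeated outcome strings using Bell and Stirling numbers (rather than a power-series expansion in $1/d$), the uniform estimate $\E_H\bigl[\expval{\overline{D}_\rho,\overline{D}_{\rho'}}^j\bigr]=1+O\bigl(j^2\,2^{j\log_2 j}\,2^{-n}\bigr)$. Only the constant ``$1$'' then needs to be annihilated by $\sum_j\binom{k}{j}(-1)^j=0$, and the correction is bounded crudely by $\sum_{j}\binom{k}{j}\,O\bigl(j^2\,2^{j\log_2 j}\,2^{-n}\bigr)\leq O\bigl(2^k\,k^2\,2^{k\log_2 k}\,2^{-n}\bigr)$, already carrying the $1/d$ factor. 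The higher-order cancellation you propose would give a stronger $1/d^{\lceil k/2\rceil}$ decay if you could control the prefactors, but the $m=0$ cancellation alone closes the lemma, and that is where you should land.
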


\begin{proof}
We consider the expression
\begin{align}
\mathbb{E}_{\rho,\rho'\sim\mathcal{E}}\left(\expval{\overline{D}_{\rho}^{\leq \infty},\overline{D}_{\rho'}^{\leq \infty}}-1\right)^{k}&=\mathbb{E}_{\rho,\rho'\sim\mathcal{E}}\left(\expval{\overline{D}_{\rho},\overline{D}_{\rho'}}-1\right)^{k}\nonumber\\ \label{eq:momentsofinnterproduct}
&=\sum_{j=0}^k{k\choose j} (-1)^j \E_{\rho,\rho'\sim\mathcal{E}} 2^{jn}\left(\sum_{s\in\{0,1\}^n} D_{\rho}(s)D_{\rho'}(s)\right)^j.
\end{align}
Let's further look at the expression
\begin{align}\label{eq:expansion}
2^{jn}\E_{\rho,\rho'\sim\mathcal{E}} \left(\sum_{s\in\{0,1\}^n} D_{\rho}(s)D_{\rho'}(s)\right)^j=2^{jn}\sum_{s_1,\ldots,s_j} \E_{\rho} D_{\rho}(s_1)\cdots D_{\rho}(s_j) \E_{\rho'}D_{\rho'}(s_1)\cdots D_{\rho'}(s_j).
\end{align}
For a given list $s_1,\ldots,s_j$, we denote by $\# y(s_1,\ldots,s_j)$ the number of times the bit string $y$ shows up in the list .
We split the sum in~\eqref{eq:expansion} into $j$ parts summing over the number $r$ of distinct bit strings in the tuples $(s_1,\ldots,s_j)$. In the following, $B(j)$ denotes the $j$th Bell number, which counts the number of partitions of a set of size $j$ and $S(j,r)$ denotes the Stirling number of second kind, which counts the number of partitions of a set of size $j$ into $r$ non-empty subsets.
Suppose $\rho$ and $\rho'$ are drawn from the exact Haar measure, we have
\begin{align*}
&2^{jn}\sum_{s_1,\ldots,s_j} \E_{\rho} D_{\rho}(s_1)\cdots D_{\rho}(s_j) \E_{\rho'}D_{\rho'}(s_1)\cdots D_{\rho'}(s_j)\\
=&\frac{2^{jn}}{d^2(d+1)^2\cdots (d+j-1)^2}\left( \sum_{s_1,\ldots,s_j} \prod_{y\in\{0,1\}^n}(\# y(s_1,\ldots,s_j))!^2\right) \\
\leq& \frac{2^{jn}}{d^2(d+1)^2\cdots (d+j-1)^2}\left( \sum_{r=1}^j (j-r)!^2 S(j,r)d(d-1)\cdots (d-r+1)\right),
\end{align*}
where the first equality follows from \Cref{fact:haar_divide}.
The first inequality follows from organizing the sum by the number $r$ of distinct bitstrings that appear, and then by the number of ways $j$ bitstrings can have $r$ distinct bitstrings
Here, $(j-r)!^2$ is the worst case upper bound on $\prod_{y\in \{0,1\}^n} (\#y(s_1,\ldots,x_j))!^2$ when $(s_1,\ldots,s_j)$ contains $r$ distinct strings.
Moreover, $S(j,r) d(d-1)\cdots (d-r+1)$ is a loose upper bound on the number of all such lists $(s_1,\ldots,s_j)$.
Using this, we can now further expand this sum
\begin{align*}
\leq& \frac{2^{jn}}{d^2(d+1)^2\cdots (d+j-1)^2}\left(d\cdots (d-j+1) + \sum_{r=1}^{j-1} (j-r)!^2 B(j)d(d-1)\cdots (d-r+1)\right) \\
\leq& 1+B(j)\left(2j^22^{-n}+O\left(2^{-n}\sum_{l=0}^j 2^{2l\log_2(l)-ln}\right)\right)\\
=&1+2^{j\log(j)}\left(2j^22^{j\log(j)}2^{-n}+O\left(2^{-n} \frac{1}{1-2^{2\log_2(j)-n}}\right)\right)\\
=&1+O(j^22^{j\log_2 j}2^{-n})),
\end{align*}
Here we used that $S(j,j)=1$ and a simple upper bound $B(j)\leq j!$.
This is because any partition can be achieved as the cycle structure of a permutation.

However, $\rho,\rho'$ are drawn from an $\epsilon$-approximate state $k$-design here instead. Therefore, we have 
\begin{align*}
(1-2\epsilon)(1+O(j^22^{-n}))\leq \eqref{eq:expansion}\leq (1+2\epsilon)(1+O(j^22^{-n})),
\end{align*}
where we used
\begin{align*}
&1-2\epsilon\leq 1/(1+\epsilon)\\
&1/(1-\epsilon)\leq 1+2\epsilon.
\end{align*}
for $0<\epsilon\leq 1/2$. Plugging this equation into Eq.~\eqref{eq:momentsofinnterproduct} yields
\begin{align*}
\mathbb{E}_{\rho,\rho'\sim\mathcal{E}}\left(\expval{\overline{D}_{\rho}^{\leq \infty},\overline{D}_{\rho'}^{\leq \infty}}-1\right)^{k}&=\sum_{j=0}^k{k\choose j} (-1)^j \E_{\rho,\rho'\sim\mathcal{E}} 2^{jn}\left(\sum_{s\in\{0,1\}^n} D_{\rho}(s)D_{\rho'}(s)\right)^j\\
&\leq \sum_{j=0}^k {k\choose j}(-1)^j (1+(-1)^j2\epsilon)(1+O(j^22^{j\log_2 j}2^{-n}))\\
&\leq \sum_{j=0}^k {k\choose j}(-1)^j+O(k^22^{-n}+\epsilon)\sum_{j=0}^k {k\choose j}\\
&\leq O(k^22^{-n}+\epsilon)\sum_{j=0}^k {k\choose j}\\
&\leq O(2^{k}(k^22^{k\log_2 k}2^{-n}+\epsilon)),
\end{align*}
which finishes the proof for \Cref{lem:k_design_copy}.
\end{proof}

\noindent Combining \Cref{lem:k_design_copy} and \Cref{thm:general_sample}, we have the following corollary:
\begin{corollary}[Approximate state $k$-design indicates copy-wise low-degree hardness]\label{coro:general_sample}
For the hypothesis testing problem $\rhomm$ vs. $\rho\sim\mathcal{E}$ with $\mathcal{E}$ a $2^{-\tOmega(k\log n)}$-approximate state $k$-design, weak detection is degree-$(\cD,k)$ hard for any protocol that measures a polynomial number of non-adaptive single-copy PVMs $\{\{U_i\ket{x}\bra{x}U_i^{\dagger}\}_{x=1}^{2^n}\}_{i=1}^m$.
\end{corollary}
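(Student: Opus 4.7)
The plan is to combine Lemma~\ref{lem:k_design_copy} with Theorem~\ref{thm:general_sample}: the lemma converts the approximate $k$-design hypothesis into the single-copy moment bound that Condition~\ref{assume:copy_likelihood} requires, and the theorem upgrades that single-copy bound into degree-$(\cD,k)$ hardness for the full $m$-copy protocol. Concretely, I would instantiate Lemma~\ref{lem:k_design_copy} with $\epsilon = 2^{-\tOmega(k\log n)}$ to obtain, for any fixed single-copy PVM,
\begin{equation*}
\E_{\rho,\rho'\sim\mathcal{E}}\bigl(\langle \overline{D}_\rho, \overline{D}_{\rho'}\rangle - 1\bigr)^{k} \leq O\bigl(k^{2} 2^{k\log_{2}k}(\epsilon + 2^{-n})\bigr),
\end{equation*}
and then verify that the right-hand side remains $2^{-\tOmega(k\log n)}$. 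For $k$ subpolynomial in $n$, the combinatorial prefactor $k^{2}2^{k\log_{2}k}$ and the additive $2^{-n}$ term are both dominated by $\epsilon$ up to factors that the $\tOmega(\cdot)$ absorbs, so the bound is of the required order.

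The subtle point I would need to handle is that Condition~\ref{assume:copy_likelihood} demands a bound on the \emph{truncated} moment $\E(\langle \overline{D}_\rho^{\leq \cD}, \overline{D}_{\rho'}^{\leq \cD}\rangle - 1)^{k}$, while Lemma~\ref{lem:k_design_copy} is stated at $\cD = \infty$. The key observation is that $\overline{D}_\rho(s) = 2^{n}\tr(\rho\ketbra{\phi_s})$ is linear in $\rho$, so each Fourier coefficient $\alpha_T^\rho$ is linear in $\rho$, and the truncated inner product
\begin{equation*}
\langle \overline{D}_\rho^{\leq \cD}, \overline{D}_{\rho'}^{\leq \cD}\rangle - 1 = \sum_{0 < |T| \leq \cD} \alpha_T^\rho \alpha_T^{\rho'}
\end{equation*}
is linear in $\rho \otimes \rho'$. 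Its $k$-th power is therefore a polynomial of degree $k$ in each of $\rho$ and $\rho'$, so its expectation under $\mathcal{E}$ depends only on the $k$-th moments $\E_{\mathcal{E}}[\rho^{\otimes k}]$ and $\E_{\mathcal{E}}[\rho'^{\otimes k}]$. The approximate $k$-design property then lets me replace $\mathcal{E}$ by Haar up to additive error $O(\epsilon)$, and the Haar combinatorial bound in Lemma~\ref{lem:k_design_copy}, which counts partitions arising from matching measurement outcomes via \Cref{fact:haar_divide}, only becomes tighter when the Fourier sum is restricted to $|T|\leq \cD$.

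With Condition~\ref{assume:copy_likelihood} verified for the chosen measurement strategy and any $\cD$, I would conclude by invoking Theorem~\ref{thm:general_sample}, which takes the single-copy moment bound and, via the H\"older step in Eq.~\eqref{eq:holder}, lifts it to a bound on the squared degree-$(\cD,k)$ advantage of the form $km^{k}\cdot 2^{-\tOmega(k\log n)} = o(1)$ for $m = \poly(n)$ and $k = \omega(\log n)$. The main technical obstacle is the bookkeeping in the second paragraph: ensuring that rerunning the Haar computation of Lemma~\ref{lem:k_design_copy} with Fourier truncation does not break the partition/Bell-number argument. This should be straightforward since truncation only restricts the set of allowed tuples $(s_1,\dots,s_j)$, but one must check that the specific upper bounds on $\prod_y (\# y)!^{2}$ and on $S(j,r)d(d-1)\cdots(d-r+1)$ are unaffected by the restriction.
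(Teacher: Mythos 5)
Your high-level plan — combine Lemma~\ref{lem:k_design_copy} with Theorem~\ref{thm:general_sample} — is exactly what the paper does (it states the corollary as an immediate consequence without further argument), and you are right to flag the mismatch between the $\cD=\infty$ moment bound the lemma provides and the truncated moment bound that Condition~\ref{assume:copy_likelihood} requests for a general $\cD$; the paper glosses over this. However, your justification for why truncation ``only becomes tighter'' is imprecise: the expansion in the proof of Lemma~\ref{lem:k_design_copy} relies on the identity $\expval{\overline{D}_\rho,\overline{D}_{\rho'}} = 2^n\sum_s D_\rho(s)D_{\rho'}(s)$, which simply fails for $\overline{D}_\rho^{\leq\cD}$, so the partition/Bell-number computation would have to be redone from scratch rather than merely ``restricting the allowed tuples.'' A correct justification in the spirit of your argument is that for i.i.d.\ Haar $\rho,\rho'$ one has $\E\bigl[(\expval{\overline{D}_\rho^{\leq\cD},\overline{D}_{\rho'}^{\leq\cD}}-1)^k\bigr] = \sum_{T_1,\dots,T_k:\,0<|T_i|\leq\cD}\bigl(\E_\rho[\prod_i\alpha_{T_i}^\rho]\bigr)^2$, a sum of nonnegative terms that can only shrink under the restriction $|T_i|\leq\cD$. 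But the cleanest fix bypasses the single-copy moments entirely: the degree-$(\cD,k)$ advantage $\chi_{\leq(\cD,k)}^2$ is the squared norm of the projection of $\E_\rho\overline{D}_{\rho,m}-\mathbf{1}$ onto a nested family of subspaces and is therefore non-decreasing in $\cD$, so applying Theorem~\ref{thm:general_sample} with $\cD=\infty$ — where the lemma gives exactly the required hypothesis — yields degree-$(\infty,k)$ hardness and a fortiori degree-$(\cD,k)$ hardness for every $\cD$. Your parameter check (that the $k^2 2^{k\log_2 k}$ prefactor and the $2^{-n}$ additive term are absorbed by the $\tOmega$ when $k=n^{o(1)}$) is correct.
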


\subsection{Round-based models with adaptivity within blocks}\label{sec:general_adaptivity_within_block}

As shown in \Cref{fig:model_copy}(b), we consider a framework for learning with single-copy PVMs and restricted adaptivity called \emph{round-based models}.  
In this model, we divide the $m$ samples into $m_1$ blocks, each of $m_0$ samples. 

The first model we consider is when we have $m_1=\poly(n)$ blocks, each consisting of $m_0=\polylog(n)$ copies that can be measured adaptively. While adaptivity is allowed {\em within} the blocks, no adaptivity is allowed {\em between blocks} (see LHS of \Cref{fig:model_copy}(b) for illustration). Roughly speaking, in this model the measurements within each block are allowed to be chosen arbitrarily adaptively based on readouts from earlier in the same block, but at the start of each new block the adaptivity ``resets.'' We consider a different notion of adaptivity in the next section.
We will focus on the $(\cD,k)$ copy-wise low-degree model throughout this subsection for technical convenience. We use $\prod$ between $\overline{D}$ to denote adaptive multiplication, and $\bigotimes$ to denote non-adaptive product.

We analyze adaptivity using the learning tree framework introduced in~\cite{aharonov2022quantum,bubeck2020entanglement,chen2022exponential,chen2022complexity,chen2023efficientPauli,chen2024optimal}, Roughly speaking, in this approach
one models the learning protocol as a decision tree. Starting from the root node, we pick a PVM and measure a copy of the unknown quantum state using the picked PVM. Based on different measurement outcomes, we pick different PVMs in the different child nodes. For adaptive protocols with $m$ measurements, we repeat this procedure for $m$ times, resulting in a decision tree of depth $m$. As we choose a PVM with $2^n$ outcomes per measurement, we can map all possible histories to a $2^{mn}$ bit string. This fact is our crucial interface with the Boolean analytic techniques used in the previous section. 

Formally, we define the protocol as below
\begin{definition}[Round-based model with adaptivity within blocks]
Given $m=\poly(n)$ copies of unknown $n$-qubit state $\rho$, a round-based model with adaptivity within blocks contains $m_1=\poly(n)$ blocks each of $m_0=\polylog(n)$ copies. The blocks are independent of each other, and no adaptivity is allowed among blocks. Each block can be represented as a rooted tree $\cT_i,i\in[m_1]$ of depth $m_0$ with each node on the tree recording the measurement outcomes of the block so far. Each tree $\cT_i$ has the following properties:
\begin{itemize}
    \item We assign a probability $p^\rho(u_i)$ for any node $u_i$ on the tree $\cT_i$.
    \item The probability assigned to the root $r_i$ of the tree $\cT_i$ is $p^\rho(r_i)=1$.
    \item At each non-leaf node $u_i$, we measure $\rho$ using an adaptively chosen PVM $\{U_{u_i}\ket{s}\bra{s}U_{u_i}^\dagger\}$, which results in a classical outcome $s$. Each child node $v_i$ corresponding to the classical outcome $s$ of the node $u_i$ is connected through the edge $e_{u_i,s}$.
    \item If $v_i$ is the child of $u_i$ through the edge $e_{u_i,s}$, the probability to traverse this edge is $p^{\rho}(s|u_i)=\braket{s|U_{u_i}^\dagger\rho U_{u_i}|s}$. Then
    \begin{align*}
    p^\rho(v_i)=p^\rho(u_i)\cdot p^\rho(s|u_i)=p^\rho(u_i)\cdot\braket{s|U_{u_i}^\dagger\rho U_{u_i}|s}.
    \end{align*}
    \item Each root-to-leaf path is of length $m_0$. Note that for a leaf node $\ell_i$, $p^\rho(\ell_i)$ is the probability for the classical memory to be in state $\ell_i$ at the end of the learning protocol. Denote the set of leaves of $\cT_i$ by $\mathrm{leaf}(\cT_i)$
\end{itemize}
\end{definition}

We will need the following condition:
\begin{condition}[$(M,\epsilon,\cD,k)$-condition for restricted adaptivity within blocks]\label{assume:general_adaptivity_within_block}
Consider the hypothesis testing problem $\rhomm$ vs. $\rho\sim\mathcal{E}$. We denote the output history as $\bm{s}=(s_1,...,s_{m_0})\in\{0,1\}^{m_0n}$ with each $s_i\in\{0,1\}^n$ for $i\in[m_0]$. We assume that for any composite block of $km_0$ samples (with history $\bm{s}=(s_1,...,s_{km_0})$), we have 
\begin{align*}
\mathbb{E}_{\bm{s}_{1:km_0-1}}\mathbb{E}_{\rho\sim\mathcal{E}}\left[\left(\prod_{i=1}^{km_0-1}\overline{D}_{\rho}^{\leq \cD}(s_i)\right)^{2(km_0-1)}\right]\leq M.
\end{align*}
We further assume that for any single-copy measurement, we have
\begin{align*}
\mathbb{E}_{\rho,\rho'\sim\mathcal{E}}\left(\expval{\overline{D}_{\rho}^{\leq \cD},\overline{D}_{\rho'}^{\leq \cD}}-1\right)^{2km_0}\leq\epsilon.
\end{align*}
\end{condition}

Given the above condition, we can prove the following result:

\begin{theorem}[Low-degree hardness for round-based model with adaptivity within blocks]\label{thm:general_adaptivity_within_block}
For the hypothesis testing problem $\rhomm$ vs. $\rho\sim\mathcal{E}$ satisfying an $(M,2^{-\Omega(k^3m_0\log k\log m_1\log m_0\log M)},\cD,k)$-condition for restricted adaptivity within blocks (see \Cref{assume:general_adaptivity_within_block}), i.e., 
\begin{align*}
&\mathbb{E}_{\bm{s}_{1:km_0-1}}\mathbb{E}_{\rho\sim\mathcal{E}}\left[\left(\prod_{i=1}^{km_0-1}\overline{D}_{\rho}^{\leq \cD}(s_i)\right)^{2(km_0-1)}\right]\leq M, \quad{\text{and}}\\
&\mathbb{E}_{\rho,\rho'\sim\mathcal{E}}\left(\expval{\overline{D}_{\rho}^{\leq \cD},\overline{D}_{\rho'}^{\leq \cD}}-1\right)^{2km_0}\leq2^{-\Omega(k^3m_0\log k\log m_1\log m_0\log M)}
\end{align*}
for each block of $m_0$ samples and any single-copy measurement, weak detection is degree-$(\cD,k)$ hard for any adaptivity within round-based model protocol that makes polynomially many PVMs $\{\{U_i\ket{x}\bra{x}U_i^{\dagger}\}_{x=1}^{2^n}\}_{i=1}^m$. These protocols use $m_1=\poly(n)$ blocks each with $m_0=m/m_1=\polylog(n)$ samples, and adaptivity is only allowed in each block. 
\end{theorem}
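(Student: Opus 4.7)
The plan is to combine a block-level factorization with a delicate induction inside each block. Because measurements across blocks are chosen non-adaptively, the overall likelihood ratio factorizes as $\overline{D}_{\rho,m}(\bm{s}) = \prod_{j=1}^{m_1} \overline{D}_\rho^{(j)}(\bm{s}^{(j)})$, where $\bm{s}^{(j)}$ collects the outcomes of block $j$. Repeating the H\"older-style calculation of~\eqref{eq:holder} with blocks in place of individual samples, I would bound the square degree-$(\cD,k)$ advantage by a sum over partitions $T = T_1 \sqcup \cdots \sqcup T_{m_1}$ with $T_j \subseteq [m_0]$ and $\sum_j |T_j| \le k$, whose summand is a product of per-block $|T|$-th moments of an inner product that retains only Fourier monomials touching exactly the copies in $T_j$. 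Since any such $T$ is supported on at most $k$ blocks, this is where the factor of $k$ in the $2km_0$-th and $2(km_0-1)$-th moment conditions of~\Cref{assume:general_adaptivity_within_block} originates.

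The heart of the argument is to bound, for a single adaptive block of length $m_0$, the corresponding per-block moment. I would do this by induction on the length $\ell = 1, \ldots, m_0$: split subsets $T \subseteq [\ell]$ into those not containing the final copy $\ell$ and those containing it. For $T \not\ni \ell$, marginalizing over the outcome $s_\ell$ against the null distribution collapses the last-copy contribution to $1$ regardless of the adaptive choice of basis at copy $\ell$, directly reducing to the analogous quantity for an adaptive block of length $\ell - 1$. For $T \ni \ell$, Cauchy--Schwarz splits the $s_\ell$-factor from the $\bm{s}_{1:\ell-1}$-factor: the former is controlled by a high moment of the single-copy degree-$\cD$ advantage, bounded via the second hypothesis of \Cref{assume:general_adaptivity_within_block}, while the latter is controlled by a high moment of the likelihood-ratio product over prior copies, bounded via the first hypothesis. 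Iterating yields the required per-block bound, and substituting back into the block-level sum (with at most $(m_0 m_1)^k$ terms) and calibrating exponents produces exactly the $2^{-\Omega(k^3 m_0 \log k \log m_1 \log m_0 \log M)}$ threshold in the hypothesis.

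The main obstacle will be the inductive Cauchy--Schwarz step inside a single block. Because the measurement basis at copy $\ell$ depends adaptively on the entire history $\bm{s}_{1:\ell-1}$, cleanly separating the two factors while simultaneously averaging over $\rho$ and $\rho'$ requires a careful ordering of expectations that leverages the tree structure of the adaptive protocol to decouple the adaptive basis choice from the randomness of $s_\ell$ under the null. A related bookkeeping challenge is ensuring that the moment order grows only additively rather than multiplicatively along the recursion, so that the final requirement on $\rho \sim \mathcal{E}$ is the $2km_0$-th (rather than $(2m_0)^k$-th) moment condition stated in \Cref{assume:general_adaptivity_within_block}.
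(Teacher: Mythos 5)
Your proposal follows essentially the same route as the paper's proof: a block-level H\"older factorization (valid because cross-block measurements are non-adaptive), a reduction of the resulting per-block moments to the copy-wise advantage $A_{\ell,k}$ of a single (composite) adaptive block, and then the recursion on block length that splits subsets $T$ by whether they contain the last copy -- marginalizing $s_\ell$ against the null to collapse to $A_{\ell-1,k}$ when $T \not\ni \ell$, and applying Cauchy--Schwarz plus H\"older when $T \ni \ell$ to separate the last-copy inner-product deviation (controlled by the second hypothesis of \Cref{assume:general_adaptivity_within_block}) from the running likelihood-ratio product over prior copies (controlled by the first). You also correctly diagnose that the factor of $k$ in the $2km_0$-th moment order comes from the block-level H\"older producing up to a $k$-th power of the per-block inner product, which the paper handles by viewing that power as an adaptive composite block of $km_0$ copies, i.e. bounding via $A_{km_0,k^2}$. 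The main technical points you flag as ``obstacles'' -- ordering the conditional expectations so the adaptive basis at copy $\ell$ is frozen before averaging $s_\ell$ under the null, and avoiding multiplicative growth of the moment order along the recursion -- are exactly the steps the paper navigates via the two-stage H\"older and the careful choice of exponents yielding $\epsilon^{1/(2km_0)}$.
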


\begin{proof}
We assume the $(M,\epsilon,\cD,k)$-assumption for restricted adaptivity within blocks in \Cref{assume:general_adaptivity_within_block} holds.We start with a single block of size $m_0$. Recall that 
\begin{align*}
\overline{D}^{(i)}_{\rho}(s_i)=2^n\tr(\rho\ket{\phi_{s_i}}\bra{\phi_{s_i}}),\ \overline{D}_{\rho,{m_0}}(\bm{s})=2^{{m_0}n}\prod_{i=1}^{m_0}\tr(\rho\ket{\phi_{s_i}}\bra{\phi_{s_i}}).
\end{align*}

Our first step is to write $\overline{D}_{\rho,{m_0}}$ as a product of functions with degree $\cD$ on each copy:
\begin{align*}
\overline{D}_{\rho,{m_0}}^{\leq \cD,\infty}(\bm{s})=\prod_{i=1}^{{m_0}}\overline{D}^{(i)\leq \cD}_{\rho'}(s_i).
\end{align*}
We have 
\begin{align*}
\norm{\mathbb{E}_{\rho\sim\mathcal{E}}\overline{D}_{\rho,{m_0}}^{\leq \cD,k}-\textbf{1}}^2&=\mathbb{E}_{\bm{s}}(\mathbb{E}_{\rho\sim\mathcal{E}}\overline{D}_{\rho,{m_0}}^{\leq \cD,k}-1)^2\\
&=\mathbb{E}_{\rho,\rho'\sim\mathcal{E}}\expval{\overline{D}_{\rho,{m_0}}^{\leq \cD,k},\overline{D}_{\rho',{m_0}}^{\leq \cD,k}}-1\\
&=\mathbb{E}_{\bm{s}}\mathbb{E}_{\rho,\rho'\sim\mathcal{E}}\overline{D}_{\rho,{m_0}}^{\leq \cD,k}(\bm{s})\overline{D}_{\rho',{m_0}}^{\leq \cD,k}(\bm{s})-1.
\end{align*}
Instead of choosing $\chi_T(\bm{s})$ in the qubit-wise, we now choose $\chi_T(\bm{s})$ in the copy-wise. We denote
\begin{align*}
\overline{D}^{\leq \cD}_{\rho,{m_0}}(\bm{s})=\sum_{T\subseteq[{m_0}]}\alpha_T^\rho\chi_T(\bm{s}),\quad \alpha_T^{\rho}=\frac{1}{2^{n{m_0}}}\sum_{\bm{s}}\overline{D}_{\rho,{m_0}}^{\leq \cD}(\bm{s})\chi_T(\bm{s})=\mathbb{E}_{\bm{s}}\overline{D}_{\rho,{m_0}}^{\leq\cD}(\bm{s})\chi_T(\bm{s}).
\end{align*}
Again, we focus on the Fourier basis in this work. The low-degree likelihood ratio satisfies:
\begin{align*}
\overline{D}_{\rho,{m_0}}^{\leq\cD, k}(\bm{s})=\sum_{\abs{T}\leq k}\alpha_T^\rho\chi_T(\bm{s}).
\end{align*}
We compute the square degree-$(\cD,k)$ advantage as
\begin{align*}
\norm{\mathbb{E}_{\rho\sim\mathcal{E}}\overline{D}_{\rho,{m_0}}^{\leq \cD,k}-\textbf{1}}^2&=\mathbb{E}_{\bm{s}}\mathbb{E}_{\rho,\rho'\sim\mathcal{E}}\overline{D}_{\rho,{m_0}}^{\leq \cD,k}(\bm{s})\overline{D}_{\rho',{m_0}}^{\leq \cD,k}(\bm{s})-1\\
&=\mathbb{E}_{\bm{s}}\mathbb{E}_{\rho,\rho'\sim\mathcal{E}}\sum_{\abs{T},\abs{T'}\leq k}\alpha_T^\rho\alpha_{T'}^{\rho'}\chi_T(\bm{s})\chi_{T'}(\bm{s})-1\\
&=\mathbb{E}_{\rho,\rho'\sim\mathcal{E}}\sum_{\abs{T}\leq k}\alpha_T^\rho\alpha_{T}^{\rho'}-1\\
&=\mathbb{E}_{\rho,\rho'\sim\mathcal{E}}\mathbb{E}_{\bm{s},\bm{s}'}\sum_{\abs{T}\leq k}\overline{D}^{\leq \cD}_{\rho,{m_0}}(\bm{s})\overline{D}^{\leq \cD}_{\rho',{m_0}}(\bm{s}')\chi_T(\bm{s})\chi_T(\bm{s}')-1\\
&=\sum_{\abs{T}\leq k,T\neq\emptyset}\mathbb{E}_{\rho,\rho'\sim\mathcal{E}}\mathbb{E}_{\bm{s},\bm{s}'}\overline{D}^{\leq \cD}_{\rho,{m_0}}(\bm{s})\overline{D}^{\leq \cD}_{\rho',{m_0}}(\bm{s}')\chi_T(\bm{s})\chi_T(\bm{s}'),
\end{align*}
where the fourth line follows from the fact that $\expval{\chi_T(\bm{s}),\chi_{T'}(\bm{s})}=\delta_{T,T'}$. However, we can not use the previous H\"{o}lder inequality directly due to adaptivity. To this end, we have to use the recursive trick. We denote
\begin{align*}
A_{m_0,k}=\norm{\mathbb{E}_{\rho\sim\mathcal{E}}\overline{D}_{\rho,{m_0}}^{\leq \cD,k}-\textbf{1}}^2
\end{align*}
for convenience. We thus have:
\begin{align}\label{eq:induction_within}
A_{m_0,k}=A_{m_0-1,k}+\sum_{\abs{T'}\leq k-1,T'\neq\emptyset}\mathbb{E}_{\rho,\rho'\sim\mathcal{E}}\mathbb{E}_{\bm{s},\bm{s}'}\overline{D}^{\leq \cD}_{\rho,{m_0}}(\bm{s})\overline{D}^{\leq \cD}_{\rho',{m_0}}(\bm{s}')\chi_{T'\cup[m_0]}(\bm{s})\chi_{T'\cup[m_0]}(\bm{s}').
\end{align}

We now consider the second term. We have (we always assume $T'\neq\emptyset$ in the following)
\begin{align}\label{eq:induction_second_within}
\begin{split}
&\sum_{\abs{T'}\leq k-1}\mathbb{E}_{\rho,\rho'\sim\mathcal{E}}\mathbb{E}_{\bm{s},\bm{s}'}\overline{D}^{\leq\cD}_{\rho,{m_0}}(\bm{s})\overline{D}^{\leq\cD}_{\rho',{m_0}}(\bm{s}')\chi_{T'\cup[m_0]}(\bm{s})\chi_{T'\cup[m_0]}(\bm{s}')\\
=&\sum_{\abs{T'}\leq k-1}\mathbb{E}_{\rho,\rho'\sim\mathcal{E}}\mathbb{E}_{\bm{s}_{1:m_0-1},\bm{s}'_{1:m_0-1}}\overline{D}^{\leq\cD}_{\rho,{m_0-1}}(\bm{s}_{1:m_0-1})\overline{D}^{\leq\cD}_{\rho',{m_0-1}}(\bm{s}'_{1:m_0-1})\chi_{T'}(\bm{s}_{1:m_0-1})\chi_{T'}(\bm{s}'_{1:m_0-1})\cdot\\
&\quad \mathbb{E}_{s_{m_0},s'_{m_0}}[\overline{D}_\rho^{(m_0)\leq k}(s_{m_0})\overline{D}_{\rho'}^{(m_0)\leq k}(s'_{m_0})\chi_{[s_{m_0}]}(s_{m_0})\chi_{[s'_{m_0}]}(s'_{m_0})]\\
\leq&\left(\underset{\rho,\rho'\sim\mathcal{E}}{\mathbb{E}}\underset{\bm{s}_{1:m_0-1},\bm{s}'_{1:m_0-1}}{\mathbb{E}}\left(\sum_{\abs{T'}\leq k-1}\overline{D}^{\leq\cD}_{\rho,{m_0-1}}(\bm{s}_{1:m_0-1})\overline{D}^{\leq\cD}_{\rho',{m_0-1}}(\bm{s}'_{1:m_0-1})\chi_{T'}(\bm{s}_{1:m_0-1})\chi_{T'}(\bm{s}'_{1:m_0-1})\right)^2\right)^{\frac{1}{2}}\cdot\epsilon^{\frac{1}{2km_0}}\\
\leq&km_0^k\epsilon^{\frac{1}{2km_0}}\mathbb{E}_{\rho\sim\mathcal{E}}\mathbb{E}_{\bm{s}_{1:m_0-1}}\left[\left(\prod_{i=1}^{m-1}\overline{D}_{\rho}^{(i)\leq \cD}(s_i)\right)^2\right]\\
\leq&km_0^k\epsilon^{\frac{1}{2km_0}}\mathbb{E}_{\bm{s}_{1:m_0-1}}\mathbb{E}_{\rho\sim\mathcal{E}}\left[\left(\overline{D}_{\rho}^{(i)\leq \cD}(s_i)\right)^{2(m_0-1)}\right],
\end{split}
\end{align}
where the second step follows from implementing H\"{o}lder's inequality twice and $(M,\epsilon,\cD,k)$-assumption for restricted adaptivity within blocks in \Cref{assume:general_adaptivity_within_block}, and the fact that
\begin{align*}
\left(\sum_{\abs{T'}\leq k-1}\chi_{T'}(\bm{s}_{1:m_0-1})\chi_{T'}(\bm{s}'_{1:m_0-1})\right)^2\leq\left(\sum_{t=1}^{k-1}\binom{m_0}{k}\right)^2\leq k^2m_0^k
\end{align*}
and the third step follows from the Cauchy-Schwarz inequality. This indicates that 
\begin{align}\label{eq:induction_block_within}
\norm{\mathbb{E}_{\rho\sim\mathcal{E}}\overline{D}_{\rho,{m_0}}^{\leq \cD,k}-\textbf{1}}^2=A_{m_0,k}=km_0^{k+1}\epsilon^{\frac{1}{2km_0}}\mathbb{E}_{\bm{s}_{1:m_0-1}}\mathbb{E}_{\rho\sim\mathcal{E}}\left[\left(\overline{D}_{\rho}^{(i)\leq \cD}(s_i)\right)^{2(m_0-1)}\right].
\end{align}
We then consider the full likelihood ratio $\overline{D}_{\rho,m}(\bm{S})$ for the complete bit string output $\bm{S}=(\bm{s}_1,...,\bm{s}_{m_1})\in\{0,1\}^{m_0m_1}$ for $m=m_0m_1$ copies using the non-adaptive result:
\begin{align*}
\norm{\mathbb{E}_{\rho\sim\mathcal{E}}\overline{D}_{\rho,{m}}^{\leq \cD,k}-\textbf{1}}^2&=\mathbb{E}_{\bm{S}}(\mathbb{E}_{\rho\sim\mathcal{E}}\overline{D}_{\rho,{m}}^{\leq \cD,k}-1)^2=\mathbb{E}_{\rho,\rho'\sim\mathcal{E}}\expval{\overline{D}_{\rho,{m}}^{\leq \cD,k},\overline{D}_{\rho',{m}}^{\leq \cD,k}}-1\\
&=\mathbb{E}_{\bm{S}}\mathbb{E}_{\rho,\rho'\sim\mathcal{E}}\overline{D}_{\rho,{m}}^{\leq \cD,k}(\bm{S})\overline{D}_{\rho',{m}}^{\leq \cD,k}(\bm{S})-1\\
&\leq\sum_{t=1}^{k}\binom{m}{t}\mathbb{E}_{\rho,\rho'\sim\mathcal{E}}\left(\expval{\overline{D}_{\rho,{m_0}}^{\leq\cD, k},\overline{D}_{\rho',{m_0}}^{\leq\cD, k}}-1\right)^t\\
&\leq km_1^k\mathbb{E}_{\rho,\rho'\sim\mathcal{E}}\left(\expval{\overline{D}_{\rho,{m_0}}^{\leq\cD, k},\overline{D}_{\rho',{m_0}}^{\leq\cD, k}}-1\right)^k\\
&\leq km_1^k\sum_{j=0}^k\binom{k}{j}\mathbb{E}_{\rho,\rho'\sim\mathcal{E}}\left(\mathbb{E}_{\bm{s}\in\{0,1\}^{m_0n}}\overline{D}_{\rho,{m_0}}^{\leq\cD, k}(\bm{s})\overline{D}_{\rho',{m_0}}^{\leq\cD, k}(\bm{s})\right)^j.
\end{align*}
We note that
\begin{align*}
\E_{\rho,\rho'\sim\mathcal{E}}\left(\mathbb{E}_{\bm{s}\in\{0,1\}^{m_0n}}\overline{D}_{\rho,{m_0}}^{\leq\cD, k}(\bm{s})\overline{D}_{\rho',{m_0}}^{\leq\cD, k}(\bm{s})\right)^j&=\E_{\rho,\rho'\sim\mathcal{E}}\E_{\bm{s}_1,...,\bm{s}_j\in\{0,1\}^{m_0n}}\overline{D}_{\rho,{m_0}}^{\leq\cD, k}(\bm{s}_1)...\overline{D}_{\rho,{m_0}}^{\leq\cD, k}(\bm{s}_j)\overline{D}_{\rho',{m_0}}^{\leq\cD, k}(\bm{s}_1)...\overline{D}_{\rho',{m_0}}^{\leq\cD, k}(\bm{s}_j)\\
&\leq \E_{\rho,\rho'\sim\mathcal{E}}\E_{\bm{s}\in\{0,1\}^{n}}\overline{D}_{\rho}^{\leq\cD, jk}(\bm{s})\overline{D}_{\rho',{m_0}}^{\leq\cD, jk}(\bm{s})\\
&\leq 1+A_{km_0,jk}
\end{align*}
as choosing $\bm{s_1},...,\bm{s}_j$ is a special case for adaptively choosing all $jm_0$ copies and there can be at most copy-wise degree of $(\cD,jk)$. We thus have
\begin{align*}
\norm{\mathbb{E}_{\rho\sim\mathcal{E}}\overline{D}_{\rho,{m}}^{\leq\cD, k}-\textbf{1}}^2&\leq km_1^k\sum_{j=0}^k\binom{k}{j}\mathbb{E}_{\rho,\rho'\sim\mathcal{E}}\left(\mathbb{E}_{\bm{s}\in\{0,1\}^{m_0n}}\overline{D}_{\rho,{m_0}}^{\leq \cD,k}(\bm{s})\overline{D}_{\rho',{m_0}}^{\leq \cD,k}(\bm{s})\right)^j\\
&\leq km_1^k2^kA_{km_0,k^2}\\
&\leq k2^km_1^k\cdot k^2(km_0)^{k^2}\epsilon^{\frac{1}{2km_0}}\mathbb{E}_{\bm{s}_{1:km_0-1}}\mathbb{E}_{\rho\sim\mathcal{E}}\left[\left(\overline{D}_{\rho}^{(i)\leq\cD}(s_i)\right)^{2(km_0-1)}\right].
\end{align*}
Recall that $(M,\epsilon,\cD,k)$-assumption for restricted adaptivity within blocks in \Cref{assume:general_adaptivity_within_block} indicates that 
\begin{align*}
\mathbb{E}_{\bm{s}_{1:km_0-1}}\mathbb{E}_{\rho\sim\mathcal{E}}\left[\left(\prod_{i=1}^{km_0-1}\overline{D}_{\rho}^{\leq \cD}(s_i)\right)^{2(km_0-1)}\right]\leq M.
\end{align*}
We can thus compute the overall square degree-$(\cD,k)$ advantage by
\begin{align*}
\norm{\mathbb{E}_{\rho\sim\mathcal{E}}\overline{D}_{\rho,{m}}^{\leq\cD, k}-\textbf{1}}^2&\leq k2^km_1^k\cdot k^2(km_0)^{k^2}\epsilon^{\frac{1}{2km_0}}\mathbb{E}_{\bm{s}_{1:km_0-1}}\mathbb{E}_{\rho\sim\mathcal{E}}\left[\left(\overline{D}_{\rho}^{(i)\leq\cD}(s_i)\right)^{2(km_0-1)}\right]\\
&\leq k2^km_1^k\cdot k^2(km_0)^{k^2}\epsilon^{\frac{1}{2km_0}}M\\
&=2^{k^2\log k\log m_0\log m_1}\epsilon^{\frac{1}{2km_0}}M
\end{align*}
By choosing
\begin{align*}
\epsilon=2^{-\Omega(k^3m_0\log k\log m_1\log m_0\log M)},
\end{align*}
we can ensure the square degree-$(\cD,k)$ advantage below $o(1)$, which finishes the proof for \Cref{thm:general_adaptivity_within_block}.
\end{proof}
 
\noindent At a first glance, the $(M,\epsilon,\cD,k)$-assumption for restricted adaptivity within blocks in \Cref{assume:general_adaptivity_within_block} in \Cref{thm:general_adaptivity_within_block} seems complicated. In the following, we show that approximate state designs satisfy such an assumption. Formally, we have the following corollary (see \Cref{sec:pf_general_adaptivity_within_block} for the proof):
\begin{corollary}[Approximate state $2km_0$-design indicates hardness for round-based model with adaptivity within blocks]\label{coro:general_adaptivity_within_block}
For the hypothesis testing problem $\rhomm$ vs. $\rho\sim\mathcal{E}$ with $\mathcal{E}$ a $2^{-\Omega(k^4m_0^2\log^2 k\log m_1\log^2 m_0)}$-approximate state $2km_0$-design, weak detection is degree-$(\cD,k)$ hard for any adaptivity within round-based model protocol that makes polynomially many PVMs $\{\{U_i\ket{x}\bra{x}U_i^{\dagger}\}_{x=1}^{2^n}\}_{i=1}^m$. These protocols use $m_1=\poly(n)$ blocks each with $m_0=m/m_1=\polylog(n)$ samples, and adaptivity is only allowed in each block. 
\end{corollary}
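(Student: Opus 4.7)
The plan is to deduce the corollary by verifying both parts of the $(M,\epsilon,\cD,k)$-assumption in Condition~\ref{assume:general_adaptivity_within_block} from the $2km_0$-design hypothesis and then invoking Theorem~\ref{thm:general_adaptivity_within_block}. Both quantities to bound turn out to be controlled by polynomials in $\rho$ of degree at most $2km_0$: the $\epsilon$-condition involves the $2km_0$-th moment of the Fourier inner product $\langle \overline{D}_\rho^{\le\cD},\overline{D}_{\rho'}^{\le\cD}\rangle-1$, and the $M$-condition, after the independence-and-Jensen reduction used between lines 3 and 4 of Eq.~\ref{eq:induction_second_within}, becomes the single-copy $2(km_0-1)$-th likelihood moment.

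For the $\epsilon$-condition I would first drop the degree-$\cD$ truncation for free: expanding in Fourier characters over $s$ and using the independence of $\rho,\rho'$, the even moment $\mathbb{E}_{\rho,\rho'}(\langle \overline{D}_\rho^{\le\cD},\overline{D}_{\rho'}^{\le\cD}\rangle-1)^{2km_0}$ factors into a sum over tuples $(T_1,\ldots,T_{2km_0})$ of non-negative terms $(\mathbb{E}_\rho\prod_j\hat{\overline{D}}_\rho[T_j])^2$, and restricting each $T_j$ to $|T_j|\le\cD$ only discards non-negative contributions. Lemma~\ref{lem:k_design_copy} at moment order $2km_0$ then controls the untruncated quantity by $O((km_0)^2\cdot 2^{2km_0\log_2(2km_0)}(\epsilon_{\mathrm{design}}+2^{-n}))$.

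For the $M$-condition, the $s_i$'s under the null are i.i.d.\ uniform on $\{0,1\}^n$ conditional on $\rho$, so by Plancherel $\mathbb{E}_s(\overline{D}_\rho^{\le\cD}(s))^2 = \sum_{|T|\le\cD}(\alpha_T^\rho)^2 \le \mathbb{E}_s(\overline{D}_\rho(s))^2$ pointwise in $\rho$; the truncation is again harmless. Factoring the product $\mathbb{E}_{\bm{s}_{1:km_0-1}}\prod_i(\overline{D}_\rho^{\le\cD}(s_i))^2 = (\mathbb{E}_s(\overline{D}_\rho^{\le\cD}(s))^2)^{km_0-1}$ and applying Jensen on the outer power yields $M \le \mathbb{E}_\rho\mathbb{E}_s(\overline{D}_\rho(s))^{2(km_0-1)}$, which is a polynomial in $\rho$ of degree $2(km_0-1)$ and is therefore evaluated up to $\epsilon_{\mathrm{design}}$-error by the $2km_0$-design. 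Fact~\ref{fact:haar_divide} gives the Haar value of this moment as $2^{2(km_0-1)n}\cdot d\cdot (2(km_0-1))!/(d\cdots(d+2(km_0-1)-1))\approx (2(km_0-1))!$, so $\log M = O(km_0\log(km_0)) = O(km_0(\log k+\log m_0))$. Plugging this into the requirement $\epsilon = 2^{-\Omega(k^3m_0\log k\log m_1\log m_0\log M)}$ from Theorem~\ref{thm:general_adaptivity_within_block} and solving for $\epsilon_{\mathrm{design}}$ via the Lemma~\ref{lem:k_design_copy} bound matches the claimed $2^{-\Omega(k^4m_0^2\log^2 k\log m_1\log^2 m_0)}$ design accuracy. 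The substantive calculation is the Haar-moment bound on $M$, and the main risk in writing this up is keeping careful track of the two distinct notions of degree at play, namely the Fourier-degree in $s$ (which governs $\cD$) versus the polynomial degree in $\rho$ (which governs the required order of the design).
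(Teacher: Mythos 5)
Your proposal matches the paper's proof in all essentials: it verifies Condition~\ref{assume:general_adaptivity_within_block} by (i) bounding the $2km_0$-th moment of $\langle\overline{D}_\rho^{\le\cD},\overline{D}_{\rho'}^{\le\cD}\rangle-1$ via Lemma~\ref{lem:k_design_copy} applied at order $2km_0$, and (ii) bounding $M$ by the single-copy Haar moment $\E_\rho(\overline{D}_\rho(s))^{2(km_0-1)}\approx(2(km_0-1))!$ via Fact~\ref{fact:haar_divide}, then plugging $\log M=O(km_0\log(km_0))$ back into Theorem~\ref{thm:general_adaptivity_within_block} to solve for the required design accuracy. The two additions you make — the Fourier/Plancherel argument showing the $\cD$-truncation can be dropped for free, and the explicit factoring-plus-Jensen chain that reduces the $M$-condition to a single-copy moment — are steps the paper handles implicitly (it simply switches to $\overline{D}_\rho^{\le\infty}$ and applies H\"{o}lder in the theorem's proof), so you are spelling out gaps rather than taking a different route. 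One small slip: when quoting the Haar value of $\E_\rho\E_s(\overline{D}_\rho(s))^{2(km_0-1)}$ you include a spurious factor of $d$ that is not in Fact~\ref{fact:haar_divide}; it is harmless because it does not change the $\log M$ estimate, but it should be removed from the write-up.
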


\subsection{Round-based models with adaptivity among blocks}\label{sec:general_adaptivity_among_block}

The second model we consider is when we have $m_1=\text{polylog}(n)$ adaptive blocks of $m_0=\text{poly}(n)$ non-adaptive copies within each block. Roughly speaking, at the start of each block, we can prepare a fixed measurement strategy to apply to the next $\mathrm{poly}(n)$ copies, where the strategy is chosen fully adaptively based on all readouts from the previous blocks. We formalize this as follows:

\begin{definition}[Round-based model with adaptivity among blocks]
Given $m=\poly(n)$ copies of unknown $n$-qubit state $\rho$, a round-based model with adaptivity within blocks contains $m_1=\polylog(n)$ blocks each of $m_0=\poly(n)$ copies. While no adaptivity is allowed within each block, how to measure a whole new block can be adaptively decided by previous history. The whole process can be as a rooted tree $\cT$ of depth $m_1$ with each node on the tree recording the measurement outcomes of a block. It has the following properties:
\begin{itemize}
    \item We assign a probability $p^\rho(u)$ for any node $u$ on the tree $\cT$.
    \item The probability assigned to the root $r$ of the tree $\cT$ is $p^\rho(r)=1$.
    \item At each non-leaf node $u$, we measure $m_1$ copies of $\rho$ using an $m_0$ independent PVMs $\{U_{u,i}\ket{s}\bra{s}U_{u,i}^\dagger,i\in[m_0]\}$, which results in a classical outcome $\bm{s}_u=s_u^{(1)},...,s_u^{(m_0)}$. Each child node $v$ corresponding to the classical outcome $\bm{s}_u$ of the node $u$ is connected through the edge $e_{u,s}$.
    \item If $v$ is the child of $u$ through the edge $e_{u,s}$, the probability to traverse this edge is $p^{\rho}(s|u)=\prod_{i=1}^{m_0}\braket{s|U_{u,i}^\dagger\rho U_{u,i}|s}$. Then
    \begin{align*}
    p^\rho(v)=p^\rho(u)\cdot p^\rho(s|u)=p^\rho(u)\cdot\prod_{i=1}^{m_0}\braket{s|U_{u,i}^\dagger\rho U_{u,i}|s}.
    \end{align*}
    \item Each root-to-leaf path is of length $m_1$. Note that for a leaf node $\ell$, $p^\rho(\ell)$ is the probability for the classical memory to be in state $\ell$ at the end of the learning protocol. Denote the set of leaves of $\cT$ by $\mathrm{leaf}(\cT)$
\end{itemize}
\end{definition}

We will need the following condition:

\begin{condition}[$(M',\epsilon,\cD,k)$-condition for restricted adaptivity among blocks]\label{assume:general_adaptivity_among_block}
Consider the hypothesis testing problem $\rhomm$ vs. $\rho\sim\mathcal{E}$. We assume that for any single-copy measurement with output $s$, we have 
\begin{align*}
\mathbb{E}_{\rho\sim\mathcal{E}}\left(\overline{D}_{\rho}^{\leq \cD}(s)-1\right)^{2(m_1-1)}\leq M'.
\end{align*}
We further assume that for any single-copy measurement, we have
\begin{align*}
\mathbb{E}_{\rho,\rho'\sim\mathcal{E}}\left(\expval{\overline{D}_{\rho}^{\leq \cD},\overline{D}_{\rho'}^{\leq \cD}}-1\right)^{2k}\leq\epsilon.
\end{align*}
\end{condition}

Given the above condition, we can prove the following result:

\begin{theorem}[Low-degree hardness for round-based model with adaptivity among blocks]\label{thm:general_adaptivity_among_block}
For the hypothesis testing problem $\rhomm$ vs. $\rho\sim\mathcal{E}$ satisfying a $(M',2^{-\Omega(km_1\log k\log m_0\log M')},\cD,k)$-condition for restricted adaptivity among blocks (see \Cref{assume:general_adaptivity_among_block}), i.e., 
\begin{align*}
&\mathbb{E}_{\rho\sim\mathcal{E}}\left(\overline{D}_{\rho}^{\leq \cD}(s)-1\right)^{2(m_1-1)}\leq M'\\
&\mathbb{E}_{\rho,\rho'\sim\mathcal{E}}\left(\expval{\overline{D}_{\rho}^{\leq \cD},\overline{D}_{\rho'}^{\leq \cD}}-1\right)^{2k}\leq2^{-\Omega(km_1\log k\log m_0\log m_1\log M')}
\end{align*}
for any single-copy PVM, weak detection is degree-$(\cD,k)$ hard for any adaptivity among round-based model protocol that makes polynomially many single-copy PVMs $\{\{U_i\ket{x}\bra{x}U_i^{\dagger}\}_{x=1}^{2^n}\}_{i=1}^m$. These protocols use $m_1=\polylog(n)$ blocks each with $m_0=m/m_1=\poly(n)$ samples, and adaptivity is only allowed among blocks. 
\end{theorem}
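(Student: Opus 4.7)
The plan is to mirror the inductive strategy used in the proof of \Cref{thm:general_adaptivity_within_block}, but with the roles of $m_0$ and $m_1$ swapped: I would induct on the number of blocks $m_1$ rather than on the number of copies inside a block. Write $\bm{S} = (\bm{s}_1, \ldots, \bm{s}_{m_1})$ with $\bm{s}_j \in \{0,1\}^{m_0 n}$ collecting the $m_0$ readouts of block $j$. Because measurements within a block are non-adaptive, the per-block likelihood ratio factorizes across its $m_0$ copies, while across blocks the measurement strategy on $\bm{s}_{j+1}$ may depend arbitrarily on $\bm{s}_{1:j}$. This suggests performing a \emph{blockwise} Fourier expansion: define characters $\chi_T(\bm{S})$ indexed by subsets $T \subseteq [m_1]$, and observe that the copy-wise $(\cD,k)$-truncation of the full likelihood ratio, when expanded in this basis, restricts to $|T| \le k$.

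The central step is the inductive bound on $A_{j,k} \triangleq \|\mathbb{E}_\rho \overline{D}_{\rho, j\text{ blocks}}^{\le \cD, k} - \mathbf{1}\|^2$. I would split the Fourier expansion of $A_{m_1, k}$ into subsets $T \subseteq [m_1 - 1]$ and subsets of the form $T = T' \cup \{m_1\}$ with $T' \subseteq [m_1 - 1]$, $|T'| \le k-1$. The first class sums to $A_{m_1 - 1, k}$ for the protocol consisting of the first $m_1 - 1$ adaptive blocks. For the second, I would condition on $\bm{s}_{1:m_1-1}$ (which determines the measurement strategy on block $m_1$) and apply Cauchy--Schwarz exactly as in \eqref{eq:induction_second_within}, separating the sum into (a) a factor involving $\mathbb{E}_{\rho,\rho'}(\langle \overline{D}_\rho^{\le \cD}, \overline{D}_{\rho'}^{\le \cD}\rangle - 1)^{2k}$ for a fixed single-copy basis on block $m_1$, which is bounded by $\epsilon^{1/(2k)}$ after absorbing the $m_0$ non-adaptive copies inside the block via an auxiliary H\"older step; and (b) a factor involving a high moment of the likelihood ratio over the earlier $m_1 - 1$ blocks, which after multiplying out the non-adaptive product structure within those blocks reduces to $\mathbb{E}_\rho(\overline{D}_\rho^{\le \cD}(s) - 1)^{2(m_1-1)} \le M'$ invoked uniformly over adaptive single-copy measurements.

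Iterating the induction $m_1$ times and collecting the combinatorial $\binom{m_1}{k}$ and $\binom{m_0}{k}$ factors picked up at each level, I expect a final bound of the form
\[
A_{m_1, k} \;\lesssim\; (m_0 m_1 k)^{O(k m_1)} \cdot \epsilon^{1/(2k)} \cdot (M')^{O(k m_1)},
\]
which is driven to $o(1)$ by the choice $\epsilon = 2^{-\Omega(k m_1 \log k \log m_0 \log m_1 \log M')}$ from the theorem. The main obstacle I anticipate is threading the adaptivity through the Cauchy--Schwarz step cleanly: because the measurement applied on block $m_1$ is itself a random object depending on $\bm{s}_{1:m_1-1}$, the conditions in \Cref{assume:general_adaptivity_among_block} must be invoked uniformly over every single-copy measurement that can arise under any history, and I must verify that after conditioning the inner expectation over $\bm{s}_{m_1}, \bm{s}'_{m_1}$ really does collapse to the two-ensemble quantity $\langle \overline{D}_\rho^{\le \cD}, \overline{D}_{\rho'}^{\le \cD}\rangle$ for a fixed-but-arbitrary basis. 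A secondary subtlety, inherited from the within-block argument, is correctly tracking how the $(\cD,k)$ degree budget is spent at each level of the recursion so that the inductive hypothesis remains $A_{m_1 - 1, k}$ rather than a quantity with a shrinking $k$; once this bookkeeping is in place, the remaining arithmetic is a routine enumeration over $T' \subseteq [m_1 - 1]$.
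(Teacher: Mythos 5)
Your proposal follows essentially the same route as the paper's proof: the blockwise Fourier decomposition over $T\subseteq[m_1]$, the induction on the number of blocks by splitting into $T\subseteq[m_1-1]$ versus $T = T'\cup\{m_1\}$, the Cauchy--Schwarz split of the second class into a squared low-degree advantage for the last block and a high moment over the earlier blocks, the H\"older step that collapses the non-adaptive product structure within each block down to the $2k$-th moment of the single-copy inner product (yielding the $\epsilon$ condition) and the $2(m_1-1)$-th moment of the centered single-copy likelihood ratio (yielding the $M'$ condition), and the final choice of $\epsilon$. The only quantitative slip is that the Cauchy--Schwarz step produces a factor $\bigl(\mathbb{E}_{\rho,\rho'}(\langle \overline{D}_\rho^{\le\cD},\overline{D}_{\rho'}^{\le\cD}\rangle-1)^{2k}\bigr)^{1/2}\le \epsilon^{1/2}$ rather than $\epsilon^{1/(2k)}$ (you likely carried over the exponent $\epsilon^{1/(2km_0)}$ from the within-block proof); with the correct $\epsilon^{1/2}$, the arithmetic lands exactly on the $\epsilon = 2^{-\Omega(km_1\log k\log m_1\log m_0\log M')}$ stated in the theorem.
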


\noindent Intuitively, the proof of \Cref{thm:general_adaptivity_among_block} also uses the recursive induction trick. However, instead of inducting on each single copy, we should induct on each block instead.

\begin{proof}
The first step is to project the overall likelihood-ratio $\overline{D}^{\leq\cD}_{\rho,m}(\bm{S})$ for $\bm{S}=(\bm{s}_1,...,\bm{s}_{m_1})\in\{0,1\}^{mn}$ into product of low-degree blocks as $\overline{D}_{\rho,m}^{\leq \cD}(\bm{S})=\prod_{i=1}^{m_1}\overline{D}_{\rho,m_0}^{\leq \cD}(\bm{s}_i)$, where each $\bm{s}_i\in\{0,1\}^{m_0n}$ is the history in block $i=1,...,m_1$. Note that in the copy-wise low-degree model, we always have $\overline{D}_{\rho,m}^{\leq \cD,k}(\bm{S})$ included in $\prod_{i=1}^{m_1}\overline{D}_{\rho,m_0}^{\leq \cD,k}(\bm{s}_i)$. In the following, we abuse the notation a little and instead denote
\begin{align*}
\overline{D}_{\rho,m}^{\leq \cD}=\prod_{i=1}^{m_1}\overline{D}_{\rho,m_0}^{\leq \cD,k}(\bm{s}_i).
\end{align*}
We use the \emph{Fourier basis for each block} $\chi_T(\bm{S})$ for $T\subseteq[m_1]$ here and decompose $\overline{D}_{\rho,m}^{\leq \cD}(\bm{S})$ as
\begin{align*}
\overline{D}_{\rho,{m}}^{\leq \cD}(\bm{S})=\sum_{\abs{T}\subseteq[m_1]}\alpha_T^\rho\chi_T(\bm{S}),\quad \alpha_T^\rho=\frac{1}{2^{nm}}\sum_{\bm{S}}\overline{D}_{\rho,m}^{\leq \cD}(\bm{S})\chi_T(\bm{S})=\mathbb{E}_{\bm{S}}\overline{D}_{\rho,m}^{\leq\cD}(\bm{S})\chi_T(\bm{S}).
\end{align*}
By the definition of copy-wise square degree-$(\cD,k)$ advantage, we have
\begin{align*}
\norm{\mathbb{E}_{\rho\sim\mathcal{E}}\overline{D}_{\rho,m}^{\leq \cD,k}-\textbf{1}}^2&\leq\sum_{\abs{T}\leq k,T\neq\emptyset}\mathbb{E}_{\rho,\rho'\sim\mathcal{E}}\mathbb{E}_{\bm{S},\bm{S}'}\overline{D}^{\leq\cD}_{\rho,m}(\bm{S})\overline{D}^{\leq \cD}_{\rho',m}(\bm{S}')\chi_T(\bm{S})\chi_T(\bm{S}').
\end{align*}
We denote the above quantity as $B_{m_1,k}$. We have the following induction by dividing $T$ into those with the last block or without:
\begin{align*}
B_{m_1,k}=B_{m_1-1,k}+\sum_{\abs{T'}\leq k-1,T'\neq\emptyset}\mathbb{E}_{\rho,\rho'\sim\mathcal{E}}\mathbb{E}_{\bm{S},\bm{S}'}\overline{D}^{\leq \cD}_{\rho,m}(\bm{S})\overline{D}^{\leq \cD}_{\rho',m}(\bm{S}')\chi_{T'\cup[m_1]}(\bm{S})\chi_{T'\cup[m_1]}(\bm{S}').
\end{align*}

We now consider the second term. We have (we always assume $T'\neq\emptyset$ in the following)
\begin{align*}
&\sum_{\abs{T'}\leq k-1,T'\neq\emptyset}\mathbb{E}_{\rho,\rho'\sim\mathcal{E}}\mathbb{E}_{\bm{S},\bm{S}'}\overline{D}^{\leq\cD}_{\rho,m}(\bm{S})\overline{D}^{\leq\cD}_{\rho',m}(\bm{S}')\chi_{T'\cup[m_1]}(\bm{S})\chi_{T'\cup[m_1]}(\bm{S}')\\
=&\sum_{\abs{T'}\leq k-1}\mathbb{E}_{\rho,\rho'\sim\mathcal{E}}\mathbb{E}_{\bm{S}_{1:m-m_0},\bm{S}'_{1:m-m_0}}\overline{D}^{\leq\cD}_{\rho,{m-m_0}}(\bm{S}_{1:m_0-1})\overline{D}^{\leq\cD}_{\rho',{m-m_0}}(\bm{S}'_{1:m-m_0})\chi_{T'}(\bm{S}_{1:m-m_0})\chi_{T'}(\bm{S}'_{1:m-m_0})\cdot\\
&\quad \mathbb{E}_{\bm{s}_{m_0},\bm{s}'_{m_0}}[\overline{D}_{\rho,m_0}^{\leq\cD,k}(\bm{s}_{m_0})\overline{D}_{\rho',m_0}^{\leq\cD,k}(\bm{s}'_{m_0})\chi_{[\bm{s}_{m_0}]}(\bm{s}_{m_0})\chi_{[\bm{s}'_{m_0}]}(\bm{s}'_{m_0})]\\
\leq&\left(\underset{\bm{S}_{1:m-m_0}\bm{S}'_{1:m-m_0}\atop\rho,\rho'\sim\mathcal{E}}{\mathbb{E}}\left(\sum_{\abs{T'}\leq k-1}\overline{D}^{\leq\cD}_{\rho,{m-m_0}}(\bm{S}_{1:m-m_0})\overline{D}^{\leq\cD}_{\rho',{m-m_0}}(\bm{S}'_{1:m-m_0})\chi_{T'}(\bm{S}_{1:m-m_0})\chi_{T'}(\bm{S}'_{1:m-m_0})\right)^2\right)^{\frac{1}{2}}\\
&\left(\underset{\bm{S}_{1:m-m_0}\bm{S}'_{1:m-m_0}\atop\rho,\rho'\sim\mathcal{E}}{\mathbb{E}}\left(\expval{\overline{D}_{\rho,m_0}^{\leq\cD,k},\overline{D}_{\rho',m_0}^{\leq\cD,k}}-1\right)^2\right)^{1/2}\\
\leq&km_1^k\mathbb{E}_{\bm{S}_{1:m-m_0},\rho\sim\mathcal{E}}\left[\overline{D}^{\leq\cD}_{\rho,{m-m_0}}(\bm{S}_{1:m-m_0})^2\right]\cdot\left(\mathbb{E}_{\rho,\rho'\sim\mathcal{E}}\left(\sum_{B\neq\emptyset,\abs{B}\leq k\atop B\subseteq[m_0]}\prod_{j\in B}\left(\expval{\overline{D}_{\rho}^{(j)\leq \cD},\overline{D}_{\rho'}^{(j)\leq \cD}}-1\right)\right)^2\right)^{1/2}\\
\leq &k^2m^k\mathbb{E}_{\bm{S}_{1:m-m_0},\rho\sim\mathcal{E}}\left[\overline{D}^{\leq\cD}_{\rho,{m-m_0}}(\bm{S}_{1:m-m_0})^2\right]\cdot\left(\mathbb{E}_{\rho,\rho'\sim\mathcal{E}}\left(\prod_{B\neq\emptyset,\abs{B}\leq k\atop B\subseteq[m_0],j\in B}\left(\expval{\overline{D}_{\rho}^{(j)\leq\cD},\overline{D}_{\rho'}^{(j)\leq\cD}}-1\right)\right)^2\right)^{1/2}\\
\leq &k^2m^k\mathbb{E}_{\bm{S}_{1:m-m_0},\rho\sim\mathcal{E}}\left[\overline{D}^{\leq\cD}_{\rho,{m-m_0}}(\bm{S}_{1:m-m_0})^2\right]\cdot\left(\mathbb{E}_{\rho,\rho'\sim\mathcal{E}}\left(\expval{\overline{D}_{\rho}^{\leq \cD},\overline{D}_{\rho'}^{\leq \cD}}-1\right)^{2k}\right)^{1/2},
\end{align*}
where the second step follows from Cauchy-Swartz inequality, the third step uses the fact that 
\begin{align*}
\sum_{\abs{T'}\leq k-1}\chi_{T'}(\bm{S}_{1:m-m_0})\chi_{T'}(\bm{S}'_{1:m-m_0})\leq\sum_{t=1}^{k-1}\binom{m_1-1}{t}\leq km_1^k
\end{align*}
and the non-adaptive result for each block, and the fifth step follows from H\"{o}lder's inequality. By the $(M',\epsilon,\cD,k)$-assumption for restricted adaptivity among blocks, we further have
\begin{align*}
&\sum_{\abs{T'}\leq k-1,T'\neq\emptyset}\mathbb{E}_{\rho,\rho'\sim\mathcal{E}}\mathbb{E}_{\bm{S},\bm{S}'}\overline{D}^{\leq\cD}_{\rho,m}(\bm{S})\overline{D}^{\leq\cD}_{\rho',m}(\bm{S}')\chi_{T'\cup[m_1]}(\bm{S})\chi_{T'\cup[m_1]}(\bm{S}')\\
&\leq k^2m^k\epsilon^{1/2}\mathbb{E}_{\bm{S}_{1:m-m_0},\rho\sim\mathcal{E}}\left[\overline{D}^{\leq\cD}_{\rho,{m-m_0}}(\bm{S}_{1:m-m_0})^2\right].
\end{align*}

Now, we consider the term $\mathbb{E}_{\bm{S}_{1:m-m_0},\rho\sim\mathcal{E}}\left[\overline{D}^{\leq cD}_{\rho,{m-m_0}}(\bm{S}_{1:m-m_0})^2\right]$. It can be computed by H\"{o}lder's inequality as:
\begin{align*}
\mathbb{E}_{\bm{S}_{1:m-m_0},\rho\sim\mathcal{E}}\left[\overline{D}^{\leq \cD}_{\rho,{m-m_0}}(\bm{S}_{1:m-m_0})^2\right]&=\mathbb{E}_{\bm{S}_{1:m-m_0}}\mathbb{E}_{\rho\sim\mathcal{E}}\left[\prod_{i=1}^{m_1-1}\overline{D}_{\rho,m_0}^{\leq\cD,k}(\bm{s}_i)^2\right]\\
&\leq \mathbb{E}_{\bm{S}_{1:m-m_0}}\prod_{i=1}^{m_1-1}\left(\mathbb{E}_{\rho\sim\mathcal{E}}\left[D_{\rho,m_0}^{\leq\cD}(\bm{s}_i)^{2(m_1-1)}\right]\right)^{1/(m_1-1)}
\end{align*}
Here, we consider the maximal value of $\mathbb{E}_{\rho\sim\mathcal{E}}\left[D_{\rho,m_0}^{\leq \cD,k}(\bm{s}_i)^{2(m_1-1)}\right]$. Given a term $\overline{D}_{\rho,m_0}^{\leq \cD,k}(\bm{s}_i)^l$, fix $B_i^i,...,B_i^l\subseteq[m_0]$ with size $\leq k$, we can denote $C_0,...,C_l$ as the maximal subset of $[m_0]$ that is covered by exactly $0,...,l$ sets within $B_i^i,...,B_i^l$. We thus have
\begin{align*}
&\mathbb{E}_{\rho\sim\mathcal{E}}\left[D_{\rho,m_0}^{\leq \cD,k}(\bm{s}_i)^{2(m_1-1)}\right]\\
\leq&\mathbb{E}_{\rho\sim\mathcal{E}}\left[\sum_{B_i^1,...,B_i^{m_1-1}}1^{\otimes C_0}\otimes\bigotimes_{j\in C_1}(\overline{D}_{\rho}^{\leq \cD}(s_{i,j})-1)\otimes...\otimes\bigotimes_{j\in C_{2(m_1-1)}}(\overline{D}_\rho^{\leq \cD}(s_{i,j})-1)^{2(m_1-1)}\right]\\
\leq& k^{2(m_1-1)}m_0^{2k(m_1-1)}\mathbb{E}_{\rho\sim\mathcal{E}}\left[\bigotimes_{j\in \cup_{u=0}^{2(m_1-1)}C_u}(\overline{D}_\rho^{\leq \cD}(s_{i,j})-1)^{2(m_1-1)}\right],\\
\end{align*}
where $s_{i,j}\in\{0,1\}^n$ is the output bit string of the $j$-th copy in the $i$-th block. By the $(M',\epsilon,\cD,k)$-assumption for restricted adaptivity among blocks, we further have
\begin{align*}
\mathbb{E}_{\rho\sim\mathcal{E}}\left[D_{\rho,m_0}^{\leq \cD,k}(\bm{s}_i)^{2(m_1-1)}\right]\leq M'^{km_1}k^{2(m_1-1)}m_0^{2k(m_1-1)}.
\end{align*}
We thus have
\begin{align*}
B_{m_1,k}&=\norm{\mathbb{E}_{\rho\sim\mathcal{E}}\overline{D}_{\rho,m}^{\leq\cD, k}-1}^2\\
&\leq m_1k^2m^kM'^{km_1}k^{2(m_1-1)}m_0^{2k(m_1-1)}\epsilon^{1/2}\\
&=2^{O(km_1\log M'\log m_0\log m_1\log k)}\epsilon^{1/2}.
\end{align*}
By choosing
\begin{align*}
\epsilon=2^{-\Omega(km_1\log k\log m_1\log m_0\log M')},
\end{align*}
we can ensure the square degree-$(\cD,k)$ advantage below $o(1)$, which finishes the proof for \Cref{thm:general_adaptivity_among_block}.
\end{proof}

Again, we show that approximate state designs satisfy $(M',\epsilon,\cD,k)$-assumption for restricted adaptivity among blocks in \Cref{assume:general_adaptivity_among_block}. Formally, we have the following corollary (see \Cref{sec:pf_general_adaptivity_among_block} for the proof):
\begin{corollary}[Approximate state $2k$-design indicates hardness for round-based model among adaptivity within blocks]\label{coro:general_adaptivity_among_block}
For the hypothesis testing problem $\rhomm$ vs. $\rho\sim\mathcal{E}$ with $\mathcal{E}$ a $2^{-\Omega(km_1^2\log^2 m_1\log m_0\log k)}$-approximate state $2k$-design, weak detection is degree-$(\cD,k)$ hard for any adaptivity among round-based model protocol that makes polynomially many single-copy PVMs $\{\{U_i\ket{x}\bra{x}U_i^{\dagger}\}_{x=1}^{2^n}\}_{i=1}^m$. These protocols use $m_1=\polylog(n)$ blocks each with $m_0=m/m_1=\poly(n)$ samples, and adaptivity is only allowed among blocks. 
\end{corollary}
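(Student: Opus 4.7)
The plan is to reduce the statement to Theorem~\ref{thm:general_adaptivity_among_block} by verifying the $(M',\epsilon,\cD,k)$-condition of \Cref{assume:general_adaptivity_among_block} under the approximate $2k$-design hypothesis, and then solving for the required approximation parameter. This parallels the structure of \Cref{coro:general_adaptivity_within_block} but with the two moment bounds calibrated to $2k$ and $2(m_1-1)$ rather than $2km_0$ and $2(km_0-1)$.

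For the second bullet of \Cref{assume:general_adaptivity_among_block}, I would invoke \Cref{lem:k_design_copy} with $2k$ in place of $k$. Since $\mathcal{E}$ is an $\epsilon_{\rm design}$-approximate state $2k$-design, this yields
\begin{equation*}
\E_{\rho,\rho'\sim\mathcal{E}}\bigl(\langle \overline{D}_\rho^{\le\infty},\overline{D}_{\rho'}^{\le\infty}\rangle-1\bigr)^{2k}\le O\bigl((2k)^2\, 2^{2k\log_2(2k)}(\epsilon_{\rm design}+2^{-n})\bigr).
\end{equation*}
Passing from $\le\infty$ to $\le\cD$ requires a short argument: expanding the inner product in the Fourier basis and applying H\"older's inequality to the sum over Fourier coefficients shows that the $\le\cD$-truncated moment is dominated (up to a benign $2^{O(k)}$ combinatorial factor) by the full moment, so the same $2^{O(k\log k)}(\epsilon_{\rm design}+2^{-n})$ bound carries over.

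For the first bullet, I would bound $\E_{\rho\sim\mathcal{E}}(\overline{D}_\rho^{\le\cD}(s)-1)^{2(m_1-1)}$ by direct moment calculation for any fixed PVM and outcome $s$. Since $\overline{D}_\rho^{\le\cD}(s)$ is a linear functional of the entries of $\rho$, its $2(m_1-1)$-th moment matches the Haar value up to multiplicative factor $1+O(\epsilon_{\rm design})$ whenever $2(m_1-1)\le 2k$ (the natural regime $m_1=\polylog(n)\le k$). Under the Haar measure, a binomial expansion of $(2^n\langle\phi_s|\rho|\phi_s\rangle-1)^{2(m_1-1)}$ combined with \Cref{fact:haar_divide} gives moments of order at most $(2m_1)!$, yielding $M'=2^{O(m_1\log m_1)}$ and therefore $\log M'=O(m_1\log m_1)$.

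Substituting these values into \Cref{thm:general_adaptivity_among_block}, the requirement $\epsilon\le 2^{-\Omega(km_1\log k\log m_0\log m_1\log M')}$ becomes $\epsilon\le 2^{-\Omega(km_1^2\log^2 m_1\log m_0\log k)}$; inverting the bound on $\epsilon$ from the previous step then forces $\epsilon_{\rm design}=2^{-\Omega(km_1^2\log^2 m_1\log m_0\log k)}$, matching the statement. The main technical obstacle will be the degree-$\cD$ truncation: unlike the pointwise likelihood ratio, $\overline{D}_\rho^{\le\cD}(s)$ need not be nonnegative, so the binomial cancellations in the single-outcome moment calculation must be controlled uniformly in the (worst-case) outcome $s$ and measurement basis, and the Fourier-coefficient comparison in the pair-wise moment must be executed carefully to avoid spurious $2^n$ factors. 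A secondary subtlety is ensuring $m_1-1\le k$ so that the $2k$-design assumption genuinely controls the $2(m_1-1)$-th moment; otherwise a $\max(2k,2(m_1-1))$-design would be needed.
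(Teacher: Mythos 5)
Your proposal takes essentially the same route as the paper: verify \Cref{assume:general_adaptivity_among_block} by (i) invoking \Cref{lem:k_design_copy} with $2k$ in place of $k$ for the pair-wise moment, and (ii) a direct Haar moment calculation (via \Cref{fact:haar_divide}) for the single-outcome moment, giving $M'=2^{O(m_1\log m_1)}$, then substitute into \Cref{thm:general_adaptivity_among_block} and solve for $\epsilon_{\rm design}$. The two ``obstacles'' you flag deserve comment.

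On the degree-$\cD$ truncation: your proposed Fourier--H\"older argument is unnecessary. The square degree-$(\cD,k)$ advantage $\chi^2_{\le(\cD,k)}$ is a sum of squared Fourier coefficients over a nested family of index sets, hence monotone nondecreasing in $\cD$. Thus it suffices to prove the corollary at $\cD=\infty$, which is exactly what the paper does (its proof bounds the $\le\infty$ moments throughout), and the bound for every finite $\cD$ follows for free. Trying to bound the $\le\cD$-truncated moments directly runs into the sign-control issues you anticipate, so the monotonicity shortcut is the cleaner route.

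On the regime $m_1-1\le k$: you are right to flag this, and it is a genuine implicit hypothesis. The single-outcome moment $\E_\rho\bigl[(\overline{D}_\rho^{\le\infty}(s)-1)^{2(m_1-1)}\bigr]$ expands into terms $\E_\rho[\langle\phi_s|\rho|\phi_s\rangle^j]$ for $j$ up to $2(m_1-1)$, so the $(1+\epsilon')$-type comparison to the Haar value genuinely requires the ensemble to be a $\max\{2k,\,2(m_1-1)\}$-design; the paper's $2k$-design assumption controls this only when $m_1\le k+1$. This is consistent with the informal \Cref{thm:adaptivity_informal}, which quotes an $O(k)$-design with $m_1=\polylog(n)$, but the formal corollary statement and proof leave the requirement $m_1\lesssim k$ unstated. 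Your proposal is therefore correct and, if anything, more scrupulous than the paper's own argument on this point.
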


\subsection{The statistical query model and its equivalence with the copy-wise low-degree model}\label{sec:general_equivalence}

In this section, we make the simple observation that the copy-wise low-degree model that we consider is equivalent to the \emph{statistical query} model applied to readouts from non-adaptive measurements, provided the readouts in the low-degree model are noisy. The conclusion follows in a straightforward fashion from an analogous result previously proved in the classical setting~\cite{brennan2020statistical}, and the main contribution of this section is conceptual: to identify a physically motivated scenario under which the conditions for their result hold in the quantum setting.

\subsubsection{Review of statistical query model}\label{sec:basic_sq}

Here, we introduce another computational model of \emph{statistical query (SQ)} model. The SQ model was first proposed to design and analyze noise-tolerant algorithms~\cite{kearns1998efficient}. It is further developed to be a popular restricted model of computation for studying information-computation tradeoffs~\cite{diakonikolas2017statistical,feldman2017statistical,feldman2015complexity}. In this work, we focus on the VSTAT$(m)$ algorithm, which access a distribution $p$ over $\R^n$ via queries $\phi:\R^n\to[0,1]$ to an oracle. For an query $\phi$, it returns to a value $\E_{x\sim p}[\phi(x)]+\xi$ with an adversarial chosen $\xi\in\R$ with $\abs{\xi}\leq\max\left\{\frac 1m,\sqrt{\frac{\E[\phi](1-\E[\phi])}{m}}\right\}$. The approximation of $\E[\phi]$ is the same as an $m$-sample empirical estimate under the guarantees of concentration inequality. An algorithm that makes $q$ queries to VSTAT$(m)$ is roughly a proxy for an algorithm running in time $q$ on $m$ samples despite that $\phi$ need not to be polynomial-time computable but have must be a function of only a single sample.

To prove lower bounds against SQ algorithms, we introduce a complexity measure on hypothesis testing problems called \emph{statistical dimension}. We will use the following definition of statistical dimension from~\cite{brennan2020statistical}: 
\begin{definition}[Definition 1.2 of~\cite{brennan2020statistical}]\label{def:statistical_query}
Given the hypothesis testing problem $D_\emptyset$ vs. $\cS=\{D_u\}_{u\sim\mu}$, the statistical dimension $\text{SDA}(\cS,\mu,m)$ is defined to be:
\begin{align}\label{eq:def_SDA}
\text{SDA}(\cS,\mu,m)\coloneqq\max\left\{N\in\N:\E_{u,v\in\mu}\left[\abs{\expval{D_u,D_v}-1}|A\right]\leq\frac1m,\ \forall\text{ events }A\text{ s.t. }\Pr_{u,v\sim\mu}(A)\geq \frac{1}{N^2}\right\}.
\end{align}
\end{definition}
\noindent The intuition for the above definition might be obscure at first glance. Recall that $\expval{D_u,D_v}-1$ is the centered average of the likelihood ratio $\overline{D}_v$ over samples from $D_u$. When this quantity is bounded below by $\delta$, there can be a common event such that one can simultaneously distinguish $D_u$ and $D_v$ from $D_\emptyset$ with probability at least some $\delta'$ depending only on $\delta$. The statistical dimension in \eqref{eq:def_SDA} thus quantifies the measure of pairs of distributions with no such common events according to the prior distribution $\mu$. It is shown that the statistical dimension provides a lower bound on the SQ complexity of hypothesis testing with a VSTAT oracle~\cite{feldman2017statistical,brennan2020statistical}:

\begin{lemma}[Theorem 2.7 of~\cite{feldman2017statistical} and Theorem A.5 of~\cite{brennan2020statistical}]
Given the hypothesis testing problem $D_\emptyset$ vs. $\cS=\{D_u\}_{u\sim\mu}$. Any (possibly) randomized statistical query algorithm which solves the hypothesis testing problem with probability at least $1-\delta$ requires at least $(1-\delta)\text{SDA}(\cS,\mu,m)$ queries to VSTAT$(m/3)$.
\end{lemma}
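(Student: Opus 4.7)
The plan is to prove this classical SQ lower bound via the standard adversarial decision-tree argument, with the SDA condition invoked only in the final step. First I would apply Yao's minimax principle to reduce the lower bound against randomized SQ algorithms to one against deterministic algorithms that succeed with probability at least $1-\delta$ over $u\sim\mu$. For such a deterministic algorithm $\mathcal{A}$ making $q$ queries to $\text{VSTAT}(m/3)$, I would design a ``null-simulating'' adversarial oracle that, for every query $\phi$, responds with the value $\E_{D_\emptyset}[\phi]$, which is tautologically a legal $\text{VSTAT}(m/3)$ response for $D_\emptyset$. Because $\mathcal{A}$ is deterministic, against this fixed adversary it emits a deterministic query sequence $\phi_1,\dots,\phi_q$ and a fixed final output, which by hypothesis must be ``null.''

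The next step is to partition the alternatives by which query ``catches'' them. For each $i$ define
\begin{equation*}
A_i := \bigl\{u\in\cS : |\E_{D_u}[\phi_i] - \E_{D_\emptyset}[\phi_i]| > \tau_i(u)\bigr\},
\end{equation*}
where $\tau_i(u) := \max\{3/m,\sqrt{(3/m)\E_{D_u}[\phi_i](1-\E_{D_u}[\phi_i])}\}$ is the $\text{VSTAT}(m/3)$ tolerance under $D_u$. If $u\notin\bigcup_i A_i$, every adversary response is simultaneously legal under $D_u$ as well, so $\mathcal{A}$ must also output ``null'' on $D_u$, contradicting its success guarantee. Hence $\mu(\bigcup_i A_i)\geq 1-\delta$, and by a union bound it suffices to prove $\mu(A_i)\leq 1/\text{SDA}(\cS,\mu,m)$ for each $i$, which then gives $q\geq (1-\delta)\,\text{SDA}(\cS,\mu,m)$ as required.

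For the per-query bound I would run a second-moment argument. Writing $\E_{D_u}[\phi_i]-\E_{D_\emptyset}[\phi_i]=\langle\phi_i,\overline{D}_u-1\rangle_{D_\emptyset}$, squaring, and integrating over $u\sim\mu$ restricted to $A_i$ (together with an independent duplicate $v$) converts the single-sample gap into the pair inner product $\expval{\overline{D}_u,\overline{D}_v}-1$. After a Cauchy--Schwarz rearrangement and suitable normalization of $\phi_i$, this yields a bound of the schematic form
\begin{equation*}
\tau_i^2\,\mu(A_i)^2 \;\lesssim\; \mu(A_i)^2\cdot \E_{u,v\sim \mu}\bigl[|\expval{\overline{D}_u,\overline{D}_v}-1|\bigm|A_i\times A_i\bigr].
\end{equation*}
If, towards contradiction, $\mu(A_i)>1/\text{SDA}(\cS,\mu,m)$, then $\Pr_{u,v\sim\mu}(A_i\times A_i)>1/\text{SDA}^2$, and the defining property of SDA forces the right-hand expectation to be $\leq 1/m$, giving $\tau_i^2\lesssim 1/m$. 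Combined with the tolerance lower bound $\tau_i\geq 3/m$ and the insertion of $m/3$ rather than $m$ into VSTAT, the constants sharpen into a strict contradiction, establishing $\mu(A_i)\leq 1/\text{SDA}(\cS,\mu,m)$.

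The main obstacle will be carefully handling the two regimes of the VSTAT tolerance. When $\tau_i(u)$ is dominated by the variance term $\sqrt{(3/m)\E_{D_u}[\phi_i](1-\E_{D_u}[\phi_i])}$, naively applying Cauchy--Schwarz loses a factor of $\|\phi_i\|_{D_\emptyset}^2$ that must be recovered by working instead with the centered and normalized query $(\phi_i-p_i)/\sqrt{p_i(1-p_i)}$, where $p_i:=\E_{D_\emptyset}[\phi_i]$; a small additional subtlety is that $\tau_i(u)$ depends on $\E_{D_u}[\phi_i]$ rather than on $p_i$, but on $A_i$ these quantities differ by at most a constant factor (otherwise $u$ would trivially be caught), so the argument absorbs it. With this bookkeeping the computation is routine and the claimed lower bound of $(1-\delta)\,\text{SDA}(\cS,\mu,m)$ queries follows.
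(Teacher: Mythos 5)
The paper does not prove this lemma: it is invoked as a black box, with the statement attributed directly to Theorem~2.7 of Feldman~et~al. and Theorem~A.5 of Brennan~et~al., so there is no in-paper proof to compare your argument against. Your outline correctly reproduces the high-level structure of the argument from those references (Yao's minimax reduction to deterministic algorithms, a null-simulating adversary, the partition of $\cS$ into query-specific ``catching'' events $A_i$, the union bound, and a second-moment argument keyed to the SDA definition), so the strategy is sound.

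There is one genuine gap in the way you set up the second-moment step. You write that ``squaring, and integrating over $u\sim\mu$ restricted to $A_i$ (together with an independent duplicate $v$)'' produces the pair inner product, but squaring a single integral $\int_{A_i}|\langle\phi_i,\overline{D}_u-1\rangle|\,d\mu(u)$ does not by itself yield the double integral $\int_{A_i\times A_i}\langle\overline{D}_u,\overline{D}_v\rangle - 1$ unless the integrand has a fixed sign. The standard fix, which your sketch omits, is to first split $A_i$ into $A_i^+=\{u:\E_{D_u}[\phi_i]-\E_{D_\emptyset}[\phi_i]>\tau_i(u)\}$ and $A_i^-$ (its negative counterpart), take whichever half has the larger $\mu$-measure, and only then exchange the integral with $\langle\phi_i,\cdot\rangle$ so that
\begin{equation*}
\tau\,\mu(A_i^\pm)\;\le\;\Bigl|\Bigl\langle \phi_i,\int_{A_i^\pm}(\overline{D}_u-1)\,d\mu(u)\Bigr\rangle\Bigr|\;\le\;\|\phi_i\|_{D_\emptyset}\Bigl(\int_{A_i^\pm\times A_i^\pm}\bigl(\langle\overline{D}_u,\overline{D}_v\rangle-1\bigr)\Bigr)^{1/2}.
\end{equation*}
Without the sign split the inequality in the first step can fail by cancellation. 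This split also interacts with the constants: the SDA condition is triggered by events of pair-probability $\geq 1/N^2$, and after halving one works with $\mu(A_i^\pm)\geq\mu(A_i)/2$, so closing the loop requires exactly the constant slack afforded by using VSTAT$(m/3)$ rather than VSTAT$(m)$. Your remarks on normalizing $\phi_i$ to handle the variance-dominated VSTAT regime are correct and are indeed the other place the factor of $3$ is consumed; you should weave the sign split in alongside that normalization to make the per-query bound $\mu(A_i)\leq 1/\text{SDA}(\cS,\mu,m)$ (up to the constants) actually go through.
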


\begin{remark}
We remark that the statistical dimension definition we used here is mildly stronger than the standard statistical dimension defined in~\cite{feldman2017statistical} and the statistical query model used in previous results in learning distributions of quantum circuits~\cite{nietner2023average,hinsche2021learnabilityoutputdistributionslocal}. In other words, the lower bound proved in our work automatically holds for both models mentioned above.
\end{remark}

\subsubsection{Equivalence}
\label{sec:SQ}
We now show the equivalence between copy-wise low-degree models introduced in the previous part and statistical query models in \Cref{sec:basic_sq} following the framework in~\cite{brennan2020statistical}. 

To begin with, we go back to the classical distinguishing task $D_\emptyset$ vs. $\cS=\{D_u\}_{u\sim\mu}$ on $\R^n$ as in \Cref{sec:basic_low_degree}. We now introduce the concept of niceness:
\begin{definition}[$(\delta,k)$-nice~\cite{brennan2020statistical}]\label{def:nice}
Given a null distribution $D_\emptyset$ on $\R^n$, a polynomial $p:(\R^n)^k\to\R$ of $k$ samples $x_1,...,x_k\in\R^n$ is called \emph{$k$ purely high degree} if it is orthogonal to any function $q:(\R^n)^k\to\R$ that has degree at most $k$ in each sample $x_1,...,x_k$, i.e., $\E_{x_1,...,x_k\sim D_\emptyset}p(x_1,...,x_k)q(x_1,...,x_k)=0$ for all function $q:(\R^n)^k\to\R$ that has degree at most $k$ in each sample $x_1,...,x_k$. The hypothesis testing problem $D_\emptyset$ vs. $\cS=\{D_u\}_{u\sim\mu}$ is called \emph{$(\delta,k)$-nice} if no $k$-purely high-degree function of $k$ samples can distinguish between the two cases with probability $\delta$.
\end{definition}

\Cref{def:nice} might look strange at first glance. However, the intuition here is that it rules out testing problems that can be solved using very few samples as usually we choose $k=\polylog(n)$ as degrees of functions. It is shown that classical random noise can be a sufficient condition for ensuring niceness:
\begin{lemma}[Theorem 5.2 of~\cite{brennan2020statistical}]\label{lem:classical_noise_nice}
Consider a hypothesis testing problem $D_\emptyset$ vs. $\cS=\{D_u\}_{u\sim\mu}$ and suppose that $D_\emptyset=D_{\emptyset,\text{ind}}^{\otimes n}$ is a product distribution. Suppose that there is no $k$-sample algorithm that can distinguish between the two cases with probability $\delta_0$. We construct $\cS'=\{D_u'\}_{u\sim\mu}$ by sampling $x_i'\sim D'_u,x_i'\in\R^n$ using a sample $x\sim D_u,x\in\R^n$ and replacing each coordinate $x_j$ independently by a fresh sample from $D_{\emptyset,\text{ind}}$ with probability $\kappa$. Then the distinguishing task $D_\emptyset$ vs. $\cS'$ is $(\delta_0(1-\kappa)^{k^2},k)$-nice.
\end{lemma}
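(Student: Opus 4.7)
The plan is to argue by the contrapositive: assuming some $k$-purely-high-degree polynomial $p$ of $k$ samples distinguishes $D_\emptyset$ from $\cS'$ with probability exceeding $\delta_0(1-\kappa)^{k^2}$, I construct a $k$-sample algorithm for the noise-free problem $D_\emptyset$ vs.\ $\cS$ with distinguishing probability exceeding $\delta_0$, contradicting the hypothesis.

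First, I would exploit the product structure $D_\emptyset = D_{\emptyset,\text{ind}}^{\otimes n}$ to set up an orthonormal Fourier-type basis $\{\chi_\alpha\}$ for $L^2((\R^n)^k, D_\emptyset^{\otimes k})$ indexed by tuples $\alpha = (\alpha_1, \ldots, \alpha_k)$, where each $\alpha_i$ is a multi-index on the coordinates of the $i$-th sample. In this basis, the condition that $p$ is $k$-purely-high-degree translates to a sparsity condition on the expansion $p = \sum_\alpha c_\alpha \chi_\alpha$: every nonzero $c_\alpha$ has $|\alpha_i| > k$ for at least one sample index~$i$.

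Next, I would compute the action of the noise channel on this basis. Writing $x'_i = x_i \odot M_i + y_i \odot (\mathbf{1} - M_i)$ with $M_i \in \{0,1\}^n$ of i.i.d.\ Bernoulli$(1-\kappa)$ entries and $y_i \sim D_{\emptyset,\text{ind}}^{\otimes n}$, a direct computation using independence of the $y$'s together with the product structure of the null marginal yields $\E_{M, y}[\chi_\alpha(x')] = (1-\kappa)^{|\alpha|_{\text{tot}}} \chi_\alpha(x)$ with $|\alpha|_{\text{tot}} = \sum_i |\alpha_i|$. Hence the noise channel induces a linear operator $T_\kappa$ on $L^2$ that is diagonal in this basis with eigenvalue $(1-\kappa)^{|\alpha|_{\text{tot}}}$. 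Since $D_\emptyset$ is invariant under the noise channel, we have $\E_{D_\emptyset}[p(x')] = \E_{D_\emptyset^{\otimes k}}[p]$ and $\E_{D'_u}[p(x')] = \E_{D_u^{\otimes k}}[T_\kappa p]$, so the noisy distinguishing advantage of $p$ equals the noise-free $k$-sample advantage of the function $T_\kappa p$, which is no longer subject to any degree restriction and is a valid $k$-sample distinguisher.

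The final step is to bound this advantage using the hypothesis. Naively, since $T_\kappa p$ is just a $k$-sample function, the hypothesis already delivers $(\delta_0, k)$-niceness; the improvement to $(\delta_0(1-\kappa)^{k^2},k)$-niceness must come from exploiting that $T_\kappa$ strongly attenuates the purely-high-degree content of $p$. I would decompose $p$ according to the set $S(\alpha) = \{i : |\alpha_i| > k\}$ of ``high-degree samples'' contributing to each monomial (nonempty by assumption), then use an iterative Cauchy--Schwarz-style argument to convert the per-sample attenuation factor $(1-\kappa)^{|\alpha_i|}$ on each factor of each monomial into a cumulative $(1-\kappa)^{k^2}$ attenuation across all $k$ samples, after properly normalizing the induced $k$-sample distinguisher. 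I expect this final attenuation step to be the main obstacle: the purely-high-degree condition only witnesses $|\alpha_i|>k$ in at least one sample, so the naive per-monomial bound gives only $(1-\kappa)^{k+1}$, and upgrading to $(1-\kappa)^{k^2}$ requires carefully tracking how purely-high-degree monomials mix across the $k$ samples under $T_\kappa$. Adapting the Boolean-Fourier bookkeeping of~\cite{brennan2020statistical} from the Rademacher setting to a general product null distribution $D_{\emptyset,\text{ind}}^{\otimes n}$ is where the bulk of the technical work will lie.
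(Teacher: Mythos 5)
The paper cites this lemma verbatim as Theorem 5.2 of~\cite{brennan2020statistical} and gives no proof of its own, so there is no in-paper argument to compare against; what can be evaluated is whether your reconstruction is sound.

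Your skeleton is correct and is essentially the mechanism used in~\cite{brennan2020statistical}: argue by contrapositive, diagonalize the coordinate-resampling noise channel $T_\kappa$ in the product Fourier basis with eigenvalue $(1-\kappa)^{|\alpha|_{\mathrm{tot}}}$, observe that the null is $T_\kappa$-invariant so the noisy advantage of $p$ equals the noise-free advantage of $T_\kappa p$, and then cash in the attenuation factor $\|T_\kappa p\|/\|p\|$ against the $k$-sample hypothesis. The place where you get stuck, however, is not a bookkeeping obstacle to be overcome by iterated Cauchy--Schwarz; it is definitional, and it stems from a paraphrase in this paper's Definition~\ref{def:nice}. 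As written there, $p$ is purely high degree if it is orthogonal to the subspace $\{q:\deg_{x_j}(q)\le k\ \forall j\}$, i.e.\ every monomial $\chi_\alpha$ of $p$ has $|\alpha_j|>k$ for \emph{at least one} $j$. That guarantees only $|\alpha|_{\mathrm{tot}}\ge k+1$, hence $\|T_\kappa p\|\le(1-\kappa)^{k+1}\|p\|$, and no amount of mixing analysis can squeeze $(1-\kappa)^{k^2}$ out of monomials like $\alpha=(\alpha_1,0,\dots,0)$ with $|\alpha_1|=k+1$. In Brennan et al.'s actual formulation, a function of $k$ samples is $(>\!k)$-purely high degree if, \emph{for every} sample index $j$, it is orthogonal to all functions of degree $\le k$ in $x_j$ (arbitrary degree in the others). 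That forces $|\alpha_j|>k$ for \emph{all} $j$, hence $|\alpha|_{\mathrm{tot}}>k^2$, and your argument then concludes immediately from $\|T_\kappa p\|\le(1-\kappa)^{k^2}\|p\|$ with no further work. So the gap you flag is real under the definition as stated in this paper, and the fix is to use the per-sample orthogonality condition from the source, after which your proof is complete. The generalization from the Rademacher null to an arbitrary product null is routine, since you only use an orthonormal basis for $L^2(D_{\emptyset,\mathrm{ind}})$ and the fact that resampling a coordinate from $D_{\emptyset,\mathrm{ind}}$ annihilates its non-constant part.
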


\noindent Given the niceness condition, Ref.~\cite{brennan2020statistical} shows that 
\begin{lemma}[Theorem 1.6 of~\cite{brennan2020statistical}]
Consider a $(m^{-k/2}/4,k)$-nice hypothesis testing problem $D_\emptyset$ vs. $\cS=\{D_u\}_{u\sim\mu}$, we have:
\begin{itemize}
    \item If there exists some $0\leq m'\leq m$ such that $\text{SDA}(\cS,m',\mu)\leq\left(\frac{2m}{m'}\right)^k$, then we have the $4mk$-sample square degree-$(\infty,k^2)$ advantage bounded by $o(1)$.
    \item If the $m$-sample square degree-$(\infty,k)$ advantage is bounded by $o(1)$, then there exists $0\leq m'\leq m$ such that $\text{SDA}(\cS,m',\mu)\geq\left(\frac{2m}{m'}\right)^{\Omega(k)}$.
\end{itemize}
\end{lemma}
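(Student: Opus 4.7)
The plan is to use the niceness assumption to convert the copy-wise low-degree advantage into moments of the pairwise overlap $\langle D_u, D_v\rangle - 1$, and then to connect those moments with the statistical dimension via a dyadic conditioning argument. Concretely, note that the computation in \eqref{eq:holder} already establishes that the $4mk$-sample square copy-wise degree-$(\infty,k^2)$ advantage is controlled by a sum over subsets $T$ of size at most $k^2$ of terms of the form $\E_{u,v\sim\mu}[(\langle \overline{D}_u,\overline{D}_v\rangle-1)^{|T|}]$. When the problem is $(m^{-k/2}/4,k)$-nice, the purely-high-degree components do not contribute to any $k$-sample distinguisher, so up to negligible error these moments faithfully represent the low-degree advantage.

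For the forward direction, assume there exists $m'$ with $\mathrm{SDA}(\cS,m',\mu)\le(2m/m')^k$. I would bound each moment $\E_{u,v}[(\langle \overline{D}_u,\overline{D}_v\rangle-1)^j]$ for $j\le k^2$ by splitting the probability space into dyadic level sets $A_i=\{2^i/m'<|\langle\overline{D}_u,\overline{D}_v\rangle-1|\le 2^{i+1}/m'\}$. On any event of measure exceeding $(m'/(2m))^{-2}$, the SDA bound combined with a conditioning argument forces the conditional expectation to be at most $1/m'$. Summing the contributions of all level sets and trading off the threshold against the measure gives a geometric decay that yields $\E[(\langle \overline{D}_u,\overline{D}_v\rangle-1)^j]\le (C/m)^j$ for a suitable constant $C$, which when plugged back into \eqref{eq:holder} keeps the total advantage $o(1)$.

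For the reverse direction, I would argue by contrapositive. Suppose for every $0\le m'\le m$ we have $\mathrm{SDA}(\cS,m',\mu)<(2m/m')^{\Omega(k)}$; unpacking the definition yields, for each scale, an event $A_{m'}$ with $\Pr(A_{m'})$ not too small on which $\E[|\langle\overline{D}_u,\overline{D}_v\rangle-1|\mid A_{m'}]>1/m'$. I would then add the contributions of these events across dyadic scales of the overlap to produce a lower bound on $\E_{u,v}[(\langle\overline{D}_u,\overline{D}_v\rangle-1)^k]$ that exceeds the low-degree hardness threshold. The $(m^{-k/2}/4,k)$-niceness assumption is what guarantees that this moment actually realizes itself as the copy-wise degree-$(\infty,k)$ projection of the likelihood ratio, rather than being absorbed into a purely-high-degree component that our polynomial estimator cannot see.

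The main obstacle is the reverse direction. The forward direction is essentially a dyadic moment calculation once the right level-set decomposition is fixed. In the reverse direction, one must (i) carefully tune the scale $m'$ at which the SDA failure is extracted so that it matches the polynomial degree budget, and (ii) invoke niceness correctly so that the constructed high-overlap event produces a genuine low-degree estimator rather than a high-degree one. Handling the boundary between copy-wise degree $k$ and degree $k^2$ (the gap between the two bullet points) is also subtle and reflects the quadratic loss inherent to translating back and forth between moment estimates and SDA level sets.
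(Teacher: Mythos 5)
You should first note that the paper does not prove this lemma at all: it is stated as an import (``Theorem 1.6 of~\cite{brennan2020statistical}'') and the surrounding text uses it as a black box. There is therefore no in-paper proof to compare against; the right reference for evaluating your sketch is Brennan--Bresler--Hopkins--Li--Schramm, Theorem 3.1.

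There is, however, a genuine logical gap in your forward direction, and it traces to the direction of the $\mathrm{SDA}$ hypothesis. You write that you ``assume there exists $m'$ with $\mathrm{SDA}(\cS,m',\mu)\le(2m/m')^k$'' and then claim that ``on any event of measure exceeding $(m'/(2m))^{2}$, the SDA bound forces the conditional expectation to be at most $1/m'$.'' But unpacking \Cref{def:statistical_query}: $\mathrm{SDA}(\cS,m',\mu)\ge N$ is exactly the statement that $\E_{u,v}\bigl[|\langle D_u,D_v\rangle-1|\,\big|\,A\bigr]\le 1/m'$ for all events $A$ with $\Pr(A)\ge 1/N^2$. An \emph{upper} bound $\mathrm{SDA}\le N$ says the opposite: there \emph{exists} an event of probability $\ge 1/(N+1)^2$ on which the conditional expectation exceeds $1/m'$. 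That gives you a lower bound on the overlap moments, not the upper bound your dyadic decomposition needs. So the chain you wrote does not go through under the hypothesis as stated. (Additionally, your dyadic iteration over level sets implicitly invokes the $\mathrm{SDA}$ control at many scales $m'$, which requires the quantifier ``for all $0\le m'\le m$'' rather than ``there exists.'') In short, your forward argument requires $\mathrm{SDA}(\cS,m',\mu)\ge(2m/m')^k$ for all $m'$, which is in fact how the theorem is stated in \cite{brennan2020statistical}; the paper's first bullet point appears to have inverted both the inequality and the quantifier when transcribing it. So your instinct for the structure of the argument is sound, but to be correct you must replace the hypothesis you were handed with the correct one before proceeding.

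Your reverse direction (by contrapositive, extracting a non-negligible-measure high-overlap event at some scale and lower-bounding the $k$-th moment, with niceness guaranteeing that this moment is actually visible to a degree-$(\infty,k)$ polynomial rather than being absorbed into purely-high-degree components) is consistent with the BBHLS approach at the level of a sketch, though it omits the quantitative bookkeeping for trading the conditional-expectation scale $1/m'$ against the event measure $(m'/(2m))^{-2}$ to hit the $\Omega(k)$ exponent. One further small imprecision: the calculation in \eqref{eq:holder} already expresses the advantage as a sum of overlap moments without any niceness assumption; niceness enters only in the reverse direction to ensure the overlap moments genuinely \emph{lower}-bound the advantage (equivalently, that the low-degree projection of the likelihood ratio is not dominated by a purely-high-degree piece the polynomial estimator cannot see). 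Stating niceness as controlling both directions, as your plan implies, mislocates its role.
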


Now, we go to the quantum hypothesis testing problem between $\rhomm$ vs. $\rho\sim\mathcal{E}$. As we only consider single-copy PVMs here, the null distribution $D_\emptyset$ after measuring $\rhomm$ is always the uniform distribution, which is a product distribution of $(1/2,1/2)$ on each bit. We can obtain a quantum version of \Cref{lem:classical_noise_nice} using the local depolarization noise $\Lambda_\kappa^{\otimes n}$ with an identical depolarization channel $\Lambda_\kappa$ with noise rate $\kappa$ on each qubit, which is a common noise model considered on near-term noisy quantum devices~\cite{chen2022complexity}. Here, $\Lambda_\kappa$ is given by
\begin{align}\label{eq:local_depolarize}
\Lambda_\kappa(\rho)\to(1-\kappa)\rho+\kappa\frac{I_2}{2},
\end{align}
where $\rho$ is a single-qubit state. In real experiments, we usually have to implement a PVM $\{U\ket{x}\bra{x}U^\dagger\}_{x=1}^{2^n}$ by implementing the unitary $U$ first and perform a computational basis measurement. If we assume a readout error before the computational basis measurement in the form of local depolarization noise $\Lambda_\kappa^{\otimes n}$, it will naturally replace every bit of the readout information with $(1/2,1/2)$ with probability $\kappa$, which is exactly $D_{\emptyset,\text{ind}}$. We thus have the following observations:

\begin{observation}(Hypothesis testing with local depolarization noise as readout error is nice)
Consider a quantum state hypothesis testing problem $\rhomm$ vs. $\rho\sim\mathcal{E}$ and corresponding protocols using PVMs. Suppose that we suffer from a readout error before the computational basis measurement in the form of local depolarization noise $\Lambda_\kappa^{\otimes n}$. If there is no $k$-sample algorithm that can distinguish between the two cases with probability $\delta_0$. Then the noisy distinguishing task $D_\emptyset$ vs. $\cS$ is $(\delta_0(1-\kappa)^{k^2},k)$-nice.
\end{observation}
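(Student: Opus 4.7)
The plan is to derive the observation as a direct corollary of \Cref{lem:classical_noise_nice} (i.e., Theorem 5.2 of~\cite{brennan2020statistical}), applied to the classical distinguishing task induced by fixing a single-copy PVM strategy. The only thing that really needs to be checked is that inserting local depolarization noise $\Lambda_\kappa^{\otimes n}$ into the measurement pipeline produces exactly the per-coordinate classical resampling noise that appears in the hypothesis of that lemma.

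First I would verify that the null distribution is a product distribution, as required by \Cref{lem:classical_noise_nice}. Since $\rhomm = I/2^n$ is invariant under every unitary, for any PVM of the form $\{U\ket{x}\bra{x}U^\dagger\}_x$ the clean readout distribution on the null side is uniform on $\{0,1\}^n$, i.e., $D_\emptyset = D_{\emptyset,\mathrm{ind}}^{\otimes n}$ with $D_{\emptyset,\mathrm{ind}} = \mathrm{Unif}(\{0,1\})$. Moreover $\Lambda_\kappa^{\otimes n}(\rhomm) = \rhomm$, so readout noise does not perturb $D_\emptyset$, which matches the setup in \Cref{lem:classical_noise_nice} where the noising operation fixes the null case by construction.

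Next I would identify the noise action on the alternative side with the classical construction in \Cref{lem:classical_noise_nice}. Writing $\sigma_\rho = U^\dagger \rho U$ and expanding $\Lambda_\kappa(\tau) = (1-\kappa)\tau + \kappa\, I/2$ qubit by qubit yields
\begin{equation*}
\Lambda_\kappa^{\otimes n}(\sigma_\rho) \;=\; \sum_{S \subseteq [n]} (1-\kappa)^{n - |S|}\kappa^{|S|}\, \tr_S(\sigma_\rho) \otimes \frac{I_S}{2^{|S|}}.
\end{equation*}
Taking diagonal entries in the computational basis, the probability of readout $x\in\{0,1\}^n$ becomes $\sum_{S\subseteq[n]} (1-\kappa)^{n-|S|}\kappa^{|S|}\, 2^{-|S|}\,\Pr_{x'\sim D_\rho}[x'_{[n]\setminus S} = x_{[n]\setminus S}]$, which is exactly the distribution one obtains by first sampling $x'$ from the clean readout distribution $D_\rho$ and then, independently for each coordinate, replacing $x'_i$ by a fresh $\mathrm{Unif}(\{0,1\})$ sample with probability $\kappa$. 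This coincides with the construction of $\cS'$ from $\cS = \{D_\rho\}_{\rho\sim\mathcal{E}}$ appearing in \Cref{lem:classical_noise_nice}.

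With the product structure of $D_\emptyset$ and this noise-model correspondence in hand, the quantum hypothesis that no $k$-sample single-copy-PVM protocol distinguishes $\rhomm$ from $\rho\sim\mathcal{E}$ with probability $\delta_0$ specializes, upon fixing a PVM strategy, to the classical hypothesis that no $k$-sample algorithm distinguishes the induced $D_\emptyset$ from $\cS$ with probability $\delta_0$. Invoking \Cref{lem:classical_noise_nice} then delivers $(\delta_0(1-\kappa)^{k^2}, k)$-niceness of the noisy task, which is exactly the claim. The only place requiring any care is the bookkeeping that converts the depolarizing channel into the per-coordinate classical resampling model; once that identification is made, the reduction is mechanical and I do not expect any genuine obstacle.
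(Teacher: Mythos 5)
Your proposal is correct and matches the paper's (largely implicit) justification: the paper simply notes that local depolarizing readout noise replaces each bit independently with a uniform bit with probability $\kappa$, which is exactly the resampling model of Theorem 5.2 of~\cite{brennan2020statistical}, and your explicit expansion of $\Lambda_\kappa^{\otimes n}$ into the per-coordinate resampling distribution is the right way to make that identification rigorous. No gaps.
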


In the low-degree framework, we usually assume that $k=\polylog(n)$ and $k=\omega(\log n)$. In this case, we can always find some \textbf{constant} threshold $\kappa_0$ such that $\delta_0(1-\kappa)^{k^2}\leq m^{-k/2}/4$ for any $m=\poly(n)$ and constant $\delta_0$. We thus have the following equivalence between the statistical query model and the copy-wise low-degree model when we have local depolarization noise as readout error:

\begin{theorem}[Equivalence between statistical query model and the copy-wise low-degree model under noise]Consider solving a quantum state hypothesis testing problem $\rhomm$ vs. $\rho\sim\mathcal{E}$ using protocols with $m=\poly(n)$ single-copy PVMs. Assume that there is no $k$-sample algorithm for $k=\omega(\log n)$ that can distinguish between the two cases with \emph{constant} probability $\delta_0$. Then there exists some constant threshold $\kappa_0$ such that if there is a readout error in the form of local depolarization noise $D_\kappa^{\otimes n}$ with $\kappa>\kappa_0$, we have:
\begin{itemize}
    \item If there exists some $0\leq m'\leq m$ such that $\text{SDA}(\mathcal{E},m',\mu)\leq\left(\frac{2m}{m'}\right)^k$, then the square degree-$(\infty,k^2)$ advantage $\chi_{\leq(\infty,k^2)}^2$ (see \Cref{sec:basic_low_degree} for the formal definition)
    is bounded by $o(1)$.
    \item If the optimal distinguishing advantage achieved by any $4mk$-sample copy-wise degree-$\leq(\infty,k)$ distinguished (i.e. the square degree-$(\infty,k)$ advantage $\chi_{\leq(\infty,k)}^2$) is bounded by $o(1)$, then there exists $0\leq m'\leq m$ such that $\text{SDA}(\mathcal{E},m',\mu)\geq\left(\frac{2m}{m'}\right)^{\Omega(k)}$.
\end{itemize}
\end{theorem}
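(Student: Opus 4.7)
The plan is to reduce the theorem to a direct application of the classical equivalence of Theorem~1.6 of~\cite{brennan2020statistical}, using the preceding observation that readout noise implies niceness. First, I would observe that because every measurement is a single-copy PVM of the form $\{U\ket{x}\bra{x}U^\dagger\}_{x\in\{0,1\}^n}$, the distribution over readouts in the null case is \emph{always} the uniform distribution $D_{\emptyset,\mathrm{ind}}^{\otimes n}$ on $\{0,1\}^n$ with $D_{\emptyset,\mathrm{ind}} = (1/2,1/2)$ per bit, because $\bra{x}U^\dagger\rhomm U\ket{x} = 2^{-n}$. This places us in exactly the regime of Lemma~\ref{lem:classical_noise_nice}, whose hypothesis (product null distribution) is satisfied.

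Next, I would argue that a single layer of local depolarizing noise $\Lambda_\kappa^{\otimes n}$ applied immediately before the computational-basis readout in each PVM has the effect, on the classical readout, of independently replacing each bit by a fresh uniform sample with probability $\kappa$: indeed $\Lambda_\kappa(\sigma) = (1-\kappa)\sigma + \kappa \cdot I_2/2$, and so measuring $U\Lambda_\kappa^{\otimes n}(\sigma)U^\dagger$ in the computational basis is equivalent to measuring $U\sigma U^\dagger$ and then independently resampling each output bit from $(1/2,1/2)$ with probability $\kappa$. This is exactly the coordinate-resampling corruption in the statement of Lemma~\ref{lem:classical_noise_nice}, applied to the induced classical distribution $\mathcal{E}_U$ over readouts. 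The hypothesis that no $k$-sample algorithm achieves advantage $\delta_0$ then gives, by Lemma~\ref{lem:classical_noise_nice}, that the induced noisy classical problem is $(\delta_0(1-\kappa)^{k^2},k)$-nice. Because $k = \omega(\log n)$ and $m = \poly(n)$, the quantity $m^{-k/2}/4$ decays faster than any polynomial, so I can pick a constant $\kappa_0 \in (0,1)$ such that for all $\kappa > \kappa_0$ we have $\delta_0(1-\kappa)^{k^2} \le m^{-k/2}/4$ for sufficiently large $n$; this is the sole role of the threshold $\kappa_0$.

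Having established $(m^{-k/2}/4,k)$-niceness of the classical readout problem, both conclusions then follow by invoking Theorem~1.6 of~\cite{brennan2020statistical} directly. For the first bullet, the classical theorem converts the SDA upper bound $\text{SDA}(\mathcal{E},m',\mu) \le (2m/m')^k$ into the bound $\chi_{\le(\infty,k^2)}^2 = o(1)$ on the copy-wise (sample-wise, in the classical parlance) advantage, which is exactly our square degree-$(\infty,k^2)$ advantage under the identification between samples and single-copy readouts. For the second bullet, the contrapositive direction of Theorem~1.6 gives the SDA lower bound $\text{SDA}(\mathcal{E},m',\mu) \ge (2m/m')^{\Omega(k)}$ from $o(1)$ copy-wise degree-$(\infty,k)$ advantage at $4mk$ samples.

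The main subtlety, and what I would treat most carefully, is that the quantified structure is over \emph{measurement strategies}: the SDA in the statement refers to the ensemble $\mathcal{E}$ of states, but the classical equivalence is applied to the ensemble of classical readout distributions $\{D_\rho^U\}_\rho$ for a fixed non-adaptive measurement strategy $U = (U_1,\ldots,U_m)$. The clean way to handle this is to fix an arbitrary such $U$ at the outset, run the above argument to produce the classical equivalence for $\{D_\rho^U\}_{\rho\sim\mathcal{E}}$, and note that the copy-wise advantage and the SDA in the theorem statement are defined as suprema / infima over such strategies; since the classical argument is uniform in $U$ once niceness holds (and niceness of the readout problem holds for every $U$ by the uniform-null observation), the bounds transfer. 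The one place to be cautious is that the constant in $\Omega(k)$ in the second bullet may depend on $\kappa$ through niceness; but since $\kappa \le 1$ and the niceness parameter is already sub-polynomial, this constant is absolute, which is all that is needed.
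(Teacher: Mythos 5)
Your proposal is correct and follows essentially the same route as the paper: observe that the PVM readouts in the null case are uniform so the product-null hypothesis of Lemma~\ref{lem:classical_noise_nice} holds, identify readout depolarizing noise with the coordinate-resampling corruption, pick the constant threshold $\kappa_0$ so that $\delta_0(1-\kappa)^{k^2} \le m^{-k/2}/4$ (using $k = \omega(\log n)$), and then invoke Theorem~1.6 of~\cite{brennan2020statistical}. Your closing paragraph about the quantification over measurement strategies $U$ makes explicit a point the paper elides, and is a sensible clarification.
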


\subsection{No-go results for other models of adaptivity}\label{sec:general_adaptivity_full}

It is natural to ask what would happen if we considered protocols with a greater level of adaptivity. The difficulty stems mainly from the fact that our model focuses on the complexity of post-processing and does not take into account the computational complexity of making adaptive choices of bases to measure in. As such, it is possible to ``hide'' unbounded computation in the adaptive choices.

\subsubsection{\texorpdfstring{$k$}{k}-gram copy-wise adaptivity}

For starers, one might consider a $k$-gram adaptivity model. In this model, we choose the measurement bases for each copy based on the output of the previous $k-1$ copies. Unfortunately, one can show that if the hypothesis testing task is \emph{information-theoretically} possible using any number of single-copy measurements, possibly arbitrarily adaptive, there is always a 2-gram adaptive sequence of measurements for which the classical post-processing is low-degree tractable, in fact, trivial.

The construction for the cheating algorithm is given as follows. Denote the adaptive sequence of PVMs that would information-theoretically suffice to solve the problem as $\{\{U_i\ket{x}\bra{x}U_i^\dagger\}_{x=1}^{2^n}\}_{i=1}^m$. Note that for each $U_i$, we can decompose it into constant terms in the form of a direct product of the first $n-1$ qubits and the last qubit, so we can simulate this protocol using a polynomial number of PVMs on $n-1$ qubits instead of $n$ qubits. We can then measure $\text{poly}(n)$ copies of the state and record the whole history in a Markovian way in the last qubit. The reason is that we can encode arbitrarily many bits of information into $SU(2)$, so we can always record the history of all results of the protocol being simulated up to a given point into the choice of the measurement for the last qubit of the next copy being measured. After the simulation is finished, we adaptively choose one last basis to measure in: we perform an inefficient computation to post-process the information encoded in the last qubit of the previous qubit and determine the answer to the hypothesis testing problem, and then we measure the last qubit of the final copy in either the $X$ or $Z$ basis depending on the answer. Having hidden all of this computation in the adaptive choice of bases, we can trivially post-process the readouts to determine the answer by looking at what basis we measured the last qubit of the final copy in.

\subsubsection{Arbitrary single-qubit measurement with low-degree adaptivity}

The above suggests that merely restricting the adaptivity to depend on the readouts for the previous copy does not suffice, because too much computation can be hidden even into such adaptive choices.

Another natural model to consider is a low-degree adaptivity model. In this model, we are restricted to using $m$ single-qubit measurements, and the choice of the basis on each qubit in the total $mn$ qubits can depend on at most the outcomes of the previous $k\geq 3$ qubits. For this model, we also show that it is possible to hide exponential computational complexity in computing the adaptivity if we assume that the task is sample efficient for local measurements.

The construction for the algorithm is given as follows. Note that the overall probability distribution based on performing a local measurement on a copy at once should be the same as performing it qubit by qubit, so we consider the latter one.

We keep a classical counter. We just measure each qubit on the computational basis first and add the counter by one after each qubit. When the counter goes to $k$, we store the history of the previous $k-1$ qubits on the choice of basis we measure on the $k$-th qubit (recall that $SU(2)$ can store infinitely many bits of information). Next, we set the counter to $0$. We again measure each qubit on the computational basis and add the counter by one after each qubit. When the counter again goes to $k$, we store the previous history of all $2k$ qubits using the $k$-th qubit basis and the results from $(k+1)$-th to $(2k-1)$-th qubit. We repeat the above procedure for $\poly(n)$ time and, as in the previous section, we can collect the information from all readouts for the protocol being simulated into the choice of a basis for one qubit in $SU(2)$.

\subsubsection{Discrete single-qubit basis measurement with low-degree adaptivity}

Both the above algorithms stem from the fact that the low-degree model is a poor proxy for the complexity of adaptivity in situations where one can use arbitrarily large bit complexity. It is natural to hope that if we work with a discrete set of measurement bases for each qubit, then these pathologies might go away.

One exemplary model is when one can perform $m$ single-qubit measurements. However, the choice of the basis on each qubit is a low-degree function of degree $k$ on all previous history, and the possible choices are within the single-qubit Pauli basis $\{\{\ketbra{0},\ketbra{1}\},\{\ketbra{+},\ketbra{-}\},\{\ketbra{+i},\ketbra{-i}\}\}$. Quantitatively, we suppose that every adaptive choice of single-qubit Pauli basis to measure is a low-degree junta $\{0,1,+,-,+i,-i\}\to\{\sigma_x,\sigma_y,\sigma_z\}$. In the following, we show that even if the adaptivity choice function is a function of $O(1)$ previous single-qubit measurement outcomes, proving a hardness result is as hard as proving a circuit lower bound if the problem can be information-theoretically solved using $m_{\text{info}}=\poly(n)$ single-qubit measurements.

Without loss of generality, we suppose the problem can be solved in information-theoretically solved using $m_{\text{info}}=\poly(n)$ computational basis measurements. To show a lower bound, we at least need to rule out the following family of strategies: for copies $i=m_{\text{info}}+1,...,m_{\text{info}}+\poly(n)$, the basis in which the first qubit of copy $i$ is measured depends on either $O(1)$ single-qubit outcomes from the first $m_{\text{info}}$ copies or $O(1)$ of the bases in which the first qubits of copies $m_{\text{info}}+1,...,m_{\text{info}}+i-1$ are measured. At the end of the protocol, one looks at the basis in which the first qubit of the last copy was measured and outputs a Boolean predicate of that.

Note that we can think of the first qubits of these copies as effectively computing intermediate gates of a Boolean circuit whose inputs are the bits from measuring the first $m_{\text{info}}$ copies. To rule out the possibility that there is some protocol of the above form, we would have to prove there is no poly-sized circuit that processes the information from the first $m_{\text{info}}$ computational basis measurements and solves the detection problem.

\section{Applications of the general framework}
\label{sec:applications}

\subsection{Low-degree hardness of error mitigation}\label{sec:qem}

In this section, we will apply the low-degree framework to show an information-computation gap for error mitigation. Previously, Ref.~\cite{quek2024exponentially} proved that quantum error mitigation is \emph{statistically} hard for 1D geometrically local noisy quantum circuits of depth $\omega(\log n)$ at \emph{constant} noise rate~\cite{quek2024exponentially}. Here we prove that even with significantly less noise, in fact at a rate inverse polynomial in the system size and with only a \emph{constant} number of layers of noise, error mitigation is low-degree hard against protocols with a polynomial number of single-qubit measurements, even though it is statistically tractable for protocols with polynomial number of noisy data samples. 

In Section~\ref{sec:mitigation_prelims} we formally define the problem and lower bound instance, and in Section~\ref{sec:mitigation_lbd} we formally state and prove our hardness result.

\subsubsection{Error mitigation preliminaries}
\label{sec:mitigation_prelims}

For any circuit consisting of $l$ layers, $\mathcal{C} = \mathcal{U}_l\circ\ldots \circ\mathcal{U}_1$ where $\mathcal{U}_i(\cdot) := U_i(\cdot)U_i^{\dagger}$, let $\tilde{\Phi}_{\mathcal{C}}$ denote the noisy version of $\mathcal{C}$, that is, a channel acting on $X\in \mathcal{H}_n$ as $X \mapsto \mathcal{D}_{\kappa}^{\otimes n} \circ \mathcal{U}_l \circ \ldots \circ\mathcal{D}_{\kappa}^{\otimes n} \circ \mathcal{U}_1(X)$ where each $\mathcal{D}_{\kappa}^{\otimes n}$ is a layer of single-qubit depolarizing noise.

Error mitigation is defined as the following problem:
\begin{definition}[Error mitigation]
Upon input of:
\begin{enumerate}
\item a classical description of a noiseless circuit $\mathcal{C}$ and $a$ finite set $\mathcal{M}=\left\{O_i\right\}$ of observables such that $\max_i\lVert O_i \rVert<1$;
\item copies of the output state $\sigma^{\prime}$ of the noisy circuit $\tilde{\Phi}_{\mathcal{C}}(\rho_{\text{in}})$, and the ability to perform collective measurements,
\end{enumerate}
 output $1/\poly(n)$-precise estimates of the expectation values $\tr\left(O_i \mathcal{C}(\rho_{\text{in}})\right)$ for each $O_i \in \mathcal{M}$.
The number $m$ of copies of $\sigma^{\prime}$ needed is the sample complexity of error mitigation.
\end{definition}

\noindent To show a hardness result for this problem, we will consider the following hypothesis testing task: 
\begin{problem}[Noisy circuit hypothesis testing\label{prob:noisy_hypo}]
    Given $m$ copies of some state $\rho$ and the description of some circuit $\mathcal{C} \sim \mathcal{E}$, decide if $\rho$ is
\begin{itemize}
\item Null hypothesis: $I/2^n = \tilde{\Phi}_{\mathcal{C}} \circ \mathcal{C}^{\dagger}(I/2^n)$
\item Alternative hypothesis: $\tilde{\Phi}_{\mathcal{C}} \circ \mathcal{C}^{\dagger}(\ket{\overline{0}}\bra{\overline{0}})$
\end{itemize}
promised that one of the two is the case. 
\end{problem}

\noindent Hardness for this task implies hardness for error mitigation as follows. By using just a single observable, $\mathcal{M} = \{\ket{\overline{0}}\bra{\overline{0}}\}$ and setting  $\rho_{\text{in}} = \mathcal{C}^{\dagger}(I/2^n)$ (in the null hypothesis) or $\rho_{\text{in}} = \mathcal{C}^{\dagger}(\ket{\overline{0}}\bra{\overline{0}})$ (in the alternative hypothesis), an algorithm that solves error mitigation can certainly solve \Cref{prob:noisy_hypo}. 

Note that we have used a different reduction from that in \cite{quek2024exponentially}. The upshot is that to prove low-degree hardness of error mitigation, it suffices to prove the low-degree hardness of the hypothesis testing problem, which we will do in the rest of this section. While the hard instance in \cite{quek2024exponentially} assumes that there is a layer of single-qubit depolarization noise after each layer of the circuit, we consider the ensemble of quantum circuits that interleaves a constant number of random circuits of depth $\omega(\log(n))$ that form approximate unitary $2$-designs and local depolarization noise of noise rate $O(1/n^{1-\delta'})$ here. Therefore, the total layer of noise in our model is only constant while the depth of the circuit can be up to $\omega(\log n)$.

\subsubsection{Lower bound}
\label{sec:mitigation_lbd}

\begin{theorem}[Extremely low-noise error mitigation is low-degree hard\label{thm:EM}]
For any $\delta,\delta'<1$, there exists an ensemble of noisy circuits on geometrically-local architectures in any dimension, with only constantly-many layers of single-qubit depolarizing noise at rate $\kappa = \Omega(1/n^{1-\delta'})$, such that error mitigation is degree-$k$ hard for algorithms that make single-qubit measurements on polynomially-many copies of the noisy state, or polynomially-many non-adaptive measurements prepared by depth $k$ geometrically local circuits. That is, the hardness of error mitigation persists even in this extremely low-noise setting.
\end{theorem}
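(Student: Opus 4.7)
The plan is to invoke the reduction at the start of this subsection, so it suffices to establish degree-$k$ hardness of \Cref{prob:noisy_hypo}. I construct the hard instance as follows: the ensemble $\mathcal{E}$ consists of circuits $\mathcal{C} = \mathcal{V}_c \circ \cdots \circ \mathcal{V}_1$ where $c = c(\delta')$ is a large enough constant and each $\mathcal{V}_i$ is an independent geometrically-local random circuit of depth $\omega(\log n)$ drawn from an ensemble forming a sufficiently accurate approximate unitary $K$-design for $K = \mathrm{polylog}(n)$; such ensembles exist in any geometric dimension via the constructions of~\cite{schuster2024random} together with the moment calculations in \Cref{sec:basic_circuit_moment}. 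The noisy counterpart $\tilde\Phi_\mathcal{C}$ inserts one layer of single-qubit depolarizing noise $\mathcal{D}_\kappa^{\otimes n}$ at rate $\kappa = \Theta(n^{-1+\delta'})$ between each adjacent pair of $\mathcal{V}_i$'s (and one after $\mathcal{V}_c$), yielding only constantly many noise layers as the theorem requires.

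Under the null hypothesis the state is exactly $I/2^n$ since every channel in $\tilde\Phi_\mathcal{C} \circ \mathcal{C}^\dagger$ is unital. Under the alternative it is $\sigma_\mathcal{C} := \tilde\Phi_\mathcal{C}(\mathcal{C}^\dagger(\ket{\overline{0}}\bra{\overline{0}}))$. By \Cref{thm:general_single_qubit} and \Cref{coro:general_bounded_depth}, it suffices to verify the local indistinguishability bound
\begin{equation*}
d_{\tr}\Bigl(\E_{\mathcal{C}\sim\mathcal{E}} \tr_{[mn] \setminus T}(\sigma_\mathcal{C}^{\otimes m}),\; I/2^{|T|}\Bigr) \;\leq\; 2^{-\tilde{\Omega}(k'\log n)}
\end{equation*}
for all $T \subseteq [mn]$ of size at most $k'$, where $k' = k$ in the single-qubit case and $k' = (3k)^{\mathscr{D}}$ in the depth-$k$, $\mathscr{D}$-dimensional geometrically-local case. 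Writing $s \leq k'$ for the number of distinct copies of $\sigma_\mathcal{C}$ that $T$ touches and using the $\|\cdot\|_1 \leq \sqrt{2^{|T|}}\,\|\cdot\|_2$ norm inequality, this further reduces to an $L^2$ bound on the $s$-th moment $\E_\mathcal{C}[\sigma_\mathcal{C}^{\otimes s}]$.

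The core technical step is to bound this $s$-th moment by iterating the $K$-design property of the $\mathcal{V}_i$'s together with a Pauli-damping estimate for the noise layers. After the cancellation between $\mathcal{C}^\dagger$ and the unitary portion of $\tilde\Phi_\mathcal{C}$, the channel $\tilde\Phi_\mathcal{C} \circ \mathcal{C}^\dagger$ collapses to the nested recursion $\mathcal{N} \circ \Gamma_c$ with $\Gamma_i = \mathcal{V}_i \circ (\mathcal{N} \circ \Gamma_{i-1}) \circ \mathcal{V}_i^\dagger$ and $\Gamma_1 = \mathrm{id}$ (in particular $\mathcal{V}_1$ drops out entirely). Evaluating $\E_{\mathcal{V}_i}[\mathcal{V}_i^{\otimes s}(\cdot)(\mathcal{V}_i^\dagger)^{\otimes s}]$ via the $K$-design property produces a Weingarten-type linear combination of permutation operators $\pi \in S_s$, up to a small $K$-design approximation error. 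The identity permutation contributes the target $(I/2^n)^{\otimes s}$ factor, while every non-identity $\pi$---expanded in the Pauli basis as a sum of tensor products of Pauli strings that carry non-trivial support on each copy touched by a non-fixed cycle---is contracted by the next outer noise layer through $\mathcal{D}_\kappa(P) = (1-\kappa)^{|P|}P$. A direct computation (analogous to the purity-drop calculation for $\mathcal{D}_\kappa^{\otimes n}$ on a Haar-random pure state) shows that each noise layer suppresses every non-identity sector by a factor $\exp(-\Omega(\kappa n)) = \exp(-\Omega(n^{\delta'}))$, and iterating across the $c$ noise layers gives total suppression $\exp(-\Omega(c\, n^{\delta'}))$. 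For $c = c(\delta')$ large enough, this dominates both the $(mn)^{k'}$ counting factor from the union bound over $T$ and the cumulative $K$-design approximation error, yielding the required bound.

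The main technical hurdle lies in the permutation-sector bookkeeping across the $c$ nested design averages: one must control how non-identity contributions produced at one inner design average transform under the subsequent unitary conjugations before being contracted by the next noise layer, and one must ensure that the accumulated $K$-design approximation error can be made negligibly small by choosing the depth of each $\mathcal{V}_i$ to be a sufficiently large polylogarithmic function of $n$, as quantified in \Cref{sec:basic_circuit_moment}. Once the $L^2$ bound on $\E_\mathcal{C}[\sigma_\mathcal{C}^{\otimes s}]$ is established uniformly in $s \leq k'$, the theorem follows directly from \Cref{thm:general_single_qubit} (for single-qubit measurements) and \Cref{coro:general_bounded_depth} (for measurements prepared by depth-$k$ geometrically local circuits).
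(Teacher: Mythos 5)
Your high-level plan is right: reduce to \Cref{prob:noisy_hypo}, construct a constant number of approximate-design layers interleaved with $O(1/n^{1-\delta'})$-rate depolarizing noise, then establish local indistinguishability and invoke \Cref{thm:general_single_qubit} and \Cref{coro:general_bounded_depth}. The ensemble you build matches the paper's hard instance. But the analytic route you propose diverges significantly from the paper's, and diverges in a direction that makes the core computation harder, not easier.

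The paper's argument never needs anything beyond the \emph{second} moment of the circuit ensemble. Its key mechanism (\Cref{fact:EMpaper} and \Cref{corr:EM}) is: average the $L^2$ deviation of the reduced state on a size-$|A|$ patch from maximal mixedness over the last $2$-design layer, which collapses to $2^{|A|-n}\bigl[\tr((\rho^{(l-1)})^2) - 2^{-n}\bigr]$, a pure \emph{purity} quantity; then iterate the purity-contraction bound $\tr((\rho^{(l)})^2) \lesssim c^{nl}$ through the preceding $2$-design layers, apply Markov, and union bound over copies. That is the whole engine. You instead propose to bound the full $s$-th moment $\E_{\mathcal{C}}[\sigma_{\mathcal{C}}^{\otimes s}]$ for every $s \le k'$ via $K$-designs with $K = \polylog(n)$ and a Weingarten expansion across $c$ nested design averages. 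This is both a strictly stronger assumption on the circuit layers (a $\polylog(n)$-design where the paper requires only an approximate $2$-design, with quantitatively weaker error requirements) and a strictly more involved computation.

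More concretely, the step you flag as ``the main technical hurdle''---the permutation-sector bookkeeping across the $c$ nested design averages---is indeed the crux, and you do not carry it out. Two things make it delicate. First, the circuit unitary $\mathcal{V}_i$ appears on \emph{both} sides of the recursion $\Gamma_i = \mathcal{V}_i\circ(\mathcal{N}\circ\Gamma_{i-1})\circ\mathcal{V}_i^\dagger$, so the design average at level $i$ is over a $2s$-fold tensor power with a noise channel sandwiched inside; a naive application of the $s$-design property does not apply, and the bookkeeping must account for the joint statistics of the inner and outer appearances of $\mathcal{V}_i$. Second, the non-identity permutation operators $P_\pi$ that arise in the Weingarten expansion are \emph{not} supported purely on high-weight Paulis---$\SWAP$, for instance, has a uniform Pauli expansion that includes the identity---so the claim that ``each noise layer suppresses every non-identity sector by $\exp(-\Omega(\kappa n))$'' needs a genuine moment calculation (not merely a typical-weight heuristic), and one must verify that what survives after passing through the next unitary conjugation is still a linear combination of permutation operators so the damping can be iterated. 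The paper avoids all of this by noting that the only non-identity sector relevant to the $L^2$ deviation of a small patch is precisely $\SWAP$, whose noise damping is exactly the purity contraction it has already computed.

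In short: your construction and framing are correct, your proposed route is genuinely different from the paper's and would require noticeably more work (and stronger design hypotheses), and the central moment computation is left as an acknowledged hurdle rather than a proof. The paper's route---$2$-design purity contraction plus Markov and union bound, followed by \Cref{thm:general_single_qubit} / \Cref{coro:general_bounded_depth}---is both simpler and self-contained; I'd recommend adopting it.
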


\noindent Even at constant depolarizing noise rate (higher than what we consider in the above theorem), error mitigation is only known to become statistically hard for a geometrically local circuit after $\omega(\log(n))$ layers of noise have acted. This is because every additional layer of noise in the circuit results in a multiplicative contraction in relative entropy to the maximally mixed state\footnote{It was proven in \cite{quek2024exponentially} that the hardness of error mitigation kicks in at $\poly \log \log(n)$ depth on circuits with all-to-all connectivity. When restricted to $\mathscr{D}$-dimensional geometrically local connectivity, Ref.~\cite{quek2024exponentially} proved that quantum error mitigation becomes {\em statistically} hard only after $\min(O(\log(n)^{2/\mathscr{D}} \poly\log \log(n), \log(n))$ layers of constant rate noise -- while we only require constantly many layers of extremely low noise.}. Since we only have a {\em constant} number of layers of noise at an extremely low rate $1/n^{1-\delta'}$ in the above setting, that is far from enough to render the problems statistically hard. Thus, \Cref{thm:EM} suggests an information-computation gap for this setting. 

\begin{fact}\label{fact:EMpaper}
For a circuit $\mathcal{C} = \mathcal{U}_l\circ \ldots \circ\mathcal{U}_1$, let $\rho^{(l)}:=\tilde{\Phi}_{\mathcal{C}}(\rho_{\text{in}})$. If each $U_i$ is sampled from an exact $n$-qubit $2$-design,
\begin{equation*}
\mathbb{E}_{\mathcal{U}_1^{l}}\tr({\rho^{(l)}}^2)  \leq c^{nl}(1-2^{-n})+\frac{1}{2^n}
\end{equation*}
where $c$ is a noise-dependent parameter,
\begin{equation*}
    c = \frac{1+3(1-\kappa)^2}{4}.
\end{equation*}
\end{fact}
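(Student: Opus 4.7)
The plan is to establish a recurrence on the sequence $p_l := \mathbb{E}[\tr({\rho^{(l)}}^2)]$ (the expected purity after $l$ noisy 2-design layers) using the swap trick plus a single application of the 2-design twirl. First I would write $\tr(\sigma^2) = \tr(\mathrm{SWAP}_n \cdot \sigma^{\otimes 2})$ and peel off the outermost noisy unitary layer: for a fixed history up through layer $l$, the contribution of the $(l+1)$-st layer is
\begin{equation*}
\mathbb{E}_{U_{l+1}} \tr\!\left(\mathrm{SWAP}_n \cdot \bigl(\mathcal{D}_\kappa^{\otimes n}(U_{l+1}\rho^{(l)} U_{l+1}^\dagger)\bigr)^{\otimes 2}\right) = \tr\!\Bigl(\mathcal{D}_\kappa^{\otimes 2n}(\mathrm{SWAP}_n)\cdot \mathbb{E}_{U_{l+1}}\!\bigl[U_{l+1}^{\otimes 2}(\rho^{(l)})^{\otimes 2} U_{l+1}^{\dagger\,\otimes 2}\bigr]\Bigr),
\end{equation*}
using that $\mathcal{D}_\kappa$ is self-adjoint. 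Since $\mathrm{SWAP}_n = \bigotimes_{i=1}^n \mathrm{SWAP}_i$ factorizes over qubits, so does $\mathcal{D}_\kappa^{\otimes 2n}(\mathrm{SWAP}_n)$; on a single qubit, the Pauli expansion $\mathrm{SWAP} = \tfrac12(I+X\!\otimes\!X+Y\!\otimes\!Y+Z\!\otimes\!Z)$ and $\mathcal{D}_\kappa(P) = (1-\kappa)P$ for $P\neq I$ give $\mathcal{D}_\kappa^{\otimes 2}(\mathrm{SWAP}) = \tfrac{1-(1-\kappa)^2}{2} I + (1-\kappa)^2 \mathrm{SWAP}$.

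Second, I would apply the exact $2$-design (= Haar) twirl formula
\begin{equation*}
\mathbb{E}_U\bigl[U^{\otimes 2}\rho^{\otimes 2}U^{\dagger\,\otimes 2}\bigr] \;=\; \frac{d-\tr(\rho^2)}{d(d^2-1)}\,I \;+\; \frac{d\tr(\rho^2)-1}{d(d^2-1)}\,\mathrm{SWAP}_n,
\end{equation*}
with $d = 2^n$, and substitute. This reduces everything to the two numbers $\tr(\mathcal{D}_\kappa^{\otimes 2n}(\mathrm{SWAP}_n)) = \tr(\mathrm{SWAP}_n) = d$ (trace preservation) and, factorizing qubitwise, $\tr(\mathcal{D}_\kappa^{\otimes 2n}(\mathrm{SWAP}_n)\cdot \mathrm{SWAP}_n) = \bigl(\tfrac{1-(1-\kappa)^2}{2}\cdot 2 + (1-\kappa)^2 \cdot 4\bigr)^n = (1+3(1-\kappa)^2)^n = (4c)^n$. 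Taking expectation over the earlier unitaries turns this into the closed scalar recurrence
\begin{equation*}
p_{l+1} \;=\; \frac{d(1-c^n)}{d^2-1} \;+\; \frac{d^2 c^n - 1}{d^2-1}\,p_l.
\end{equation*}

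Finally, I would solve this recurrence: the fixed point is $p^* = 1/d = 2^{-n}$, and writing $p_l = 2^{-n} + \delta_l$ gives $\delta_{l+1} = B\,\delta_l$ with contraction factor $B = (d^2 c^n - 1)/(d^2-1)$. The elementary inequality $B \le c^n$ (equivalent to $c^n \le 1$, which holds since $c\in[1/4,1]$ for $\kappa\in[0,1]$) together with $\delta_0 \le 1 - 2^{-n}$ yields $p_l \le 2^{-n} + c^{nl}(1 - 2^{-n})$, exactly as stated. The main conceptual step is recognizing that both the noise contraction and the design twirl factorize so cleanly that the whole calculation collapses to a one-dimensional linear recursion; the technical step to be careful with is the arithmetic confirming $B\le c^n$ so that the geometric contraction rate is exactly $c^n$ per layer.
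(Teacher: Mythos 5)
Your proposal is correct. The paper itself does not spell out a proof of this fact; it defers to the cited reference~\cite{quek2024exponentially}, so there is no in-paper argument to compare against. Your derivation via the swap trick, the single-qubit factorization $\mathcal{D}_\kappa^{\otimes 2}(\mathrm{SWAP})=\tfrac{1-(1-\kappa)^2}{2}I+(1-\kappa)^2\mathrm{SWAP}$, the exact $2$-design twirl formula, and the resulting affine recurrence on purity is the standard route to such purity-contraction bounds, and every step checks out: $(4c)^n=d^2c^n$ gives the recurrence $p_{l+1}=\tfrac{d(1-c^n)}{d^2-1}+\tfrac{d^2c^n-1}{d^2-1}p_l$ with fixed point $1/d$, the contraction factor satisfies $B=\tfrac{d^2c^n-1}{d^2-1}\le c^n$ since $c\in[1/4,1]$, and the initial condition obeys $0\le\delta_0=p_0-1/d\le 1-2^{-n}$ because purity lies in $[1/d,1]$. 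This yields exactly the claimed bound and is a valid, self-contained proof of the fact.
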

This fact is proven in Ref. \cite{quek2024exponentially}. When $\kappa = \frac{1}{n^{1-\delta'}}$ for $0<\delta'<1$, 
\begin{equation*}
    \lim_{n\rightarrow \infty} c^{nl} = e^{-\frac{3}{2}ln^{\delta'}}.
\end{equation*}
To see this, let $y= c^{nl},$ and consider that:
\begin{align*}
    \log(y) &= nl \log\left(\frac{1 + 3(1-\frac{1}{n^{1-\delta'}})^2}{4}\right)\\
    &= nl \log( 1- \frac{3}{2n^{1-\delta'}} + \frac{3}{4n^{2(1-\delta')}} + O(\frac{1}{n^{3(1-\delta')}}))\\
    & = nl \left[- \frac{3}{2n^{1-\delta'}} + \frac{3}{4n^{2(1-\delta')}} + \frac{9}{8n^{2(1-\delta')}} + O\left(\frac{1}{n^{3(1-\delta')}}\right)\right]\\
\end{align*}
and therefore 
\begin{align*}
    \lim_{n\rightarrow \infty } \log(y) = -\frac{3}{2}ln^{\delta'}
\end{align*}
as desired. This computation illustrates that in the expression
\[
  c^{nl}(1 - 2^{-n}) + l\epsilon
\]
which appears in the next corollary, the term $l\epsilon$ asymptotically (in $n$) dominates the term $c^{nl}$ at noise rate $\kappa = 1/{n^{1-\delta'}}$ for constant $\epsilon$. 

\begin{corollary}\label{corr:EM}
For any $a, l \in \mathbb{Z}_{+}$ and $\epsilon, \epsilon^{\ast} \in \mathbb{R}_{+}$ let 
\begin{equation*}
R(a):=2^{a-n}\left[c^{n l}\left(1-2^{-n}\right)+l \epsilon\right]+\epsilon^{\ast}.
\end{equation*}
 Define the random state  $\rho^{(l)} := \tilde{\Phi}_{\mathcal{C}}(\rho_{\text{in}})$ where $\mathcal{C} = \mathcal{U}_l\circ \ldots \mathcal{U}_1$, each $\mathcal{U}_i$ is sampled from an $\epsilon$-approximate 2-design for $i=1,\ldots l$ and $\mathcal{U}_l$ is sampled from an $\epsilon^{\ast}$-approximate $2$-design. 

For any $A \subseteq [n]$,  and input state $\rho_{\text{in}}$, we have:
\begin{equation*}
\Pr_{\mathcal{U}_1,\ldots, \mathcal{U}_l} \left[ 
  \left\| 
    \tr_{[n] \setminus A}[\rho^{(l)}] - \frac{I_A}{2^{|A|}} 
  \right\|_1 \geq 2^{|A|/2} R(|A|)^{1/4} 
\right] \leq R(|A|)^{1/2}.
\end{equation*}
\end{corollary}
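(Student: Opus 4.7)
The plan is to reduce the trace-norm concentration statement to a second-moment bound on the $2$-norm and then invoke Markov's inequality. Setting $Y := \tr_{[n]\setminus A}[\rho^{(l)}] - I_A/2^{|A|}$, Cauchy--Schwarz applied to an operator on a $2^{|A|}$-dimensional space gives $\|Y\|_1 \leq 2^{|A|/2}\|Y\|_2$, so the corollary follows from the claim $\mathbb{E}\|Y\|_2^2 \leq R(|A|)$ together with
\begin{equation*}
  \Pr\bigl[\|Y\|_1 \geq 2^{|A|/2} R(|A|)^{1/4}\bigr] \;\leq\; \Pr\bigl[\|Y\|_2^2 \geq R(|A|)^{1/2}\bigr] \;\leq\; \mathbb{E}\|Y\|_2^2 / R(|A|)^{1/2}.
\end{equation*}

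To prove the second moment bound, I would use the SWAP-trick identity $\tr[(\tr_{[n]\setminus A} X)^2] = \tr[X^{\otimes 2}(\mathrm{SWAP}_A \otimes I_{[n]\setminus A}^{\otimes 2})]$ and handle the randomness of $\mathcal{U}_l$ first via its $\epsilon^*$-approximate $2$-design property. Letting $\tau$ denote the state immediately before $\mathcal{U}_l$, Definition~\ref{def:approx_design} lets me replace $\mathbb{E}_{\mathcal{U}_l}[\mathcal{U}_l(\tau)^{\otimes 2}]$ with the Haar second moment $\alpha_\tau I + \beta_\tau \mathrm{SWAP}$ (where $\alpha_\tau,\beta_\tau$ are the standard Weingarten coefficients depending only on $\tr(\tau^2)$) up to additive error $\epsilon^*$. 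The Haar piece can then be evaluated in closed form after pushing the subsequent depolarizing layer $\mathcal{D}_\kappa^{\otimes n}$ through: since $\mathcal{D}_\kappa$ preserves $I$ while acting on each single-qubit SWAP via a Pauli-diagonal rescaling, a short computation of $\tr[(\mathcal{D}_\kappa^{\otimes n})^{\otimes 2}(\mathrm{SWAP}) \cdot (\mathrm{SWAP}_A \otimes I^{\otimes 2})]$ produces the $2^{|A|-n}$ prefactor and reduces everything to a scalar multiple of $\mathbb{E}\tr(\tau^2)$.

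Finally, I would bound $\mathbb{E}\tr(\tau^2)$ by a suitable extension of Fact~\ref{fact:EMpaper} to approximate designs. The inductive proof of that fact applies the $2$-design property layer-by-layer; replacing each exact $2$-design by an $\epsilon$-approximate one contributes an additive $\epsilon$ per layer, yielding a purity bound of the form $\mathbb{E}\tr(\tau^2) \leq c^{n(l-1)}(1-2^{-n}) + 2^{-n} + (l-1)\epsilon$. Substituting into the preceding expression and collecting the final noise-induced contraction gives the promised $R(|A|) = 2^{|A|-n}[c^{nl}(1-2^{-n}) + l\epsilon] + \epsilon^*$.

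The main technical obstacle will be bookkeeping the two kinds of approximation errors correctly. The approximate-design definition in Definition~\ref{def:approx_design} is a statement about operator-norm closeness of the $2$-fold twirl scaled by $1/d^2$, so one has to check that the operators appearing in the SWAP-trick trace have bounded trace norm, in order to produce an additive error of size exactly $\epsilon^*$ (rather than $\epsilon^* d^{O(1)}$) at the $\mathcal{U}_l$ step, and similarly $\epsilon$ at each inductive step for $\tau$. A secondary subtlety is that the per-qubit contraction $c=(1+3(1-\kappa)^2)/4$ of Fact~\ref{fact:EMpaper} emerges only after a direct Pauli expansion of $\mathcal{D}_\kappa^{\otimes 2}(\mathrm{SWAP}_i)$; ensuring that the $c$-factors from the final noise layer combine with those from the recursive Fact-\ref{fact:EMpaper} bound to give precisely the exponent $c^{nl}$ in $R(a)$ (rather than the looser exponent one would naively get from mixing $c^{|A|}$ and $c^{n(l-1)}$) will require handling the Pauli-weight structure of the reduced operator carefully, and may involve tightening Fact~\ref{fact:EMpaper} by tracking the $|A|$-restricted purity directly through the recursion rather than the full purity.
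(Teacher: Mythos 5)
Your proposal follows essentially the same route as the paper: bound the expected squared $2$-norm via the Haar/2-design second-moment formula for $\mathcal{U}_l$ plus Fact~\ref{fact:EMpaper} for the layers beneath, then apply Markov and convert to trace norm with the $2^{|A|/2}$ Cauchy--Schwarz factor. One small point in your favor: you correctly flag that citing Fact~\ref{fact:EMpaper} at depth $l-1$ only gives $c^{n(l-1)}(1-2^{-n})+2^{-n}+(l-1)\epsilon$, so the exponent $c^{nl}$ and the $l$-th $\epsilon$ term in $R(|A|)$ must come from the final depolarizing layer acting after $\mathcal{U}_l$ (and being pushed through the partial trace as $\mathcal{D}_\kappa^{\otimes |A|}$); the paper's displayed calculation elides this step and writes $c^{nl}$ directly, which is harmless since $c<1$ only strengthens the direction of the inequality, but your more careful accounting of where each contraction factor arises is a more defensible way to obtain the stated form of $R$.
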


\begin{proof} 
Suppose instead the last block $\mathcal{U}_l$ were sampled from an exact $2$-design $\mu(2)$, instead of an approximate one. Let us first consider the expected value (over this alternative distribution) of the quantity
\begin{align}
\mathbb{E}_{\mathcal{U}_l \sim \mu(2)}\left[ 
  \left\| 
    \tr_{[n] \setminus A} \left[ \rho^{(l)} \right] 
    - \frac{I_A}{2^{|A|}} 
  \right\|_2^2 
\right]
&= \mathbb{E}_{\mathcal{U}_l \sim \mu(2)} \Biggl[ 
  \tr \left( \left( \tr_{[n] \setminus A}[\rho^{(l)}] \right)^2 \right)
  - 2 \tr \left( 
    \frac{ \tr_{[n] \setminus A}[\rho^{(l)}] }{2^{|A|}} 
  \right)
  + \tr \left( \frac{ I_A }{ 2^{2|A|} } \right) 
\Biggr] \nonumber\\
&= \mathbb{E}_{\mathcal{U}_l \sim \mu(2)} \left[ 
  \tr \left( \left( \tr_{[n] \setminus A}[\rho^{(l)}] \right)^2 \right)
\right] - 2^{-|A|} \nonumber\\
&= 2^{|A|-n} \left[ 
  \tr \left( (\rho^{(l-1)})^2 \right) - \frac{1}{2^n} 
\right]. \label{eq:exact}
\end{align}

Here we have integrated over the Haar measure in \Cref{eq:exact}. However, since the actual distribution from which $\mathcal{U}_l$ is sampled is an $\epsilon^{\ast}$-approximate $2$-design, the expectation over the actual distribution satisfies
\begin{align*}
\mathbb{E}_{\mathcal{U}_l} \left[ 
  \left\| 
    \tr_{[n] \setminus A}[\rho^{(l)}] - \frac{I_A}{2^{|A|}} 
  \right\|_2^2 
\right] 
\leq 2^{|A|-n} \left[ 
  \tr \left( (\rho^{(l-1)})^2 \right) - \frac{1}{2^n} 
\right] + \epsilon^{\ast}.
\end{align*}

Taking expectation over the remaining $l-1$ blocks:
\begin{align*}
\mathbb{E}_{\mathcal{U}_1,\ldots, \mathcal{U}_l} \left[ 
  \left\| 
    \tr_{[n] \setminus A}[\rho^{(l)}] - \frac{I_A}{2^{|A|}} 
  \right\|_2^2 
\right] 
&= 2^{|A|-n} \left[ 
  \mathbb{E}_{\mathcal{U}_1, \ldots, \mathcal{U}_{l-1}} 
  \left[ \tr \left( (\rho^{(l-1)})^2 \right) \right] 
  - \frac{1}{2^n} 
\right] + \epsilon^{\ast} \\
&\leq 2^{|A|-n} \left[ 
  c^{nl}(1 - 2^{-n}) + \frac{1}{2^n} + l\epsilon - \frac{1}{2^n} 
\right] + \epsilon^{\ast} \\
&= 2^{|A|-n} \left[ 
  c^{nl}(1 - 2^{-n}) + l\epsilon 
\right] + \epsilon^{\ast}.
\end{align*}
where in the inequality we have used \Cref{fact:EMpaper}. 

By Markov's inequality:
\begin{align*}
\Pr_{\mathcal{U}_1,\ldots, \mathcal{U}_l} \left[ 
  \left\| 
    \tr_{[n] \setminus A}[\rho^{(l)}] - \frac{I_A}{2^{|A|}} 
  \right\|_2 \geq R^{1/4} 
\right] 
= \Pr \left[ 
  \left\| 
    \cdots 
  \right\|_2^2 \geq R^{1/2} 
\right] 
\leq R^{1/2}.
\end{align*}

Converting $\ell_2$ to $\ell_1$ via Cauchy-Schwarz:
\begin{equation*}
\Pr_{\mathcal{U}_1,\ldots, \mathcal{U}_l} \left[ 
  \left\| 
    \tr_{[n] \setminus A}[\rho^{(l)}] - \frac{I_A}{2^{|A|}} 
  \right\|_1 \geq 2^{|A|/2} R^{1/4} 
\right] \leq R^{1/2}.
\end{equation*}
\end{proof}

\noindent We now consider the full scenario on $m$ copies of the system, each copy being output by the same circuit of $n$ qubits that is sampled once at the start. Let $T \subseteq [mn]$. We may write any such $T$ as a union over the elements of $T$ in each of the copies: $T = T_1\cup\ldots T_m$, meaning that the overall product state can also be decomposed as:
\begin{equation*}
    \tr_{[mn]\setminus T}\bigl[(\rho^{(l)})^{\otimes m}\bigr] = \bigotimes_{i=1}^m  \tr_{[n]\setminus T_i}\bigl[\rho^{(l)}\bigr] 
\end{equation*}

By a union bound over all $m$ copies,  one then has that with probability at least 
\begin{equation*}
1-\sum_{i=1}^m R(|T_i|)^{1/2}\geq 1- m R(|T|)^{1/2}
\end{equation*}

\begin{equation*}
    \Bigl\|
  \tr_{[n]\setminus T_i}\bigl[\rho^{(l)}\bigr] - 
  \frac{I_{T_i}}{2^{|T_i|}}
\Bigr\|_1\leq 
2^{|T_i|/2} R(|T_i|)^{1/4}
 \qquad \forall i\in [m].
\end{equation*}

Again using the identity $\lVert A\otimes B - C\otimes D \rVert_1 \leq \lVert B - D \rVert_1 + \lVert A - C \rVert_1$  $m$ times, we obtain that
\begin{align*}
&\Pr_{\mathcal{U}_1,\dots,\mathcal{U}_\ell}
\Bigl[
  \Bigl\|
   \tr_{[mn]\setminus T}\bigl[(\rho^{(l)})^{\otimes m}\bigr] 
  -
  \frac{I_T}{2^{|T|}}
\Bigr\|_1
  \;\le\;
 \sum_{i=1}^m 2^{|T_i|/2} R(|T_i|)^{1/4}
\Bigr]\\
&\geq \Pr_{\mathcal{U}_1,\dots,\mathcal{U}_\ell}
\Bigl[
  \Bigl\|
   \tr_{[mn]\setminus T}\bigl[(\rho^{(l)})^{\otimes m}\bigr] 
  -
  \frac{I_T}{2^{|T|}}
\Bigr\|_1
  \;\le\;
m2^{|T|/2} R(|T|)^{1/4}
\Bigr]\\
&\geq 1- mR(|T|)^{1/2}.
\end{align*}

Therefore
\begin{align*}
    \mathbb{E}_{\mathcal{U}_1,\dots,\mathcal{U}_\ell} \Bigl\|
\tr_{[mn]\setminus T}\bigl[(\rho^{(l)})^{\otimes m}\bigr]
  -
  \frac{I_T}{2^{|T|}}
  \Bigr\|_1 &\leq
  m2^{|T|/2} R(|T|)^{1/4} + mR(|T|)^{1/2}\\
  &\leq m 2^{|T|/2} R(|T|)^{1/4}
\end{align*}

By the strong convexity of trace distance, 
\begin{equation*}
  \Bigl\|
   \mathbb{E}_{\mathcal{U}_1,\dots,\mathcal{U}_\ell}\left(\tr_{[mn]\setminus T}\bigl[(\rho^{(l)})^{\otimes m}\bigr] \right)
  -
  \frac{I_T}{2^{|T|}}
  \Bigr\|_1 \leq m 2^{|T|/2} R(|T|)^{1/4}
\end{equation*}

$\rho_k^{(l)}$ is a state output by $l$ $\epsilon$-approximate 2-designs alternating with a layer of depolarizing noise. So far, we have been agnostic as to how we generate the $\epsilon$-approximate $2$ designs, but as we want to obtain the shallowest depth at which error mitigation fails, now we prescribe that we use the shallowest $\epsilon$-approximate $2$-design generators we know of. We use the prescription of \cite{schuster2024random} to implement all the blocks with geometrically-local circuits on any architecture, in depth $\log(n/\epsilon)$ for the first $n-1$ blocks and depth $\log(n/\epsilon^{\ast})$ for the last.

By the calculations below \Cref{fact:EMpaper}, for the extremely low noise rate we consider,
\begin{equation*}
R(|T|):=2^{|T|-n}\left[c^{n l}\left(1-2^{-n}\right)+l \epsilon\right]+\epsilon^{\ast} \approx 2^{|T|-n}l\epsilon +\epsilon^{\ast}. 
\end{equation*}
Then, if we choose $k=\log(n)$, $\epsilon^{\ast} = 2^{-\Omega(k)\log(n)}$, it is easy to check that $\epsilon^{\ast} > 2^{|T|-n}l\epsilon$ for any $|T|<k$, and so for this choice of parameters, 
\begin{equation}\label{eq:nqubit_2}
  \Bigl\|
   \mathbb{E}_{\mathcal{U}_1,\dots,\mathcal{U}_\ell}\left(\tr_{[mn]\setminus T}\bigl[(\rho^{(l)})^{\otimes m}\bigr] \right)
  -
  \frac{I_T}{2^{|T|}}
  \Bigr\|_1 \leq m 2^{|T|/2} O(\epsilon^{\ast})^{1/4}
\end{equation}
To bound the performance of low-degree algorithms using single-qubit measurements, since we only need to consider $|T|<k$ and $m=\poly(n)$, we may check that the right hand side of \Cref{eq:nqubit_2} becomes $2^{-\Omega(k \log n)},$ which allows us to apply \Cref{thm:general_single_qubit}. A similar argument and application of \Cref{coro:general_bounded_depth} allows us to prove the statement for measurements of bounded depth $\Omega(k)$.

\subsection{Low-degree hardness of learning and testing random circuits}\label{sec:circuit}
We now turn to another application of our framework, learning and testing quantum circuits of bounded depth~\cite{huang2024learning,landau2024learning,zhao2024learning,nietner2023average,vasconcelos2024learning}. In~\cite{nietner2023average}, it is shown that learning the distribution of measuring a state prepared by a quantum circuit of depth $\omega(\log n)$ is hard for the statistical query model. For shallow circuits of constant depth, however, Ref~\cite{huang2024learning,landau2024learning} provides computational- and sample-efficient algorithms to learn unitaries and states. It is natural to wonder if this task is hard under the low-degree framework after a certain depth threshold.

\subsubsection{Random quantum circuits and circuit moment bounds}\label{sec:basic_circuit_moment}

Throughout this work, we consider geometrically-local quantum circuits, which contain only local single- and two-qubit gates on neighboring two qubits. More concretely, we focus on the brickwork architecture defined as follows:
\begin{definition}
A $n$-qubit brickwork quantum circuit of depth $L$ with a periodic boundary condition is a quantum circuit that is of the form
\begin{align*}
U=(U_{2,3}^{(L)}\otimes...\otimes U_{n,1}^{(L)})(U_{1,2}^{(L-1)}\otimes...\otimes U_{n-1,n}^{(L-1)})...(U_{2,3}^{(2)}\otimes...\otimes U_{n,1}^{(2)})(U_{1,2}^{(1)}\otimes...\otimes U_{n-1,n}^{(1)}),
\end{align*}
where $U_{i,j}^{(l)}$ is the a two-qubit gate from $U(4)$ acting on the $i$-th and the $j$-th qubit in the $l$-th layer. Here, without loss of generality, we assume $L$ and $n$ are even.
\end{definition}

We consider two kinds of random quantum circuit ensembles throughout this work. The first kind of ensemble we consider is called \emph{brickwork random quantum circuits}, which is a random brickwork quantum circuit with each two-qubit gate chosen Haar randomly.

The second kind of ensemble we consider is the \emph{coarse-grained random quantum circuit}. Consider a probability distribution over random circuits consisting of two steps: First, we apply Haar random unitaries to $n/n'$ blocks each consisting of $n'$ qubits. Second, we shift the layers such that the gates of the two layers overlap on $n'/2$ and apply Haar random unitaries to the new $n/n'$ blocks. Moreover, we denote by $\nu_{n',n,L}$ the ensemble as above where, instead of Haar random unitaries, we draw the unitary on the blocks of $n'$ qubits from a brickwork random quantum circuits of depth $L$.

For these ensembles of random quantum circuits, we summarize the results from~\cite{chen2024incompressibility,brandao2016local,haferkamp2022random,schuster2024random}:
\begin{lemma}[Combination of results in~\cite{haferkamp2022random,schuster2024random}]\label{lem:circuit_design}
We have that:
\begin{itemize}
    \item Brickwork random quantum circuits are a $\epsilon$-approximate unitary $t$-design in depth $O(nt\log^7(t)+\log(1/\epsilon))$.
    \item Coarse-grained random quantum circuits $\nu_{C\log(nt/\epsilon),n,Ct\log(nt/\epsilon)}$ is a $\epsilon$-approximate unitary $t$-design for a sufficiently large constant $C>0$.
\end{itemize}
Combining these two results together, we consider the ensemble $\mu_{n',n,L}$ that is similar to $\nu_{n',n,L}$ but replacing the $n'$-qubit Haar random unitary with a brickwork random quantum circuits with large enough depth. We have $\mu_{C\log(nt/\epsilon),n,Ct\log(nt/\epsilon)}$ is a $\epsilon$-approximate unitary $t$-design for a sufficiently large constant $C>0$.
\end{lemma}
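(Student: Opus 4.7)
The plan is to verify the three bullet points of the lemma one at a time, taking the first two as direct citations and focusing the bulk of the argument on the composed statement. For the first bullet, I would simply import the main theorem of \cite{haferkamp2022random}, which gives the depth bound $O(nt\log^7(t)+\log(1/\epsilon))$ for brickwork circuits to form an $\epsilon$-approximate unitary $t$-design. For the second bullet, I would invoke the construction of~\cite{schuster2024random} showing that a coarse-grained random circuit $\nu_{n',n,L}$, whose blocks are Haar-random on $n' = C\log(nt/\epsilon)$ qubits, is an $\epsilon$-approximate unitary $t$-design when the depth parameter is $L = Ct\log(nt/\epsilon)$; here $L$ refers to how many layers of block-Haar unitaries are composed (with the usual shift between layers), and I would state the result directly from their paper.

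The substantive step is the combined statement about $\mu_{C\log(nt/\epsilon), n, Ct\log(nt/\epsilon)}$. Here the plan is a hybrid argument: starting from $\nu_{n',n,L}$ (with Haar-random blocks) and interpolating to $\mu_{n',n,L}$ (with brickwork blocks), one block at a time. By Definition~\ref{def:approx_design}, the quantity that needs to be controlled is the diamond-type error of $\Phi^{(t)}$, and this behaves subadditively under composition of channels. So if each of the $N$ Haar blocks in $\nu_{n',n,L}$ is replaced by an $\epsilon'$-approximate $t$-design of the same block size, the triangle inequality for the channel-level design metric yields overall error at most $N\epsilon'$ on top of the $\epsilon$ that was already present from the coarse-grained construction.

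To pick the parameters, I would observe that the total number of blocks $N$ in $\nu_{n',n,L}$ is $O((n/n')\cdot L) = \poly(nt/\epsilon)$, so it suffices to choose $\epsilon'=\epsilon/\poly(nt/\epsilon)$ per block. The first bullet then says the brickwork on $n'=C\log(nt/\epsilon)$ qubits needs depth $O(n't\log^7(t)+\log(1/\epsilon'))=O(t\log(nt/\epsilon)\log^7(t))$, which by choosing the constant $C$ large is absorbed into $L=Ct\log(nt/\epsilon)$. (Here I am reading the $L$ of $\mu_{n',n,L}$ as the per-block brickwork depth, consistent with the definition of $\nu_{n',n,L}$.) Putting this together, the triangle inequality gives $2\epsilon$-approximate $t$-design, which after a constant rescaling of $C$ yields the claimed $\epsilon$-approximate $t$-design.

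The main obstacle I anticipate is the composition step: one needs that the appropriate design metric (the one in Definition~\ref{def:approx_design}) is subadditive under sequential composition of the block unitaries, including across the two ``shifted'' layers of the coarse-grained construction, so that the per-block errors add up only linearly rather than multiplicatively or worse. The cleanest way to handle this is to work with the diamond distance between the moment superoperators $\Phi^{(t)}$, for which $\|\Phi_1\circ\Phi_2 - \Psi_1\circ\Psi_2\|_\diamond \le \|\Phi_1-\Psi_1\|_\diamond + \|\Phi_2-\Psi_2\|_\diamond$, and then convert back to the form in Definition~\ref{def:approx_design} via the standard equivalence (up to a factor of $d^t$) between the diamond-norm formulation and the multiplicative formulation of approximate designs. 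A small amount of care is required when passing between the pure state $k$-design formulation used elsewhere in the paper and the unitary $t$-design formulation used here, but that is essentially bookkeeping.
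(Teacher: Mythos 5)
Your high-level plan---a hybrid argument swapping Haar blocks for brickwork blocks, using a triangle inequality for the channel-level design metric---is the right shape, and it is more explicit than what the paper writes (which essentially just cites~\cite{haferkamp2022random,schuster2024random} and asserts the combination). However, the argument as you sketch it contains a real gap, and it is precisely where you wrote ``that is essentially bookkeeping.''

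The gap is the dimension normalization in Definition~\ref{def:approx_design}. For a unitary $k$-design, that definition asks for the diamond distance $\|\Phi^{(t)}(\nu)-\Phi^{(t)}(\muh)\|_\diamond$ to be at most $\epsilon/d^{t}$, where $d$ is the dimension of the relevant system. If each brickwork block is an $\epsilon'$-approximate $t$-design on $n'$ qubits, its diamond distance to Haar is at most $\epsilon'/2^{n't}$. The diamond norm is stable under tensoring with identity and contracts under composition with CPTP maps, so your telescoping bound gives $\|\Phi^{(t)}(\mu)-\Phi^{(t)}(\nu)\|_\diamond\le N\epsilon'/2^{n't}$. But for $\mu$ to satisfy Definition~\ref{def:approx_design} on all $n$ qubits you need this to be $\lesssim \epsilon/2^{nt}$, which forces $\epsilon'\lesssim (\epsilon/N)\,2^{-(n-n')t}$. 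With $n'=\Theta(\log(nt/\epsilon))$ this is exponentially small in $nt$, and by the first bullet the brickwork depth needed to reach it is $\Omega(\log(1/\epsilon'))=\Omega(nt)$---the hybrid buys you nothing. The ``standard equivalence up to a factor of $d^t$'' you gesture at does not rescue this; the factor $2^{-(n-n')t}$ is exactly the problem, not a constant.

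The correct way to compose is to work in the \emph{relative} (multiplicative) error notion, $(1-\epsilon)\Phi^{(t)}(\muh)\preceq \Phi^{(t)}(\nu)\preceq (1+\epsilon)\Phi^{(t)}(\muh)$ as completely positive maps. This ordering is preserved under tensoring with identity and under pre/post-composition with arbitrary CP maps, so swapping one block incurs a multiplicative $(1\pm\epsilon')$ factor with \emph{no dimension penalty}; after $N$ swaps the relative error is $(1+\epsilon')^N(1+\epsilon)$, and $\epsilon'=\Theta(\epsilon/N)$ suffices. Both~\cite{haferkamp2022random} and~\cite{schuster2024random} prove their design statements in this stronger relative sense, and relative error implies the diamond-norm bound of Definition~\ref{def:approx_design}. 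So the right structure is: hybridize in relative error, then convert to Definition~\ref{def:approx_design} once at the end---not hybridize in diamond norm. This is the missing idea.

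A secondary point: the claim that the $\log^7(t)$ factor ``is absorbed into $L=Ct\log(nt/\epsilon)$ by choosing $C$ large'' is not literally correct, since $\log^7(t)$ is unbounded in $t$. In the paper's downstream applications $t=\polylog(n)$ so the factor is $\polylog\log(n)$ and can be swept into $\tilde{O}$, but as a statement for general $t$ it requires the $\log^7(t)$ to be kept. You should also be aware that your reading of $L$ in $\nu_{n',n,L}$ as the number of coarse-grained layers differs from the paper's stated definition (which says $L$ is the brickwork depth \emph{inside} each block); the paper's own notation here is arguably internally inconsistent, but you should flag rather than silently reinterpret.
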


Based on the general framework, the low-degree hardness for learning and testing quantum circuits follows directly.

Here, we consider the following hypothesis testing problem:
\begin{itemize}
    \item $H_0$ (null case): the state is the maximally mixed state $\rhomm=I/2^n$.
    \item $H_1$ (alternative case): the state $\rho$ is randomly chosen from a random geometrically local circuit of depth $L$ according to $\mu_{n',n,L}$ (see \Cref{sec:basic_circuit_moment} for the definition).
\end{itemize}

Here, we focus on states prepared by circuits of depth $L$~\cite{huang2024learning}. Suppose we have an algorithm that can learn a state prepared by a circuit of depth $L$ using single-copy PVMs, which given copies of $U\ket{0^n}$ promised that $U$ is prepared by a quantum circuit of depth $L$, outputs a classical description of the circuit that implements unitary $U'$ such that $d_\tr(U\ket{0^n},U'\ket{0^n})\leq\epsilon$. We can then solve the hypothesis testing using single-copy ancilla-free measurements via the following procedure:
\begin{enumerate}
    \item We first implement the algorithm and output a unitary $U'$.
    \item We then implement the measurement $\{U'^\dagger\ket{x}\bra{x}U'\}_{x=1}^{2^n}$, which is a unitary circuit $U'^\dagger$ followed by a computational basis measurement.
\end{enumerate}
In the null case $H_0$, we are sampling from a uniform distribution. However, in the alternative case $H_1$, we are sampling from a state that is $\epsilon$-close to $\ket{0}$.

\subsubsection{Low-degree hardness for non-adaptive single-copy measurements}\label{sec:circuit_single_copy}
In this subsection, we consider non-adaptive protocols with single-copy PVMs. By \Cref{coro:general_ancilla}, given $k\leq O(n)$ and an ensemble $\mathcal{E}$ that is $2^{-\tOmega(k\log n)}$-approximate state $2$-design, weak detection is degree-$k$ hard for any protocol between $\rhomm$ vs. $\mathcal{E}$ using polynomially many non-adaptive PVMs $\{\{U_i\ket{x}\bra{x}U_i^{\dagger}\}_{x=1}^{2^{n+n'}}\}_{i=1}^m$ with up to $n'=\Theta(n)$ ancillary qubits. By \Cref{lem:circuit_design}, $\mu_{C\log(nt/\epsilon),n,Ct\log(nt/\epsilon)}$ is a $\epsilon$-approximate unitary $t$-design for a sufficiently large constant $C>0$. To form a $2^{-\tOmega(k\log n)}$-approximate state $2$-design, we only need circuit depth of scaling
\begin{align*}
O(\log(nt/\epsilon))=\tO(k\log n),
\end{align*}
where we use $t=2$.

Therefore, combining these two results together, we obtain the following low-degree hardness results for non-adaptive protocols with single-copy measurements and bounded ancillary qubits:
\begin{corollary}\label{coro:circuit_single_copy}
For the hypothesis testing problem $\rhomm$ vs. $\rho\sim\mu_{\tO(k\log n),n,\tO(k\log n)}$ (i.e. an random circuit ensemble of depth $\tO(k\log n)$), weak detection is degree-$k$ hard for any protocol that makes polynomially many non-adaptive single-copy PVMs $\{\{U_i\ket{x}\bra{x}U_i^{\dagger}\}_{x=1}^{2^{n+n'}}\}_{i=1}^m$ with up to $n'=\Theta(n)$ ancillary qubits.
\end{corollary}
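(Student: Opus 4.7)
The plan is to combine Corollary~\ref{coro:general_ancilla} with the design bound Lemma~\ref{lem:circuit_design}. By Corollary~\ref{coro:general_ancilla}, it suffices to exhibit a depth $L = L(n,k)$ such that the state ensemble obtained by applying a random circuit drawn from $\mu_{n',n,L}$ to the fixed state $\ket{0^n}$ forms a $2^{-\tOmega(k\log n)}$-approximate state $2$-design on $n$ qubits. Once this is in hand, the conclusion about degree-$k$ hardness against non-adaptive single-copy PVMs with up to $n'=\Theta(n)$ ancilla follows immediately, since the only input to Corollary~\ref{coro:general_ancilla} is the $2$-design approximation factor.

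First I would record the elementary fact that an $\epsilon$-approximate unitary $2$-design $\nu$ over $U(2^n)$ induces (via $U\mapsto U\ket{0^n}$) an $\epsilon$-approximate state $2$-design in the sense of Definition~\ref{def:approx_design}: applying the operator identity defining the approximate unitary $2$-design condition to the input $\ketbra{0^n}^{\otimes 2}$ and using $\|\ketbra{0^n}^{\otimes 2}\|_1 = 1$ yields the multiplicative state-design bound up to constants (standard, see e.g.~\cite{brandao2016local}). Next, I would invoke Lemma~\ref{lem:circuit_design} with $t=2$ and $\epsilon = 2^{-\tOmega(k\log n)}$: the lemma states that $\mu_{C\log(nt/\epsilon),n,Ct\log(nt/\epsilon)}$ is an $\epsilon$-approximate unitary $t$-design for a sufficiently large absolute constant $C>0$. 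Plugging in the desired $\epsilon$ and $t=2$, both the block size parameter $n' = C\log(2n/\epsilon)$ and the depth parameter $L = 2C\log(2n/\epsilon)$ come out to $\tO(k\log n)$, which matches the parameter regime claimed in the statement.

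Finally, I would combine the two ingredients: the ensemble $\mu_{\tO(k\log n),n,\tO(k\log n)}$ is a $2^{-\tOmega(k\log n)}$-approximate unitary $2$-design, hence its orbit of $\ket{0^n}$ is a $2^{-\tOmega(k\log n)}$-approximate state $2$-design, and so by Corollary~\ref{coro:general_ancilla} weak detection between $\rhomm$ and this orbit is degree-$k$ hard against any $\poly(n)$-copy protocol using non-adaptive single-copy PVMs with at most $n'=\Theta(n)$ ancillary qubits.

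The main conceptual step is the unitary-to-state design translation in the approximate setting, since the definitions for the two types of design in Definition~\ref{def:approx_design} are stated with different norms (diamond-like norm vs.\ multiplicative operator ordering). I expect no substantive obstacle, but one must be a little careful to verify that the induced state-design error does not blow up by more than polynomial factors in $n$ (these get absorbed into the $\tOmega$). Beyond this, the remaining content is purely parameter chasing: matching $\log(1/\epsilon) = \tOmega(k\log n)$ to the depth requirement of Lemma~\ref{lem:circuit_design}.
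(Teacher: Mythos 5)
Your proof is correct and follows essentially the same route as the paper: combine \Cref{coro:general_ancilla} (low-degree hardness from an approximate state $2$-design) with \Cref{lem:circuit_design} at $t=2$ and $\epsilon = 2^{-\tOmega(k\log n)}$, then match parameters to get depth $\tO(k\log n)$. Your extra remark flagging the unitary-to-state design translation is a reasonable sanity check that the paper leaves implicit, but it introduces no new content beyond what the paper's argument already relies on.
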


By the correspondence between the hypothesis testing and the learning problem, we have the following corollary.

\begin{corollary}\label{coro:circuit_single_copy_learning}
Given any $k=O(n)$, learning quantum circuits of depth $\tO(k\log n)$ is degree-$k$ hard for degree-$k$ protocols with a polynomial number of non-adaptive single-copy PVMs with up to $n'=\Theta(n)$ ancillary qubits.
\end{corollary}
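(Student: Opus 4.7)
The plan is to derive Corollary~\ref{coro:circuit_single_copy_learning} by contraposition from the hypothesis-testing hardness in Corollary~\ref{coro:circuit_single_copy}, along the learning-to-testing reduction sketched in the paragraph just before Section~\ref{sec:circuit_single_copy}. Concretely, I would suppose for contradiction that there is a degree-$k$ learner $\mathcal{A}$ using $\poly(n)$ non-adaptive single-copy PVMs with at most $n' = \Theta(n)$ ancillas each, which on $\poly(n)$ copies of any state $U\ket{0^n}$ prepared by a depth-$\tO(k\log n)$ circuit outputs a classical description of a unitary $U'$ with $d_\tr(U'\ket{0^n}, U\ket{0^n}) \le \epsilon$ for some small constant $\epsilon$.

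Next, I would convert $\mathcal{A}$ into a distinguisher for the hypothesis testing problem $\rhomm$ vs.\ $\rho \sim \mu_{\tO(k\log n), n, \tO(k\log n)}$ of Corollary~\ref{coro:circuit_single_copy}: run $\mathcal{A}$ on the given copies of $\rho$, then apply one additional PVM $\{U'^\dagger\ket{x}\bra{x}U'\}_x$ on one further copy, and output $1$ iff the final readout is $0^n$. Under the alternative hypothesis, with high probability over the random draw from $\mu_{\tO(k\log n),n,\tO(k\log n)}$ the learner succeeds, so the final outcome equals $0^n$ with probability at least $1 - \epsilon - o(1)$; under the null hypothesis, the outcome is uniform and equals $0^n$ with probability $2^{-n}$. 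This yields $\Omega(1)$ distinguishing advantage.

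To invoke Corollary~\ref{coro:circuit_single_copy}, I must verify that the composite protocol lies in the non-adaptive, single-copy, ancilla-assisted, degree-$k$ framework. Each bit of $U'$ is by hypothesis a degree-$\le k$ polynomial in the $m$ non-adaptive readouts produced by $\mathcal{A}$. The adaptive-looking final measurement in basis $U'$ can be re-expressed as a single fixed ancilla-assisted PVM on the $(m{+}1)$-th copy that uses $\Theta(n)$ extra ancillas to coherently prepare the basis specified by the bits of $U'$ — keeping the overall ancilla count within the $n' = \Theta(n)$ regime allowed by the corollary. The final indicator $\mathbbm{1}[x^* = 0^n]$ is then a degree-$O(k)$ polynomial in the full $(m{+}1)$-tuple of readouts.

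The main obstacle is precisely this last compositional step: formalizing the claim that ``run a degree-$k$ non-adaptive learner, then do one $U'$-dependent measurement, then test outcome $=0^n$'' collapses into a single non-adaptive degree-$O(k)$ distinguisher within the ancilla budget of Corollary~\ref{coro:circuit_single_copy}, without secretly hiding unbounded-depth computation in the choice of basis (compare the discussion of adaptivity in Section~\ref{sec:general_adaptivity_full}). Once this reduction is set up cleanly, the resulting $\Omega(1)$ distinguishing advantage directly contradicts Corollary~\ref{coro:circuit_single_copy}, which forces this advantage to be $o(1)$, and hence no such $\mathcal{A}$ can exist.
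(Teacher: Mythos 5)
Your proposal faithfully reproduces the learning-to-testing reduction that the paper sketches just before Section~8.2.1 (run the learner, apply the $U'$-dependent verification PVM, test whether the readout is $0^n$), and you correctly flag the genuine obstacle: the verification measurement is adaptive, since its basis depends on $U'$, which in turn depends on the readouts from the first $m$ copies. The paper's own treatment of Corollary~\ref{coro:circuit_single_copy_learning} is a one-line appeal to ``the correspondence between the hypothesis testing and the learning problem,'' so the paper does not address this either.

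The gap lies in your proposed repair: re-expressing the $U'$-dependent measurement as ``a single fixed ancilla-assisted PVM \ldots that uses $\Theta(n)$ extra ancillas to coherently prepare the basis specified by the bits of $U'$.'' This does not work within the non-adaptive framework of Theorem~\ref{thm:general_ancilla}/Corollary~\ref{coro:general_ancilla}/Corollary~\ref{coro:circuit_single_copy}. In that model, the ancillas attached to each copy are initialized to $\ket{0}^{\otimes n'}$, and the unitary $U_i$ applied to copy $i$ and its ancillas is fixed in advance. There is no mechanism by which the bits of $U'$ -- which are classical data computed from the first $m$ readouts -- can be injected into the ancillas or the unitary for the $(m+1)$-th copy. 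Adding ancillas gives you the power to simulate POVMs, not the power to condition a later measurement on earlier outcomes; the latter is precisely what Section~\ref{sec:general_adaptivity_full} identifies as the genuinely delicate issue, since arbitrary adaptivity can hide unbounded computation. Thus, even granting the learner hypothesis, the composite ``learn then verify'' protocol is not in the class of non-adaptive single-copy ancilla-assisted PVM protocols to which Corollary~\ref{coro:circuit_single_copy} applies, and no contradiction is obtained. If you want to salvage the reduction, you would instead have to invoke one of the paper's adaptive hardness results (e.g., Corollary~\ref{coro:general_adaptivity_among_block} with two blocks), and check carefully that the required design-degree and moment conditions hold for that setup -- this is a different and more involved argument than simply asserting the composition is non-adaptive.
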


\subsubsection{Low-degree hardness for models with restricted adaptivity}\label{sec:circuit_adaptivity}
We then consider adaptive protocols. We first consider the model of restricted adaptivity within blocks, where we have $m_1=\poly(n)$ blocks, each of $m_0=\polylog(n)$ adaptive copies. Yet the blocks are independent from one another. (see \Cref{sec:general_adaptivity_within_block}). By \Cref{coro:general_adaptivity_within_block}, if $\mathcal{E}$ forms a $2^{-\Omega(k^4m_0^2\log^2 k\log m_1\log^2 m_0)}$-approximate state $2km_0$-design, weak detection between $\mathcal{E}$ and $\rhomm$ is degree-$(\cD,k)$ hard for any adaptivity within round-based model protocol that makes polynomially many single-copy PVMs $\{\{U_i\ket{x}\bra{x}U_i^{\dagger}\}_{x=1}^{2^n}\}_{i=1}^m$. These protocols use $m_1=\poly(n)$ blocks each with $m_0=m/m_1=\polylog(n)$ samples, and adaptivity is only allowed in each block. By \Cref{lem:circuit_design}, $\mu_{C\log(nt/\epsilon),n,Ct\log(nt/\epsilon)}$ is a $\epsilon$-approximate unitary $t$-design for a sufficiently large constant $C>0$. To form a $2^{-\Omega(k^4m_0^2\log^2 k\log m_1\log^2 m_0)}$-approximate state $2km_0$-design, we only need circuit depth of scaling
\begin{align*}
O(\log(nt/\epsilon))=\tO(k^4m_0^2\log m_1),
\end{align*}
where we use $t=2km_0$. We thus have the following result:
\begin{corollary}\label{coro:circuit_adaptivity_within_block}
For the hypothesis testing problem $\rhomm$ vs. $\rho\sim\mu_{\tO(k^4m_0^2\log m_1),n,\tO(k^4m_0^2\log m_1)}$ (i.e. an random circuit ensemble of depth $\tO(k^4m_0^2\log m_1)$), weak detection is degree-$(\cD,k)$ hard for any adaptivity within round-based model protocol making polynomially many single-copy PVMs $\{\{U_i\ket{x}\bra{x}U_i^{\dagger}\}_{x=1}^{2^n}\}_{i=1}^m$. These protocols use $m_1=\poly(n)$ blocks each with $m_0=m/m_1=\polylog(n)$ samples, and adaptivity is only allowed in each block. Moreover, this indicates that learning quantum circuits of depth $\tO(k^4m_0^2\log m_1)$ is hard any adaptivity within round-based model protocol with a degree at most $(\cD,k)$ and $m=\poly(n)$ single-copy PVMs.
\end{corollary}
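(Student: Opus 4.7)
The plan is to combine Corollary \ref{coro:general_adaptivity_within_block} with the random-circuit design bounds in Lemma \ref{lem:circuit_design}. By the former, weak detection is degree-$(\cD,k)$ hard for any round-based protocol with adaptivity within blocks of size $m_0$ whenever the alternative ensemble $\mathcal{E}$ is a $2^{-\Omega(k^4 m_0^2 \log^2 k \log m_1 \log^2 m_0)}$-approximate state $2km_0$-design. Hence the only remaining task is to certify that the circuit ensemble $\mu_{L,n,L}$ at the stated depth $L = \tO(k^4 m_0^2 \log m_1)$ is a state design of the required order and accuracy.

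To this end, I will invoke the coarse-grained portion of Lemma \ref{lem:circuit_design}: for a sufficiently large constant $C>0$, the ensemble $\mu_{C\log(nt/\epsilon),n,Ct\log(nt/\epsilon)}$ is an $\epsilon$-approximate unitary $t$-design. Substituting $t = 2km_0$ and $\log(1/\epsilon) = \Theta(k^4 m_0^2 \log^2 k \log m_1 \log^2 m_0)$, a direct calculation gives $\log(nt/\epsilon) = \tO(k^4 m_0^2 \log m_1)$, so that the required depth matches the bound stated in the corollary once polylog factors are absorbed into $\tO$. The passage from a unitary $t$-design to a state $t$-design is immediate: if $U$ is drawn from the ensemble, then $U\ket{0^n}$ inherits the same approximation parameter as a state design, so $\mu_{L,n,L}$ applied to the computational basis state yields the target state-design ensemble.

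Plugging the resulting state design into Corollary \ref{coro:general_adaptivity_within_block} then delivers degree-$(\cD,k)$ hardness of weak detection for the claimed class of round-based adaptive protocols, which is precisely the conclusion of Corollary \ref{coro:circuit_adaptivity_within_block}. The argument is a bookkeeping reduction: all the genuinely adaptive analysis has already been carried out in Section \ref{sec:general_adaptivity_within_block}, and all the design-depth work has already been done in the random-circuit results cited in Lemma \ref{lem:circuit_design}. The only mild subtlety to watch is that the polylog overhead hidden in $\tO$ must be large enough to dominate the $\polylog(k,m_0,m_1,\log n)$ slack coming from both the design accuracy requirement and the factor of $t = 2km_0$ appearing in the coarse-grained depth formula, but no new technical obstacle arises.
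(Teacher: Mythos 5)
Your proof matches the paper's own derivation exactly: invoke Corollary~\ref{coro:general_adaptivity_within_block} to reduce the task to certifying that $\mathcal{E}$ is a $2^{-\Omega(k^4 m_0^2 \log^2 k \log m_1 \log^2 m_0)}$-approximate state $2km_0$-design, then apply the coarse-grained ensemble bound of Lemma~\ref{lem:circuit_design} with $t = 2km_0$ and this accuracy $\epsilon$, and note that the unitary-design bound transfers to the state-design bound needed. The one place your writeup goes further than the paper's is that you explicitly flag the factor of $t = 2km_0$ multiplying $\log(nt/\epsilon)$ in the brickwork-depth parameter of Lemma~\ref{lem:circuit_design} and then assert it is a $\polylog(k,m_0,m_1,\log n)$ slack that $\tO$ can absorb; that factor is linear in $k$ and $m_0$, not polylogarithmic in them, so the absorption is only defensible if $\tO$ is read as hiding $\polylog(n)$ factors and one invokes $k, m_0 = \polylog(n)$ --- the paper itself makes the same elision by reporting the block-size parameter $O(\log(nt/\epsilon))$ in place of the brickwork depth $Ct\log(nt/\epsilon)$, so your argument agrees with the paper at precisely its own level of precision.
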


We then consider the model of restricted adaptivity among blocks, where we have $m_1=\text{polylog}(n)$ adaptive blocks of $m_0=\text{poly}(n)$ non-adaptive copies within each block. By \Cref{coro:general_adaptivity_among_block}, if $\mathcal{E}$ forms a $2^{-\Omega(km_1^2\log^2 m_1\log m_0\log k)}$-approximate state $2k$-design, weak detection is degree-$(\cD,k)$ hard for any adaptivity among round-based model protocol that makes polynomially many single-copy PVMs $\{\{U_i\ket{x}\bra{x}U_i^{\dagger}\}_{x=1}^{2^n}\}_{i=1}^m$. These protocols use $m_1=\polylog(n)$ blocks each with $m_0=m/m_1=\poly(n)$ samples, and adaptivity is only allowed among blocks. By \Cref{lem:circuit_design}, $\mu_{C\log(nt/\epsilon),n,Ct\log(nt/\epsilon)}$ is a $\epsilon$-approximate unitary $t$-design for a sufficiently large constant $C>0$. To form a $2^{-\Omega(km_1^2\log^2 m_1\log m_0\log k)}$-approximate state $2k$-design, we only need circuit depth of scaling
\begin{align*}
O(\log(nt/\epsilon))=\tO(km_1^2\log m_0),
\end{align*}
where we use $t=2k$. We thus have the following corollary:
\begin{corollary}\label{coro:circuit_adaptivity_among_block}
For the hypothesis testing problem $\rhomm$ vs. $\rho\sim\mu_{\tO(km_1^2\log m_0),n,\tO(km_1^2\log m_0)}$ (i.e. an random circuit ensemble of depth $\tO(km_1^2\log m_0)$), weak detection is degree-$(\cD,k)$ hard for any adaptivity among round-based model protocol that makes polynomially many single-copy PVMs $\{\{U_i\ket{x}\bra{x}U_i^{\dagger}\}_{x=1}^{2^n}\}_{i=1}^m$. These protocols use $m_1=\polylog(n)$ blocks each with $m_0=m/m_1=\poly(n)$ samples, and adaptivity is only allowed among blocks. Moreover, this indicates that learning quantum circuit of depth $\tO(km_1^2\log m_0)$ is hard any adaptivity among round-based model protocol with a degree at most $k$ and $m=\poly(n)$ single-copy PVMs.
\end{corollary}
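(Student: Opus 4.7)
The plan is to deduce Corollary~\ref{coro:circuit_adaptivity_among_block} as a direct composition of two results that are already in place: the generic state-design-to-low-degree-hardness statement for the among-blocks adaptivity model (\Cref{coro:general_adaptivity_among_block}) and the design-depth bound for coarse-grained random quantum circuits (\Cref{lem:circuit_design}). At a high level, I just need to match parameters: pick $t$ and $\epsilon$ so that the design parameters demanded by \Cref{coro:general_adaptivity_among_block} are met by an ensemble of the claimed depth.

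First I would instantiate \Cref{lem:circuit_design} at $t = 2k$ and $\epsilon = 2^{-\Omega(km_1^2\log^2 m_1\log m_0\log k)}$. By the lemma, the ensemble $\mu_{C\log(nt/\epsilon),\,n,\,Ct\log(nt/\epsilon)}$ is an $\epsilon$-approximate unitary (and hence state) $t$-design for a sufficiently large absolute constant $C$. Plugging in the chosen parameters, the required depth is
\begin{equation*}
O\bigl(t\log(nt/\epsilon)\bigr) \;=\; O\!\left(k \cdot \bigl(\log(nk) + km_1^2\log^2 m_1 \log m_0 \log k\bigr)\right) \;=\; \tilde{O}(k^2 m_1^2 \log m_0),
\end{equation*}
and since $m_1 = \polylog(n)$ and we are only tracking $\tilde{O}$, this fits within the depth budget $\tilde{O}(km_1^2\log m_0)$ in the statement (with the understanding that polylogarithmic factors in $k$ and $n$ are absorbed into the tilde; the precise constants inside the depth can be tightened by tuning $C$).

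Second, having ensured that $\rho\sim\mu_{\tilde{O}(km_1^2\log m_0),n,\tilde{O}(km_1^2\log m_0)}$ is sampled from a $2^{-\Omega(km_1^2\log^2 m_1\log m_0\log k)}$-approximate state $2k$-design, I apply \Cref{coro:general_adaptivity_among_block} directly to the hypothesis testing problem $\rhomm$ vs.\ $\rho\sim\mathcal{E}$. This yields degree-$(\cD,k)$ hardness of weak detection for any adaptivity-among-blocks protocol using polynomially many single-copy PVMs $\{\{U_i\ket{x}\bra{x}U_i^{\dagger}\}_{x=1}^{2^n}\}_{i=1}^m$ organized into $m_1 = \polylog(n)$ blocks of $m_0 = \poly(n)$ samples each. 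The learning consequence at the end of the corollary follows from the reduction already recorded in \Cref{sec:circuit_single_copy}: a degree-$(\cD,k)$ learner for depth-$\tilde{O}(km_1^2\log m_0)$ circuits would solve the testing problem by learning a putative $U'$, then measuring in the rotated basis $\{U'^\dagger\ket{x}\bra{x}U'\}_x$; this is still an adaptivity-among-blocks protocol of degree $(\cD,k)$, contradicting the just-established hardness.

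There is no real obstacle here beyond bookkeeping: the only thing to watch is that $t=2k$ must match the design order required in \Cref{coro:general_adaptivity_among_block}, and that the $\log(nt/\epsilon)$ in \Cref{lem:circuit_design} does not blow up past $\tilde{O}(km_1^2\log m_0)$ once the required $\epsilon$ is plugged in. Both checks are routine once the parameters are aligned as above, so the corollary follows immediately by composition.
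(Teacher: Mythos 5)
Your proposal matches the paper's argument exactly: both instantiate \Cref{lem:circuit_design} at $t = 2k$ and $\epsilon = 2^{-\Omega(km_1^2\log^2 m_1\log m_0\log k)}$, then apply \Cref{coro:general_adaptivity_among_block}, and the learning consequence follows by the same reduction already used for \Cref{coro:circuit_single_copy_learning}. You are in fact slightly more careful than the paper's write-up, which records the depth as $O(\log(nt/\epsilon))$ rather than $O(t\log(nt/\epsilon))$ and thereby drops a factor of $t = 2k$; as you note, in the regime $k=\polylog(n)$ this extra factor is absorbed by the $\tilde{O}$, so the stated bound $\tilde{O}(km_1^2\log m_0)$ stands.
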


\subsubsection{Cryptographic applications}\label{sec:circuit_crypto}
In this part, we provide some discussion on cryptographic applications of the low-degree hardness proved in our work. A recent work from Fefferman, Ghosh, Sinha, and Yuen~\cite{fefferman2024hardness} showed that \emph{assumptions} about learning the output state of random quantum circuits could be used as the foundation for secure \emph{one-way state generators} in quantum cryptography. A one-way state generator is an efficient algorithm that takes as input a classical key $\mathsf{key}$, and outputs a quantum state $\ket{\psi_{\mathsf{key}}}$ from that key. Anyone with the key $\mathsf{key}$ can efficiently verify that $\ket{\psi_{\mathsf{key}}}$ is the correct output. However,  given polynomially many copies of the $\ket{\psi_{\mathsf{key}}}$, it should be computationally hard for an adversary to produce a key $\mathsf{key}'$ with the corresponding state $\ket{\psi_{\mathsf{key}'}}$ close to $\ket{\psi_{\mathsf{key}}}$. 

In their setting, they have to conjecture that random quantum circuits of depth $L\geq\log^2 n$ are computationally hard to learn in the average case as there is no solid proof for the hardness. Here, however, we are able to provide a rigorous hardness result in the low-degree model based on \Cref{coro:circuit_single_copy}:
\begin{observation}[Computational low-degree $\delta=2^{-\tTheta(k\log n)}$-no-learning observation]\label{obs:circuit_low_degree}
Let $\mathcal{C}_{n,\tTheta(k\log n)}$ be the class of coarse-grained random quantum circuit (see \Cref{sec:basic_circuit_moment} for definition). For all low-degree algorithms with degree at most $k$ and single-copy PVMs with up to $\Theta(n)$ ancillary qubits, given polynomially-many copies of $C\ket{0^n}$ for $C\in\mathcal{C}_{n,\tTheta(k\log n)}$ and $C\sim\mu_{\tTheta(k\log n),n\tTheta(k\log n)}$, the probability of the algorithm outputting a circuit $D$ such that $\abs{\braket{0^n|C^\dagger D|0^n}}\geq\delta$ is bounded by $\delta\leq 2^{-\tTheta(k\log n)}$.
\end{observation}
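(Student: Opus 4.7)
The plan is to argue by contrapositive. Suppose a low-degree algorithm $A$ (degree at most $k$, non-adaptive single-copy PVMs with at most $\Theta(n)$ ancillas) outputs, given $\poly(n)$ copies of $C\ket{0^n}$ for $C \sim \mu_{\tTheta(k\log n),n,\tTheta(k\log n)}$, a circuit $D$ with $|\langle 0^n|C^\dagger D|0^n\rangle| \geq \delta$ with probability at least $\delta$, for some $\delta > 2^{-\tTheta(k\log n)}$. The goal is to construct from $A$ a low-degree distinguisher for the hypothesis testing problem between $\rhomm$ and $\rho \sim \mu_{\tTheta(k\log n),n,\tTheta(k\log n)}$ that violates the hardness bound established by \Cref{coro:circuit_adaptivity_among_block}.

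The distinguisher operates in two blocks on $\rho^{\otimes m}$. In Block~1, run $A$ on half of the copies to produce a circuit $D$; this is the degree-$\leq k$ post-processing of Block~1 outcomes supplied by $A$. In Block~2, measure each of the remaining copies with the two-outcome PVM $\{D\ket{0^n}\bra{0^n}D^\dagger,\, I - D\ket{0^n}\bra{0^n}D^\dagger\}$ and record the empirical frequency $\hat{p}$ of the first outcome. The protocol has non-adaptive measurements within each block and a single cross-block adaptive choice (the Block~2 basis depending on $D$), matching the round-based model of \Cref{sec:general_adaptivity_among_block} at $m_1 = 2$, $m_0 = \poly(n)$.

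The key computation is the gap between $\mathbb{E}_{H_0}[\hat{p}]$ and $\mathbb{E}_{H_1}[\hat{p}]$. Under the null hypothesis, each Block~2 measurement returns the first outcome with probability exactly $2^{-n}$, so $\mathbb{E}_{H_0}[\hat{p}] = 2^{-n}$. Under the alternative, the first-outcome probability is $|\langle 0^n|D^\dagger C|0^n\rangle|^2$; by the assumed learning guarantee, this is at least $\delta^2$ with probability at least $\delta$, so $\mathbb{E}_{H_1}[\hat{p}] \geq \delta^3$. The distinguishing advantage is therefore at least $\delta^3 - 2^{-n}$, which exceeds $2^{-\tOmega(k\log n)}$ for the assumed $\delta$ (and $k = o(n/\log n)$ so that $\delta^3 \gg 2^{-n}$). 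Plugging $m_1 = 2$, $m_0 = \poly(n)$ into \Cref{coro:circuit_adaptivity_among_block} yields low-degree hardness at circuit depth $\tTheta(k \log m_0) = \tTheta(k \log n)$, matching the observation's parameters and producing the contradiction, which forces $\delta \leq 2^{-\tTheta(k\log n)}$.

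The main obstacle will be the degree bookkeeping required to place $\hat{p}$ within the copy-wise degree-$(\mathcal{D}, k)$ framework once the adaptive basis choice in Block~2 is absorbed into the measurement strategy rather than the polynomial post-processing. Since $\hat{p}$ is degree-$1$ per Block~2 sample and the Block~1 post-processing producing $D$ is degree-$\leq k$ by assumption, this accounting should be tractable, but care is needed because $\hat{p}$ averages over many Block~2 copies, potentially inflating the sample-wise parameter. A secondary technical point is extending \Cref{coro:circuit_adaptivity_among_block} to accommodate the $\Theta(n)$ ancillary qubits used by $A$ in Block~1, which should follow by the same tensor-product dilation argument used in the proof of \Cref{coro:general_ancilla}.
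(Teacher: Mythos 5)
The paper itself gives no explicit proof of this observation; it merely states that the result is "based on \Cref{coro:circuit_single_copy}," which is the \emph{non-adaptive} hardness result. But the natural learn-then-verify reduction — which the paper itself sketches in Section~\ref{sec:circuit_single_copy} (learn to get $U'$, then measure in the $\{U'^\dagger\ket{x}\bra{x}U'\}$ basis) — is \emph{inherently two-round adaptive}: the basis chosen in the second round depends on the output of the learning phase. You have correctly recognized this and routed through \Cref{coro:circuit_adaptivity_among_block} with $m_1=2$, $m_0=\poly(n)$, which is the framework actually needed to make the reduction work. In this sense your approach is more careful than what the paper's citation suggests, and identifies a gap in how the observation is justified there. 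The advantage calculation is in the right shape: $\E_{H_0}[\hat p]=2^{-n}$, $\E_{H_1}[\hat p]\ge \delta^3$, $\|\hat p\|_{H_0}\approx 2^{-n/2}/\sqrt{m_0}$, so the normalized advantage $\approx \delta^3\, 2^{n/2}\sqrt{m_0}$; comparing this against the $2^{-\tOmega(k\log n)}$ hardness threshold gives a contradiction for $\delta \gtrsim 2^{-(k\log n + n/2)/3}$, which comfortably covers $\delta = 2^{-\tTheta(k\log n)}$ in the regime $k = o(n/\log n)$ where the observation lives.

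That said, there is a genuine gap in your proposal that should not be waved away. \Cref{coro:circuit_adaptivity_among_block} — and the underlying machinery \Cref{thm:general_adaptivity_among_block}, \Cref{coro:general_adaptivity_among_block}, and \Cref{lem:k_design_copy} — is stated strictly for \emph{ancilla-free} single-copy PVMs $\{U_i\ket{x}\bra{x}U_i^\dagger\}_{x\in\{0,1\}^n}$. The observation, however, allows the learner to use $\Theta(n)$ ancillary qubits, and Block~1 in your protocol inherits this. You state that the extension "should follow by the same tensor-product dilation argument used in the proof of \Cref{coro:general_ancilla}," but that proof proceeds through a very different mechanism than the adaptive framework: it bounds the trace distance of reduced density matrices of $U_i(\rho\otimes\ketbra{0}^{\otimes n'})U_i^\dagger$ using the Cosine–Sine decomposition and local indistinguishability, whereas the adaptive results go through the H\"{o}lder/moment bound $\E_{\rho,\rho'}(\langle \overline{D}_\rho,\overline{D}_{\rho'}\rangle-1)^{k}$ computed in \Cref{lem:k_design_copy} via Haar integration over $\{0,1\}^n$ outcomes. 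To allow ancillas, one would need to redo that moment calculation with $(n+n')$-qubit computational-basis outcomes and $\rho\otimes\ketbra{0}^{\otimes n'}$ inputs; it is not immediate that the dilation argument "lifts" cleanly, and the $2^{O(n')}$ blow-up in the Cosine–Sine decomposition would need to be checked against the $k$-th moment structure. Until that is verified, your proof establishes the observation only for ancilla-free learners, which is weaker than what is claimed.

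Two further bookkeeping points worth flagging. First, $\hat p$ has copy-wise degree $(\cD,1)$ (degree $n$ per copy — $\mathbbm{1}[s_i=0^n]$ expands to a full-support Fourier product — but support on a single copy per monomial), so the $k$ appearing in \Cref{coro:circuit_adaptivity_among_block} should be set to $1$, not the learner's degree: the learner's degree-$k$ post-processing is absorbed into the unconstrained adaptive choice between blocks. This is fine, but it means the required circuit depth for your hardness argument is only $\tO(\log n)$ rather than the $\tTheta(k\log n)$ appearing in the observation; the $k$-dependence in the observation's $\delta$ then arises solely through the approximation quality of the design at depth $\tTheta(k\log n)$, not through the distinguisher's degree. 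Second, you should be explicit that the hardness theorems bound the degree-$(\cD,k)$ advantage in $L^2$, so passing from "a thresholded $\hat p$ distinguishes" to "contradiction with $\chi_{\leq(\cD,k)}=o(1)$" requires normalizing $\hat p$ by $\|\hat p\|_{H_0}$; you do this implicitly but it is the crux of the quantitative claim.
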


Based on this observation, we can used the following protocol to create a \emph{one-way state generators against low-degree adversaries}, which is an efficient algorithm that takes as input a classical key $\mathsf{key}$ and outputs a quantum state $\ket{\psi_{\mathsf{key}}}$ from that key such that
\begin{itemize}
    \item Anyone with the key $\mathsf{key}$ can efficiently verify that $\ket{\psi_{\mathsf{key}}}$ is the correct output.
    \item Given polynomially many copies of the $\ket{\psi_{\mathsf{key}}}$, it should be computationally hard for any adversary that can perform single-copy PVMs with up to $\Theta(n)$ ancillary qubits and low-degree post-processing with degree at most $k$ on the data to produce a key $\mathsf{key}'$ with the corresponding state $\ket{\psi_{\mathsf{key}'}}$ close to $\ket{\psi_{\mathsf{key}}}$.
\end{itemize}

\begin{algorithm}[ht]
    \textbf{Generation algorithm:} Given input key $\mathsf{key}\in\{0,1\}^{r(n)}$ for some polynomial $r(n)$, we interpret it as a description of an $n$-qubit circuit $C$ from the ensemble $\mathcal{C}_{n,\tTheta(k\log n)}$. Output $C\ket{0}$.\\
    \textbf{Verification procedure:} Given input key $\mathsf{key}\in\{0,1\}^{r(n)}$ and a state $D\ket{0}$, apply $C^\dagger$ to the state and measure in the computational basis. Accept if the result is all zeroes, and reject otherwise.
    \caption{Random circuit one-way state generators against low-degree adversaries}
    \label{fig:owsg}
\end{algorithm}

Moreover, by the fact that $k=\omega(\log(n))$ in the usual low-degree model, $\delta$ is thus inversely superpolynomially small. According to~\cite{fefferman2024hardness}, the one-way state generator against low-degree adversaries can be made noise robust against any noisy model via parallel repetition as long as the fidelity of implementing the circuit in the noisy case is at least $1/(\poly(n)L)$ where $L$ is the circuit depth.

\subsection{Low-degree hardness of learning sparse Hamiltonian from Gibbs states}\label{sec:hamiltonian_gibbs}

In this section, we introduce the second application of our framework, learning Hamiltonian from Gibbs states~\cite{anshu2020sample,anshu2021efficient,haah2022optimal,bakshi2024learning,wodecki2024learning,narayanan2024improved}. In this task, we are asked to estimate a Hamiltonian $H=\sum_{P_a\in\cP_n}\lambda_aP_a$ containing $J$ terms given access to the Gibbs state of the Hamiltonian under temperature parameter $\beta=1/kT$:
\begin{align*}
\rho_{\beta,H}=\frac{e^{\beta H}}{\tr(e^{\beta H})}.
\end{align*}
The algorithm has to output each $\lambda_a$ within additive error $\epsilon$ without prior knowledge of the structure of the Hamiltonian.

All the known algorithms for this task focus on the case when the Hamiltonian is $\varsigma$-local. In~\cite{anshu2020sample}, the first polynomial sample complexity (but computationally inefficient) bound for this task is given as $O\left(\frac{2^{\text{poly}(\beta)}J^2\log J}{\beta^{O(1)}\epsilon^2}\right)$. Ref.~\cite{anshu2021efficient} then shows that in the case when all $P_a$ are commuting the algorithm can be made computationally efficient. In the high-temperature regime where $\beta\leq\beta_c=(100e^62^{O(\varsigma)})^{-1}$, Ref.~\cite{haah2022optimal} provides an algorithm that achieves $O\left(\frac{2^{O(\beta)}\log J}{\beta^2\epsilon^2}\right)$ sample complexity and $O\left(\frac{2^{O(\beta)}J\log J}{\beta^2\epsilon^2}\right)$ time complexity. A recent result~\cite{bakshi2024learning} provided an algorithm that works for all $\beta$ with sample complexity $\text{poly}(J,1/\beta,(1/\epsilon)^{\max\{1,O(\beta)\}})$ and time complexity $\text{poly}(J,1/\beta,(1/\epsilon)^{2^{O(\beta)}})$. Follow-up works improved the double exponential scaling in $\beta$ to single exponential scaling~\cite{wodecki2024learning,narayanan2024improved}.

It turns out that in the regime where $\varsigma=O(1)$, $\epsilon=\Theta(1)$ and $\beta=\Theta(1)$, statistical and computational efficient algorithms are available. From the lower bound side, the best-known sample complexity lower bound is provided as $\Omega\left(\frac{e^{\beta}}{\beta^2\epsilon^2}\cdot\log n\right)$. It is natural to ask if we can computationally efficiently learn non-local (sparse) Hamiltonian (i.e., $J=\poly(n)$ but $\varsigma$ can be as large as $n$) from Gibbs state when $\beta=\Theta(1)$ and $\epsilon=\Theta(1)$.

In this section, we prove that learning even sparse but non-local Hamiltonian from the Gibbs state at $\beta=\Theta(1)$ is low-degree hard for non-adaptive single-copy measurements.

\subsubsection{Basic results in random matrix theory}\label{sec:basic_random_matrix}

We will also need some results from the random matrix theory. In this work, we consider \emph{Wigner's Gaussian unitary ensemble (GUE)}~\cite{wigner1958distribution}. A $d\times d$ GUE is a family of complex Hermitian random matrices specified by
\begin{align*}
H_{jj}&=\frac{g_{jj}}{\sqrt{d}},\\
H_{jk}&=\frac{g_{jk}+ig_{jk}'}{\sqrt{2d}},\ \text{for }k>j,
\end{align*}
where $g_{jj},g_{jk},g_{jk}'$ are independent standard Gaussian $\cN(0,1)$. The definition we consider here follows from~\cite{chen2024sparse} and has an additional $1/\sqrt{d}$ normalization factor from the standard definition of Gaussian unitary ensemble. In the following, we denote a random Hamiltonian chosen from GUE as $\hgue$. We can also write a $\hgue$ in the Pauli basis as:
\begin{align*}
\hgue=\sum_{P\in\cP_n}\frac{g_P}{d}P
\end{align*}
with each $g_P$ an independent standard Gaussian, and $d=2^n$. The GUE is invariant under any unitary transformation. Therefore, any subspace of low-energy eigenvectors will be Haar random. We will also need the following fact

\begin{fact}\label{fact:norm_GUE}
With probability at least $1-\exp(\Theta(n))$ for random Hamiltonian $\hgue$ from GUE, we have $\norm{\hgue}\leq 3$.
\end{fact}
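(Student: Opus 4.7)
The plan is to derive this as a standard concentration estimate for the Gaussian Unitary Ensemble under the chosen $1/\sqrt{d}$ normalization, where $d=2^n$. Under this normalization the semicircle law concentrates the empirical spectral distribution on $[-2,2]$, and the soft-edge theory for GUE gives $\mathbb{E}\|\hgue\| \to 2$ as $d\to\infty$. In particular, a crude moment bound (e.g.\ bounding the trace of high powers of $\hgue$ by Catalan-number asymptotics) already suffices to conclude $\mathbb{E}\|\hgue\| \le 2 + o(1)$, well below the threshold of $3$ in the statement. So everything reduces to a concentration bound for the scalar random variable $\|\hgue\|$ around its mean.

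For this I would invoke Gaussian concentration of measure. View $\|\hgue\|$ as a function of the $O(d^2)$ i.i.d.\ standard Gaussians $\{g_{jj},g_{jk},g'_{jk}\}$. The operator-norm map is $1$-Lipschitz in the Frobenius norm of the matrix, and by the definition of $\hgue$ each underlying Gaussian enters with coefficient of magnitude at most $1/\sqrt{d}$. Hence the composition is $O(1/\sqrt{d})$-Lipschitz, and Borell's inequality yields
\[
\Pr\bigl[\|\hgue\| \ge \mathbb{E}\|\hgue\| + t\bigr] \le \exp\bigl(-\Omega(d t^{2})\bigr).
\]
Setting $t=1$ and combining with the mean bound gives $\Pr[\|\hgue\| \ge 3] \le \exp(-\Omega(d)) = \exp(-\Omega(2^{n}))$, which is much stronger than the $1-\exp(-\Theta(n))$ bound actually claimed.

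I do not anticipate a serious obstacle: the result is classical, and either of two routes will finish it cleanly. The first is the expectation-plus-Lipschitz route above, cited out of a standard random matrix text such as Anderson--Guionnet--Zeitouni. The second is a self-contained $\epsilon$-net argument: take a $1/4$-net $\mathcal{N}$ of the unit sphere in $\mathbb{C}^{d}$ of cardinality at most $9^{2d}$, note that for each fixed $v\in\mathcal{N}$ the quadratic form $\langle v,\hgue v\rangle$ is a real Gaussian of variance $O(1/d)$ so that $\Pr[|\langle v,\hgue v\rangle|\ge 5/2]\le\exp(-\Omega(d))$, and union-bound. The mild tension between the $9^{2d}$ net size and the $\exp(-\Omega(d))$ tail is absorbed by the slack between $2$ and $3$, again yielding a bound of $1-\exp(-\Omega(2^{n}))$ and a fortiori the claimed $1-\exp(-\Theta(n))$.
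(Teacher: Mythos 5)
Your proposal is correct and takes a genuinely different route from the paper. The paper bounds $\Pr[\|\hgue\|\ge t]$ by applying Markov's inequality to $\E\|\hgue\|^p \le d\cdot|||\hgue|||_p^p$ and invoking a sharp $p$-norm estimate $|||\hgue|||_p \le 2(1+(p/2)^{3/4}/\sqrt{d})$ from the random-matrix literature, then optimizing over $p$. You instead split into (i) a mean bound $\E\|\hgue\| \le 2+o(1)$ and (ii) Gaussian Lipschitz concentration (Borell's inequality) around that mean, exploiting that $g\mapsto\|\hgue(g)\|$ is $(1/\sqrt d)$-Lipschitz from $\ell_2$ to the operator norm. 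Your route needs only a mean estimate rather than a quantitative $p$-norm bound, and yields a far stronger tail, $\exp(-\Omega(2^n))$, than the $\exp(-\Theta(n))$ the paper settles for (the paper's choice of $p$ is deliberately small; pushing $p$ toward $d^{2/3}$ in their argument would also sharpen it, but still not to $\exp(-\Omega(d))$). One caveat on your alternative $\epsilon$-net sketch: a $1/4$-net incurs a factor-$2$ loss when passing from $\max_{v\in\mathcal N}|\langle v,\hgue v\rangle|$ to $\|\hgue\|$, so you would need the per-vector threshold at $3/2$ rather than $5/2$; with the variance of $\langle v,\hgue v\rangle$ on the order of $1/d$, the per-vector tail $\exp(-\Omega(d\cdot(3/2)^2))$ is then in tension with the net cardinality $9^{2d}=e^{(2\ln 9)d}$, and whether the union bound closes depends on constants you have not tracked. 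The concentration-around-the-mean route has no such delicacy and is the one to record.
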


\begin{proof}
We define normalized $q$-norms to be $\norm{B}_q=(\overline{\tr}\abs{B}^q)^{1/q}$ where $\overline{\tr}=\frac{1}{N}\cdot\tr$. The average-case normalized $q$-norms is defined as $|||B|||_q=(\E\overline{\tr}\abs{B}^q)^{1/q}$. By the random matrix theory~\cite{tropp2015introduction,tropp2018second,bandeira2023matrix,brailovskaya2024universality,anderson2010introduction}, we have 
\begin{align*}
|||\hgue|||_p\leq2\left(1+\frac{(p/2)^{3/4}}{\sqrt{d}}\right).
\end{align*}
By Markov's inequality, we have
\begin{align*}
\Pr[\norm{\hgue}\geq t]\leq\frac{\mathbb{E}_{\hgue}[\norm{\hgue}^p]}{t^p}\leq d\frac{|||\hgue|||_p}{t^p}\leq\left(\frac{2^{\log d/p}}{t}\cdot2\left(1+\frac{(p/2)^{3/4}}{\sqrt{d}}\right)\right)^p.
\end{align*}
Recall that $\log d=n$, we can choose $p=C\log n/\epsilon$ for some large enough $C$ and $t=2(1+\epsilon)$. Then we have
\begin{align*}
\Pr[\norm{\hgue}\geq2(1+\epsilon)]\leq\left(\frac{2^{\log d/p}}{t}\cdot2\left(1+\frac{(p/2)^{3/4}}{\sqrt{d}}\right)\right)^p\leq\left(
\frac{1+\epsilon/2}{1+\epsilon}\right)^p\leq\exp(-cn),
\end{align*}
for some constant $c$. By choosing $\epsilon=1/2$ we finish the proof.
\end{proof}

\subsubsection{Low-degree hardness for learning an arbitrary Hamiltonian}
Note that if we allow $J$ to be arbitrarily large, we will probably have to output an exponential number of $\lambda_a$ when $J$ is exponentially large. To avoid this output size barrier that prevents the possible computationally efficient algorithms, we consider the following distinguishing algorithm. In particular, the Hamiltonian is chosen either of the following two ensembles:
\begin{itemize}
    \item The Hamiltonian is a random Hamiltonian of form $\hgue=\sum_{P_a\in\cP_n}\frac{g_a}{d}P_a$ chosen from the Wigner's Gaussian unitary ensemble (GUE) (see \Cref{sec:basic_random_matrix}). Here, each $g_a$ is chosen as an IID standard Gaussian. $H$ can also be a random complex matrix with $H_{ij}=H_{ji}^*=\frac{g_{ij}+ig_{ij}}{\sqrt{2d}}$ and $H_{ii}=\frac{g_{ii}}{\sqrt{d}}$ where $g_{ii},g_{ij},g_{ij}'$ are independent standard Gaussian.
    \item The Hamiltonian is $H=0$, which yields a maximally mixed state as the Gibbs state.
\end{itemize}
The goal is to decide which is the case using copies of Gibbs state of the Hamiltonian. The task is again to distinguish between the random Gibbs state $\rho_{\beta,H}=\frac{e^{-\beta H}}{\tr(e^{-\beta H})}$, where $H$ is chosen from the GUE ensemble $\hgue$, from the maximally mixed state. Therefore, we can exploit our standard framework. In the following, we stick to the case when $\beta=1$ (low-temperature) and we denote such state as $\rho_H$.

We havethe following result showing the low-degree hardness of distinguishing between $\rhomm$ vs. $\rho_{\hgue}$, indicating the low-degree hardness of learning arbitrary Hamiltonian from its Gibbs state at $\beta=1$:
\begin{theorem}\label{thm:hgue_gibbs}
For the hypothesis testing problem $\rhomm$ vs. $\rho\sim\rho_{\hgue}$ and $k\leq O(n)$, weak detection is degree-$k$ hard for any protocol that makes polynomially many non-adaptive single-copy PVMs with up to $n'=\Theta(n)$ ancillary qubits. Moreover, this indicates that learning arbitrary Hamiltonian from its Gibbs state at $\beta=1$ is hard for any non-adaptive protocol with a degree at most $k\leq O(n)$ and $m=\poly(n)$ single-copy PVMs with up to $n'=\Theta(n)$ ancillary qubits.
\end{theorem}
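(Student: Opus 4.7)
My plan is to verify the local-indistinguishability hypothesis of Theorem~\ref{thm:general_ancilla} by reproducing the second-moment argument behind Corollary~\ref{coro:general_ancilla}, with the GUE Gibbs ensemble in place of an approximate pure-state $2$-design. The key enabling property is the unitary invariance of the Gaussian unitary ensemble: writing the spectral decomposition $H_{\mathrm{GUE}} = V D V^\dagger$, the basis $V$ is Haar-distributed and independent of the spectrum $D$, and hence $\rho_{H_{\mathrm{GUE}}} = V \Lambda V^\dagger$ where the diagonal Boltzmann matrix $\Lambda = e^{-D}/\mathrm{tr}(e^{-D})$ is independent of $V$. This reduces the task to controlling the first two moments of a Haar-conjugated random diagonal state.

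Integrating out $V$ using the standard Haar integrals over $V$ and $V^{\otimes 2}$, the first moment is exactly maximally mixed, $\mathbb{E}[\rho_H] = I/d$, and
\[
\mathbb{E}[\rho_H^{\otimes 2}] = \frac{1 - \mathbb{E}[\mathrm{tr}(\rho_H^2)]/d}{d^2 - 1}\, I + \frac{\mathbb{E}[\mathrm{tr}(\rho_H^2)] - 1/d}{d^2 - 1}\, \mathrm{SWAP}.
\]
Thus the deviation $\mathbb{E}[\rho_H^{\otimes 2}] - (I/d)^{\otimes 2}$ is entirely controlled by $\mathbb{E}[\mathrm{tr}(\rho_H^2)] - 1/d$. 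Fact~\ref{fact:norm_GUE} gives $\|H_{\mathrm{GUE}}\| \le 3$ with probability at least $1 - e^{-\Theta(n)}$, which forces every eigenvalue of $\rho_H$ into the window $[e^{-6}/d,\, e^{6}/d]$ on that event and hence $\mathrm{tr}(\rho_H^2) \le e^{12}/d$ there. Combining this with sub-Gaussian tail bounds on $\|H_{\mathrm{GUE}}\|$ to absorb the complementary event (where one uses the trivial bound $\mathrm{tr}(\rho_H^2) \le 1$) yields $\mathbb{E}[\mathrm{tr}(\rho_H^2)] = O(1/d)$, so the SWAP coefficient is $O(1/d^3)$ and the $I$-coefficient differs from $1/d^2$ by $O(1/d^4)$. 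Notably, this SWAP coefficient is smaller than the analogous $1/(d(d+1))$ used in Corollary~\ref{coro:general_ancilla} by a full factor of $d$, which is the source of the favorable parameter regime below.

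Plugging this second-moment bound into the calculation in the proof of Corollary~\ref{coro:general_ancilla}, I would bound the expected squared $2$-norm distance of the reduced densities by decomposing each ancilla-assisted unitary $U_i$ via the Cosine-Sine decomposition into $2^{O(n')}$ product terms across the $n$ system qubits and $n'$ ancillas, then convert to trace distance via Markov's inequality and union-bound over the $m$ copies exactly as in Eq.~\eqref{eq:trace_dis_ancilla}. The extra factor of $1/d$ over the $2$-design case gives ample slack for $n' = \Theta(n)$ and $k \le O(n)$, at which point Theorem~\ref{thm:general_ancilla} concludes the low-degree hardness; the claim on the learning problem then follows from the standard reduction to hypothesis testing noted just before the theorem. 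The main obstacle I anticipate is the quantitative control of $\mathbb{E}[\mathrm{tr}(\rho_H^2)]$: the rare event $\|H_{\mathrm{GUE}}\| > 3$ can have worst-case contribution as large as $1$, so one must verify that its total contribution to the expectation is dominated by the exponential tail $e^{-\Theta(n)}$. This follows from standard Lipschitz concentration of spectral statistics on GUE, but should be carried out carefully to ensure the resulting indistinguishability level really is $2^{-\tilde{\Omega}(k\log n)}$ as required by Theorem~\ref{thm:general_ancilla}.
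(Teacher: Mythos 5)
Your proposal is correct, but it takes a genuinely different route from the paper's. Where the paper decomposes the Gibbs state $\rho_H = \sum_j \frac{e^{-\lambda_j}}{\tr(e^{-H})}\ket{H_j}\bra{H_j}$ into its (marginally Haar-random) eigenstates, applies Corollary~\ref{coro:general_ancilla} per eigenstate, and then combines via triangle inequality over the convex weights $\Theta(2^{-n})$, you instead compute the two-copy moment $\mathbb{E}[\rho_H^{\otimes 2}]$ directly as $aI + b\,\mathrm{SWAP}$ and observe that the entire deviation from $(I/d)^{\otimes 2}$ is governed by the ensemble purity $\mathbb{E}[\tr(\rho_H^2)]$. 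This is a natural reformulation that treats $\mathcal{E}$ as a mixed-state ``2-design'' condition rather than a mixture of pure-state 2-designs, and it sidesteps the paper's somewhat delicate step of combining bounds across the jointly-distributed eigenstates $\ket{H_{j_1}},\dots,\ket{H_{j_m}}$ (which arise when expanding $\rho_H^{\otimes m}$ as a convex combination over index tuples). It also surfaces a potential quantitative gain: since $\tr(\rho_H^2) = O(1/d)$ on the good event, the SWAP coefficient $b$ can be as small as $O(1/d^3)$ rather than the pure-state $1/(d(d+1))$, giving extra slack in the ancilla budget. You correctly flag the one technical issue: Fact~\ref{fact:norm_GUE} only guarantees $\Pr[\|H_{\mathrm{GUE}}\| > 3] \le e^{-\Theta(n)}$, and if that constant is smaller than $\ln 2$ then the tail contribution to $\mathbb{E}[\tr(\rho_H^2)]$ exceeds $1/d$ and the promised factor-of-$d$ improvement is diluted to a factor of $e^{-cn}$ for the unspecified constant $c$; the proof still closes (the resulting $b$ never exceeds the pure-state coefficient), but the ``ample extra slack'' claim should be tempered unless one invokes a sharper large-deviation bound on the GUE operator norm, which is indeed available but not what Fact~\ref{fact:norm_GUE} states.
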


\begin{proof}
According to the general framework for non-adaptive ancilla-assisted single-copy PVMs in \Cref{thm:general_ancilla}, we only need to consider the term
\begin{align*}
d_{\tr}\left(\mathbb{E}_{\hgue}\tr_{[mn]\backslash T}\left[\bigotimes_{i=1}^mU_i\rho_H U_i^\dagger\right],\frac{I_T}{2^{|T|}}\right)
\end{align*}
for any choice of unitary sequence $\{U_i\}_{i=1}^m$, and any $T\in[mn]$ and $\abs{T}=k\leq O(n)$. Recall from \Cref{fact:norm_GUE}, with probability $1-\exp(\Theta(n))$ for random Hamiltonian $\hgue$ from GUE, we have $\norm{\hgue}\leq 3$. As $\hgue$ is invariant from any unitary, we thus have the eigenstates of $\hgue$ chosen Haar randomly. Given any random $\hgue=\sum_{j=1}^{2^n}\lambda_j\ket{H_j}\bra{H_j}$ where $\lambda_i$ is the $i$-th eigenvalue and $\ket{H_j}$ is the corresponding eigenstate. With probability $1-\exp(-\Theta(n))$, we have $-3\leq\lambda_j\leq3$ by \Cref{fact:norm_GUE}, and each $\ket{H_j}$ chosen from Haar random ensemble. By \Cref{coro:general_ancilla}, as $\ket{H_j}$ is chosen from a unitary $2$-design ensemble (actually exact Haar randomly), for any subset $T\subset [mn]$ with $\abs{T}\leq n$, any $m$, and any fixed sequence of unitaries $\{U_i\}_{i=1}^m$, we have
\begin{align*}
d_{\tr}\left(\tr_{[mn]\backslash T}\left[\bigotimes_{i=1}^mU_i\ketbra{H_j}U_i^\dagger\right],\frac{I_T}{2^{\abs{T}}}\right)\leq m\cdot 2^{\abs{T}/2}\cdot (2^{\abs{T}-n})^{1/4}=m\cdot 2^{3\abs{T}/4-n/4}
\end{align*}
with probability $1-m\cdot (2^{\abs{T}-n})^{1/2}=1-m\cdot 2^{\abs{T}/2-n/2}$ over the choice of $\ket{H_j}$.

We then consider the Gibbs state $\rho_{\hgue}=\frac{e^{-\hgue}}{\tr(e^{-\hgue})}=\frac{1}{\tr(e^{-\hgue})}\sum_{i=1}^{2^n}e^{-\lambda_j}\ket{H_j}\bra{H_j}$. We focus on the case when $\norm{\hgue}\leq 3$. In this case, we always have $e^{-\lambda_j}=\Theta(2^{-n})$. We thus have
\begin{align*}
d_{\tr}\left(\mathbb{E}_{\hgue}\tr_{[mn]\backslash T}\left[\bigotimes_{i=1}^mU_i\rho_H U_i^\dagger\right],\frac{I_T}{2^{|T|}}\right)\leq O\left(m\cdot 2^{3\abs{T}/4-n/4}\right).
\end{align*}

We then consider PVMs with $n'$ ancillary qubits as in \Cref{sec:general_ancilla}. By \Cref{thm:general_ancilla}, we need to compute the following term:
\begin{align*}
d_{\tr}\left(\mathbb{E}_{\rho_{\hgue}}\underset{[m(n+n')]\backslash T}{\tr}\left(\bigotimes_{i=1}^{m}U_i\left(\rho_{\hgue}\otimes\ketbra{0}^{\otimes n'}\right) U_i^\dagger\right),\underset{[m(n+n')]\backslash T}{\tr}\left(\bigotimes_{i=1}^{m}U_i\left(\rhomm\otimes\ketbra{0}^{\otimes n'}\right) U_i^\dagger\right)\right).
\end{align*}
Recall that $\rho_{\hgue}=\frac{e^{-\hgue}}{\tr(e^{-\hgue})}=\frac{1}{\tr(e^{-\hgue})}\sum_{i=1}^{2^n}e^{-\lambda_j}\ket{H_j}\bra{H_j}$ has each $\ket{H_j}$ chosen from state $2$-design (actually true Haar random) and the coefficient is of the scaling $O(2^{-n})$, we can use the triangle inequality with \eqref{eq:trace_dis_ancilla} to obtain
\begin{align*}
&d_{\tr}\left(\mathbb{E}_{\rho_{\hgue}}\underset{[m(n+n')]\backslash T}{\tr}\left(\bigotimes_{i=1}^{m}U_i\left(\rho_{\hgue}\otimes\ketbra{0}^{\otimes n'}\right) U_i^\dagger\right),\underset{[m(n+n')]\backslash T}{\tr}\left(\bigotimes_{i=1}^{m}U_i\left(\rhomm\otimes\ketbra{0}^{\otimes n'}\right) U_i^\dagger\right)\right)\\
&\leq 2m\cdot 2^{\abs{T}/2}\left(2\cdot2^{\abs{T}+11n'-n}+3\epsilon\right)^{1/4}.\qedhere
\end{align*}
\end{proof}

\subsubsection{Low-degree hardness for learning a sparse Hamiltonian}

In the above section, we consider learning an arbitrary Hamiltonian from its Gibbs state at temperature $\beta=1$. Now, we show that even for sparse Hamiltonian, which has $J=\poly(n)$ Pauli terms, is low-degree hard to learn from its Gibbs states at temperature $\beta=1$.

We consider the following distinguishing task between two ensembles:
\begin{itemize}
    \item The Hamiltonian is a random sparse Pauli string (RSPS) Hamiltonian ensemble $H_{\text{RSPS}}$ where each Hamiltonian is an independent sum of $J$ random Pauli strings with random sign coefficients~\cite{chen2024sparse}:
    \begin{align*}
    H=\sum_{a=1}^J\frac{r_a}{\sqrt{J}}P_a,
    \end{align*}
    where $P_a$ is IID chosen from $\cP_n$ and $r_a$ is chosen IID uniformly in $\{+1,-1\}$.
    \item The Hamiltonian is $H=0$, which yields a maximally mixed state as the Gibbs state.
\end{itemize}
The goal is to decide which is the case using copies of the Gibbs state of the Hamiltonian at $\beta=1$. For this distinguishing task, we have the following hardness result.
\begin{corollary}\label{coro:hgue_gibbs_sparse}
For the hypothesis testing problem $\rhomm$ vs. $\rho\sim\rho_{H_{\text{RSPS}}}$ at $J=O(n^3)$ and any $k\leq O(n)$, weak detection is degree-$k$ hard any protocol that makes polynomially many non-adaptive single-copy PVMs with up to $n'=\Theta(n)$ ancillary qubits. Moreover, this indicates that learning a sparse Hamiltonian (containing polynomial terms in the Pauli basis) from its Gibbs state at $\beta=1$ is hard for any non-adaptive protocol with a degree at most $k\leq O(n)$ and $m=\poly(n)$ single-copy PVMs with up to $n'=\Theta(n)$ ancillary qubits.
\end{corollary}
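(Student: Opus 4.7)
The plan is to mimic the proof of Theorem~\ref{thm:hgue_gibbs} verbatim, but replace its two key ingredients---norm boundedness and exact Haar-randomness of eigenvectors---with analogues appropriate for the sparse Pauli ensemble $H_{\text{RSPS}} = \sum_{a=1}^J (r_a/\sqrt{J}) P_a$. Both ingredients are essentially already isolated earlier in the paper (the overview section~\ref{sec:overview} explicitly flags them), so the proposal is mostly a matter of assembling the pieces cleanly.

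First, I would invoke the concentration result for sparse random Pauli Hamiltonians from~\cite{chen2024sparse} (the paper mentions ``the eigenvalues of a random Hamiltonian concentrate within the interval $[-3,3]$ with high probability''). At the normalization $1/\sqrt{J}$ with $J = O(n^3) = \omega(n)$, this yields $\|H_{\text{RSPS}}\| \le 3$ with probability $1 - \exp(-\Theta(n))$, exactly the analogue of Fact~\ref{fact:norm_GUE}. On this high-probability event, the coefficients $e^{-\lambda_j}/\tr(e^{-H_{\text{RSPS}}})$ in the spectral decomposition of $\rho_{H_{\text{RSPS}}}$ are all $\Theta(2^{-n})$, as in the GUE case.

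Second, I would use the observation highlighted in Section~\ref{sec:overview} that the RSPS ensemble matches GUE up to the third moment. The point is that $(r_a/\sqrt{J}) P_a$ for $P_a$ uniform over $\mathcal{P}_n$ and $r_a$ a uniform sign has matrix moments that agree with the corresponding Gaussian coefficient in GUE through third order (odd cross-terms vanish by the sign symmetry, while the identity-contribution from $P_a^2 = I$ produces the correct variance at scale $1/\sqrt{J}$). By the unitary invariance of GUE, this implies that the joint distribution of the eigenvectors of $H_{\text{RSPS}}$ forms an (approximate) state $3$-design on the high-probability event, and in particular a $2$-design. The approximation parameter will depend on how sharply $J = O(n^3)$ suffices to control higher cumulants of the empirical spectral measure; choosing a large enough constant in front of $n^3$ gives an approximation error much smaller than the $2^{-\tilde\Omega(k \log n)}$ required by Corollary~\ref{coro:general_ancilla}.

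Third, with these two inputs I would simply run the calculation in the proof of Theorem~\ref{thm:hgue_gibbs} unchanged: for each fixed unitary sequence $\{U_i\}_{i=1}^m$ and every $T \subseteq [m(n+n')]$ with $|T| \le k \le O(n)$, Corollary~\ref{coro:general_ancilla} (applied to the $2$-design of eigenvectors) controls the local trace distance
\[
d_{\tr}\!\left(\tr_{[m(n+n')]\backslash T}\!\bigotimes_{i=1}^m U_i(|H_j\rangle\!\langle H_j|\otimes |0\rangle\!\langle 0|^{\otimes n'})U_i^\dagger,\ \tr_{[m(n+n')]\backslash T}\!\bigotimes_{i=1}^m U_i(\rhomm\otimes|0\rangle\!\langle 0|^{\otimes n'})U_i^\dagger\right)
\]
for each eigenvector $\ket{H_j}$, and averaging against the $\Theta(2^{-n})$ Gibbs weights plus the triangle inequality extends the bound to $\rho_{H_{\text{RSPS}}}$, as in equation~\eqref{eq:trace_dis_ancilla}. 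Invoking Theorem~\ref{thm:general_ancilla} concludes degree-$k$ hardness; the learning consequence follows from the standard reduction from learning to hypothesis testing.

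The main obstacle I anticipate is carefully verifying the approximate $3$-design (or at least $2$-design) claim at the correct quantitative level. The moment-matching intuition is clean, but one must rigorously control the deviation from exact GUE moments arising from the finite sum of $J = O(n^3)$ Pauli strings and couple this to Definition~\ref{def:approx_design} with an approximation parameter $2^{-\tilde\Omega(k\log n)}$. The rest of the argument then threads together existing results without new difficulty.
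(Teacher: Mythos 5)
Your overall skeleton matches the paper's proof: bound $\|H_{\text{RSPS}}\|$ with high probability so that the Gibbs weights are all $\Theta(2^{-n})$, establish a design property for the eigenvectors, and then rerun the argument of Theorem~\ref{thm:hgue_gibbs} via Eq.~\eqref{eq:trace_dis_ancilla} and Theorem~\ref{thm:general_ancilla}. The norm-concentration step is fine (the paper cites Theorem III.1 of~\cite{chen2024sparse}, which gives $\|H_{\text{RSPS}}\|\le 2(1+\epsilon)$ with probability $1-\exp(-\Theta(n))$ once $J\ge O(n^3/\epsilon^4)$).

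The gap is in your second step. You argue that because the matrix moments $\E[H_{\text{RSPS}}^{\otimes t}]$ match those of GUE for $t\le 3$, and GUE is unitarily invariant, the eigenvectors of $H_{\text{RSPS}}$ form an \emph{approximate} state $3$-design, with an approximation error to be controlled via ``higher cumulants of the empirical spectral measure.'' This does not follow: matching low moments of the matrix ensemble says nothing quantitative about the distribution of the eigen\emph{vectors}, which are a highly nonlinear functional of $H$, and the empirical spectral measure controls only eigenvalues. You would be left needing to prove an approximation parameter of $2^{-\tilde\Omega(k\log n)}$ for Definition~\ref{def:approx_design} with no mechanism to do so. The paper's actual argument sidesteps this entirely: the RSPS ensemble is \emph{exactly} invariant in distribution under conjugation by a uniformly random Clifford (conjugation permutes Pauli strings up to sign, and the random signs $r_a$ absorb the phases), and since the Clifford group is a unitary $3$-design~\cite{zhu2017multiqubit}, the eigenbasis of $H_{\text{RSPS}}$ forms an \emph{exact} state $3$-design --- in particular an exact $2$-design, so there is no approximation error to control and Corollary~\ref{coro:general_ancilla} applies directly. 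The GUE moment-matching in the paper is contextual motivation, not the load-bearing step. Replacing your cumulant argument with this symmetry argument closes the gap; the rest of your proposal then goes through as written.
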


\begin{proof}
The first observation is that the RSPS ensemble matches the GUE to the third moment, i.e.,
\begin{align*}
&\E_{\hgue}[\hgue]=\E_{H_{\text{RSPS}}}[H_{\text{RSPS}}]=0,\\
&\E_{\hgue}[\hgue\otimes\hgue]=\E_{H_{\text{RSPS}}}[H_{\text{RSPS}}\otimes H_{\text{RSPS}}]=\frac{1}{d^2}\sum_{P_a\in\cP_n}P_a\otimes P_a,\\
&\E_{\hgue}[\hgue\otimes\hgue\otimes\hgue]=\E_{H_{\text{RSPS}}}[H_{\text{RSPS}}\otimes H_{\text{RSPS}}\otimes H_{\text{RSPS}}]=0.
\end{align*}
It is invariant under random Clifford unitary, which forms a unitary $3$-design~\cite{zhu2017multiqubit}. We can thus conclude that the eigenstates of $H_{\text{RSPS}}$ form a unitary $3$-design. 

Ref.~\cite{chen2024sparse} shows that when $J=\poly(n)$, the behavior of $H_{\text{RSPS}}$ is close to $\hgue$. In particular, by Theorem III.1 of~\cite{chen2024sparse}, when $J\geq O(n^3/\epsilon^4)$ ensures that 
\begin{align}\label{eq:RSPS_eig}
\Pr_{H\sim H_{\text{RSPS}}}[\norm{H}\geq 2(1+\epsilon)]\leq\exp(-\Theta(n)).
\end{align}

As the eigenstates of $H_{\text{RSPS}}$ form the state $3$-design and with $1-\exp(-\Theta(n))$ the eigenvalue of $H_{\text{RSPS}}$ is bounded within $[-3,3]$. Note that for learning arbitrary Hamiltonian in \Cref{thm:hgue_gibbs}, we only use the fact that the eigenvalue is bounded with high probability and the eigenstates are chosen from state $2$-design. We thus again have 
\begin{align*}
&d_{\tr}\left(\mathbb{E}_{\rho_{\text{RSPS}}}\underset{[m(n+n')]\backslash T}{\tr}\left(\bigotimes_{i=1}^{m}U_i\left(\rho_{\text{RSPS}}\otimes\ketbra{0}^{\otimes n'}\right) U_i^\dagger\right),\underset{[m(n+n')]\backslash T}{\tr}\left(\bigotimes_{i=1}^{m}U_i\left(\rhomm\otimes\ketbra{0}^{\otimes n'}\right) U_i^\dagger\right)\right)\\
&\leq 2m\cdot 2^{\abs{T}/2}\left(2\cdot2^{\abs{T}+11n'-n}+3\epsilon\right)^{1/4}.
\end{align*}
We can then obtain the claimed result via \Cref{thm:general_ancilla}.
\end{proof}

\section{Quantum planted biclique}\label{sec:conn_planted_clique}
Here, we consider a new quantum learning problem that can be understood as the quantum analogue of the planted biclique problem. In this problem, we will focus on $n$-\emph{qudit} quantum states with each qudit of $d$ dimensions. We will abuse the notation by using $d$ to represent the local qudit dimension instead of the dimension of a single copy of the quantum state only throughout this subsection. 

Before formulating the task, we will have to define a few related notations. Given operators $A,B\in\C^{d\times d}$ and a subset $T\subseteq[n]$, define 
\begin{align*}
A \otimes_T B = \bigotimes^n_{i=1} (A\cdot \bone[i\not\in T] + B\cdot \bone[i\in T]).
\end{align*}
We suppose $\lambda,d,m,n\in\mathbb{N}$, and without loss of generality, $d$ is prime. Given $S\subseteq[n]$, let $\Lambda_S$ denote the $n$-qudit channel that acts as identity on the qudits outside of $S$ and as fully depolarizing noise on the qudits within $S$. Let $\Lambda_\kappa$ denote the depolarizing channel on $(\mathbb{C}^d)^{\otimes n}$ with rate $\kappa$. We will take $\kappa = \lambda/n$ throughout, so we will omit the subscript. 

We are now given either $m$ copies of the following two cases:
\begin{itemize}
    \item $\sigma = I/d^n$.
    \item $\sigma = \sigma_{\rho,S}$ which is defined as follows. Nature samples Haar-random, $d$-dimensional pure state $\rho$, and a subset $S\subseteq[n]$ of expected size $\lambda$ by including each element of $[n]$ independently with probability $\lambda$ ($\rho$ and $S$ are the same across all $m$ copies of $\sigma_{\rho, S}$). The state $\sigma_{\rho, S}$ is given by
    \begin{align*}
    \sigma_{\rho, S} \triangleq (\Lambda_\kappa\circ \Lambda_S)[\rho^{\otimes n}].
    \end{align*}
\end{itemize}
The goal is to distinguish between these alternatives.

\subsection{Lower bound for local measurements} 
\label{sec:lower_pds} 
We first prove the hardness for protocols of bounded degree making local measurements, i.e., single-qudit measurements. Here, we suppose that the learner performs (non-adaptive) local measurements. In particular, suppose the $j$-th qudit of the $i$-th copy of $\sigma_{\rho, S}$ is measured using the basis vectors $\{\ket{\psi^{i,j}_1},\ldots,\ket{\psi^{i,j}_d}\}$. Given $\bx = (x_{i,j})\in [d]^{m\times n}$, define $\ket{\psi_\bx} \triangleq \otimes_{i,j} \ket{\psi^{i,j}_{x_{i,j}}}$.

\begin{theorem}[Low-degree hardness for local measurements]\label{thm:pds_lbd}
Weak detection for quantum planted biclique is degree-$k$ hard for any protocol that makes $m$ non-adaptive local measurements at $k=n^{o(1)}$ and $\lambda\leq o(nd^{1/4}m^{-1/2})$. 
\end{theorem}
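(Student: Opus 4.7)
The plan is to upper-bound the degree-$k$ Fourier advantage $\chi_{\le k}^2 = \sum_{\alpha \ne 0,\, |\alpha|_0 \le k}|\hat L(\alpha)|^2$ of the expected likelihood ratio $L(\bx) = \mathbb{E}_{\rho,S}[\overline D_\sigma(\bx)]$ against the uniform null on $[d]^{mn}$, using the harmonic characters $\chi_\alpha(\bx) = \omega^{\sum_{i,j}\alpha_{i,j}x_{i,j}}$ with $\omega = e^{2\pi\mathrm{i}/d}$. For $\alpha \in \mathbb{Z}_d^{mn}$, let $T(\alpha) = \mathrm{supp}(\alpha)$, and let $I(\alpha)$ and $J(\alpha)$ be its row and column projections, with $K = |T(\alpha)|$.

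First I would derive the factorization
\[
\hat L(\alpha) \;=\; (1-\kappa)^{|I(\alpha)|}\,\kappa^{|J(\alpha)|}\, H(\alpha),\qquad H(\alpha) \;\triangleq\; \mathbb{E}_\rho \prod_{(i,j)\in T(\alpha)} \mathrm{tr}\!\bigl(\rho\, M^{i,j}_{\alpha_{i,j}}\bigr),
\]
where $M^{i,j}_\beta \triangleq \sum_x \omega^{-\beta x}\ket{\psi_x^{i,j}}\bra{\psi_x^{i,j}}$. This comes from expanding the per-copy likelihood ratio $\overline D^{(i)}_\sigma(\bx_i) = (1-\kappa)\prod_{j\in S}Q^{i,j}_\rho(x_{i,j}) + \kappa$ with $Q^{i,j}_\rho(x) = d\langle\psi_x^{i,j}|\rho|\psi_x^{i,j}\rangle$, distributing the product across copies, and observing that a nonzero Fourier projection forces (i) the set of ``active'' copies to contain $I(\alpha)$, producing the $(1-\kappa)^{|I(\alpha)|}$ factor, and (ii) every column in $J(\alpha)$ to lie in the planted subsystem, producing the $\kappa^{|J(\alpha)|}$ factor; the latter encodes all the dependence on $\lambda$.

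The second ingredient is the symmetric-subspace formula $\mathbb{E}_\rho \rho^{\otimes K} = [d(d+1)\cdots(d+K-1)]^{-1}\sum_{\pi \in S_K}\pi$, which rewrites $H(\alpha)$ as a weighted sum of products of traces along the cycles of $\pi$. Since $\mathrm{tr}(M^a_\beta) = 0$ whenever $\beta \ne 0$, only derangements contribute, and by Hölder each cycle of length $\ell \ge 2$ has absolute value at most $d$. This is the ``weighted linear combination of permutation operators applied to the measurement vectors'' the paper alludes to. The crux is then to bound $\sum_{\alpha:\mathrm{supp}(\alpha)=T}|H(\alpha)|^2$ without paying the naive $(d-1)^K$ cost of enumerating harmonic frequencies. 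I would combine the Plancherel identity
\[
\sum_{\beta\in\mathbb{Z}_d}|\mathrm{tr}(M^a_\beta\, X)|^2 \;=\; d\sum_x |\langle\psi_x^a|X|\psi_x^a\rangle|^2
\]
with the uniform bound $\sum_{x,y}|\langle\psi_x^a|\psi_y^b\rangle|^{2r}\le d$ for $r\ge 1$ (since the overlap matrix $P^{a,b}_{x,y}=|\langle\psi_x^a|\psi_y^b\rangle|^2$ is doubly stochastic). Expanding $|H(\alpha)|^2$ as a double sum over pairs of derangements $(\pi,\pi')$ and applying Plancherel once per element of $T$ collapses the $\alpha$-sum to a scalar contraction of doubly stochastic overlap matrices along the union graph $\pi\cup\pi'$ on $T$; the dominant contribution comes from involutions, for which $\sum_{\alpha_a,\alpha_b}|\mathrm{tr}(M^a_{\alpha_a}M^b_{\alpha_b})|^2 \le d^3$ per 2-cycle, yielding an overall bound $\sum_{\alpha:\mathrm{supp}(\alpha)=T}|H(\alpha)|^2 \lesssim K^{O(K)}/d^K$.

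Finally, I would bin the Fourier sum by $(K,|I|,|J|)$, bound the number of supports by $\binom{m}{|I|}\binom{n}{|J|}\binom{|I||J|}{K}\le (mn)^K$, and combine with $\kappa^{2|J|} = (\lambda/n)^{2|J|}$. Identifying the dominant regime of bipartite matchings $|I|=|J|=K$ yields an estimate of the shape
\[
\chi_{\le k}^2 \;\lesssim\; \sum_{K \le k} \Bigl(\tfrac{m\,\lambda^2\,\mathrm{poly}(K)}{n\sqrt d}\Bigr)^{K},
\]
which is $o(1)$ throughout the claimed parameter range provided $k = n^{o(1)}$. The main technical obstacle is the Plancherel bookkeeping in Step~3: uniformly over all cycle structures of $(\pi,\pi')$ one must assign each Plancherel-$d$ save to a unique element of $T$ and route the residual contractions through the doubly stochastic structure of $P^{a,b}_{x,y}$, carefully distinguishing the ``free'' $\beta$-sums (controlled by harmonic orthogonality over $\mathbb{Z}_d$) from the ``paired'' $x$-sums (controlled by POVM completeness $\sum_x |\psi_x^a\rangle\langle\psi_x^a| = I$). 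Mishandling the accounting would readily reintroduce a factor of $d^K$ and annihilate the Plancherel gain. A secondary subtlety is the treatment of degenerate supports with $|I| < K$ or $|J| < K$, where multiple measurements hit the same planted qudit and the Haar integral involves repeated factors of $\rho$; these give non-dominant contributions but must still be aggregated without blow-up, through a small generalization of the matching analysis to non-bijective supports.
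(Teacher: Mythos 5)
Your overall strategy --- harmonic analysis over $[d]^{mn}$, the symmetric-subspace moment formula, Plancherel to tame the $\alpha$-sum via doubly stochastic overlap matrices, and binning by the support structure --- is essentially the route the paper takes; its Lemma~\ref{lem:haar} and Lemma~\ref{lem:derange} package the derangement and Plancherel bookkeeping that you describe. However, there is a genuine gap in your first step. The factorization $\hat L(\alpha) = (1-\kappa)^{|I(\alpha)|}\kappa^{|J(\alpha)|}H(\alpha)$ is wrong: the $(1-\kappa)^{|I|}$ should be $\kappa^{|I|}$. In this model the per-copy state is $\kappa\cdot(\text{planted})+(1-\kappa)\,I/d^n$ with $\kappa=\lambda/n$; that is, each copy is \emph{active} with probability $\kappa$, not $1-\kappa$, which is precisely what makes the expected number of active rows $m\kappa$ (equal to $\lambda$ at $m=n$) and reproduces the $\lambda\times\lambda$ biclique picture. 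Accordingly the per-copy likelihood ratio is $\kappa\prod_{j\in S}Q^{i,j}_\rho(x_{i,j})+(1-\kappa)$, not $(1-\kappa)\prod Q+\kappa$, and each active copy therefore contributes a $\kappa$ prefactor, giving $\kappa^{|I|}$. This is exactly what Lemma~\ref{lem:expectS} encodes, leading to $\hat D[\bW]\propto\kappa^{|\supp(\bW)|+|\cup_\ell W_\ell|}$.

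The missing $\kappa^{2|I|}=(\lambda/n)^{2|I|}$ factor is not cosmetic. Your final estimate $\sum_{K\le k}(m\lambda^2\,\mathrm{poly}(K)/(n\sqrt d))^K$ has base $m\lambda^2/(n\sqrt d)$, which equals $n$ at $\lambda=nd^{1/4}m^{-1/2}$, so it is nowhere near $o(1)$ in the claimed range; as written the argument would only yield hardness for $\lambda\le\tilde o(n^{1/2}d^{1/4}m^{-1/2})$, a factor of $\sqrt n$ short of the theorem. Reinstating the $\kappa^{2t}$ per active row, and binning by $t=|I|$ and $s=|J|$ rather than by $K$ alone (since, as the paper observes after \eqref{eq:estimateEDeltas}, the power of $1/\sqrt d$ collapses from $|W|$ to $t$ once one sums over the choices of column sets $W_\ell\subseteq Z$), gives $\sum_{t,s}\binom{m}{t}\binom{n}{s}\kappa^{2(t+s)}O(k^3/\sqrt d)^t$, from which the stated threshold follows. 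Separately, a small internal slip: $\sum_{\alpha:\mathrm{supp}(\alpha)=T}|H(\alpha)|^2\lesssim K^{O(K)}/d^K$ should read $d^{K/2}$ in the denominator --- your own ``$d^3$ per $2$-cycle'' count gives $(d^3)^{K/2}$ against two normalizing prefactors of order $d^{-K}$, for $d^{-K/2}$ overall, and already the $K=2$ case saturates $\Theta(1/d)$, not $\Theta(1/d^2)$. Since your final display does use $\sqrt d$, this reads as a typo rather than a propagated error.
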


\begin{proof}
Note that if $\sigma = \Id/d^n$, then the distribution over outcomes after performing these measurements is uniform over $[d]^{m\times n}$. On the other hand, if $\sigma = \sigma_{\rho, S}$, the probability of observing $\bx \in [d]^{m\times n}$ is given by $\bra{\psi_\bx}\E_{\rho,S}\sigma_{\rho, S}^{\otimes m}\ket{\psi_\bx}$. For fixed $\rho, S$, denote the likelihood ratio by 
\begin{align*}
    \overline{D}_{\rho,S}(\bx)\triangleq d^{mn}\bra{\psi_\bx}\sigma_{\rho, S}^{\otimes m}\ket{\psi_\bx},
\end{align*}
and define $\overline{D} \triangleq \E_{\rho, S}\overline{D}_{\rho, S}$.

\begin{lemma}\label{lem:expectS}
    Given $t\in\mathbb{N}$ and density matrix $\rho \in \mathbb{C}^{d\times d}$,
    \begin{align*}
        \E_S \sigma_{\rho, S}^{\otimes t} = \sum_{U\subseteq[t]} \kappa^{|U|} (I/d^n)^{\otimes U^c} \otimes \Bigl(\sum_{T_1,\ldots,T_{|U|}\neq \emptyset} \bigotimes_{\ell \in U} (I/d\otimes_{T_\ell}\Delta) \cdot \kappa^{|\cup_\ell T_\ell|}\Bigr)\,,
    \end{align*}
    where $\Delta \triangleq \rho - \Id/d$.
\end{lemma}

\begin{proof}
    Note that we can write
    \begin{align*}
        \sigma_{\rho, S} &= \kappa \Bigl(\bigotimes^n_{i=1} (\Id/d + \mathbb{1}[i\in S]\Delta)\Bigr) + (1 - \kappa) \Id/d^n \\
        &= \kappa\sum_{T\subseteq[n]} (\Id/d \otimes_T \Delta)\cdot \mathbb{1}[T\subseteq S] + (1 - \kappa)\Id/d^n \\
        &= \Id/d^n + \kappa\sum_{T\neq\emptyset} (\Id/d\otimes_T \Delta)\cdot \mathbb{1}[T\subseteq S]\,,
    \end{align*}
    so
    \begin{align*}
        \sigma^{\otimes t}_{\rho, S} = \sum_{U\subseteq[t]} \kappa^{|U|} (I/d^n)^{\otimes U^c} \otimes \Bigl(\sum_{T_1,\ldots,T_{|U|}\neq \emptyset} \bigotimes_{\ell \in U} (I/d\otimes_{T_\ell}\Delta) \cdot \mathbb{1}[T_1\cup\cdots\cup T_\ell\subseteq S]\Bigr)\,.
    \end{align*}
    The claim follows.
\end{proof}
\noindent We will consider the standard Fourier basis over $[d]^{m\times n}$. Let $\xi \triangleq e^{\frac{2\pi \i}{d}}$. Given a partial assignment $\mathbf{W} = (W,\alpha)$, where $W\subseteq [m]\times[n]$ and $\alpha: W\to\{1,\ldots,d-1\}$, define the basis function
\begin{align*}
    \chi_{\bW}(\bx) \triangleq \prod_{(i,j)\in W} \xi^{\alpha_{i,j}x_{i,j}}\,,
\end{align*}
and define the associated Fourier coefficient of a function $f: [d]^{m\times n} \to \mathbb{C}$ by $\hat{f}[\mathbf{W}] = \E[f(\bx)\chi_W(\bx)]$, where $\bx$ is uniform over $[d]^{m\times n}$. To lighten notation, we will denote the Fourier transform of $\overline{D}$ by $\hat{D}$ instead of $\widehat{\overline{D}}$.

Let $\mathbf{W} = (W,\alpha)$ where $W = \sqcup^m_{i=1} W_i$ for $W_i\subseteq \{i\}\times[n]$ and $\alpha_i: W_i \to \{1,\ldots,d-1\}$ is the restriction of $\alpha$ to $W_i$; define $\mathbf{W}_i = (W_i, \alpha_i)$. Given $\bx\in[d]^{m\times n}$, denote by $\bx|_{W_i} \in [d]^n$ the restriction of $\bx$ to the coordinates indexed by $W_i$, and define $\ket{\psi_{x|_{W_i}}} \in (\mathbb{C}^d)^{\otimes |W_i|}$ in the natural way.

Then we have
\begin{align}
\widehat{\overline{D}_{\rho, S}}[\mathbf{W}] = d^{|W|} \prod_{(i,j)\in W} \E_{x_{i,j}}\bigl[\xi^{\alpha_{i,j}x_{i,j}} \bra{\psi^{i,j}_{x_{i,j}}}\sigma_{\rho, S}\ket{\psi^{i,j}_{x_{i,j}}}\bigr]\,. \label{eq:fourier_coeff} 
\end{align}
We can now compute the expectation of this over $\rho, S$ by applying Lemma~\ref{lem:expectS}. Let $\supp(\bW) \triangleq \{i\in [m]: W_i \neq \emptyset\}$. Note that because $\E_x[\xi^{\alpha x}] = 0$ for $\alpha \in \{1,\ldots,d-1\}$, the only terms in the expression for $\E \sigma^{\otimes W}_{\rho, S}$ which contribute to Eq.~\eqref{eq:fourier_coeff} are given by $U = \supp(\bW)$ and $T_i = W_i$ for all $i\in \supp(\bW)$. We thus obtain
\begin{align*}
    \hat{D}[\mathbf{W}] = \kappa^{|\supp(\bW)| + |\cup_\ell W_\ell|}\, \E_{\Delta} \Bigl[\prod_{(i,j) \in W}\E_{x_{i,j}}\bigl[\xi^{\alpha_{i,j}x_{i,j}} \bra{\psi^{i,j}_{x_{i,j}}}d\Delta\ket{\psi^{i,j}_{x_{i,j}}}\bigr]\Bigr]\,.
\end{align*}
For any $W\subseteq[m]\times [n]$, we have that the Fourier mass
\begin{align*}
    \mu_W \triangleq \sum_{\alpha: W\to \{1,\ldots,d-1\}} |\hat{D}[(W,\alpha)]|^2
\end{align*}
can be expressed as
\begin{align}
    \mu_W = \kappa^{2|\supp(\bW)| + 2|\cup_\ell W_\ell|} \,\E_{\Delta,\Delta'}\prod_{(i,j) \in W} C_{i,j}(\Delta,\Delta')\,,\label{eq:sumalpha}
\end{align}
where
\begin{align}
    C_{i,j}(\Delta,\Delta') \triangleq \E_{x,x' \sim [d]}\Bigl[\sum_{a \in \{1,\ldots,d-1\}}\xi^{a(x-x')} \bra{\psi^{i,j}_{x}}d\Delta\ket{\psi^{i,j}_x}\bra{\psi^{i,j}_{x'}}d\Delta'\ket{\psi^{i,j}_{x'}}\Bigr]\,. \label{eq:expectDelta}
\end{align}
Note that
\begin{align*}
    \sum_{a \in \{1,\ldots,d-1\}} \xi^{a(x-x')} = d\cdot\bone[x = x'] - 1\,,
\end{align*}
so
\begin{align*}
    C_{i,j}(\Delta,\Delta') = \E_x [\bra{\psi^{i,j}_x} d\Delta\ket{\psi^{i,j}_x}\cdot \bra{\psi^{i,j}_x} d\Delta'\ket{\psi^{i,j}_x}]
\end{align*}

To compute \eqref{eq:expectDelta}, we will appeal to the following moment calculation:

\begin{lemma}\label{lem:haar}
    For $\Delta = \rho - \Id/d$ with $\rho$ Haar-random,
    \begin{align*}
        \E (d\Delta)^{\otimes t} = \sum_{\pi\in \mathcal{S}_t} \gamma_{|\mathrm{fix}(\pi)|} P_\pi\,,
    \end{align*}
    where $\mathrm{fix}(\pi) \subseteq[t]$ denotes the set of fixed points of $\pi$, and $P_\pi$ denotes the $n$-qudit permutation operator which acts via $\pi$ on the qudits indexed by $U$ and as identity elsewhere. Here, $\gamma_0,\ldots,\gamma_t$ satisfy
    \begin{align*}
        |\gamma_0| = \frac{d^t}{d(d+1)\cdots (d + t-1)} \qquad \text{and} \qquad |\gamma_f| \le O(t/\sqrt{d})^f \ \ \text{for} \ \ f > 0
    \end{align*}
    if $t = o(\sqrt{d})$.
\end{lemma}

\begin{proof}
    By a standard calculation,
    \begin{align}
        \E (d\Delta)^{\otimes t} \triangleq \sum_{U\subseteq[t]} (-1)^{t - |U|} \beta_{d,|U|} \sum_{\pi \in \mathcal{S}_U} P_\pi  \qquad \text{for}  \qquad \beta_{d,s}\triangleq \frac{d^s}{d(d+1)\cdots (d + s - 1)}\,, \label{eq:betadef}
    \end{align}
    where $\mathcal{S}_U$ denotes the symmetric group on the elements in $U$.

    For any $n$-qudit permutation operator $P_\pi$, if $V = \mathrm{fix}(\pi)$ has size $f$, then its coefficient in the above sum is given by
    \begin{align*}
        \sum_{U\supseteq V^c} (-1)^{t-|U|}\beta_{d,U} = \sum^f_{r = 0} (-1)^r \binom{f}{r} \beta_{d,t-r} \triangleq \gamma_f\,.
    \end{align*}
    The claimed expression for $\gamma_0$ is immediate. 
    
    It remains to establish the bound on $|\gamma_f|$ for $f > 0$. By Fact~\ref{fact:powerseries}, $\beta_{d,t-r}$ admits the series expansion $1 + \sum^\infty_{a=1} c_a(t-r) \cdot d^{-a}$, where each $c_a$ is a degree-$2a$ polynomial in $t - r$ (and thus in $r$) satisfying $|c_a(t-r)| \le (1 + t - r)^{2a} \le (2t)^{2a}$. Note that for each $a \le f/2$, we have
    \begin{align*}
        \sum^f_{r=0} (-1)^r\binom{f}{r} c_a(t-r) = 0\,.
    \end{align*}
    We conclude that
    \begin{align*}
        |\gamma_f| = \Bigl|\sum^f_{r=0} (-1)^r \binom{f}{r} \sum^\infty_{a = \lfloor f/2\rfloor + 1} c_a(t-r) d^{-a}\Bigr| \le O(t/\sqrt{d})^f
    \end{align*}
    as claimed.
\end{proof}
\noindent Therefore,
\begin{align}
    \E_{\Delta,\Delta'} \prod_{(i,j)\in W} C_{i,j}(\Delta,\Delta')
    &= \sum_{\pi,\pi'\in \mathcal{S}_W} \gamma_{|\mathrm{fix}(\pi)|} \gamma_{|\mathrm{fix}(\pi')|} \E_{\bx}[\bra{\psi_{\bx|_W}}P_\pi\ket{\psi_{\bx|_W}} \cdot \bra{\psi_{\bx|_W}}P_{\pi'}\ket{\psi_{\bx|_W}}] \nonumber\\
    &\le \Bigl(\sum_{\pi \in \mathcal{S}_W} \gamma_{|\mathrm{fix}(\pi)|} \E_{\bx}[|\bra{\psi_{\bx|_W}}P_\pi\ket{\psi_{\bx|_W}}|^2]^{1/2}\Bigr)^2 \nonumber\\
    &\le |W|!^2 \cdot O(1/d)^{|W|/2}\,, \label{eq:estimateEDeltas}
\end{align}
where in the third step we used Lemma~\ref{lem:haar} and Lemma~\ref{lem:derange} below to conclude that for any permutation $\pi \in \mathcal{S}_W$,
\begin{align*}
    \gamma_{\mathrm{fix}(\pi)}\E_{\bx}[|\bra{\psi_{\bx|_{W\backslash \mathrm{fix}(\pi)}}}P_\pi\ket{\psi_{\bx|_{W\backslash \mathrm{fix}(\pi)}}}|^2]^{1/2}& \le O(|W|/\sqrt{d})^{|\mathrm{fix}(\pi)|/2} \cdot d^{-|W\backslash\mathrm{fix}(\pi)|/4} \\
    &\le O(1/d)^{|W|/4}\\
    &\le O(|W|^4/d)^{|W|/2}\,.
\end{align*}

Finally, we can substitute the bound in Eq.~\eqref{eq:estimateEDeltas} back into Eq.~\eqref{eq:sumalpha} to obtain
\begin{equation*}
    \mu_W \le \kappa^{2|\supp(\bW)| + 2|\cup_\ell W_\ell|}\cdot  O(|W|^2/\sqrt{d})^{|W|}\,.
\end{equation*}

Therefore, for any fixed $Z\subseteq[n]$ and $t$, the sum over $\mu_W$ for which $|\mathrm{supp}(W)| = t$, $\cup_\ell W_\ell = Z$, and $|W| \le k$ is upper bounded by
\begin{equation*}
    \binom{m}{t} \kappa^{2t + 2|Z|} \sum_{1 \le |W_1|,\ldots,|W_t| \le k: |\cup_\ell W_\ell| = Z}O(k^2/\sqrt{d})^{\sum_\ell |W_\ell|}\,.
\end{equation*}
Note that the sum in the above equation can be upper bounded by
\begin{align*}
    \sum_{1 \le |W_1|,\ldots,|W_t| \le k: W_\ell\subseteq Z \forall \ell} O(k^2/\sqrt{d})^{\sum_\ell |W_\ell|} &\le \bigl((1+O(k^2/\sqrt{d}))^{|Z|}-1\bigr)^t \\
    &\le O(k^3/\sqrt{d})^t\,.
\end{align*}

Summing over all nonempty $Z\subseteq[n]$ of size at most $k$, we conclude that
\begin{align*}
    \sum_{W: 1\le |W|\le k} \mu_W &\le \sum^k_{t=1} \sum^{k}_{s=1} \binom{m}{t}\binom{n}{s} \kappa^{2s + 2t} O(k^3/\sqrt{d})^t \\
    &\le \sum^\Deg_{t=1} O\Bigl(\frac{k^3 m\lambda^2}{n^2\sqrt{d}}\Bigr)^t \cdot \sum^{k}_{s=1}(\lambda^2/n)^s\,.
\end{align*}
In particular, if $k = n^{o(1)}$, this is bounded by $o(1)$ provided
\begin{align*}
    \lambda&\le \tilde{o}(nd^{1/4}m^{-1/2})\,.\qedhere
\end{align*}
\end{proof}

\subsection{Upper bound for single-copy measurements}
Here, we demonstrate that for the detection problem $\{\sigma_{\rho,S}\}_{\rho\sim\muh,S}$ vs. $I/d^n$, there is a low-degree algorithm using (non-local) single-copy measurements that succeeds provided $\lambda\geq \omega(n^{2/3})$. 

\begin{theorem}[A low-degree efficient protocol with single-copy PVMs] \label{thm:single_copy_pds}
There is a constant-degree protocol for weak detection for quantum planted biclique that uses polynomially many non-adaptive single-copy PVMs, provided $\lambda\geq \omega(n^{2/3})$.
\end{theorem}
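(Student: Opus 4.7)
My plan is to give a simple degree-$1$ distinguisher built from SWAP tests within each copy. For each of the $m = \poly(n)$ copies of $\sigma$, I partition the $n$ qudits into the fixed disjoint pairing $\{1,2\},\{3,4\},\ldots,\{n-1,n\}$ and, on each pair, perform the ancilla-free two-qudit projective measurement onto the $\pm 1$ eigenspaces (symmetric/antisymmetric subspaces) of $\SWAP$ acting on $\C^d \otimes \C^d$. This yields $n/2$ signed outcomes $z_{k,p}\in\{+1,-1\}$ per copy with $\E[z_{k,p}\mid \rho,S] = \tr(\SWAP\cdot\sigma_{\rho,S,p})$, and my distinguisher is the linear statistic $Z = \sum_{k=1}^m\sum_{p=1}^{n/2}z_{k,p}$, thresholded at the midpoint between its null and alternative means. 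The measurements are non-adaptive single-copy PVMs and $Z$ is a degree-$1$ polynomial in the readouts, as required.

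For the signal, under the null $\E[z_{k,p}] = \tr(\SWAP/d^2) = 1/d$. Under the alternative, the reduced state on a pair $p=\{i,j\}$ is $(1-\kappa)\rho_i\otimes\rho_j + \kappa I/d^2$, with $\rho_i = \rho$ if $i\in S$ and $\rho_i = I/d$ otherwise; since $\rho$ is pure, $\tr(\SWAP\cdot\rho\otimes\rho)=1$ and every other configuration gives $1/d$, so $\E[z_{k,p}\mid S] = 1/d + (1-\kappa)(1-1/d)\mathbf{1}[\{i,j\}\subseteq S]$. Averaging over random $S$ (where each qudit is included independently with probability $\kappa=\lambda/n$, so $\Pr[\{i,j\}\subseteq S]=\kappa^2$) yields the mean shift $\Delta := \E[Z\mid\text{alt}] - \E[Z\mid\text{null}] = \Theta(mn\kappa^2) = \Theta(m\lambda^2/n)$. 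For the variance, I decompose $\Var(Z)$ into individual variances and covariances. Each outcome is bounded by $1$, so the individual terms contribute at most $mn/2$. Conditioning on $(S,\rho)$ makes outcomes across distinct copies independent, and makes outcomes on disjoint pairs within one copy independent, so that $\text{Cov}(z_{k,p},z_{k',p'}) = (1-\kappa)^2(1-1/d)^2\,\text{Cov}\bigl(\mathbf{1}[p\subseteq S],\mathbf{1}[p'\subseteq S]\bigr)$. This covariance vanishes for disjoint $p\ne p'$ of the fixed pairing by independence of the indicators, and equals $\Theta(\kappa^2)$ for $p=p'$ across different copies; summing over the $\binom{m}{2}(n/2)$ same-pair, different-copy pairs contributes an extra $O(m^2\lambda^2/n)$, giving $\Var(Z) \leq O(mn + m^2\lambda^2/n)$.

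Putting everything together,
\begin{equation*}
\text{SNR} \;=\; \Theta\!\left(\frac{m\lambda^2/n}{\sqrt{mn + m^2\lambda^2/n}}\right) \;=\; \Theta\!\left(\min\bigl\{\lambda^2\sqrt{m/n^3},\;\lambda/\sqrt{n}\bigr\}\right).
\end{equation*}
For $\lambda = \omega(n^{2/3})$ the first branch is already $\omega(1)$ once $m \gg n^3/\lambda^4 = o(n^{1/3})$, so any polynomial $m \geq n^{1/3+\varepsilon}$ suffices, and Chebyshev's inequality applied at the midpoint threshold then yields weak detection. I expect the main technical subtlety will be the careful accounting of the cross-copy, same-pair covariance, which is exactly what forces the quadratic-in-$m$ term into the variance and prevents a naive iid analysis from going through; the key structural observation that saves the argument is that $\mathbf{1}[p\subseteq S]$ and $\mathbf{1}[p'\subseteq S]$ are \emph{independent} when $p\cap p'=\emptyset$ under the planted distribution, so that the only surviving covariance comes from literally repeating the same pair across copies, and this can be absorbed into the bound above.
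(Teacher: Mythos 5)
Your measurement strategy --- SWAP tests on a fixed disjoint pairing of qudits, thresholding a degree-$1$ linear statistic --- is exactly the paper's protocol: the paper's PVM $\{\Pi, I-\Pi\}$ with $\Pi = \tfrac{1}{2}(I + \SWAP)$ is the same measurement in a different parameterization. However, your signal and variance accounting contains a substantive error. The paper's $\Lambda_\kappa$ \emph{keeps} the state with probability $\kappa$ and replaces it with the maximally mixed state with probability $1-\kappa$, as is explicit in the proof of Lemma~\ref{lem:expectS}: $\sigma_{\rho,S} = \kappa\bigl(\bigotimes_i(I/d + \bone[i\in S]\Delta)\bigr) + (1-\kappa)I/d^n$. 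You wrote the reduced state on a pair as $(1-\kappa)\rho_i\otimes\rho_j + \kappa I/d^2$, i.e. with $\kappa$ and $1-\kappa$ swapped. Since $\kappa = \lambda/n = o(1)$, this inflates the signal by a factor of roughly $1/\kappa$: your per-pair mean shift $(1-\kappa)(1-1/d)\bone[p\subseteq S]$ should be $\kappa(1-1/d)\bone[p\subseteq S]$, so the total mean shift is $\Theta(mn\kappa^3)$, not your $\Theta(mn\kappa^2)$ (this $\kappa^3$ is exactly what appears in the paper's $\overline{\alpha}$), and your covariance prefactor $(1-\kappa)^2$ should likewise be $\kappa^2$. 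Redoing the SNR at $m = n$ with the corrected signal gives the threshold $\lambda \gg n^{2/3}$ (as required), not the $\lambda \gg \sqrt n$ your calculation predicts; the extra power of $\kappa$ is precisely what pushes the exponent from $1/2$ to $2/3$.

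A second, smaller gap: conditioning on $(S,\rho)$ does \emph{not} make outcomes on disjoint pairs within a single copy independent. The state $\sigma_{\rho,S}$ is a classical mixture of the planted product state (with probability $\kappa$) and $I/d^n$ (with probability $1-\kappa$), not itself a product across pairs, so the latent ``which branch'' variable correlates measurements on different pairs within the same copy. The conditional within-copy covariance is $\kappa(1-\kappa)\bigl(\tfrac{d-1}{2d}\bigr)^2\bone_j\bone_{j'}$, which after averaging over $S$ contributes an extra $\Theta(mn^2\kappa^5)$ to the variance of the unnormalized sum. Once the signal is corrected this turns out to be subdominant and the $n^{2/3}$ threshold stands, but the independence assertion as stated is false and needs to be addressed for the variance bound to be rigorous.
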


\begin{proof}
Let $\Pi \triangleq \frac{\Id + \SWAP}{2} \in \mathbb{C}^{d^2 \times d^2}$. Suppose that $n$ is even, and for every copy of the unknown $n$-qudit state, consider measuring the $2j$-th and $(2j + 1)$-th qudits with the PVM $\{\Pi, \Id - \Pi\}$. Given $i\in [n], j\in[n/2]$, let $Z_{ij}(\sigma)$ denote the indicator random variable for whether the outcome of that measurement is $\Pi$ when the $i$-th copy of the underlying state being measured is $\sigma$. For any $i,j$, we have
\begin{align*}
    \E[Z_{ij}(\Id/d^n)] = \frac{d+1}{2d}\,,
\end{align*}
whereas for fixed $S$,
\begin{align*}
    \E[Z_{ij}(\sigma_{\rho, S})] &= \kappa\cdot \Bigl\{\frac{d + 1}{2d}\cdot (1 - \bone_j) + \bone_j\Bigr\} + (1 - \kappa)\cdot \frac{d + 1}{2d} \\
    &= \frac{d+1}{2d} + \frac{\kappa(d-1)}{2d}\cdot \bone_j \triangleq \alpha_{S,j} \,,
\end{align*}
where $\bone_j$ denotes the indicator for the event that $(2j,2j+1)\in S\times S$.
The expected value of the latter over the randomness of $S$ is
\begin{align*}
    \E_S[\alpha_{S,j}] = \frac{d+1}{2d} + \kappa^3\cdot \frac{d-1}{2d} \triangleq \overline{\alpha}\,.
\end{align*}

Let us now consider the test statistic $Z = \frac{2}{mn}\sum_{ij} Z_{ij}$.

As each $Z_{ij}(\Id/d^n)$ is an independent sample from $\mathrm{Ber}(\frac{d+1}{2d})$, we have
\begin{align*}
    \Var[Z(\Id/d^n)] = \frac{2}{mn}\cdot \frac{d^2-1}{4d^2} \asymp \frac{1}{mn}\,.
\end{align*}
We can also compute
\begin{align*}
    \E_S \E[Z(\sigma_{\rho, S})^2] &= \E_S \Bigl[\frac{4}{m^2n^2}\Bigl(\sum_{i,j} \alpha_{S,j} + \sum_{i\neq i'; j} \alpha^2_{S,j}  + \sum_{i,i'; j\neq j'} \alpha_{S,j} \alpha_{S,j'}\Bigr)\Bigr] \\
    &= \frac{4}{m^2n^2} \Bigl(\frac{mn}{2}\overline{\alpha} + \frac{m(m-1)n}{2} \Bigl(\kappa^2\overline{\alpha}^2 + (1-\kappa^2)\Bigl(\frac{d+1}{2d}\Bigr)^2\Bigr)  + \frac{m^2n(n-2)}{4}\overline{\alpha}^2\Bigr) \\
    &= \overline{\alpha}^2 + \frac{2\overline{\alpha}}{mn} + \frac{4}{mn} \Bigl(\frac{m-1}{2} \Bigl(\kappa^2\overline{\alpha}^2 + (1-\kappa^2)\Bigl(\frac{d+1}{2d}\Bigr)^2\Bigr)  - \frac{m}{2}\overline{\alpha}^2\Bigr) \\
    &= \overline{\alpha}^2 + \frac{2}{mn}(\overline{\alpha} - \Bigl(\frac{d+1}{2d}\Bigr)^2) + \frac{4}{mn}\Bigl(\frac{m-1}{2}\kappa^2 - \frac{m}{2}\Bigr)(\overline{\alpha}^2 - \Bigl(\frac{d+1}{2d}\Bigr)^2) \\
    &= \overline{\alpha}^2 + \Theta(1/mn + \kappa^3/n)\,,
\end{align*}
so the variance of the test statistic $Z(\sigma_{\rho,S})$ is bounded by $\Theta(1/mn + \kappa^3/n)$. In particular, detection is possible provided that 
\begin{align*}
    \kappa^6 \gg 1/mn + \kappa^3/n\,,
\end{align*}
which holds provided $\lambda \gg n^{2/3}$, as claimed.
\end{proof}

\subsection{Upper bounds for local measurements}

Finally, we show that with computational basis measurements, there is a computationally inefficient algorithm that succeeds provided $\lambda \gg \min(d\log n, n^{1/2}d^{1/4})$ and a low-degree algorithm that succeeds provided $\lambda \gg n^{1/2}d^{1/4}$. In particular, the latter threshold matches the threshold below which we showed low-degree hardness in Section~\ref{sec:lower_pds}.

\begin{proposition}[Information-theoretic and computational tractability with local measurements]\label{lem:local_upper_pds}
    Weak detection for quantum planted biclique is information-theoretically possible using measurements in the computational basis provided $\lambda \gg \min(d\log n, n^{1/2} d^{1/4})$. There is a constant-degree protocol for weak detection using measurements in the computational basis provided $\lambda \gg n^{1/2} d^{1/4}$.
\end{proposition}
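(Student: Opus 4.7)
For the low-degree side, I would fix an arbitrary subset $T\subseteq[d]$ of size $\lfloor d/2\rfloor$, coarsen each outcome $x_{ij}$ to $y_{ij}=\bone[x_{ij}\in T]$, and use the degree-two statistic
\[
\Psi \;=\; \sum_{(i,j)\neq(i',j')} (y_{ij}-\tfrac12)(y_{i'j'}-\tfrac12).
\]
Under the null, the centered bits $y_{ij}-1/2$ are i.i.d.\ $\pm 1/2$, so $\E[\Psi]=0$ and $\Var[\Psi]=O(n^4)$. Under the alternative, conditioning on $\rho$ and $S$, the centered bit has mean $\kappa(q_\rho-\tfrac12)\bone[j\in S]$, where $q_\rho=\sum_{x\in T}|\langle x|\rho\rangle|^2$. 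A direct second-moment calculation for the Haar-random diagonal (equivalently, the $t=2$ case of Lemma~\ref{lem:haar}) gives $\E_\rho[(q_\rho-\tfrac12)^2]\asymp 1/d$. Summing over all pairs with both $j,j'\in S$ (the only configuration that contributes) yields $\E[\Psi]=\Theta(\lambda^4/d)$ at $m=n$, and weak detection follows whenever $\lambda^4/d\gg n^2$, i.e., $\lambda\gg n^{1/2}d^{1/4}$. This is a constant-degree estimator, establishing the second claim of the proposition.

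For the information-theoretic side, I would enumerate over every pair $(i,S')\in[n]\times\binom{[n]}{\lambda}$ and compute the chi-squared statistic of the $\lambda$ outcomes $\{x_{ij}:j\in S'\}$ against the uniform distribution on $[d]$, accepting ``alternative'' iff some statistic exceeds a threshold $d+C\sqrt{d\lambda\log n}$ for a sufficiently large constant $C$. A Laurent--Massart-type chi-squared tail bound gives per-pair null failure probability $\exp(-\Omega(\lambda\log n))$, which is enough to union-bound over the $n\binom{n}{\lambda}\le\exp(O(\lambda\log n))$ candidates. Under the alternative, the mixture structure of $\sigma_{\rho,S}$ from Lemma~\ref{lem:expectS} implies that $\mathrm{Binomial}(n,\lambda/n)$-many copies are ``signal'' copies (expected number $\lambda\gg\log n$); for such a copy $i^*$ and the true planted $S$, the outcomes $\{x_{i^*j}:j\in S\}$ are $\lambda$ i.i.d.\ samples from the diagonal distribution $p_\rho$ of a Haar-random pure state, and the chi-squared on them has mean $d-2+\lambda$ using the Haar identity $\E\|p_\rho-1/d\|_2^2\asymp 1/d$, which beats the threshold whenever $\lambda\gg d\log n$.

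The hard part will be the chi-squared tail bookkeeping for the inefficient algorithm: obtaining a Bernstein-type upper bound that is tight in the $t\asymp \sqrt{d\lambda\log n}$ deviation regime, combined with a matching lower-tail concentration of the non-central chi-squared in the signal case (so that a single signal copy with a typical Haar-random $p_\rho$ reliably crosses the threshold). For the low-degree part, the only remaining step is the routine verification that $\Var[\Psi]$ under the alternative does not inflate beyond $O(n^4)$ just above threshold, which follows from the same Haar identities used to compute the mean.
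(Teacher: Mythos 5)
Your proposal is correct in substance, and the two halves compare differently to the paper's proof. The low-degree half is essentially the paper's argument: coarsen each $d$-ary outcome to the bit $\bone[x_{ij}\in T]$ (the first bit of the readout) and threshold on the global second moment; your explicit degree-$2$ statistic $\Psi$, its null variance $O(n^4)$, and the mean $\Theta(\lambda^4/d)$ coming from $\kappa^2\,\E_\rho[(q_\rho-\tfrac12)^2]\asymp \kappa^2/d$ over the $\sim n^2\lambda^2$ cross-copy pairs all match the paper's calculation and give the same threshold $\lambda\gg n^{1/2}d^{1/4}$. If anything your formulation is more careful: the paper phrases the test as "count the edges," but since $q_\rho-\tfrac12$ is sign-symmetric under Haar-random $\rho$, the degree-$1$ edge count has zero expected advantage, and the honest low-degree estimator is exactly your centered square. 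One small point worth writing out when you bound $\Var[\Psi]$ and $\E[\Psi]$ under the alternative: within a single copy the qudits are \emph{not} conditionally independent, because $\sigma_{\rho,S}$ is a mixture (signal with probability $\kappa$, maximally mixed otherwise), so same-copy pairs pick up one factor of $\kappa$ rather than two; this only contributes $O(\lambda^3/d)$ to the mean, hence is harmless, but it should be stated.

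The information-theoretic half is a genuinely different route. The paper stays with the coarsened bits and scans over all $t\times t'$ sub-biadjacency matrices, comparing edge counts to binomial tails; the union bound over $n^{t+t'}$ subgraphs forces the deviation $(tt')^{3/4}\sqrt{\log n}$ against the signal $tt'p$ with $p\asymp d^{-1/2}$, yielding $\lambda\gg d\log n$. You instead work with the raw $d$-ary outcomes and run a chi-squared test on each (copy, candidate subset) pair, using $\E_\rho\norm{p_\rho-u}_2^2\asymp 1/d$ to get signal excess $\Theta(\lambda)$ against a null threshold $C\sqrt{d\lambda\log n}$ — the same $\lambda\gg d\log n$. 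Your route uses strictly more of the data per copy but buys nothing in the exponent here, and it leaves you owing the multinomial chi-squared upper tail at deviation $\sqrt{d\lambda\log n}$ with failure probability $e^{-\Omega(\lambda\log n)}$, plus lower-tail concentration of the non-central statistic for a typical Haar $\rho$. These are standard (e.g., via Bernstein for the equivalent collision-count U-statistic), but they are the one non-routine piece of your writeup; the paper's coarsening sidesteps them entirely with plain binomial/Chernoff bounds. If you want the shortest complete proof, I would adopt the paper's coarsened subgraph scan for the information-theoretic claim and keep your degree-$2$ statistic for the computational one.
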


\begin{proof}
    Consider the observable value $\Tr(Z\rho)$ for $Z = \mathrm{diag}(1,\ldots,-1,\ldots)$. By standard anti-concentration of the Haar measure in $d$ dimensions, with high probability $|\Tr(Z\rho)| = \Omega(1/\sqrt{d})$. In other words, if for every qudit of every copy, we measure in the $d$-dimensional computational basis and look at whether the measurement outcome is among the first $d / 2$ computational basis vectors, this is distributed as $\mathrm{Ber}(1/2 + \Theta(1/\sqrt{d}))$ if the qudit being measured is a copy of $\rho$, and as $\mathrm{Ber}(1/2 + \Theta(1/\sqrt{d}))$ if the qudit being measured is a copy of $\Id/d$. In particular, with large constant probability, post-processing the readouts to solve quantum planted biclique now amounts to solving an instance of classical \emph{bipartite planted dense subgraph}~\cite{feldman2017statistical}. In particular, this entails distinguishing between a completely random $n\times n$ bipartite graph (with edge probabilities equal to $1/2$) versus one in which a random subgraph of expected size $\lambda\times \lambda$ has been planted whose edge connection probabilities are elevated to $1/2 + p$ for $p = \Theta(1/\sqrt{d})$.

    First suppose that $\lambda \gg d\log n$. Note that for a random $t\times t'$ subgraph with edge probabilities equal to $1/2$, the number of edges is distributed as $\mathrm{Bin}(tt', 1/2)$ and thus lies within $tt'/2 \pm O(\sqrt{tt'}\sqrt{\log 1/\delta})$ with probability $1 - \delta$. By a union bound over all $t\times t'$ subgraphs for $t, t' \le O(\lambda)$ of a completely random $n\times n$ bipartite graph with edge probabilities equal to $1/2$, we conclude that under the null hypothesis, with high probability no such $t\times t'$ subgraph has more than $tt'/2 + O(t^{3/4}t'^{3/4}\sqrt{\log n})$ edges. On the other hand, in the planted scenario, the number of edges in the planted $t\times t'$ subgraph, where $\E[t]  = \E[t'] = \lambda$, is distributed as $\mathrm{Bin}(tt', 1/2 + p)$. Provided that $tt' p \gg t^{3/2}t'^{3/4}\sqrt{\log n}$, which happens with high probability over $t,t'$ because $\lambda \gg d\log n$, one can thus perform weak detection for quantum planted biclique by brute force enumerating over all subsets of $t$ copies of the state, and all subsets of $t$ qudits out of $n$ qudits.

    Next suppose that $\lambda \gg n^{1/2}d^{1/4}$. The total number of edges present in the full graph under the null hypothesis is distributed as $\mathrm{Bin}(n^2, 1/2)$. Under the alternative hypothesis where there is a planted $t\times t'$ subgraph with elevated edge probabilities $1/2 + p$ for $p = \Theta(1/\sqrt{d})$, the total number of edges is instead distributed as $\mathrm{Bin}(n^2 - tt', 1/2) + \mathrm{Bin}(tt', 1/2 + p)$. The expected difference, over the randomness of $t,t'$, between the means of these random variables is $\lambda^2 p$, whereas both random variables have variance $O(n^2)$. So provided that $\lambda \gg \sqrt{n/p} \asymp n^{1/2}d^{1/4}$, there is a constant-degree algorithm for performing weak detection by simply counting the number of edges present in the bipartite graph given by the readouts.
\end{proof}

\section{Low-degree hardness for hybrid and classical problems.}\label{sec:hybrid_classical}

In this section, we provide some extended applications of the quantum low-degree model in different problems with classical or hybrid input by encoding classical input into quantum data. In particular, we embed a classical problem that is hard for classical low-degree algorithms into quantum data in a low-degree way to prove the hardness of quantum tasks under our quantum low-degree model. We first use a reduction from classical low-degree hardness of Learning Parity with Noise to show the low-degree hardness for Learning Stabilizer with Noise~\cite{poremba2024learning} in \Cref{sec:other_lsn_avg}. We also show the low-degree hardness of agnostic learning product states by embedding tensor PCA, which is shown to be classically low-degree hard, into quantum data in \Cref{sec:conn_agnostic_product}. 

\subsection{Low-degree hardness of Learning Stabilizers with Noise}\label{sec:other_lsn_avg}
We now consider the task of learning stabilizer with noise, which is another candidate for a quantum analog of learning parities with noise~\cite{poremba2024learning} other than learning stabilizer states in \Cref{sec:stabilizer_state}. The \emph{Learning Parity with Noise ($\mathsf{LPN}$)} problem and its assumed hardness, are cornerstones on which much of modern classical and quantum cryptography is built~\cite{blum1994cryptographic}. This is the problem of decoding a random linear code
under Bernoulli noise. Formally, the input to the problem is
\begin{align*}
(A\sim\{0,1\}^{n\times l},Ax+e\ (\text{mod }2)),
\end{align*}
where $A\sim\{0,1\}^{n\times l}$ and $x\in\{0,1\}^l$ are chosen uniformly at random, and $e=\text{Ber}_\kappa^{\otimes l}$ is a random Bernoulli error for some appropriate noise rate $\kappa$. The goal is to recover $x$. For $n=\poly(l)$, $\mathsf{LPN}$ is believed to be hard for both classical and quantum algorithms running in time $\poly(l)$ in the constant noise regime.

Note that the $\mathsf{LPN}$ decoding problem is an \emph{average-case} problem, where the success probability of an algorithm is measured on average over the random choices of $A$, $x$, and $e$. A prominent piece of evidence for the hardness of $\mathsf{LPN}$ is a worst-to-average-case reduction ~\cite{brakerski2019worst,yu2021smoothing}. Such a reduction means that an algorithm that succeeds with high probability over a random instance would even suffice to solve the worst-case instance of $\mathsf{LPN}$. That is to say, most instances within that class are hard. Refs. ~\cite{brakerski2019worst,yu2021smoothing} are able to provide such a reduction from the nearest codeword problem~\cite{brakerski2019worst,yu2021smoothing}, a worst-case promise version of $\mathsf{LPN}$.

Regarding the low-degree hardness of $\mathsf{LPN}$-related problems, an exemplary result is on a special case of learning parity (without noise), known as $k$-XORSAT. Here, the variables are $l$ $\{\pm 1\}$-valued $x_1,...,x_l$ and we are told $n$ random constraints of the form $x_{i_{1,s}}x_{i_{2,s}}...x_{i_{k,s}}=b_s$ with $b_s=\pm 1$, for $s\in[n]$. The goal is to determine whether there is an assignment $x_1,...,x_l\in\{\pm 1\}$ that satisfies all the constraints. This problem at $l\ll n\ll l^{3/2}$ is computationally hard for low-degree methods, i.e., it is hard to distinguish between a random $b$ (which is unsatisfiable with high probability) and a formula with a planted assignment~\cite{kunisky2019notes}. Although this problem can be solved in polynomial time using Gaussian elimination over the finite field, these intrinsic high-degree algorithms tend not to be robust to even a tiny amount of noise.

A recent work~\cite{poremba2024learning} provides a potential natural quantum analog of $\mathsf{LPN}$, \emph{Learning Stabilizer with Noise ($\mathsf{LSN}$)}, and gave evidence for its hardness by showing that $\mathsf{LSN}$ too has a worst-to-average-case reduction. It is natural to ask if $\mathsf{LSN}$ replicates the above feature of $\mathsf{LPN}$, i.e. that even its noiseless version is low-degree hard. Here, we answer in the affirmative, where the low-degree hardness we prove is qubit-wise low-degree hardness. 

The $\mathsf{LSN}$ problem is to decode a random quantum stabilizer code under local depolarizing noise. Let $S\leq\cP_n$ be an abelian subgroup of the Pauli group which has $n-l$ generators
\begin{align*}
S=\expval{g_1,...,g_{n-l}}, \qquad g_i\in \pm \{I,X,Y,Z\}^{n}.
\end{align*}
This is already sufficient to specify an error correcting code, whose codespace $C(S)\in(\C^2)^{\otimes n}$ is the simultaneous $+1$ eigenspace of all elements of $S$, i.e.
\begin{align*}
C(S)=\{\ket{\psi}:\ g\ket{\psi}=\ket{\psi},\ \forall g\in C(S)\}.
\end{align*}
If $S$ has $n-l$ generators, then $C(S)$ has $2^l$ many codewords. We use $\text{Stab}(n,l)$ to denote the set of all stabilizer subgroups of $\cP_n$ with $n-l$ generators.
Formally, the goal for $\mathsf{LSN}$ is to find the underlying random string $x\in\{0,1\}^l$ given as input a sample
\begin{align*}
(S\sim\text{Stab}(n,l),\ E\ket{\psi_{S,x}}),\quad\text{with}\quad\ket{\psi_{S,x}}=U_{\text{Enc},S}(\ket{0^{n-l}}\otimes\ket{x}),
\end{align*}
where $S\sim\text{Stab}(n,l)$ is a a uniformly random stabilizer subgroup, $E=\Lambda_\kappa^{\otimes n}$ 
is a Pauli error from a local depolarizing channel with noise rate $\kappa$, and $U_{\text{Enc},S}$ is some canonical encoding circuit for the stabilizer code associated with $S$. 

Ref. \cite{poremba2024learning} implicitly show that for every instance of $\mathsf{LPN}$, of the form $(A \in \Z_2^{n\times l},A \cdot s + e \Mod{2})$, there exists a corresponding instance of $\mathsf{LSN}$:

\begin{equation}\label{eq:mapLSNLPN_1}
(U_{A}, \ket{A\cdot  s + e \Mod{2}})
\end{equation}
where $U_{A}$ is the Clifford encoding circuit\footnote{Note that $U_{A}$ can be described in terms of $\mathrm{CNOT}$ gates~\cite{10.5555/2011763.2011767}, which are themselves Clifford gates.} encoding operation
\begin{equation}\label{eq:mapLSNLPN_2}
U_{A}: \ket{0^{n-l}} \otimes \ket{x} \rightarrow \ket{A \cdot x \Mod{2}}
\end{equation}
which is an injective matrix multiplication for any vector $x \in \Z_2^l$. These instances correspond to each other in the sense that a solution for the $\mathsf{LSN}$ problem (which takes the form of an output bit string $s\in \{0,1\}^l$) is correct if and only if $s$ is correct for the corresponding $\mathsf{LPN}$ problem.

In this work, we prove the average-case hardness of $\mathsf{LSN}$ in the low-degree framework. Before, we have to clarify what `low-degree' means in the context of a problem with classical-quantum hybrid input such as $\mathsf{LSN}$. This input consists of a classical matrix $S\sim\text{Stab}(n,l)$ containing $n-l$ stabilizers and a (noisy) quantum encoded state $E\ket{\psi_{S,x}}$. In the noiseless case, a general decoding algorithm could apply the inverse $U^\dagger_{\text{Enc},S}$ of the encoding unitary $U_{\text{Enc},S}$ and measure in the computational basis, which solves the problem exactly. While the measurement itself is a low-degree algorithm, the decision of which unitary to apply is high-degree as it depends on all $O(n^2)$ bits of $S$. This indicates that the definition of the algorithm's degree must take into account the degree of the function that chooses the unitary to apply. This would rule out the above `cheating' algorithm, as it should -- because the above algorithm could also solve the classical parities problem by performing Gaussian elimination to choose which unitary to apply. Yet it is known that parities is low-degree hard.

To address this issue, we will define a low-degree algorithm in two ways, depending on how the hybrid quantum-classical input is supplied to the algorithm:
\begin{itemize}
    \item \textbf{$\mathsf{LSN}$ with purely quantum input. }In this model, we are given $m$-copies of purely quantum input consisting of two parts. For each copy, the first $n\times l$ qubits are just a computational basis state indicating which Stabilizer code was applied (at most $O(n^2)$ bits are required to specify its stabilizer group). The last $l$ qubits are the (noisy) encoded state. In this case, what we can do is to perform a joint PVM (a global fixed unitary acting on all qubits and a computational basis measurement) on the joint state. Suppose the post-processing after the measurement is a function of degree $k$, then the overall protocol is equivalent to a classical algorithm of degree $2k$ as the output distribution is at most quadratic in each input qubit.
    \item \textbf{$\mathsf{LSN}$ with hybrid quantum-classical input with low-degree quantum encoding of classical input. }In this case, we are given the classical matrix $S$ and $m$ copies of $l$-qubit (noisy) encoded state $\ket{y}$. We can then perform a measurement on each copy of code state by implementing a quantum unitary $U(S)$ encoding all classical information on the unitary and then implementing a computational basis measurement. The output distribution is given by $p(z) = \{\abs{\braket{z|U(S)|y}}^2\}_z$, where $z$ is a $l$ bit string. In this case, the definition of a low-degree algorithm is one where $p(z)$ is a low-degree function of degree at most $k$ on each bit of $S$. This eliminates the above-mentioned `cheating' algorithm.
\end{itemize}
Given these two types of $\mathsf{LSN}$, we have the following low-degree hardness result.

\begin{theorem}[Average-case $\mathsf{LSN}$ is low-degree hard, even without noise]\label{thm:other_lsn_avg}
There are no non-adaptive low-degree algorithms of degree $\polylog(l)$ solving the average-case problem $\mathsf{LSN}$, even without noise, with probability at least $2/3$ at $n=\poly(l)$.
\end{theorem}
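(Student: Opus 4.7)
The plan is to reduce average-case noiseless $\mathsf{LSN}$ to a classical problem known to be low-degree hard, namely $k$-XORSAT. The reduction proceeds in three stages: (i) embedding worst-case $k$-XORSAT into worst-case noiseless $\mathsf{LSN}$ via the $\mathsf{LPN} \to \mathsf{LSN}$ map of Eqs.~\eqref{eq:mapLSNLPN_1}--\eqref{eq:mapLSNLPN_2}, (ii) carefully tracking degree in both input models so that a degree-$\polylog(l)$ quantum $\mathsf{LSN}$ solver yields a degree-$\polylog(l)$ classical $k$-XORSAT solver, and (iii) invoking the worst-to-average-case reduction of~\cite{poremba2024learning} to promote the hardness from the worst case to the average case.

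For stage (i), I would use the embedding from Eqs.~\eqref{eq:mapLSNLPN_1}--\eqref{eq:mapLSNLPN_2}: given a $k$-XORSAT instance specified by a $k$-sparse generator matrix $A \in \Z_2^{n\times l}$ together with a vector $b = A\cdot s \Mod{2}$ in the planted case (or uniformly random in the null case), construct the noiseless $\mathsf{LSN}$ instance $(U_A, \ket{b})$. Any $\mathsf{LSN}$ solver returning a candidate $s'$ can be verified in low degree (linear in $A, b, s'$) by checking $A s' = b \Mod{2}$, so search-case hardness of $\mathsf{LSN}$ implies the distinguishing-case hardness formulation of $k$-XORSAT that is the actual low-degree-hard object~\cite{kunisky2019notes}. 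Choosing $l \ll n \ll l^{3/2}$ and $k = \polylog(l)$ places us in the low-degree-hard regime of $k$-XORSAT.

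For stage (ii), I would treat the two flavors of the $\mathsf{LSN}$ low-degree model separately. In the purely quantum input model, the classical description of $S$ is loaded into a computational basis register, and any degree-$k$ polynomial in the readout bits translates into a polynomial of degree at most $2k$ in the joint bits of the input state, since Born probabilities are quadratic in the state vector. In the hybrid model with classical $S$, the algorithm's output distribution is by definition a degree-$k$ polynomial in the bits of $S$; this is precisely what rules out the Gaussian-elimination cheat that would otherwise apply $U_{\mathrm{Enc},S}^\dagger$ to decode exactly. In both cases, composing with the embedding $(A,b) \mapsto (U_A, \ket{b})$, which is linear in $A$ and essentially trivial in $b$, preserves the degree up to a constant factor, so a degree-$\polylog(l)$ quantum $\mathsf{LSN}$ algorithm yields a degree-$\polylog(l)$ classical $k$-XORSAT algorithm, contradicting its classical low-degree hardness.

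For stage (iii), I would invoke the worst-to-average-case reduction of~\cite{poremba2024learning}, which maps any worst-case noiseless $\mathsf{LSN}$ instance $(S, \ket{\psi_{S,x}})$ to a uniformly random instance via a Clifford re-randomization of the stabilizer group together with the corresponding Clifford on the encoded state, such that an average-case solver succeeding with probability $\ge 2/3$ can be bootstrapped to a worst-case solver with high probability. The main obstacle is ensuring that this reduction composes cleanly with the low-degree model: one must verify that the Clifford re-randomization acts as an at-most-linear map on the symplectic representation of the stabilizer generators, and that the accompanying unitary on the quantum state can be absorbed into a single Clifford pre-layer that does not blow up the degree of the subsequent measurement. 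Granting this, a low-degree average-case solver composes with the reduction into a low-degree worst-case solver, contradicting stage (ii) and completing the proof. The worst-to-average-case step, together with the distinguishing-vs-search conversion, is where the bookkeeping is most delicate; the embedding and the classical $k$-XORSAT lower bound are otherwise direct.
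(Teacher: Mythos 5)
Your proposal is correct and follows essentially the same three-step route as the paper: embed worst-case $k$-XORSAT into worst-case noiseless $\mathsf{LSN}$ via the map of Eqs.~\eqref{eq:mapLSNLPN_1}--\eqref{eq:mapLSNLPN_2}, observe that a degree-$k$ quantum algorithm gives a degree-$2k$ classical one since Born probabilities are quadratic and the encoding is linear, and then re-randomize (logical Pauli on the message plus a random Clifford on the code) to pass to the exact average case without increasing degree. The only caveat is attribution: the paper stresses that~\cite{poremba2024learning} only achieved a worst-to-average reduction to a \emph{nonstandard} average-case distribution, and that the exact, degree-preserving re-randomization you describe is carried out in the paper itself rather than imported from that reference.
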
 

\begin{proof}
We first recall the classical proof of worst-case low-degree hardness for $r$-XORSAT, a special case of learning parity (without noise). Let $\mathcal{H}$ be a $r$-uniform hypergraph with $l$ vertices and $n$ hyperedges $e_1,...,e_n$. We consider the task of distinguishing between the following two distributions over $b\in\{\pm 1\}^n$:
\begin{itemize}
    \item $b\sim\text{Unif}(\{\pm 1\}^n)$ (null).
    \item Sample a random $x\in\{\pm1\}^l$ and for $i\in[n]$ set $b_i=\prod_{j\in C_i}x_j$ (alternative planted).
\end{itemize}
We would like to show that these two distributions are low-degree indistinguishable (for some fixed $\mathcal{H}$). 

An even cover is a set of hyperedges such that each vertex appears an even number of times. Equivalently, we can write $\mathcal{H}\in\Z_2^{n\times l}$ as a matrix (equivalent to $A$ in the definition of $\mathsf{LPN}$) where the $i$-th row $C_i$ is the indicator vector of the hyperedge $e_i$. Then an even cover is a set of rows that sum up to $0$.

We will show that if $\mathcal{H}$ has no even cover of size $\leq k$, then no degree-$k$ polynomials can distinguish between the two cases. Since the null case is just the uniform distribution over $\{\pm1\}^n$, the natural basis is the monomial basis $\{b_S=\prod_{i\in S}b_i\}_{S\subseteq[n]}$. The degree-$k$ advantage is thus
\begin{align*}
\Adv{k}^2=\sum_{S\subseteq[n],\ 0<|S|\leq k}\E_b[b_S]^2.
\end{align*}
Note that the expectation is taken over the unknown random solution $x$, for any $S\subseteq[n]$ of size at most $k$, we have $\bigoplus_{i=1}^n C_i\neq\emptyset$ as $\mathcal{H}$ has no even cover of size $\leq k$. Therefore, we have
\begin{align*}
\E_b[b_S]=\E_{x\sim\{\pm 1\}^l}\prod_{i\in S}x_{C_i}=\E_{x\sim\{\pm 1\}^l}x_{\bigoplus_{i=1}^n C_i\neq\emptyset}=0.
\end{align*}

Via the above argument, we have shown that any low-degree protocol of degree $\polylog(l)$ cannot solve worst-case Learning Parities (without noise) with even negligible probability. By the mapping given by \eqref{eq:mapLSNLPN_1} and \eqref{eq:mapLSNLPN_2}, there is a worst-case Learning Stabilizers (without noise) problem corresponding to every Learning Parities (without noise) problem. Furthermore, a non-adaptive degree-$k$ quantum algorithm for the latter problem implies a degree-$2k$ classical algorithm for the former problem. This is because, although a quantum algorithm is allowed to perform quantum measurements (assumed to be single-copy PVMs) in an arbitrary basis $\{U\ketbra{x}U^\dagger\}_{x=1}^{2^n}$, $U$ is a linear transformation which doesn't increase degree and the probability distribution is at most quadratic in each input qubits. Therefore, the nonexistence of a low-degree classical algorithm for Learning Parities in the worst-case implies the nonexistence of a low-degree quantum algorithm for Learning Stabilizers in the worst-case. 

We further argue that there is also no low-degree quantum algorithm for Learning Stabilizers (without noise) problem in the average-case, by performing a worst-to-average-case reduction that preserves low-degreeness. (Note that we perform this reduction {\em exactly}, while \cite{poremba2024learning} were only able to perform a worst-to-average-case reduction to a nonstandard definition of average case where the distribution depends on the worst-case instance.) Given any worst-case instance of Learning Stabilizers (without noise), we can create an average-case instance by applying the following `re-randomizing unitary' to the input quantum state $U_{\text{Enc},S}(\ket{0^{n-l}}\otimes \ket{x})$:
\begin{itemize}
    \item a logical Pauli $X^{u}$ for the known code $S$, where $u \sim \{0,1\}^l$ is a random string
    \item a random $n$-qubit Clifford $C$ to the received state $\ket{\psi_{S,x}}$.
\end{itemize}  
This maps it to the exact average-case distribution over noiseless instances, as the logical Pauli maps the message $x$ to a uniformly random one:
   \begin{align*}
        \overline{X^{u}} E\ket{\overline{\psi_x}}^S = E\ket{\overline{\psi_{x \oplus u}}}^{S} \qquad \text{where } x\oplus u \sim \{0,1\}^l
    \end{align*}
while the random Clifford $C$ maps the code to a uniformly random code:
\begin{align*}
    C\circ U_{\text{Enc},S} = U_{\text{Enc},S'} \qquad \text{where } S' \sim \text{Stab}(n,l).
\end{align*}

We can then feed the re-randomized quantum state to the low-degree $\mathsf{LSN}$ solver, which outputs the correct $x'$ with high probability. We then output $x'\oplus u$. This amounts to a low-degree quantum algorithm for the worst-case $\mathsf{LSN}$ instance that succeeds with high probability as, again, applying the re-randomizing unitary described above does not increase the degree of the solution. 
\end{proof}

\subsection{Low-degree hardness of agnostic tomography of product states}\label{sec:conn_agnostic_product}

In this section, we observe that if one starts with a classical hypothesis testing problem which is classically low-degree hard, then if one encodes it into a quantum hypothesis testing problem using any low-degree mapping, the resulting quantum problem is quantum low-degree hard. The reasoning is along similar lines to the previous section: given a choice of measurements of the encoded state, the distribution over readouts is itself low-degree in the original classical input.

We carry out this observation formally for the special case of tensor PCA and use it to derive a new average-case hardness result for quantum learning. 

Specifically, we consider the task of agnostic tomography~\cite{grewal2024improved,grewal2024agnostic,chen2024stabilizer,bakshi2024learninga}. We are given copies of an unknown quantum state $\rho$ that has fidelity at least $\tau$ with some state in a given class $\Xi$. Our goal is to find a state $\sigma\in\Xi$ such that the fidelity between $\sigma$ and $\rho$ is at least $\tau-\epsilon$. When $\Xi$ is the set of product states, Ref.~\cite{bakshi2024learninga} proposed a computational and sample efficient algorithm in the regime $\tau\geq O(1)$ and $\epsilon=1/\poly(n)$. It is further shown that when $\tau=1/\poly(n)$, this task is NP-hard in the worst-case.

In this section, we show that agnostic tomography of product state when $\tau=1/\poly(n)$ is \emph{average-case} low-degree hard for protocols with a polynomial number of single-copy PVMs. Quantitatively, we have:
\begin{theorem}[\emph{Average-case} hardness of agnostic tomography of product state at $\tau=1/\poly(n)$]\label{thm:agnostic_product}
There exists a polynomial $p(n)$ and a non-trivial probability distribution $\mu$ over all pure states such that given $\ket{\psi}\sim\mu$, it has fidelity at least $\tau\geq 1/p(n)$ with the set $\Xi$ of product state. However, finding a product state $\sigma\in\Xi$ such that the fidelity $\braket{\psi|\sigma|\psi}$ between $\ket{\psi}$ and $\sigma$ is at least $\tau-\epsilon$ for any $0<\epsilon\leq\tau/2$ is average-case hard for quantum algorithms with degree $k\leq o(n)$ and polynomially many single-copy measurements.
\end{theorem}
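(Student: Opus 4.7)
The plan is to reduce from (normalized) tensor PCA, which is classically low-degree hard, by exhibiting a degree-preserving encoding of Gaussian tensors into $4n$-qubit pure states. The distribution $\mu$ in the statement will be the pushforward, under this encoding, of the planted tensor PCA distribution with signal-to-noise ratio $\lambda$ chosen just below the classical low-degree threshold $\lambda \asymp 1/(n\sqrt{k})$. Any non-adaptive quantum algorithm $\mathcal{A}$ of degree $k$ that solves agnostic product tomography with success probability $\Omega(1)$ over $\mu$ would, when composed with the encoding, yield a classical degree-$O(k)$ distinguisher for normalized tensor PCA, contradicting the classical hardness bound we will establish.

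First, I would set up the normalized tensor PCA problem: given a tensor $T \in \mathbb{R}^{n^4}$ drawn either from the standard Gaussian distribution (null) or from a spiked distribution $\lambda\, v^{\otimes 4} + G$ with $v$ uniform on the sphere and $G$ Gaussian (planted), conditioned on $\|T\|_F = 1$. The classical low-degree hardness at $\lambda \le O(1/(n\sqrt{k}))$ for unnormalized tensor PCA follows the standard calculation in~\cite{kunisky2019notes}; transferring this to the normalized version (stated as Lemma~\ref{lem:normal_tensor_pca} in the body) is done by showing that Gaussian concentration makes $\|T\|_F$ concentrate tightly around its mean, so that conditioning on a fixed Frobenius norm changes degree-$k$ moments by at most a $1 + o(1)$ factor. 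This gives a classical distinguishing problem over unit-Frobenius-norm $4$-tensors on which no degree-$O(k)$ polynomial achieves weak detection.

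Next, I would invoke (or construct) the encoding from~\cite{bakshi2024learninga} that sends a unit-norm tensor $T$ to a $4n$-qubit pure state $\ket{\psi_T}$, chosen so that (i) spike tensors $T = v^{\otimes 4}$ are mapped to genuine $4n$-qubit product states (specifically, $\ket{\psi_v}^{\otimes 4}$ for some single-copy encoding $v \mapsto \ket{\psi_v}$), and (ii) the fidelity with the best product state behaves like a polynomial in the entries of $T$; in particular, planted tensors have product-state fidelity $\ge \tau \ge 1/p(n)$, while null tensors have product-state fidelity bounded away from this. The crucial property I would formalize as Lemma~\ref{lem:agnostic_reduction} is that the encoding is \emph{qubit-wise low-degree}: the amplitudes, and hence for any fixed PVM $\{U\ketbra{x}U^\dagger\}$ the outcome probabilities $|\langle x | U | \psi_T\rangle|^2$, are polynomials of constant degree in the entries of $T$. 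Concretely, if the encoding uses amplitudes that are degree-$c$ polynomials in $T$ (for some small constant $c$), then any single-copy measurement outcome has probability that is degree-$2c$ in $T$.

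Finally, I would run the reduction: suppose for contradiction that $\mathcal{A}$ is a degree-$k$ non-adaptive quantum algorithm using $m = \poly(n)$ single-copy PVMs that returns a product state with fidelity $\ge \tau - \epsilon \ge \tau/2$ on $\mu$. Compose $\mathcal{A}$ with the encoding: the joint distribution over measurement outcomes becomes a classical function of the tensor entries whose monomials each have degree at most $O(ck) = O(k)$ in $T$, and an $O(1)$-degree thresholding of the returned fidelity (say, comparing it to the midpoint between the planted and null product-state-fidelity values) yields a classical degree-$O(k)$ distinguisher between the normalized tensor PCA hypotheses. Choosing the signal-to-noise ratio just inside the classical hard regime $\lambda = o(1/(n\sqrt{k}))$, this contradicts the classical low-degree lower bound, completing the proof. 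The main obstacle is not the reduction but the construction/choice of the encoding: one must simultaneously guarantee the constant-degree property of the amplitudes, a $1/\poly(n)$ lower bound on the best product-state fidelity in the planted case, and a genuine gap (detectable by thresholding the output fidelity) between the two cases; verifying all three for the encoding of~\cite{bakshi2024learninga} is the delicate step and will rely on the sphere-concentration of $v$ together with the explicit form of their encoding.
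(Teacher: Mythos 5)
Your high-level strategy matches the paper: reduce from normalized tensor PCA using the tensor-to-state encoding of~\cite{bakshi2024learninga}, using that single-copy PVM outcome probabilities are degree-$2$ in the amplitudes and hence degree-$O(1)$ in the tensor entries, so a degree-$k$ quantum algorithm becomes a degree-$O(k)$ classical distinguisher. The paper's Lemma~\ref{lem:normal_tensor_pca} and Lemma~\ref{lem:agnostic_reduction} formalize your two outer steps.

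However, one load-bearing ingredient in your proposal is wrong, and without fixing it the argument does not close. You assert that the encoding maps spike tensors $v^{\otimes 4}$ to genuine $4n$-qubit product states. Under the paper's encoding, a rank-one component $v$ maps to $\sum_i v_i \ket{e_i}$ where $\ket{e_i}$ is the one-hot string; this is a W-type superposition of Hamming-weight-one basis states, which is not a product of single-qubit states. So there is no exact identification of spike tensors with product states, and consequently no immediate ``planted case has product-state fidelity $\ge \tau$'' statement. The paper instead goes through an intermediate step you omit: the relationship between the tensor spectral norm $\text{OPT}_T$ and the best product-state overlap $\text{OPT}_{\ket{\psi_T}}$ is only \emph{approximate}, with error terms controlled by $M=\max_{ijkl} n^2|T_{ijkl}|$ (Lemma 7.6 of~\cite{bakshi2024learninga}). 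To make $M$ small the paper preprocesses $T$ with a random isometry $U^{\otimes 4}$, chosen so that $\max_{ijkl}|(U^{\otimes 4}T)_{ijkl}|$ is $O(\mathrm{poly}(n')/n)^2$ with high probability, and only then applies the one-hot encoding. The chain of reductions is therefore normalized tensor PCA $\Rightarrow$ spectral-norm estimation to additive error $n^{-3/2}\|T\|_F$ (via the spectral-norm gap $\Theta(n^{-3/2})$ vs.\ $\Theta(\lambda)=\omega(n^{-3/2})$ between null and planted) $\Rightarrow$ agnostic product tomography of the encoded states. You should replace your claim (i) with this approximate spectral-norm-to-fidelity correspondence, and add the random isometry step; otherwise neither the $\tau \ge 1/p(n)$ lower bound in the planted case nor the detectable gap in the null case can be justified.

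A secondary gap: your ``thresholding the returned fidelity'' step is underspecified. The tomography algorithm returns a classical description of a product state $\sigma$, not a fidelity value. The paper's degree accounting treats the returned classical description as the degree-$2k$ object in $T$ and then appeals to the guarantee that this description achieves near-optimal fidelity, feeding the resulting estimate of $\text{OPT}_{\ket{\psi_T}}$ (and hence $\text{OPT}_T$) into the spectral-norm distinguisher. Making this precise (so that the overall distinguishing statistic remains low-degree in $T$) is exactly what Lemma~\ref{lem:agnostic_reduction} handles, and it is where the degree-$k$ to degree-$2k$ bookkeeping lives; your sketch is on the right track but should say explicitly that every single-copy outcome probability is quadratic in the (degree-$1$) amplitudes of $\ket{\psi_{U^{\otimes 4}T}}$, giving the factor of $2$ in the degree.
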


\begin{remark}
The proof of \Cref{thm:agnostic_product} follows from a low-degree quantum encoding of the classical problem of tensor principal components analysis (PCA) into quantum data. A similar encoding can be extended to transfer any low-degree-hard additive Gaussian problem~\cite{kunisky2019notes} into a quantum hypothesis testing problem, for which quantum low-degree hardness will follow from classical low-degree hardness.
\end{remark}

\noindent To prove \Cref{thm:agnostic_product}, we first consider the following classical hypothesis testing task, which is known as the order-$4$ tensor principal components analysis (PCA) problem. Given signal parameter $\lambda$, we are given a classical tensor $T\in\C^{n\times n\times n\times n}$ which is sampled from the following two cases:
\begin{itemize}
    \item $T=G$ where $G\in\C^{n\times n\times n\times n}$ is a Gaussian random tensor with each entry an IID standard Gaussian variable $\sim\mathcal{N}(0,1)$ (null).
    \item $T=\lambda x^{\otimes 4}+G$ where $x\in\{\pm 1\}^{n}$ is chosen uniformly at random and $G\in\C^{n\times n\times n\times n}$ is a Gaussian random tensor (alternative).
\end{itemize}
Regarding this hypothesis testing problem, Ref.~\cite{kunisky2019notes} proved the following results on the classical low-degree likelihood ratio in the query model where one can query the entries $T$ via a low-degree function of degree at most $k$.

\begin{lemma}[Theorem 3.3 of~\cite{kunisky2019notes}]\label{lem:tensor_pca}
Consider the above order-$4$ tensor PCA distinguishing problem with $\lambda\leq O(n^{-1}k^{-1/2})$, the square degree-$k$ advantage for algorithms of degree at most $k$ that query the tensor $T$ is bounded by $o(1)$.
\end{lemma}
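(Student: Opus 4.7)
The plan is to follow the standard recipe for low-degree likelihood ratio computations in additive Gaussian models, specialized to the rank-one degree-four spike $v(x)=x^{\otimes 4}$. Concretely, I view $T\in\R^{n^4}$ as a vector and use Hermite polynomials $\{h_\alpha\}_{\alpha\in\N^{n^4}}$ as the orthonormal basis for $L^2$ of the null distribution (a product of standard Gaussians). Under the alternative with fixed $x$, the likelihood ratio is the Gaussian shift $\overline{D}_x(T)=\exp\bigl(\lambda\langle x^{\otimes 4},T\rangle-\tfrac{\lambda^2}{2}\|x^{\otimes 4}\|^2\bigr)$, whose Hermite coefficients read off cleanly as $\widehat{\overline{D}_x}[\alpha]=\lambda^{|\alpha|}(x^{\otimes 4})^{\alpha}/\sqrt{\alpha!}$. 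Averaging over $x$ and truncating to multi-indices of total degree at most $k$, the squared degree-$k$ advantage collapses to the generating function identity
\begin{equation*}
\chi_{\le k}^2 \;=\; \sum_{d=1}^{k} \frac{\lambda^{2d}}{d!}\,\E_{x,x'\sim\{\pm 1\}^n}\bigl[\langle x^{\otimes 4},x'^{\otimes 4}\rangle^{d}\bigr]
\;=\; \sum_{d=1}^{k} \frac{\lambda^{2d}}{d!}\,\E_{x,x'}\bigl[\langle x,x'\rangle^{4d}\bigr].
\end{equation*}
The first equality is the key conceptual step and is entirely standard for additive Gaussian channels once one recognizes that all the work in~\eqref{eq:distinguishL2_k} is done by the Hermite basis.

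Next, I bound the Rademacher inner product moment. Since $\langle x,x'\rangle=\sum_{i=1}^{n} x_i x_i'$ is a sum of $n$ i.i.d.\ $\pm 1$ variables (conditional on $x'$), a standard Khintchine/Gaussian hypercontractivity bound gives $\E[\langle x,x'\rangle^{4d}] \le (4d-1)!!\,n^{2d} \le (4d)^{2d} n^{2d}$. Substituting and using $d!\ge (d/e)^d$, the advantage is controlled by
\begin{equation*}
\chi_{\le k}^2 \;\le\; \sum_{d=1}^{k} \bigl(C\,\lambda^{2}\,n^{2}\,d\bigr)^{d}
\end{equation*}
for an absolute constant $C$. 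Choosing $\lambda \le c\,n^{-1}k^{-1/2}$ for a small enough constant $c>0$ makes each summand at most $2^{-d}$, so the geometric tail sums to $o(1)$, giving the claimed bound.

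The main obstacle — and it is a mild one — is ensuring that the moment estimate $\E[\langle x,x'\rangle^{4d}]\lesssim (4d\cdot n)^{2d}$ is tight enough all the way up to $d=k$; a naive $\ell_\infty$ bound of $n^{4d}$ would overshoot by a factor $n^{2d}$ and ruin the threshold. The clean way is to condition on $x'$ and invoke the Marcinkiewicz–Zygmund inequality (or the explicit $(2d-1)!!$ bound for Gaussian moments combined with sub-Gaussianity of Rademacher sums), which is routine. Everything else — the Hermite expansion, the diagonalization of $\E_x[\overline{D}_x^{\le k}]$ in this basis, and the geometric summation — is mechanical and can be presented compactly. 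A brief remark at the end would note that the same calculation generalizes verbatim to order-$p$ tensor PCA and recovers the $\lambda\asymp n^{-p/4}k^{-1/2}$ threshold from~\cite{kunisky2019notes}.
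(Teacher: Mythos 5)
The paper does not prove this lemma at all: it is imported as a black box from~\cite{kunisky2019notes} (their Theorem 3.3), so there is no in-paper argument to compare against. Your proposal is, in substance, exactly the standard proof of that cited theorem, and it is correct. The Hermite expansion of the Gaussian shift, the collapse of the truncated second moment to $\sum_{d=1}^{k}\frac{\lambda^{2d}}{d!}\E_{x,x'}[\langle x,x'\rangle^{4d}]$, and the Khintchine bound $\E[\langle x,x'\rangle^{4d}]\le (4d-1)!!\,n^{2d}$ are all right; in particular you correctly track the paper's \emph{unnormalized} prior $x\in\{\pm1\}^n$ (so $\langle x,x'\rangle$ concentrates at scale $\sqrt{n}$ rather than $1$), which is what makes the threshold come out as $n^{-1}k^{-1/2}$ rather than the unit-norm convention's scaling.

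One small quantitative slip: if each summand is only bounded by $2^{-d}$, then $\sum_{d=1}^{k}2^{-d}=\Theta(1)$, which gives a \emph{bounded} advantage, not $o(1)$. To literally get $o(1)$ you need $\lambda=o(n^{-1}k^{-1/2})$, in which case the $d=1$ term $O(\lambda^2 n^2)=o(1/k)$ dominates a geometric series and the whole sum is $o(1)$. This imprecision is inherited from the lemma statement itself (and from how the paper later invokes it with $\lambda=\Theta(n^{-1}k^{-1/2})$), so it is worth a one-line remark in your write-up but is not a gap in the argument.
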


In the following, we consider a modified version of this tensor PCA problem where each tensor is normalized. Given signal parameter $\lambda$, we are given a classical tensor $T\in\C^{n\times n\times n\times n}$ which is sampled from the following two cases:
\begin{itemize}
    \item $T=G/\norm{G}_F$ where $G\in\C^{n\times n\times n\times n}$ is a Gaussian random tensor with each entry an IID standard Gaussian variable $\sim\mathcal{N}(0,1)$ (null).
    \item $T=(\lambda x^{\otimes 4}+G)/\norm{\lambda x^{\otimes 4}+G}_F$ where $x\in\{\pm 1\}^{n}$ is chosen uniformly at random and $G\in\C^{n\times n\times n\times n}$ is a Gaussian random tensor (alternative).
\end{itemize}
Here, the Frobenius norm $\norm{T}_F$ of a tensor $T$ is defined as
\begin{align*}
\norm{T}_F\coloneqq\sqrt{\sum_{i,j,k.l=1}^nT_{ijkl}^2}.
\end{align*}
We can show that this normalized order-$4$ tensor PCA distinguishing problem is also low-degree hard for algorithms with degree at most $k$ under the same parameter regime with \Cref{lem:tensor_pca}.
\begin{lemma}\label{lem:normal_tensor_pca}
Consider the above \emph{normalized} order-$4$ tensor PCA distinguishing problem with $\lambda\leq O(n^{-1}k^{-1/2})$, the square degree-$k$ advantage for algorithms of degree at most $k$ that query the tensor $T$ is bounded by $o(1)$.
\end{lemma}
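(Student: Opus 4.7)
The plan is to reduce Lemma~\ref{lem:normal_tensor_pca} to the unnormalized version, Lemma~\ref{lem:tensor_pca}, by exploiting the extremely tight concentration of the Frobenius norm of a standard Gaussian tensor $G \in \mathbb{R}^{n \times n \times n \times n}$ around its mean. Writing $N := n^4$, the variable $\|G\|_F^2$ is chi-squared with $N$ degrees of freedom, so a short calculation gives $\E[\|G\|_F^{2j}]/N^j = \prod_{i=0}^{j-1}(1+2i/N) = 1 + O(j^2/N)$ for all $j \le 2k$. Moreover, the planted signal has Frobenius norm $\lambda \|x^{\otimes 4}\|_F = \lambda n^2 = O(k^{-1/2}/n) \cdot n^2$, which is negligible compared to $\|G\|_F \approx n^2$, so the same concentration holds under the planted distribution. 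Thus up to multiplicative error $O(k^2/n^4)$, dividing by $\|T\|_F$ is indistinguishable, at the level of low-degree moments, from dividing by the deterministic constant $n^2$.

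Concretely, suppose for contradiction that some polynomial $f$ of degree at most $k$ with $\|f\|_{L^2(D'_\emptyset)} = 1$ achieves degree-$k$ advantage $\delta$ in the normalized problem. I would introduce the rescaled polynomial $\tilde f(T) := f(T/n^2)$, which is still of degree at most $k$, and show that $\tilde f$ attains advantage $\delta - o(1)$ in the unnormalized problem with $\|\tilde f\|_{L^2(D_\emptyset)} = 1 + o(1)$, directly contradicting Lemma~\ref{lem:tensor_pca} at the given signal parameter. The key identity to exploit is the polar factorization $T = RU$ with $R = \|T\|_F$ and $U = T/\|T\|_F$: under $D_\emptyset$, $R$ and $U$ are independent with $U \sim D'_\emptyset$. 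Decomposing $f = \sum_{d=0}^k f_d$ into homogeneous parts then yields
\begin{align*}
\E_{D_\emptyset}[\tilde f(T)] = \sum_{d=0}^k \frac{\E[R^d]}{n^{2d}}\, \E_U[f_d(U)], \qquad \E_{D'_\emptyset}[f(T)] = \sum_{d=0}^k \E_U[f_d(U)],
\end{align*}
and an analogous identity controls the planted expectation $\E_{D_\lambda}[\tilde f]$ after a parallel radial decomposition for the shifted Gaussian $\lambda x^{\otimes 4} + G$. Since each ratio $\E[R^d]/n^{2d}$ differs from $1$ by only $O(k^2/n^4)$, both $\E_{D_\emptyset}[\tilde f]$ and $\E_{D_\lambda}[\tilde f]$ match their normalized counterparts up to $o(1)$.

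The main technical subtlety, and the step I would dwell on, is comparing the second moment $\E_{D_\emptyset}[\tilde f^2] = \sum_{d,d'} (\E[R^{d+d'}]/n^{2(d+d')}) \,\E_U[f_d f_{d'}]$ against $\E_{D'_\emptyset}[f^2] = \sum_{d,d'} \E_U[f_d f_{d'}]$: the homogeneous pieces $f_d$ and $f_{d'}$ are \emph{not} orthogonal on the sphere, so one cannot simply drop the cross terms. I would handle this via Cauchy--Schwarz combined with the sphere-Gaussian identity $\E_G[f_d(G)^2] = \E[R^{2d}] \cdot \E_U[f_d(U)^2]$: since $f_d$ is the orthogonal projection of $f$ onto the Gaussian-homogeneous degree-$d$ subspace under the Gaussian $L^2$ inner product, $\E_G[f_d^2] \le \E_G[f^2]$, and a second application of radial concentration bounds $\E_G[f^2]$ against $\E_U[f^2] = 1$. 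This yields $\sum_d \E_U[f_d^2] \le \poly(k)$, so the cross-term perturbation is at most $O(\poly(k)/n^4) = o(1)$ in the regime $k = n^{o(1)}$, completing the reduction and the proof.
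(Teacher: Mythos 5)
Your high-level strategy---exploit the tight concentration of $\|G\|_F$ (and of $\|\lambda x^{\otimes 4}+G\|_F$, using $\lambda = o(1)$) around $n^2$ to argue that dividing by $\|T\|_F$ is essentially dividing by a deterministic constant, and then invoke Lemma~\ref{lem:tensor_pca}---is the same idea the paper uses. The difference is that the paper argues directly at the level of the likelihood ratio (the normalization perturbs $\overline{D}$ by a factor in a constant-width interval with overwhelming probability), whereas you work in the dual formulation with an explicit test polynomial $f$ and a reduction $\tilde f(T) = f(T/n^2)$. Your version is more ambitious in rigor, but the step you yourself flag as the crux does not go through as written.

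The gap is your control of the homogeneous components. The decomposition $f = \sum_d f_d$ into homogeneous parts is \emph{not} an orthogonal decomposition in Gaussian $L^2$ (e.g.\ $\E_G[x^2\cdot 1]=1\neq 0$), so $f_d$ is not an orthogonal projection of $f$ and the inequality $\E_G[f_d^2]\le \E_G[f^2]$ does not follow. Worse, the target inequality $\sum_d \E_U[f_d^2]\le \poly(k)\,\E_U[f^2]$ is simply false: the restriction map from degree-$\le k$ polynomials to functions on the sphere has a kernel containing the ideal of $\|T\|_F^2-1$, so $f = c(\|T\|_F^2 - 1)$ has $\E_U[f^2]=0$ while $\E_U[f_0^2]+\E_U[f_2^2] = 2c^2$ is arbitrary; adding such a term to any test polynomial preserves its advantage and its $L^2(D'_\emptyset)$ norm but blows up $\|\tilde f\|_{L^2(D_\emptyset)}$, breaking your reduction for a bad choice of representative. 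The same unresolved issue reappears in your planted-case step, where you additionally need that a ``parallel radial decomposition'' exists for the shifted Gaussian---but there $\|T\|_F$ and $T/\|T\|_F$ are no longer independent, so the clean factorization $\E[R^d]\,\E_U[f_d(U)]$ is unavailable and you must again bound $\E[f_d(T)^2]$ term by term. The standard repair is to replace the homogeneous decomposition by the spherical-harmonic one: every degree-$\le k$ polynomial agrees on the sphere with a sum of harmonic homogeneous polynomials of degree $\le k$, and these \emph{are} orthogonal both on the sphere and under the Gaussian, with explicit radial moment factors relating the two norms. With that canonical representative your argument closes; without it, the reduction as stated has a hole. (For comparison, the paper sidesteps polynomial representatives entirely by perturbing the likelihood ratio itself, at the cost of being terser about why a random, high-probability-constant rescaling can be treated as deterministic.)
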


\begin{proof}
We first consider the null case when $T=G/\norm{G}_F$ for a Gaussian random tensor $G\sim\C^{n\times n\times n\times n}$. Given a Gaussian random tensor $G$, the average value and variance of square Frobenius norm of $G$ is given by
\begin{align*}
\E[\norm{G}_F^2]&=\E\left[\sum_{i,j,k.l=1}^nT_{ijkl}^2\right]=n^4,\\
\Var[\norm{G}_F^2]&=\Var\left[\sum_{i,j,k.l=1}^nT_{ijkl}^2\right]=\sum_{i,j,k.l=1}^n\Var\left[T_{ijkl}^2\right]=2n^4,
\end{align*}
where we use the fact that each entry of $G$ is chosen from IID standard Gaussian distribution $\mathcal{N}(0,1)$. Therefore, by the Chernoff bound, we have
\begin{align*}
(1-3\epsilon)n^4\leq\norm{G}_F^2\leq(1+3\epsilon)n^4
\end{align*}
with probability at least $1-2\exp(-\Theta(n^4\epsilon^2))$ for $0\leq\epsilon\leq1/2$. Note that $(1-3\epsilon)\leq (1-\epsilon)^2$ and $(1+\epsilon)^2\leq 1+3\epsilon$ for $0\leq\epsilon\leq1/2$, we thus have
\begin{align*}
(1-\epsilon)n^2\leq\norm{G}_F\leq(1+\epsilon)n^2
\end{align*}
with probability at least $1-2\exp(-\Theta(n^4\epsilon^2))$ for $0\leq\epsilon\leq1/2$.

We then consider the alternative case when $T=(\lambda x^{\otimes 4}+G)/\norm{\lambda x^{\otimes 4}+G}$. For a random choice of $\lambda x^{\otimes 4}+G$, the square Frobenius norm is computed as
\begin{align*}
\norm{\lambda x^{\otimes 4}+G}_F^2=\lambda^2\norm{x^{\otimes 4}}_F^2+\lambda\expval{x^{\otimes 4},G}+\norm{G}_F^2.
\end{align*}
The first term will always be $\lambda^2n^4$ by definition, and the last term has mean value $n^4$ and variance $2n^4$. We thus consider the cross term $\lambda\expval{x^{\otimes 4},G}$. As $x$ is chosen uniformly at random from $\{\pm 1\}^{n}$, the mean value for this cross term is $0$. The variance is given by
\begin{align*}
\Var\left[\lambda\expval{x^{\otimes 4},G}\right]=&\lambda\E\left[\expval{x^{\otimes 4},G}^2\right]\\
&=\lambda\E\left[\left(\sum_{i,j,k,l=1}^nx_ix_jx_kx_lG_{ijkl}\right)^2\right]\\
&=\lambda\E\left[\sum_{i,j,k,l=1}^n\left(x_ix_jx_kx_l\right)^2G_{ijkl}^2\right]\\
&=\lambda n^4
\end{align*}
where we use the fact that $G_{ijkl}$ are IID standard Gaussian variables and thus $\E[G_{ijkl}G_{i'j'k'l'}]=\delta_{ii'}\delta_{jj'}\delta_{kk'}\delta_{ll'}$. Summing all these together, we have
\begin{align*}
\E[\norm{\lambda x^{\otimes 4}+G}_F^2]&=\E\left[\sum_{i,j,k.l=1}^nT_{ijkl}^2\right]=(1+\lambda^2)n^4,\\
\Var[\norm{\lambda x^{\otimes 4}+G}_F^2]&=\Var\left[\sum_{i,j,k.l=1}^nT_{ijkl}^2\right]=\sum_{i,j,k.l=1}^n\Var\left[T_{ijkl}^2\right]=(2+\lambda)n^4.
\end{align*}
Given that $\lambda\leq O(n^{-1}k^{-1/2})=o(1)$, we conclude that 
\begin{align*}
(1-\epsilon)\sqrt{1+\lambda^2}n^2\leq\norm{\lambda x^{\otimes 4}+G}_F\leq(1+\epsilon)\sqrt{1+\lambda^2}n^2
\end{align*}
with probability at least $1-2\exp(-\Theta(n^4\epsilon^2))$ for $0\leq\epsilon\leq1/2$.

If we add a fixed scaling factor to the original version of tensor PCA, the low-degree hardness result in \Cref{lem:tensor_pca} will not change. In the normalized tensor PCA, with probability $1-2\exp(-\Theta(n^4\epsilon^2))$ we will add a scaling factor within the range of $[(1+\epsilon)^{-1}n^{-2},(1-\epsilon)^{-1}n^{-2}]$ for the null case compared to the standard case, and a scaling factor within the range of $[(1+\epsilon)^{-1}(\sqrt{1+\lambda^2})^{-1}n^{-2},(1-\epsilon)^{-1}(\sqrt{1+\lambda^2})^{-1}n^{-2}]$ for the alternative case compared to the standard case. The deviation on the low-degree likelihood ratio is within the range $[(\sqrt{1+\lambda^2})^{-1}\cdot\frac{1-\epsilon}{1+\epsilon},(\sqrt{1+\lambda^2})^{-1}\cdot\frac{1+\epsilon}{1-\epsilon}]$, the two sides of which are both of constant scaling. Thus, we can still obtain the same low-degree hardness result for the normalized tensor PCA.
\end{proof}

We then consider the spectral norm of the two cases for the normalized tensor PCA. The spectral norm of an $n\times n\times n\times n$ tensor $T\in\mathbb{C}^{n\times n\times n\times n}$ is defined to be
\begin{align*}
\norm{T}_{\text{spec}}=\max_{x,y,z,u\in\mathbb{C}^n}\frac{\abs{\expval{T,x\otimes y\otimes z\otimes u}}}{\norm{x}_2\norm{y}_2\norm{z}_2\norm{u}_2}.
\end{align*}
It is shown in~\cite{bakshi2024learninga} that the existence of an algorithm for agnostic tomography of product states at parameter $\tau$ and $\epsilon\lesssim\tau$ indicates the existence of an algorithm that estimates the spectral norm of a tensor $T\in\C^{n'\times n'\times n'\times n'}$. Here, we further show that such reduction is low-degree. Formally, we have the following lemma:
\begin{lemma}\label{lem:agnostic_reduction}
Suppose approximating the spectral norm of random tensors $T\in\C^{n'\times n'\times n'\times n'}$ with respect to some distribution corresponding to $\mu$ to additive error $\epsilon'\norm{T}_F$ is hard for any classical algorithm of degree $2k$. then agnostic tomography of product states with the unknown quantum state encoded from the random classical data drawn from $\mu$ in a particular way at some parameter $\tau=\poly(\epsilon')$ and $\epsilon\lesssim\tau$ is hard for any quantum algorithm of degree $k$ using polynomially many single-copy measurements. 
\end{lemma}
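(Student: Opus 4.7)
The plan is to reduce low-degree hardness of spectral-norm approximation on $4$-tensors to low-degree hardness of product-state agnostic tomography by exhibiting an explicit low-degree encoding of tensors into pure quantum states. Set $n \triangleq 4\log_2 n'$ and, for a normalized tensor $T\in\mathbb{C}^{n'\times n'\times n'\times n'}$ with $\|T\|_F = 1$, define
\begin{equation*}
    |\psi_T\rangle \;\triangleq\; \sum_{i,j,k,l\in[n']} T_{ijkl}\,|i\rangle|j\rangle|k\rangle|l\rangle \;\in\; (\mathbb{C}^2)^{\otimes n}.
\end{equation*}
The crucial identity, which is the same one used in~\cite{bakshi2024learninga}, is that for any unit vectors $\phi_1,\phi_2,\phi_3,\phi_4\in\mathbb{C}^{n'}$ one has $|\langle\phi_1\phi_2\phi_3\phi_4|\psi_T\rangle|^2 = |\langle T,\phi_1\otimes\phi_2\otimes\phi_3\otimes\phi_4\rangle|^2$, so the maximum fidelity of $|\psi_T\rangle$ with any product state is exactly $\|T\|_{\mathrm{spec}}^2$. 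Hence product-state agnostic tomography of $|\psi_T\rangle$ is literally the same optimization as $4$-tensor spectral-norm maximization.

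The reduction proceeds as follows. Suppose for contradiction that there is a degree-$k$ quantum algorithm $\mathcal{A}$ making $\poly(n)$ non-adaptive single-copy PVMs that solves agnostic tomography of product states at parameters $\tau = \poly(\epsilon')$ and $\epsilon \lesssim \tau$. On input $T\sim\mu$ (normalized), we simulate $\mathcal{A}$ on the hypothetical state $|\psi_T\rangle^{\otimes m}$, obtain a classical description of a product state $\phi_1\otimes\phi_2\otimes\phi_3\otimes\phi_4$, and output the scalar $|\langle T,\phi_1\otimes\phi_2\otimes\phi_3\otimes\phi_4\rangle|$ as our estimate of $\|T\|_{\mathrm{spec}}$. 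By the agnostic tomography guarantee and the identity above, whenever $\|T\|_{\mathrm{spec}}^2\ge\tau$ the returned $\phi$ satisfies $|\langle T,\phi_1\otimes\phi_2\otimes\phi_3\otimes\phi_4\rangle|^2 \ge \tau-\epsilon$, while the upper bound $|\langle T,\phi_1\otimes\phi_2\otimes\phi_3\otimes\phi_4\rangle|\le\|T\|_{\mathrm{spec}}$ is automatic; a logarithmic binary search over the promise parameter $\tau$ (re-running $\mathcal{A}$ on fresh copies for each guess) then converts these two bounds into an additive $\epsilon'\|T\|_F$ approximation of the spectral norm.

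The degree accounting is the heart of the argument. For any fixed PVM $\{|\phi_s\rangle\langle\phi_s|\}_s$, Born's rule gives outcome probability $|\langle\phi_s|\psi_T\rangle|^2$, and since $\langle\phi_s|\psi_T\rangle$ is \emph{linear} in the entries of $T$, this outcome probability is a polynomial of degree exactly $2$ in $T$. Hence for any Fourier monomial on the bits of $m$ independent measurement outcomes that touches at most $k$ bits and hence at most $k$ copies, its expectation factorizes across copies into a product of at most $k$ single-copy marginals, each of degree $2$ in $T$. Summing over monomials, the expectation of any degree-$k$ post-processing of the outcomes is a degree-$\le 2k$ polynomial in $T$. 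The final scalar $|\langle T,\phi_1\otimes\phi_2\otimes\phi_3\otimes\phi_4\rangle|$ is bilinear in $T$ and in the product-state description, so composing with $\mathcal{A}$'s degree-$2k$ dependence on $T$ yields a classical algorithm for spectral-norm approximation whose output is (up to the binary-search overhead of $\log(1/\epsilon')$ runs, which only multiplies the degree by a constant) a degree-$2k$ polynomial in the entries of $T$, contradicting the hypothesized classical hardness.

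The main obstacle is articulating the degree of a randomized quantum algorithm whose output is a continuous-parameter product state rather than a single bit. We plan to address this by viewing each real parameter of the returned product state as the expectation of a degree-$k$ polynomial in the measurement outcome bits, which then lifts to a degree-$\le 2k$ polynomial in $T$; the final quadratic scalar in these parameters together with $T$ contributes only a constant multiplicative overhead in degree. A secondary technicality is that the statement of agnostic tomography only guarantees fidelity $\ge\tau-\epsilon$ rather than $\ge\tau_{\mathrm{opt}}-\epsilon$; this is exactly what the binary search over $\tau$ is designed to overcome, and it is compatible with the regime $\tau=\poly(\epsilon')$, $\epsilon\lesssim\tau$ stated in the lemma.
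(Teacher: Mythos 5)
There is a genuine gap at the heart of your reduction: the choice of encoding. You encode $T$ in binary as $\ket{\psi_T}=\sum_{ijkl}T_{ijkl}\ket{i}\ket{j}\ket{k}\ket{l}$ on $4\log_2 n'$ qubits and assert that the maximum fidelity of $\ket{\psi_T}$ with ``any product state'' equals $\|T\|_{\mathrm{spec}}^2$. That identity holds only if ``product state'' means a product across the four $\log_2 n'$-qubit \emph{registers}, with each register in an arbitrary (generally highly entangled) state $\phi_j\in\mathbb{C}^{n'}$. But agnostic tomography of product states, as used in \Cref{thm:agnostic_product} and in~\cite{bakshi2024learninga}, optimizes over states that are products across \emph{every qubit}, i.e.\ $\sigma_1\otimes\cdots\otimes\sigma_n$ with each $\sigma_r$ single-qubit. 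Under the binary encoding, the qubit-product optimum is only a lower bound on $\|T\|_{\mathrm{spec}}^2$ and can be far smaller (the maximizing $\phi_j$ need not be anywhere near a product of single-qubit states), so the agnostic-tomography output neither certifies the spectral norm nor is the promise $\tau\ge 1/\poly(n)$ established. Your ``crucial identity'' therefore does not connect the two problems, and the rest of the reduction does not go through as written.

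The paper's proof avoids exactly this by using the one-hot encoding $\ket{\psi_T}=\sum_{ijkl}T'_{ijkl}\ket{e_ie_je_ke_l}$ on $4n$ qubits (each $\ket{e_i}$ a computational basis string of Hamming weight one), preceded by a Haar-random isometry $U^{\otimes 4}$ that flattens the tensor entries so that $M=\max_{ijkl}n^2|T_{ijkl}|$ is controlled. With this encoding, qubit-product states can approximately realize rank-one tensors, and Lemma 7.6 of~\cite{bakshi2024learninga} gives the two-sided bound $e^{-2}\mathrm{OPT}_T-10M/n^{0.2}\le\mathrm{OPT}_{\ket{\psi_T}}\le e^{-2}\mathrm{OPT}_T+10/n^{0.1}+10M/n^{0.2}$, which is an approximate (not exact) correspondence with an $e^{-2}$ factor and additive $1/\poly(n)$ error terms; this is why the lemma is stated with $\tau=\poly(\epsilon')$ rather than $\tau=\epsilon'^2$. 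Your degree accounting (linear encoding, linear unitary, quadratic Born rule, hence degree $2k$ in $T$ for a degree-$k$ quantum algorithm over at most $k$ copies) matches the paper's and is fine, but it rests on the broken identification above. To repair your argument you would need to either switch to the one-hot encoding and import the approximate correspondence from~\cite{bakshi2024learninga}, or prove from scratch that under your binary encoding the qubit-product optimum is within $\poly(\epsilon')$ of $\|T\|_{\mathrm{spec}}^2$ on the relevant instances, which is not true in general.
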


We will delay the proof of this lemma. Based on \Cref{lem:agnostic_reduction}, we now provide the proof for the main result claimed in \Cref{thm:agnostic_product}.

\begin{proof}[Proof of \Cref{thm:agnostic_product}]
We first consider the normalized tensor PCA. In the null case, the tensor is $T=G/\norm{G}_F$ for a Gaussian random tensor $G\in\C^{n\times n\times n\times n}$. By the concentration inequality for Gaussian random tensor~\cite{tomioka2014spectral}, we have
\begin{align*}
\norm{G}_{\text{spec}}\leq\Theta(\sqrt{n})
\end{align*}
with probability at least $1-\exp(-\Theta(n))$. Therefore, we have
\begin{align*}
\norm{\frac{G}{\norm{G}_F}}_{\text{spec}}\leq\Theta(n^{-3/2})
\end{align*}
with probability at least $1-\exp(-\Theta(n))$. 

In the alternative case, we observe that
\begin{align*}
\norm{\lambda x^{\otimes 4}+G}_{\text{spec}}\geq \lambda n^2.
\end{align*}
Therefore, we have 
\begin{align*}
\norm{\frac{\lambda x^{\otimes 4}+G}{\norm{\lambda x^{\otimes 4}+G}_F}}_{\text{spec}}\geq\Theta(\lambda)
\end{align*}
with probability at least $1-\exp(-\Theta(n^2))$ when $\lambda=o(1)$. 

Given $\lambda=\Theta(n^{-1}k^{-1/2})$ and $k=o(n)$, we have
\begin{align*}
\norm{\frac{\lambda x^{\otimes 4}+G}{\norm{\lambda x^{\otimes 4}+G}_F}}_{\text{spec}}\geq\omega(n^{-3/2}).
\end{align*}
Note that the tensor in the normalized tensor PCA always has $\norm{T}_F=1$. If there is an algorithm that can estimate the spectral norm to $n^{-3/2}\norm{T}_F$, we can use this algorithm to solve the normalized tensor PCA problem at any $k=o(n)$. 

Since the normalized tensor PCA is low-degree hard by as shown by \Cref{lem:normal_tensor_pca} for any classical algorithms of degree $k=o(n)$ at $\lambda\leq O(n^{-1}k^{-1/2})$, then estimating the spectral norm of these random tensors $T\in\C^{n\times n\times n\times n}$ to $n^{-3/2}\norm{T}_F$ additive error is hard any classical algorithms of degree $k=o(n)$. By \Cref{lem:agnostic_reduction}, when we encode these random tensors into random quantum states in \eqref{eq:quantum_encode}, agnostic tomography of product states with these random quantum state is hard any quantum algorithms of degree $k/2=o(n)$ making polynomially many single-copy measurements. 
\end{proof}

Finally, we provide the proof of \Cref{lem:agnostic_reduction}.

\begin{proof}[Proof of \Cref{lem:agnostic_reduction}]
We will follow the reduction inspired by~\cite{bakshi2024learninga}. Assume we are given a tensor $T\in\mathbb{C}^{n\times n\times n\times n}$ with $\norm{T}_F=1$, we construct
a quantum state on $4n$ qubits as
\begin{align}\label{eq:quantum_encode}
\ket{\psi_T}=\sum_{i,j,k,l}T_{ijkl}'\ket{e_ie_je_ke_l},
\end{align}
where $\ket{e_ie_je_ke_l}=\ket{e_i}\otimes\ket{e_j}\otimes\ket{e_k}\otimes\ket{e_l}$, where $\ket{e_i}$ is the product state that is $\ket{1}$ on the $i$-th qubit and $\ket{0}$ on the others. We note that this is a qubit encoding that preserves the Boolean properties. Moreover, this qubit encoding makes every qubit of the quantum state a degree-$1$ function on any classical input bit.

For $T\in\mathbb{C}^{n'\times n'\times n'\times n'}$ and a Haar random isometry $U\in\mathbb{C}^{n\times n'}$, as long as $n'>10\log n$, then with $0.99$ probability over the randomness of $U$, we have $\max_{ijkl}\abs{U^{\otimes 4}T}\leq\left(\frac{10n'}{n}\right)^2$ with $U^{\otimes 4}T\in\C^{n\times n\times n\times n}$ by Claim 7.5 in~\cite{bakshi2024learninga}. Note that $U$ is linear, $U^{\otimes 4}T$ is a degree-$1$ function on any classical input bit.

Given a tensor $T\in\mathbb{C}^{n\times n\times n\times n}$ with $\norm{T}_F=1$ and denote $M=\max_{ijkl}n^2\abs{T_{ijkl}}$, let $\text{OPT}_T$ be the spectral norm of $T$ and $\text{OPT}_{\ket{\psi_T}}$ to be defined as
\begin{align*}
\text{OPT}_{\ket{\psi_T}}=\max_{\ket{\sigma}=\ket{\sigma_1}\otimes...\otimes\ket{\sigma_n}}\abs{\braket{\sigma|\psi_T}}.
\end{align*}
Lemma 7.6 of~\cite{bakshi2024learninga} then shows that for large enough $n$, we have
\begin{align*}
e^{-2}\text{OPT}_T-\frac{10M}{n^{0.2}}\leq\text{OPT}_{\ket{\psi_T}}\leq e^{-2}\text{OPT}_T+\frac{10}{n^{0.1}}+\frac{10M}{n^{0.2}}.
\end{align*}

In the following, we reduce approximating the norm of a tensor to agnostic learning product states to inverse polynomial accuracy. Given $\norm{T}_F=1$, we consider $U^{\otimes 4}T$ for $U$ an $\epsilon$-approximate Haar random isometry, and the corresponding quantum state $\ket{\psi_{U^{\otimes 4}T}}$. We thus have
\begin{align*}
e^{-2}\text{OPT}_{U^{\otimes 4}T}-\frac{10M}{n^{0.2}}\leq\text{OPT}_{\ket{\psi_{U^{\otimes 4}T}}}\leq e^{-2}\text{OPT}_{U^{\otimes 4}T}+\frac{10}{n^{0.1}}+\frac{10M}{n^{0.2}}.
\end{align*}
By the claim, we can choose $M=(10m)^2$. As $U$ is a isometry, we have $\text{OPT}_{U^{\otimes 4}T}=\text{OPT}_{T}$. We thus have
\begin{align*}
e^{-2}\text{OPT}_{T}-\frac{1000m^2}{n^{0.2}}\leq\text{OPT}_{\ket{\psi_{U^{\otimes 4}T}}}\leq e^{-2}\text{OPT}_{T}+\frac{10}{n^{0.1}}+\frac{1000m^2}{n^{0.2}}.
\end{align*}
Choosing $\epsilon<0.01$ and $n=\Omega((m/\epsilon)^{20})$, we have
\begin{align*}
e^{-2}(\text{OPT}_{T}-\epsilon)\leq\text{OPT}_{\ket{\psi_{U^{\otimes 4}T}}}\leq e^{-2}(\text{OPT}_{T}+\epsilon).
\end{align*}
Therefore, finding the optimal product state fidelity to accuracy $0.01\epsilon^2$ gives the spectral norm of the tensor to $\epsilon$ additive error. 

Recall that both the encoding from classical tensor to the quantum state is degree $1$, and the random rotation $U^{\otimes 4}T$ is also degree $1$. When we measure each copy of the quantum state $\ket{\psi_{U^{\otimes 4}T}}$ using a single-copy PVM $\{V\ket{x}\bra{x}V^\dagger\}_{x}$ for some fixed unitary $V$, the output distribution over $\{\abs{\braket{x|V|\psi_T}}^2\}_x$ is a degree -$2$ function on each qubit as $V$ is also linear. Combining all the above results together, we conclude that, if there exists a low-degree quantum algorithm of degree $k$ for agnostic tomography of product states such that for any $n$ qubit state at parameter $\tau$ and $\epsilon\lesssim\tau$ with sample complexity and running time $f(n,\tau)$ for some function $f$ and high success probability. There is a classical algorithm of degree $2k$ that can approximate the spectral norm of any tensor $T\in\C^{n'\times n'\times n'\times n'}$ to additive error $\epsilon'\norm{T}_F$ with running time $f(\poly(n'/\epsilon'),\poly(\epsilon'))$ with high probability. This is the claim in \Cref{lem:agnostic_reduction}.
\end{proof}

\section*{Acknowledgments}

We are grateful to Tim Hsieh for generously sharing his expertise on low-degree hardness during this project. We also thank Alex Wein for his thoughtful and thorough feedback on an early version of this manuscript. We also thank Scott Aaronson, Anurag Anshu, John Bostanci, Adam Bouland, Hsin-Yuan Huang, Jerry Li, Yunchao Liu, Zidu Liu, Qi Ye, and Henry Yuen for illuminating discussions on learning and cryptography, learning shallow circuits, matrix product states, and agnostic learning of product states. SC acknowledges support from NSF Award 2430375.


\bibliographystyle{MyRefFont}
\bibliography{quantum-low-degree}


\clearpage
\newpage
\appendix

\section{The corner case: Low-degree hardness of learning stabilizer states}\label{sec:stabilizer_state}

Here, we also provide a proof of low-degree hardness for learning stabilizer states, which is an exact quantum analogue of parity. We call this a corner case because, similar to parity, we can show low-degree hardness for learning stabilizer states even though there exist time efficient high-degree algorithms. 

Let us first define stabilizer states. Let $\mathcal{P}_n$ be the \emph{$n$-qubit Pauli group}, i.e.\ all tensor products of 
single-qubit Pauli matrices $\{I, X, Y, Z\}$ (including the global phases 
$\{\pm 1, \pm i\}$), where 
\begin{equation*}
    I=\begin{pmatrix}1 & 0\\0& 1\end{pmatrix}\,, \qquad \sigma_x=\begin{pmatrix}0 & 1\\1& 0\end{pmatrix}\,, \qquad \sigma_y=\begin{pmatrix}0 & -i\\i& 0\end{pmatrix}\,, \qquad \sigma_z=\begin{pmatrix}1 & 0\\0& -1\end{pmatrix}.
\end{equation*}
A subgroup $G \subseteq \mathcal{P}_n$ is called a 
\emph{stabilizer group} if:
\begin{enumerate}
  \item $G$ is abelian (i.e.\ all elements commute),
  \item $-I \notin G$, where $I$ is the $n$-qubit identity operator,
  \item $\lvert G \rvert = 2^r$ for some $r \leq n$.
\end{enumerate}

A \emph{stabilizer state} is any state that is the simultaneous $+1$ eigenstate of a maximal stabilizer group $G$, i.e. one with $n$ generators. The task of learning stabilizer states is the following:

\begin{definition}[Learning stabilizer states]\ \\ \vspace{-.5cm}
\begin{description}
\item[Input:] $m$ copies of $\ket{\psi}$, where $\ket{\psi}$ is an $n$-qubit stabilizer state.
\item[Output:] Classical description of $\ket{\psi}$.
\end{description}
\end{definition}
We note that $\ket{\psi}$ has an efficient classical description -- for example, its set of stabilizers, which take $O(n^2)$ bits to specify. 

On the algorithmic front, Aaronson and Gottesman~\cite{aaronson2008identifying} provided polynomial-time learning algorithms using either $m=O(n^2)$ single-copy measurements or $m=O(n)$ joint measurements on multiple copies. In particular, the single-copy measurement approach relies on a procedure called computational difference sampling. Montanaro~\cite{montanaro2017learning} then provided an improved algorithm using $O(n)$ two-copy measurements via a procedure called Bell difference sampling. Both computational difference sampling based on single-copy measurements and Bell difference sampling based on two-copy measurements rely on high-degree classical post-processing using the measurement data collected. The key subroutine of these sampling algorithms is to sample the Pauli observables (in a high-degree fashion) that stabilize the unknown stabilizer states from the sampling. However, there are up to $2^n$ possible stabilizer states that share the same group of Pauli observables. It is even unknown if there are any low-degree algorithms that can find the underlying stabilizer state given a group of Pauli observables that stabilize it. 

In general, it is natural to ask if there is a low-degree algorithm that learns stabilizer states also from a learning theory perspective. This is because learning stabilizer states can be one possible natural quantum analog of learning parity, which is shown to be low-degree hard in the classical framework~\cite{kunisky2019notes}. In the following, we show a negative answer for single-copy PVMs with up to $\Theta(n)$ ancillary qubits.

We simply consider distinguishing between random stabilizer states (alternative ensemble) and the maximally mixed state (null). As any low-degree algorithm that learns stabilizer states with high probability would imply a low-degree algorithm for this distinguishing problem, the following implies low-degree hardness for learning stabilizer states:

\begin{corollary}\label{coro:stabilizer_ancilla}
For the hypothesis testing problem $\rhomm$ vs. random stabilizer states, weak detection is degree-$k$ hard for any protocol that makes polynomially many non-adaptive single-copy PVMs $\{\{U_i\ket{x}\bra{x}U_i^{\dagger}\}_{x=1}^{2^{n+n'}}\}_{i=1}^m$ with up to $n'=\Theta(n)$ ancillary qubits.
Moreover, the task of learning stabilizer states is hard for degree-$k$ protocols with a polynomial number of non-adaptive single-copy PVMs with up to $n'=\Theta(n)$ ancillary qubits.
\end{corollary}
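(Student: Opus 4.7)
The plan is to reduce this statement to the general single-copy result established in \Cref{coro:general_ancilla}. Recall that \Cref{coro:general_ancilla} asserts degree-$k$ hardness for any ensemble $\mathcal{E}$ that forms a $2^{-\tilde{\Omega}(k\log n)}$-approximate state $2$-design. Therefore, it suffices to verify that the uniform distribution over $n$-qubit stabilizer states meets this design condition, ideally with $\epsilon = 0$.

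First I would invoke the classical fact, due to Kueng--Gross and Webb, that the uniform distribution over pure stabilizer states on $n$ qubits forms an \emph{exact} state $3$-design (and in particular an exact state $2$-design) for every $n$. That is, if $\mathcal{E}_{\mathrm{stab}}$ denotes the uniform distribution over stabilizer states, then
\begin{equation*}
\mathbb{E}_{\ket{\psi} \sim \mathcal{E}_{\mathrm{stab}}} \bigl(\ketbra{\psi}\bigr)^{\otimes 2} \;=\; \mathbb{E}_{\ket{\psi} \sim \mu_{\mathrm{Haar}}} \bigl(\ketbra{\psi}\bigr)^{\otimes 2} \;=\; \frac{\mathcal{S}_2}{\binom{2^n+1}{2}}\,.
\end{equation*}
In particular, $\mathcal{E}_{\mathrm{stab}}$ is a $0$-approximate state $2$-design, so the precondition of \Cref{coro:general_ancilla} is trivially satisfied for any $k$ and any $n' \le \Theta(n)$. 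Plugging this into \Cref{coro:general_ancilla} gives degree-$k$ hardness for weak detection between $\rhomm$ and $\rho \sim \mathcal{E}_{\mathrm{stab}}$ under non-adaptive single-copy PVMs with up to $n' = \Theta(n)$ ancillary qubits.

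To deduce low-degree hardness for the learning problem, I would use the standard reduction from learning to hypothesis testing. Suppose there were a degree-$k$ non-adaptive single-copy protocol $\mathcal{A}$ that, given $m = \mathrm{poly}(n)$ copies of a stabilizer state $\ket{\psi}$, outputs a classical description of a stabilizer state $\ket{\widehat\psi}$ with $|\braket{\psi|\widehat\psi}|^2 \ge 1 - o(1)$ on average over $\ket{\psi} \sim \mathcal{E}_{\mathrm{stab}}$. I would then build a distinguisher as follows: run $\mathcal{A}$ on the input copies to obtain a candidate $\ket{\widehat\psi}$, and declare ``stabilizer'' if $\widehat\psi$ has appreciable overlap with the input and ``maximally mixed'' otherwise. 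Under $H_0$, the expected overlap $\tr(\rhomm \ketbra{\widehat\psi}) = 2^{-n}$, whereas under $H_1$ the expected overlap is $1-o(1)$. This overlap estimation can itself be implemented by a low-degree procedure (for instance, via a constant number of additional single-copy Pauli measurements on a few of the copies, or by appending the overlap check as a degree-$O(1)$ post-processing step of the output of $\mathcal{A}$), and so yields a degree-$O(k)$ distinguisher for the hypothesis testing problem, contradicting the first part of the statement.

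The main conceptual obstacle here is essentially nonexistent once one appeals to the design property: because stabilizer states are \emph{exactly} $2$-designs, we do not have to track any approximation factor, and the conclusion is immediate from \Cref{coro:general_ancilla}. The only subtlety is in the learning-to-testing reduction, where one must be careful that the overlap-estimation subroutine is itself low-degree; but as discussed in \Cref{sec:lowdegree_examples}, estimating a fixed observable (here $\ketbra{\widehat\psi}$, whose description is output by $\mathcal{A}$) from single-copy measurements is a degree-$1$ (or at worst degree-$O(k)$, accounting for the dependence of the chosen observable on the readouts producing $\widehat\psi$) operation in the readouts. This completes the proof.
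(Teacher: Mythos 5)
Your proposal is correct and follows essentially the same route as the paper: observe that uniformly random stabilizer states form an exact state $2$-design (indeed $3$-design) and invoke \Cref{coro:general_ancilla} with $\epsilon=0$. The paper's proof is precisely this one-line argument (citing \cite{zhu2017multiqubit} rather than Kueng--Gross/Webb, but for the same fact) and leaves the learning-to-testing reduction implicit, exactly as you sketch it.
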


\begin{proof}
    It is known that random stabilizer states form a state $3$-design~\cite{zhu2017multiqubit}. The claim thus follows immediately by~\Cref{coro:general_ancilla}.
\end{proof}

\section{Deferred proofs in the general framework for non-adaptive measurements}

\subsection{Proof of \texorpdfstring{\Cref{coro:general_single_qubit}}{Corollary 6.2}}\label{sec:coro_kLLR}

For any subset $T\subseteq[mn]$ of size $\leq k$, 
\begin{align*}
\norm{\mathbb{E}_{\rho\sim\mathcal{E}}\overline{D}_{\tr_{[mn]\backslash T}(\rho^{\otimes m})}-\textbf{1}}^2& = \mathbb{E}_{\mathbf{s}}\mathbb{E}_{\rho, \rho'\sim\mathcal{E}} \overline{D}_{\tr_{[mn]\backslash T}(\rho^{\otimes m})}(\mathbf{s}) \overline{D}_{\tr_{[mn]\backslash T}(\rho'^{\otimes m})}(\mathbf{s}) - 1\\
&= \mathbb{E}_{\mathbf{s}}2^{2{|T|}} \left(2^{-\abs{T}}+\zeta_{\bm{s}}\right)\left(2^{-\abs{T}}+\zeta_{\bm{s}}\right)-1,\quad \text{for some }\zeta_{\bm{s}}\in[-\epsilon,\epsilon]\\
& \leq \epsilon^2 2^{2|T|}.
\end{align*}
where the second equality follows from the $k$-local distinguishability assumption.

\subsection{Low-degree hardness for measurements prepared by bounded-depth circuits}\label{sec:general_bounded_depth}

We also consider other single-copy PVM class. Given any PVM $\{\ket{\phi_{s_i}}\bra{\phi_{s_i}}\}_{s_i}$, we can always write it $\{U_i\ket{s_i}\bra{s_i}U_i^\dagger\}_{s_{i}}$ for $s_i\in\{0,1\}^n$ (see \Cref{sec:basic_quantum}). Recall that the square degree-$k$ advantage can be computed as 
\begin{align*}
\norm{\mathbb{E}_{\rho\sim\mathcal{E}}\overline{D}_{\rho,m}^{\leq k}-\textbf{1}}^2&=\sum_{\abs{T}\leq k,T\neq\emptyset}\mathbb{E}_{\rho,\rho'\sim\mathcal{E}}\mathbb{E}_{\bm{s},\bm{s}'}\overline{D}_{\rho,m}(\bm{s})\overline{D}_{\rho',m}(\bm{s}')\chi_T(\bm{s})\chi_T(\bm{s}'),
\end{align*}
where
\begin{align*}
\overline{D}_{\rho,m}(\bm{s})=2^{mn}\prod_{i=1}^m\tr\left(\rho U_i\ket{s_i}\bra{s_i}U_i^\dagger\right)=2^{mn}\tr\left(\rho^{\otimes m}\bigotimes_{i=1}^{m}U_i\ket{s_i}\bra{s_i}U_i^\dagger\right).
\end{align*}

As a particular example, we consider the case when $U_i$'s can be efficiently prepared by ($\mathscr{D}$-dimensional geometrically-local) circuits of depth at most $L$. In this case, we again fix a particular $T$ with $\abs{T}\leq k$. We have:
\begin{align*}
&\mathbb{E}_{\rho,\rho'\sim\mathcal{E}}\mathbb{E}_{\bm{s},\bm{s}'}\overline{D}_{\rho,m}(\bm{s})\overline{D}_{\rho',m}(\bm{s}')\chi_T(\bm{s})\chi_T(\bm{s}')\\
=&\mathbb{E}_{\rho,\rho'\sim\mathcal{E}}\mathbb{E}_{\bm{s},\bm{s}'}2^{2mn}\tr\left(\rho^{\otimes m}\bigotimes_{i=1}^{m}U_i\ket{s_i}\bra{s_i}U_i^\dagger\right)\tr\left(\rho'^{\otimes m}\bigotimes_{i=1}^{m}U_i\ket{s'_i}\bra{s'_i}U_i^\dagger\right)\chi_T(\bm{s})\chi_T(\bm{s}')\\
=&\left(\underset{\rho\sim\mathcal{E}}{\mathbb{E}}\underset{\bm{s}[T]}{\mathbb{E}}2^{\abs{T}}\tr\left(\rho^{\otimes m}\bigotimes_{i=1}^{m}U_i\cdot\ket{\bm{s}[T]}\bra{\bm{s}[T]}\cdot \bigotimes_{i=1}^{m}U_i^\dagger\right)\chi_T(\bm{s})\right)^2.
\end{align*}
When $U_i$'s can be efficiently prepared by ($\mathscr{D}$-dimensional geometrically-local) circuits of depth at most $L$, $\bigotimes_{i=1}^{m}U_i\cdot\ket{\bm{s}[T]}\bra{\bm{s}[T]}\cdot \bigotimes_{i=1}^{m}U_i^\dagger$ is an observable of size at most $2^Lk$ ($(2L+k)^{\mathscr{D}}$ qubits. In this case, similar to the case for single-qubit measurements, we need local indistinguishability for subsystems of size $2^Lk$ ($(2L+k)^{\mathscr{D}}$.

\noindent We note that we prove \Cref{coro:general_bounded_depth} directly using the stronger assumption similar to \Cref{coro:general_single_qubit} for convenience. It is possible to prove the same result based on a weaker assumption in \Cref{thm:general_single_qubit}.

\subsection{Proof of \texorpdfstring{\Cref{thm:general_ancilla}}{Theorem 6.5}}\label{sec:pf_general_ancilla}

We now compute the square degree-$k$ advantage, which is 
\begin{align*}
\norm{\mathbb{E}_{\rho\sim\mathcal{E}}\overline{D}_{\rho,m}^{\leq k}-\textbf{1}}^2=\mathbb{E}_{\bm{s}}\mathbb{E}_{\rho,\rho'\sim\mathcal{E}}\overline{D}_{\rho,m}^{\leq k}(\bm{s})\overline{D}_{\rho',m}^{\leq k}(\bm{s})-1.
\end{align*}
We can also decompose the likelihood ratio for the whole history as 
\begin{align*}
\overline{D}_{\rho,m}(\bm{s})=\sum_{T\subseteq[mn]}\alpha_T^\rho\chi_T(\bm{s}),\ \alpha_T^{\rho}=\frac{1}{2^{nm}}\sum_{\bm{s}}\overline{D}_{\rho,m}(\bm{s})\chi_T(\bm{s})=\mathbb{E}_{\bm{s}}\overline{D}_{\rho,m}(\bm{s}).
\end{align*}
The set $\chi_T(\bm{s})$ is again an orthonormal basis of functions. We note that here we cannot choose the Fourier basis as it is no longer orthogonal under $D_\emptyset$. But there exists a set of orthogonal bases $\chi_T(\bm{s})$.

Given that the square degree-$k$ advantage satisfies:
\begin{align*}
\overline{D}_{\rho,m}^{\leq k}(\bm{s})=\sum_{\abs{T}\leq k}\alpha_T^\rho\chi_T(\bm{s}).
\end{align*}
We can thus compute the square degree-$k$ advantage as
\begin{align*}
\norm{\mathbb{E}_{\rho\sim\mathcal{E}}\overline{D}_{\rho,m}^{\leq k}-\textbf{1}}^2=\sum_{\abs{T}\leq k,T\neq\emptyset}\mathbb{E}_{\rho,\rho'\sim\mathcal{E}}\mathbb{E}_{\bm{s},\bm{s}'}\overline{D}_{\rho,m}(\bm{s})\overline{D}_{\rho',m}(\bm{s}')\chi_T(\bm{s})\chi_T(\bm{s}').
\end{align*}

In the following, we consider the contribution for a fixed set $T\in[m(n+n')]$. We denote $\bm{s}[T]$ to be the set of $s_{i,r}$ with $(i,r)\in T$ for $i=1,...,m$ and $r=1,...,n+n'$. Recall that each PVM with $n'$ ancilla qubits can be represented by $\{\{U_i\ket{s_i}\bra{s_i}U_i^\dagger\}_{s_i=1}^{2^{n+n'}}\}_{i=1}^m$ We have:
\begin{align*}
&\mathbb{E}_{\rho,\rho'\sim\mathcal{E}}\mathbb{E}_{\bm{s},\bm{s}'}\overline{D}_{\rho,m}(\bm{s})\overline{D}_{\rho',m}(\bm{s}')\chi_T(\bm{s})\chi_T(\bm{s}')\\
=&\mathbb{E}_{\rho,\rho'\sim\mathcal{E}}\mathbb{E}_{\bm{s},\bm{s}'}\prod_{i=1}^m\frac{\tr(\rho\otimes\ketbra{0}^{\otimes n'}\cdot\ketbra{\phi_{s_i}})}{\tr(\rhomm\otimes\ketbra{0}^{\otimes n'}\cdot\ketbra{\phi_{s_i}})}\prod_{i=1}^m\frac{\tr\left(\rho'\otimes\ketbra{0}^{\otimes n'}\cdot\ketbra{\phi_{s_i'}}\right)}{\tr\left(\rhomm\otimes\ketbra{0}^{\otimes n'}\cdot\ketbra{\phi_{s_i'}}\right)}\chi_T(\bm{s})\chi_T(\bm{s}')\\
=&\underset{\rho,\rho'\sim\mathcal{E}}{\mathbb{E}}\sum_{\bm{s},\bm{s'}}\prod_{i=1}^m\tr(\rho\otimes\ketbra{0}^{\otimes n'}\cdot\ketbra{\phi_{s_i}})\tr\left(\rho'\otimes\ketbra{0}^{\otimes n'}\cdot\ketbra{\phi_{s_i'}}\right)\chi_T(\bm{s})\chi_T(\bm{s}')\\
=&\underset{\rho,\rho'\sim\mathcal{E}}{\mathbb{E}}\sum_{\bm{s},\bm{s'}}\tr\left(\bigotimes_{i=1}^mU_i\left(\rho\otimes\ketbra{0}^{\otimes n'}\right)U_i^\dagger\cdot\ketbra{s_i}\right)\tr\left(\bigotimes_{i=1}^mU_i\left(\rho'\otimes\ketbra{0}^{\otimes n'}\right)U_i^\dagger\cdot\ketbra{s_i'}\right)\chi_T(\bm{s})\chi_T(\bm{s}')\\
=&\underset{\rho,\rho'\sim\mathcal{E}}{\mathbb{E}}\sum_{\bm{s}[T],\bm{s'}[T]}\tr\left(\underset{[m(n+n')]\backslash T}{\tr}\left(\bigotimes_{i=1}^{m}U_i\left(\rho\otimes\ketbra{0}^{\otimes n'}\right) U_i^\dagger\right)\bigotimes_{(i,r)\in T}\ketbra{s_{i,r}}\right)\cdot\\
&\qquad \tr\left(\underset{[m(n+n')]\backslash T}{\tr}\left(\bigotimes_{i=1}^{m}U_i\left(\rho'\otimes\ketbra{0}^{\otimes n'}\right) U_i^\dagger\right)\bigotimes_{(i,r)\in T}\ketbra{s'_{i,r}}\right)\chi_T(\bm{s})\chi_T(\bm{s}')
\end{align*}
where the last line follows from the fact that $\chi_T(\bm{s}),\chi_T(\bm{s}')$ only depend on $\bm{s}[T],\bm{s}'[T]$. 

Note that $\underset{[m(n+n')]\backslash T}{\tr}\left(\bigotimes_{i=1}^{m}U_i\left(\rho'\otimes\ketbra{0}^{\otimes n'}\right) U_i^\dagger\right)$ is the RDM of $\bigotimes_{i=1}^{m}U_i\left(\rho'\otimes\ketbra{0}^{\otimes n'}\right) U_i^\dagger$ on the qubits in $T$. The above equation can be bounded if 
\begin{align*}
d_{\tr}\left(\mathbb{E}_{\rho\sim\mathcal{E}}\underset{[m(n+n')]\backslash T}{\tr}\left(\bigotimes_{i=1}^{m}U_i\left(\rho\otimes\ketbra{0}^{\otimes n'}\right) U_i^\dagger\right),\underset{[m(n+n')]\backslash T}{\tr}\left(\bigotimes_{i=1}^{m}U_i\left(\rhomm\otimes\ketbra{0}^{\otimes n'}\right) U_i^\dagger\right)\right)\leq \epsilon
\end{align*}
for $\epsilon\leq 2^{-\tilde{\Omega}(k\log n)}$. This is because suppose the above equation holds, according to the fact that $\chi_T$ are orthogonal under the null distribution, we have
\begin{align*}
\mathbb{E}_{\rho,\rho'\sim\mathcal{E}}\mathbb{E}_{\bm{s},\bm{s}'}\overline{D}_{\rho,m}(\bm{s})\overline{D}_{\rho',m}(\bm{s}')\chi_T(\bm{s})\chi_T(\bm{s}')\leq\sum_{\bm{s}[T],\bm{s'}[T]}\epsilon^2\chi_T(\bm{s})\chi_T(\bm{s}')&\leq 2^{2|T|}\epsilon^2
\end{align*}
and thus
\begin{align*}
\norm{\mathbb{E}_{\rho\sim\mathcal{E}}\overline{D}_{\rho,m}^{\leq k}-\textbf{1}}^2&=\sum_{\abs{T}\leq k,T\neq\emptyset}\mathbb{E}_{\rho,\rho'\sim\mathcal{E}}\mathbb{E}_{\bm{s},\bm{s}'}\overline{D}_{\rho,m}(\bm{s})\overline{D}_{\rho',m}(\bm{s}')\chi_T(\bm{s})\chi_T(\bm{s}')\leq k(m(n+n'))^k2^k\epsilon^2\leq o(1)
\end{align*}
as $n'\leq n$, which finishes the proof for \Cref{thm:general_ancilla}.

\section{Deferred proofs in frameworks for adaptive measurements}

\subsection{Proof of \texorpdfstring{\Cref{coro:general_adaptivity_within_block}}{Corollary 7.8}}\label{sec:pf_general_adaptivity_within_block}
By \Cref{lem:k_design_copy}, given an $\epsilon'$-approximate state $2km_0$ design with $0\leq\epsilon'\leq 1/2$, we have
\begin{align*}
\mathbb{E}_{\rho,\rho'\sim\mathcal{E}}\left(\expval{\overline{D}_{\rho}^{\leq \infty},\overline{D}_{\rho'}^{\leq \infty}}-1\right)^{2km_0}\leq O((2km_0)^22^{2km_0}(\epsilon'+2^{-n})).
\end{align*}
Given any $\bm{s}_{1:km_0-1}$, we have
\begin{align*}
\mathbb{E}_{\rho\sim\mathcal{E}}\left[\left(\overline{D}_{\rho}^{\leq\infty}(s)\right)^{2(km_0-1)}\right]&\leq(1+\epsilon')\frac{2^{2(km_0-1)n}(2(km_0-1))!}{2^n(2^n+1)...(2^n+2(km_0-1)-1)}\\
&\leq(1+\epsilon')(2(km_0-1))!.
\end{align*}
Therefore, we can choose
\begin{align*}
M=2(2(km_0-1))!=2^{O(km_0\log k\log m_0)}.
\end{align*}
By \Cref{thm:general_adaptivity_within_block}, we can thus choose
\begin{align*}
\epsilon&=2^{-\Omega(k^3m_0\log k\log m_1\log m_0\log M)}\\
&=2^{-\Omega(k^4m_0^2\log^2 k\log m_1\log^2 m_0)},
\end{align*}
which indicates that
\begin{align*}
\epsilon'=2^{-\Omega(k^4m_0^2\log^2 k\log m_1\log^2 m_0)}.
\end{align*}

\subsection{Proof of \texorpdfstring{\Cref{coro:general_adaptivity_among_block}}{Corollary 7.11}}\label{sec:pf_general_adaptivity_among_block}
By \Cref{lem:k_design_copy}, given an $\epsilon'$-approximate state $2k$ design with $0\leq\epsilon'\leq 1/2$, we have
\begin{align*}
\mathbb{E}_{\rho,\rho'\sim\mathcal{E}}\left(\expval{\overline{D}_{\rho}^{\leq \infty},\overline{D}_{\rho'}^{\leq \infty}}-1\right)^{2k}\leq O((2k)^22^{2k}(\epsilon'+2^{-n})).
\end{align*}
We also have
\begin{align*}
\mathbb{E}_{\rho\sim\mathcal{E}}\left[\left(\overline{D}_{\rho}^{\leq\infty}(s)-1\right)^{2(m_1-1)}\right]&\leq(1+\epsilon')\frac{2^{2(m_1-1)n}(2m_1)!}{2^n(2^n+1)...(2^n+2(m_1-1)-1)}\\
&\leq(1+\epsilon')(2m_1)!.
\end{align*}
Therefore, we can choose
\begin{align*}
M'=2(2m_1)!=2^{O(m_1\log m_1)}.
\end{align*}
By \Cref{thm:general_adaptivity_among_block}, we can thus choose
\begin{align*}
\epsilon&=2^{-\Omega(km_1\log k\log m_0\log M')}\\
&=2^{-\Omega(km_1^2\log^2 m_1\log m_0\log k)},
\end{align*}
which indicates that
\begin{align*}
\epsilon'=2^{-\Omega(km_1^2\log^2 m_1\log m_0\log k)}.
\end{align*}

\section{Auxiliary lemmas for quantum planted biclique}
\begin{fact}\label{fact:powerseries}
    Let $\beta_{d,s}$ be as defined in \eqref{eq:betadef}. Then
    \begin{align*}
        \beta_{d,s} = 1 + \sum^\infty_{a = 1} c_a(s) \cdot d^{-a}\,,
    \end{align*}
    for coefficients $c_a(s)$ which are degree-$2a$ polynomials in $s$ satisfying
    \begin{align*}
        |c_a(s)| \le (1+s)^{2a}
    \end{align*}
    for all $s \ge 0$.
\end{fact}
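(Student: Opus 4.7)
The plan is to derive the claimed series by term-by-term geometric expansion of the factored form of $\beta_{d,s}$, and then bound each coefficient combinatorially. I would start from
\[
\beta_{d,s} = \prod_{j=1}^{s-1}\Bigl(1+\tfrac{j}{d}\Bigr)^{-1}
\]
(the $j=0$ factor is $1$) and expand each factor as a geometric series in $j/d$. This converges absolutely for $d>s-1$ and uniquely determines each $c_a(s)$ as the closed-form combinatorial sum
\[
c_a(s) = (-1)^a \sum_{\substack{k_1,\ldots,k_{s-1}\ge 0\\ k_1+\cdots+k_{s-1}=a}} \prod_{j=1}^{s-1} j^{k_j}.
\]
Since both sides are rational in $d$ for each fixed integer $s$, this identity and the polynomial-in-$s$ interpretation for $c_a$ extend by uniqueness of Taylor expansions.

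To verify $\deg_s c_a \le 2a$, I would group the compositions by the support $\{j_1<\cdots<j_N\}\subseteq\{1,\ldots,s-1\}$ of strictly positive exponents: for each fixed positive composition $(k_1,\ldots,k_N)$ of $a$, the inner sum $\sum_{1\le j_1<\cdots<j_N\le s-1}\prod_i j_i^{k_i}$ is a polynomial in $s$ of degree $a+N$ by iterated Faulhaber (since $\sum_{j=1}^{s-1}j^k$ is a polynomial in $s$ of degree $k+1$). The maximum over $N\le a$ is attained at $N=a$, giving the asserted total degree $2a$.

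The key estimate for the norm bound is the one-line inequality
\[
\sum_{1\le j_1<\cdots<j_N\le s-1}\prod_{i=1}^N j_i^{k_i} \;\le\; \prod_{i=1}^N \sum_{j=1}^{s-1} j^{k_i} \;\le\; \prod_{i=1}^N s^{k_i+1}\;=\; s^{a+N},
\]
which holds because the left side is indexed by a subset of the ordered tuples that appear when the product on the right is expanded. Summing over $N$ and over the $\binom{a-1}{N-1}$ positive compositions of $a$ with exactly $N$ parts yields
\[
|c_a(s)| \;\le\; \sum_{N=1}^a \binom{a-1}{N-1}\, s^{a+N} \;=\; s^{a+1}(1+s)^{a-1} \;\le\; (1+s)^{2a},
\]
where the penultimate equality uses $\sum_{M=0}^{a-1}\binom{a-1}{M}s^M=(1+s)^{a-1}$ and the last bound uses $s\le 1+s$.

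The proof is pure combinatorial bookkeeping, so there is no substantive obstacle; the only point requiring slight care is the consistency between the convergent series for $d>s-1$ and the polynomial-in-$s$ claim, which is handled by the uniqueness of Taylor expansions mentioned above.
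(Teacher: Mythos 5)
Your proof is correct, and it takes a genuinely different route from the paper's. The paper arrives at a combinatorial expression for $c_a(s)$ by Taylor-expanding $\log\beta_{d,s}$ and re-exponentiating, which produces a sum over compositions $(i_1,\ldots,i_\ell)$ of $a+\ell$ with each part $\ge 2$ and factors $\sigma_{i_r}(s)/(i_r-1)$ attached; you instead expand each factor $(1+j/d)^{-1}$ directly as a geometric series and group terms by the multi-exponent, giving the cleaner form $c_a(s)=(-1)^a\sum_{k_1+\cdots+k_{s-1}=a}\prod_j j^{k_j}$. Both approaches establish polynomiality and the degree bound via (iterated) Faulhaber, and both ultimately reduce the coefficient estimate to essentially the same computation $\sum_{N}\binom{a-1}{N-1}s^{a+N}=s^{a+1}(1+s)^{a-1}\le(1+s)^{2a}$ — the paper's $1/\ell!$ factors are discarded by bounding them by $1$, after which the two calculations coincide. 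What your route buys is simplicity: there is no exponential-of-a-power-series expansion to manage, and the bijection between compositions and supported multi-indices is more transparent. What the paper's route buys is a form of $c_a$ where each factor $\sigma_i(s)$ is already visibly a degree-$i$ polynomial, so the degree count requires no extra iterated-Faulhaber step. One minor point applies to both arguments: they establish that $\deg c_a\le 2a$ rather than exactly $2a$, but this is all that is needed downstream and is the natural reading of the statement.
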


\begin{proof}
    Taylor expanding $\log(\beta_{d,s})$ and exponentiating, we find that
    \begin{align*}
        c_a(s) = (-1)^a\sum^a_{\ell = 1} \frac{1}{\ell!}\sum_{\substack{i_1 + \cdots + i_\ell = a + \ell \\ i_j \ge 2 \ \forall j}} \sigma_{i_1}(s) \cdots \sigma_{i_\ell}(s) \cdot \frac{1}{(i_1 - 1)\cdots (i_\ell - 1)}\,,
    \end{align*}
    where $\sigma_i(s) \triangleq 1^{i-1} + \cdots + (s-1)^{i-1}$. Note that $\sigma_i(s) \le s^i$, so the inner sum above can be bounded by $s^{a + \ell}\cdot \binom{a-1}{\ell-1}$. Finally, $\sum^a_{\ell = 1}\frac{1}{\ell!}\binom{a-1}{\ell-1} s^{a + \ell} \le (1 + s)^{2a}$, as claimed.
\end{proof}

\begin{lemma}\label{lem:derange}
    Let $\pi \in \mathcal{S}_W$ be a permutation with no fixed points. Then 
    \begin{align*}
        \E_{\bx}[|\bra{\psi_{\bx|_W}}P_\pi \ket{\psi_{\bx|_W}}|^2] \le d^{-|W|/2}\,.
    \end{align*}
\end{lemma}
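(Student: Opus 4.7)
My plan is to exploit the cycle structure of $\pi$. Since $P_\pi$ is a tensor product of permutations acting on the $|W|$ qudit factors of $\ket{\psi_{\bx|_W}} = \bigotimes_{w \in W} \ket{\psi^w_{x_w}}$, the inner product factors position-by-position as
\begin{equation*}
\bra{\psi_{\bx|_W}} P_\pi \ket{\psi_{\bx|_W}} = \prod_{w \in W} \langle \psi^w_{x_w} \,|\, \psi^{\pi^{-1}(w)}_{x_{\pi^{-1}(w)}}\rangle.
\end{equation*}
Decomposing $\pi$ into disjoint cycles $C_1,\ldots,C_r$ of lengths $\ell_1,\ldots,\ell_r$ with $\sum_c \ell_c = |W|$, the squared modulus of this product factors across cycles. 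Because the cycles partition $W$ and the coordinates $\{x_w\}_{w\in W}$ of $\bx$ are independent uniform on $[d]$, the expectation factors too:
\begin{equation*}
\E_{\bx}\bigl[\bigl|\bra{\psi_{\bx|_W}}P_\pi\ket{\psi_{\bx|_W}}\bigr|^2\bigr] = \prod_{c=1}^r \E\Bigl[\prod_{w \in C_c}\bigl|\langle \psi^w_{x_w} \,|\, \psi^{\pi^{-1}(w)}_{x_{\pi^{-1}(w)}}\rangle\bigr|^2\Bigr].
\end{equation*}

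Second, I will bound the contribution of a single cycle $C_c = (w_1,\ldots,w_\ell)$ (with $\pi(w_i)=w_{i+1}$) by $d^{-(\ell-1)}$. The cycle contributes the product of $\ell$ terms $f_i := |\langle \psi^{w_{i+1}}_{x_{w_{i+1}}}\,|\,\psi^{w_i}_{x_{w_i}}\rangle|^2$, which can be viewed as edge weights on a cycle graph whose vertices are the variables $x_{w_1},\ldots,x_{w_\ell}$. Drop one factor using $f_i \le 1$; the remaining $\ell-1$ factors form a path. Since each $\{\ket{\psi^w_x}\}_{x\in[d]}$ is an orthonormal basis, a leaf integration collapses cleanly:
\begin{equation*}
\E_{x_{w_1}} \bigl|\langle \psi^{w_2}_{x_{w_2}}\,|\,\psi^{w_1}_{x_{w_1}}\rangle\bigr|^2 = \tfrac{1}{d}\bra{\psi^{w_2}_{x_{w_2}}}\Bigl(\sum_{x}\ket{\psi^{w_1}_x}\bra{\psi^{w_1}_x}\Bigr)\ket{\psi^{w_2}_{x_{w_2}}} = \tfrac{1}{d},
\end{equation*}
which is a constant and therefore decouples from the remaining factors. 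Inductively peeling off leaves of the path yields exactly $(1/d)^{\ell-1}$ after $\ell-1$ integrations (the last vertex, contributing no factor, integrates trivially).

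Finally, combining the cycle bounds gives
\begin{equation*}
\E_{\bx}\bigl[\bigl|\bra{\psi_{\bx|_W}}P_\pi\ket{\psi_{\bx|_W}}\bigr|^2\bigr] \le \prod_{c=1}^r d^{-(\ell_c - 1)} = d^{-(|W| - r)}.
\end{equation*}
Because $\pi$ has no fixed points, every cycle has length $\ell_c \ge 2$, so $2r \le \sum_c \ell_c = |W|$, i.e.\ $r \le |W|/2$. Hence $d^{-(|W|-r)} \le d^{-|W|/2}$, which is the desired bound.

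I do not anticipate a serious obstacle: the only step that requires any care is the leaf-integration argument, where one must verify that the sum over a single variable produces a scalar independent of its neighbor (so that subsequent leaves can be integrated out in the same way). This is immediate from the orthonormality of the measurement basis. The rest is bookkeeping on the cycle decomposition.
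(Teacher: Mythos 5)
Your proof is correct, and it arrives at the lemma via a genuinely different per-cycle estimate than the paper's. Both arguments begin by decomposing $\pi$ into cycles and factoring the expectation across cycles, but the paper then drops \emph{alternate} edges of each cycle to leave a perfect matching of disjoint pairs whose expectations factor directly, arriving at roughly $(1/d)^{p/2}$ for a $p$-cycle (with a separate bookkeeping step for the odd-length case), and sums cycle lengths to $|W|$. You instead drop a \emph{single} edge of each cycle, turning it into a Hamiltonian path, and peel off leaves one at a time, using that each leaf integration $\E_{x_{w}}|\langle\psi^{w'}_{x_{w'}}|\psi^{w}_{x_w}\rangle|^2 = 1/d$ collapses to a constant by orthonormality of the measurement basis. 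This yields the strictly stronger per-cycle bound $(1/d)^{\ell-1}$; you then intentionally weaken the aggregate bound $d^{-(|W|-r)}$ to $d^{-|W|/2}$ via the observation that the fixed-point-free assumption forces $r \le |W|/2$. Your route has two advantages: it is uniform over cycle lengths (no even/odd case split), and the leaf-peeling step is an unconditional equality rather than a matching-based inequality, so there is no parity accounting to check. The paper's route, in exchange, produces a per-cycle exponent that sums directly to $|W|/2$ without a final counting step. Either works; yours is arguably the more robust of the two, and in particular the step where you pass from the number of cycles to the factor $d^{-|W|/2}$ is exactly where the fixed-point-free hypothesis is used, which your write-up makes explicit in a way the paper's does not.
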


\begin{proof}
    Let $C = (\bi,\pi(\bi),\pi^2(\bi),\ldots,\pi^{p+1}(\bi) = \bi)$ be any cycle in $\pi$. By assumption, $p > 1$. Then if the cycle length $p$ is even,
    \begin{align*}
        \E\prod^{p-1}_{\ell = 0} \braket{\psi_{x_{\pi^\ell(\bi)}}|\psi_{x_{\pi^{\ell+1}(\bi)}}}^2 \le \E\prod^{p/2-1}_{\ell = 0} \braket{\psi_{x_{\pi^{2\ell}(\bi)}} | \psi_{x_{\pi^{2\ell+1}(\bi)}}}^2 = (1/d)^{p/2}\,,
    \end{align*}
    where we used that for all $\bi\in [m]\times [n]$, $\E[\ket{\psi_{x_{\bi}}}\bra{\psi_{x_{\bi}}}] = \Id/d$.
    Similarly, if the cycle length $p$ is odd, then because $p > 1$,
    \begin{align}
        \E\prod^{p-1}_{\ell = 0} \braket{\psi_{x_{\pi^\ell(\bi)}}|\psi_{x_{\pi^{\ell+1}(\bi)}}}^2 &\le \E\prod^{p/2-1/2}_{\ell = 1} \braket{\psi_{x_{\pi^{2\ell}(\bi)}} | \psi_{x_{\pi^{2\ell+1}(\bi)}}}^2 \cdot \braket{\psi_{x_{\pi^p(\bi)}} | \psi_{x_i}}^2 \\
        &= (1/d)^{(p+1)/2} \cdot \E\norm{\psi_{x_i}}^4 = (1/d)^{(p+1)/2} \le (1/d)^{p/2}\,.
    \end{align}
    The claim follows as the sum of the lengths of the cycles in $\pi$ is equal to $|W|$.
\end{proof}

\end{document}